\theoremstyle{nonumberplain}
\newtheorem{maintheorem}{\textbf{Main Theorem}}
\begin{document}

%
\thispagestyle{empty}
\begin{center}
{\huge \textbf{Star Products that can not be induced by Drinfel\textquoteright d Twists}}\\
  \vspace{1cm}
  {\large \textbf{Thomas Weber}\\}
  {\large Würzburg, June 06, 2016\\}
  \vspace{1.5cm}
  \includegraphics[width=6cm]{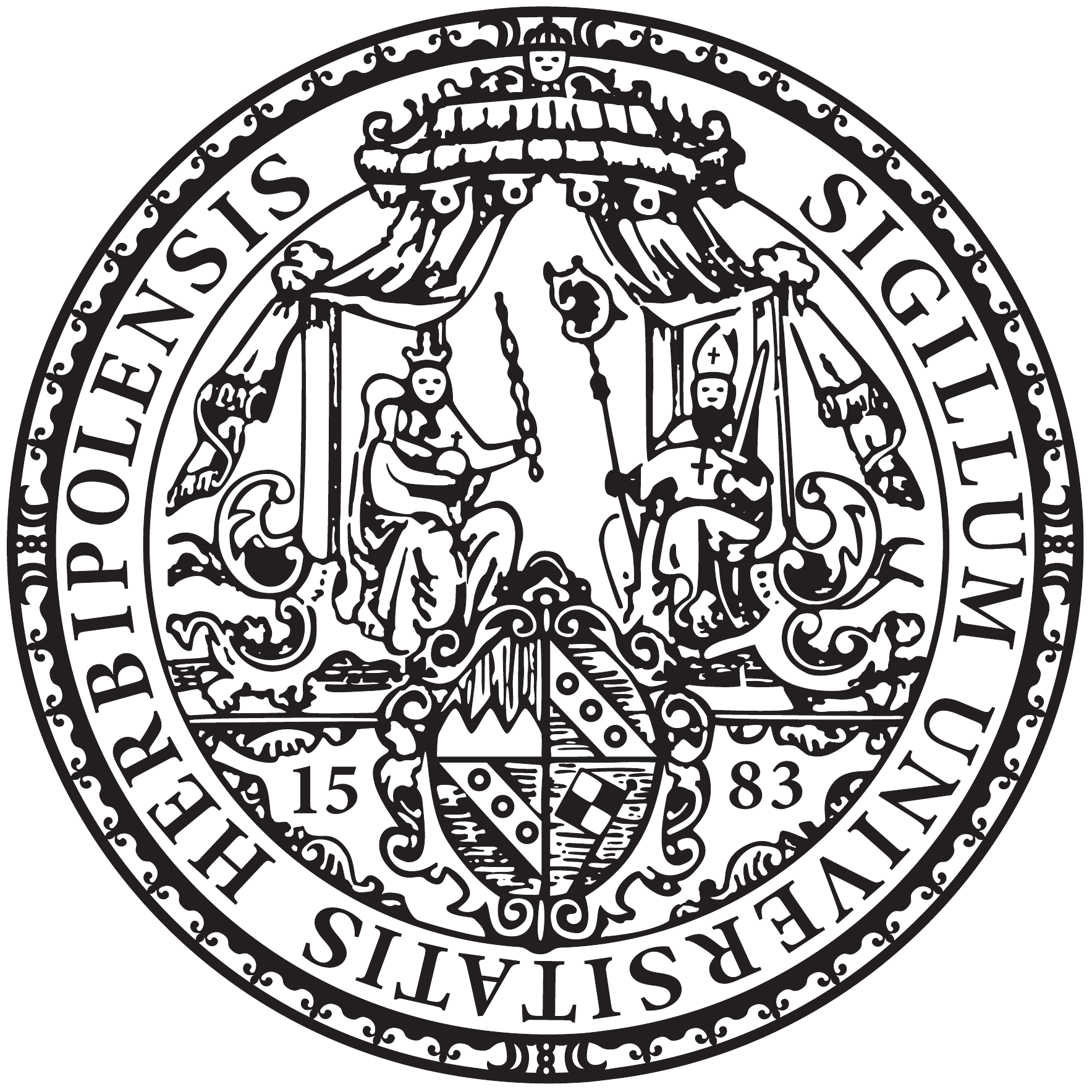}\\
  \vspace{0.5cm}
    {\textbf{Master Thesis}}\\
    \vspace{0.25cm}
    {in the Study Program Mathematical Physics, Master of Science}\\
    {with}\\
    {Prof. Dr. Stefan Waldmann}\\
    {and}\\
    {Dr. Chiara Esposito}\\
    \vspace{0.3cm}
  {Chair X (Mathematical Physics)\\
  Department of Mathematics\\
  Julius-Maximilians-University, Würzburg\\}
    \vspace{\fill} 
  \newpage
  
  \null\vfill
  {\large Die vorliegende Arbeit wurde im Zeitraum vom 07.12.2015 bis zum 06.06.2016 am Lehrstuhl für Mathematische
  Physik der Julius-Maximilians-Universität Würzburg unter der Leitung von Dr. Chiara Esposito und Prof. Dr. Stefan
  Waldmann angefertigt.}
  \vfill

\end{center}
\newpage

\section*{Summary of this Thesis}

We give obstructions to
the existence of Drinfel\textquoteright d twists on connected compact $ 2 $-dimensional symplectic manifolds.
In fact, only the $ 2 $-torus permits a
twist star product that deforms the symplectic structure. The main observation in the line of arguments is
that a twist star product forces such a symplectic manifold to be a homogeneous space. This immediately excludes
the higher pretzel surfaces $ \mathrm{T}(g) $ since they can not be structured as homogeneous spaces: by a theorem
due to G. D. Mostow the Euler characteristic $ \chi $ of a compact homogeneous space is non-negative, while
$ \chi(\mathrm{T}(g))=2-2g $. The $ 2 $-sphere $ \mathbb{S}^{2} $ is a bit more involved. A. L. Onishchik classified all
connected Lie groups that act transitively and effectively on $ \mathbb{S}^{2} $ up to equivalence. In particular, they
are semisimple Lie groups. We prove that a twist star product on $ \mathbb{S}^{2} $ induces a transitive effective
action of a connected Lie group on $ \mathbb{S}^{2} $ that is not semisimple, to produce an obstruction
also in this situation. It is the Etingof-Schiffmann subgroup of the $ r $-matrix corresponding to the twist.

\section*{Summary of this Thesis (in German)}

In dieser Masterarbeit geben wir Obstruktionen für die Existenz von Drinfel\textquoteright d Twists auf
zusammenhängenden, kompakten, $ 2 $-dimensionalen symplektischen Mannigfaltigkeiten. Einzig der $ 2 $-Torus
besitzt ein Twist-Sternprodukt, welches die symplektische Struktur deformiert. Das Hauptargument der
Beweisführung ist dabei, dass solch eine symplektische Mannigfaltigkeit mit einem Twist-Sternprodukt bereits
ein homogener Raum ist. Somit können wir sofort die höheren Brezelflächen $ \mathrm{T}(g) $ ausschließen,
da diese keine homogenen Räume sind: Nach einem Satz von G. D. Mostow ist die Euler-Charakteristik $ \chi $ eines
kompakten homogenen Raumes nicht-negativ, während $ \chi(\mathrm{T}(g))=2-2g $. Die Argumentation für die $ 2 $-Spähre
ist etwas aufwendiger. A. L. Onishchik klassifizierte bis auf Äquivalenz alle zusammenhängenden Lie-Gruppen, welche
transitiv und effektiv auf $ \mathbb{S}^{2} $ wirken. Diese Lie-Gruppen sind insbesondere halbeinfach. Wir beweisen,
dass ein Twist-Sternprodukt auf $ \mathbb{S}^{2} $ die transitive und effektive Lie-Gruppen-Wirkung einer
zusammenhängenden nicht-halbeinfachen Lie-Gruppe induziert, um auch in diesem Fall eine Obstruktion zu erhalten.
Es ist die Etingof-Schiffmann-Untergruppe der $ r $-Matrix, welche dem Twist zugehört.

\cleardoublepage

%
\begin{flushright}
        \vspace*{3cm}
        {\large
                
        }
\end{flushright}
\thispagestyle{empty}

%
\tableofcontents
\thispagestyle{empty}



%
\begin{tiny}

\end{tiny}\chapter{Introduction}

\section*{Deformation Quantization}

\epigraph{\textit{The miracle of the appropriateness of the language of mathematics for the formulation of the laws of
physics is
a wonderful gift which we neither understand nor deserve. We should be grateful for it and hope that it will remain
valid
in future research and that it will extend, for better or for worse, to our pleasure, even though perhaps also to our
bafflement, to wide branches of learning.}}{---\textup{Eugene Wigner}, The Unreasonable Effectiveness of Mathematics in
the Natural Sciences}
A fundamental strategy in applied science is approximating physical systems by mathematical formalisms. The advantage
of this idea is that mathematical implications can lead to predictions on the corresponding physical system.
However, one has to be careful: a mathematical theory is just an approach of nature. If it is in conflict with a
reproducible measurement, the mathematical framework has to be adapted. At its best by a more general theory which
includes
the aspects of the old theory that are conform to the physical results. One might think of \textit{quantum mechanics}
extending \textit{classical mechanics}. Even if the laws of classical mechanics seem more nearby in daily life,
quantum mechanics is said to be the best description of nature today. This statement is based on measurements on very
small scales. But there is an open question: how does the classical situation appear as a limit of the
quantum mechanical system? Remark that this is a pure mathematical question, since we already know that both, the
classical and the quantum world, exist. There are several approaches of \textit{quantization}, e.g. operator
formalism on Hilbert spaces (consider \cite{dirac1981principles,neumann2013mathematische}) or path integral
quantization (see \cite{Feynman,peskin1995introduction}). The one we are interested in is the so-called
\textit{deformation quantization}. To explain its essence we give a concrete example following
\cite[Section~9.1.1]{waldmann2014difgeolecturenotes}.

We want to discuss an approach to describe a particle of mass $ m $ influenced by a force field
$ F\colon\mathbb{R}^{3}\rightarrow\mathbb{R}^{3} $. The position
of this particle at the time $ t\in\mathbb{R} $ is represented by a vector $ q(t)\in\mathbb{R}^{3} $ as well as the
momentum $ p(t)=m\dot{q}(t) $ of the particle. According to Newton\textquoteright s second law one has
\begin{align}
F(q(t))=m\ddot{q}(t),
\end{align}
for any time $ t\in\mathbb{R} $, where $ \ddot{q}(t) $ denotes the acceleration of the particle at the time $ t $.
This second order differential equation is equivalent to the two first order differential equations
\begin{align}
\dot{q}(t)=\frac{1}{m}p(t)\text{  and  }\dot{p}(t)=F(q(t)).
\end{align}
Remark that we made a change of coordinates from $ q(t)\in\mathbb{R}^{3} $ to $ (q(t),p(t))\in\mathbb{R}^{6} $.
If we assume $ F $ to be conservative there is a potential $ V\colon\mathbb{R}^{3}\rightarrow\mathbb{R} $ such that
\begin{align}
F=-\nabla V,
\end{align}
where $ \nabla V\colon\mathbb{R}^{3}\rightarrow\mathbb{R}^{3} $ denotes the gradient of $ V $.
Then one can define the \textit{Hamilton function}
\begin{align}
H\colon\mathbb{R}^{6}\ni(q,p)\mapsto\frac{p^{2}}{2m}+V(q)\in\mathbb{R}
\end{align}
of the physical system that leads to another equivalent formulation of Newton\textquoteright s second law:
\textit{Hamilton\textquoteright s equations of motion}
\begin{align}
\dot{q}(t)=\frac{\partial H}{\partial p}(q(t),p(t))\text{  and  }\dot{p}(t)=-\frac{\partial H}{\partial q}(q(t),p(t)).
\end{align}
By introducing the anti-symmetric matrix
\begin{align}
\Omega=\begin{pmatrix}
0 & \mathbb{1} \\
-\mathbb{1} & 0
\end{pmatrix}\in M_{6\times 6}(\mathbb{R})
\end{align}
and combined coordinates $ x=(q,p) $, we are able to reduce Hamilton\textquoteright s equations of motion to a single
first order vector-valued differential equation
\begin{align}
\dot{x}(t)=\Omega((\nabla H)(x(t))).
\end{align}
The map
\begin{align}
X_{H}=\Omega\nabla H\colon\mathbb{R}^{6}\rightarrow\mathbb{R}^{6}
\end{align}
is said to be the \textit{Hamiltonian vector field} of $ H $. A solution $ x\colon\mathbb{R}\rightarrow\mathbb{R}^{6} $
of the corresponding flow equation
\begin{align}
\dot{x}(t)=X_{H}(x(t))
\end{align}
describes the position and momentum of the particle. By defining the \textit{Poisson bracket}
\begin{align}
\left\lbrace f,g\right\rbrace=\langle\nabla f,\Omega\nabla g\rangle
\end{align}
of two smooth real-valued functions $ f,g\in\Cinfty(\mathbb{R}^{6}) $ on $ \mathbb{R}^{6} $, we can state an easy
condition for a function to be a \textit{constant of motion} of the corresponding Hamilton system: the derivative of
$ f $ along a solution $ x $ of Hamilton\textquoteright s equations of motion reads
\begin{align*}
\frac{\mathrm{d}}{\mathrm{d}t}f(x(t))=\langle\nabla f(x(t)),\dot{x}(t)\rangle
=\langle\nabla f(x(t)),\Omega((\nabla H)(x(t)))\rangle=\left\lbrace f,H\right\rbrace(x(t)).
\end{align*}
Thus $ f $ is a constant of motion if and only if $ \left\lbrace f,H\right\rbrace(x(t))=0 $ for all $ t\in\mathbb{R} $.
It is easy to check that $ \left\lbrace\cdot,\cdot\right\rbrace $ is a bilinear anti-symmetric map that
assigns two smooth functions on $ \mathbb{R}^{6} $ another smooth function on $ \mathbb{R}^{6} $. Moreover,
$ \left\lbrace\cdot,\cdot\right\rbrace $ satisfies the \textit{Leibniz rule}
\begin{align}
\left\lbrace f,gh\right\rbrace=\left\lbrace f,g\right\rbrace h+g\left\lbrace f,h\right\rbrace
\end{align}
and the \textit{Jacobi identity}
\begin{align}
\left\lbrace\left\lbrace f,g\right\rbrace,h\right\rbrace+\left\lbrace\left\lbrace h,f\right\rbrace,g\right\rbrace
+\left\lbrace\left\lbrace g,h\right\rbrace,f\right\rbrace=0,
\end{align}
for all $ f,g,h\in\Cinfty(\mathbb{R}^{6}) $. The space of smooth real-valued
functions on $ \mathbb{R}^{6} $ equipped with the pointwise product of functions is said to be the
\textit{algebra of classical observables}. In
deformation quantization one does not change the space of variables by passing to the quantum mechanical system.
Instead of this, the multiplication is changed: one deforms the associative and commutative multiplication of the
classical observables by adding perturbation terms, such that the result is still
associative but not commutative any more. This has to be done in a way such that the classical limit gives back the
pointwise product. One also wants the new product to satisfy the \textit{correspondence principle}: the first
perturbation term of the commutator with respect to this product has to result in the Poisson bracket of the
classical system. Returning to our example, we define the \textit{Weyl-Moyal star product}
\begin{align}
(f\star g)(q,p)=\sum_{m,n=0}^{\infty}\Bigg(\frac{i\hbar}{2}\Bigg)^{m+n}\frac{(-1)^{m}}{m!n!}
\Bigg(\frac{\partial^{m}}{\partial p^{m}}\frac{\partial^{n}}{\partial q^{n}}f(q,p)\Bigg)
\Bigg(\frac{\partial^{n}}{\partial p^{n}}\frac{\partial^{m}}{\partial q^{m}}g(q,p)\Bigg).
\end{align}
This associative product is a formal power series
\begin{align}
f\star g=\sum_{n=0}^{\infty}(i\hbar)^{n}C_{n}(f,g),
\end{align}
where $ C_{n} $ are bidifferential operators and $ \hbar $ denotes \textit{Planck\textquoteright s constant},
the formal parameter of this series.
This is the first step of deformation quantization: we obtained a noncommutative algebra
$ (\Cinfty(M)[[\hbar]],\star) $ of quantum observables. The product is not commutative and we realize that
\begin{align}
f\star g=fg+\frac{i\hbar}{2}\left\lbrace f,g\right\rbrace\mod{\hbar^{2}}.
\end{align}
Thus, the classical limit $ \lim_{\hbar\rightarrow 0}f\star g=fg $ gives back the pointwise product.
Moreover, by defining the $ \star $-commutator $ \left[ f,g\right]_{\star}=f\star g-g\star f $ one obtains
the correspondence principle
\begin{align}
\lim_{\hbar\rightarrow 0}\frac{1}{i\hbar}\left[ f,g\right]_{\star}=\left\lbrace f,g\right\rbrace.
\end{align}
This is an a posteriori motivation to choose $ \hbar $ as the formal parameter of this formal power series.
The zeroth order of $ \star $ represents the classical world while quantum effects first appear in order of
Planck\textquoteright s constant, i.e. in the first order of $ \star $. Higher orders of $ \hbar $ are not
physically measurable. In a second and last step one has to consider \textit{states}, i.e.
$ \mathbb{C}[[\hbar]] $-linear positive functionals
\begin{align}
\omega\colon\Cinfty(M)[[\hbar]]\rightarrow\mathbb{C}[[\hbar]],
\end{align}
and \textit{representations} of $ (\Cinfty(M)[[\hbar]],\star) $ on Hilbert spaces which lead to the notion of
$ * $-algebras. One approach is the $ GNS $ construction (see \cite{WaldmannRepresentation}).

In a general setting, the classical observables are encoded by the algebra of smooth functions on a Poisson manifold
(see Section \ref{secLieBiPoLie} for a definition). Then a star product (see Definition \ref{def3thomas}) is a
deformation of the pointwise product of these functions, such that the correspondence principle leads to the
Poisson bracket of the manifold. Thus, if one is interested in quantization of a classical system, one option is
to consider star products on Poisson manifolds. The notion of star products goes back to the work of F. Bayen
et al. \cite{BAYEN197861}. S. Gutt \cite{Gutt1983} and V. G. Drinfel\textquoteright d
\cite{DrinfeldOnConstant} discovered star products as deformations of linear Poisson structures on the dual of Lie
algebras. Later B. V. Fedosov proved that there is a star product on any symplectic manifold (c.f. \cite{fedosov1994})
extending the work of M. de Wilde and P. Lecomte (c.f. \cite{deWilde1983}). The contribution of Fedosov
is remarkable, since he gives a recursive formula of a star product, while constructive proofs are rare
in deformation quantization. Ideas of his proof are also used beyond symplectic geometry, e.g. on Poisson
manifolds (c.f. \cite{FelderQuantization}) and Kähler manifolds (c.f. \cite{WaldmannBordemann}).
By a result of M. Kontsevich (c.f. \cite{kontsevich1997deformation})
there is always a star product on a Poisson manifold. A different proof by D. E. Tamarkin appeared later
(c.f. \cite{TamarkinKontsevich}). A physically more relevant approach is to consider convergent star products.
This can be looked up e.g. in \cite{Bieliavsky2002Convergent,SolovievConvStar}.
Consider also \cite{Dito1990} for a star product approach to quantum field theory.
The papers \cite{SternheimerTwentyYears} and \cite{WaldmannRecentDevelopments}
overview the development of deformation quantizations and provide further references.

\section*{Drinfel\textquoteright d Twists}

We are interested in a more specific situation, again motivated
form physics: if there is a quantization of the symmetries of a classical system, the system itself has to be
quantized in a compatible way. This led to the notion of \textit{twists} in the 1980's. It was
V. G. Drinfel\textquoteright d who developed the corresponding theory
(see e.g. \cite{DrinfeldOnConstant,drinfeldquantumgroups}). Symmetries are encoded by formal power series of
universal enveloping algebras that act on the algebra of classical observables. The twist is an element of the
tensor product of the symmetries. As a result there is a way to change the Hopf algebra structure of the symmetries
such that the new structure is noncocommutative. Moreover, one can deform the product of the classical
observables and obtains a noncommutative product $ \star $ in terms of the twist. If $ \star $ is a star product, we
received
a deformation quantization of the classical system via a twist. Thus the advantage is that any module algebra of
the symmetries is deformed in an appropriate way. There has been many research on twist deformation:
in \cite{DrinfeldQuasiHopf} Drinfel\textquoteright d proves that twists are
in one-to-one correspondence with left-invariant star products on formal power series of smooth functions on
Lie groups. Concrete examples and formulas can be found in
\cite{BieliavskyUDF} and \cite{Giaquintobialgebraactions}, while \cite{schenkeldrintwistdef} gives an extension
of twists to arbitrary
connections. Via formality G. Halbout provides a quantization of twists on Lie bialgebras to obtain quantum twists
(c.f. \cite{HalboutFormality}). Furthermore, twists enable deformation quantization e.g. for actions of Kählerian Lie
groups (c.f. \cite{bieliavsky2015deformation}), $ 3 $-dimensional solvable Lie groups (c.f. \cite{Bieliavsky2005})
and the Heisenberg supergroup (c.f. \cite{BieliavskyHeisenberg}).
In quantum field theory one uses twists to deform the Poincaré group and implement the framework
of noncommutative geometry (see \cite{QFTDrinfeldTwist}). 
There are even twist approaches in string theory
(consider \cite{AsakawaTwistStringTheory,AsakawaTwistStringTheory2}).

Drinfel\textquoteright d introduced the twist element
\begin{align}
\mathcal{F}\in(\mathcal{U}(\mathfrak{g})\tensor\mathcal{U}(\mathfrak{g}))[[\hbar]]
\end{align}
on the formal power series of the tensor product of the universal enveloping algebra
(see Definition \ref{DefTwistUEA}). By construction it deforms the Hopf algebra structure of
$ \mathcal{U}(\mathfrak{g})[[\hbar]] $. In particular, it induces a deformed coproduct and a deformed antipode.
Assume that the formal power series $ \Cinfty(M)[[\hbar]] $ of smooth functions of a Poisson manifold $ (M,\pi) $
are a left module algebra of $ \mathcal{U}(\mathfrak{g})[[\hbar]] $ (see Definition \ref{DefModAlg}) via an action
$ \rhd $. The product on $ \Cinfty(M)[[\hbar]] $ is the pointwise multiplication $ \cdot $ of functions. Then a star
product $ \star $ on $ M $ is said to be \textit{induced by the twist} $ \mathcal{F} $ or a \textit{twist star product}
if for every $ f,g\in\Cinfty(M)[[\hbar]] $ one has
\begin{align}\label{IntrTwistStar}
f\star g=m(\mathcal{F}^{-1}\rhd(f\tensor g)),
\end{align}
where $ m\colon\Cinfty(M)[[\hbar]]\tensor\Cinfty(M)[[\hbar]]\ni(f\tensor g)\mapsto f\cdot g\in\Cinfty(M)[[\hbar]] $.
For a general left Hopf algebra module algebra $ (\algebra{A},\cdot) $ the \textit{twist product} defined in Eq.
(\ref{IntrTwistStar}) provides a deformation of a product $ \cdot $. But one deforms in a very controlled way: the new
product $ \star $ is associative and has the same unit. Moreover, $ (\algebra{A},\star) $ is a left Hopf algebra
module algebra of the deformed Hopf algebra. This is a very desired situation, since we do not only deform the
symmetries
of a system, but the system itself in a compatible way. Thus it is not surprising that Drinfel\textquoteright d
twists are frequently used tools in deformation quantization.
For example, there are many twist approaches on the so-called
$ q $-deformed sphere (consider \cite{WatamuraFuzzy,SteinackerFuzzyII,KurkcuogluFuzzy,SteinackerFuzzyI}).
Unfortunately, there are not many classifications of
twist elements around (c.f. \cite{HalboutFormality}) and the existence can often not be assured, besides there
are not many examples.
Moreover, it is not known if twist products deform the Poisson bracket of the Poisson manifold, i.e. if
the correspondence principle is valid. This means that even if there are twist products it is not clear if they are
physically relevant.
The aim of this thesis is to show that deformation quantization via a twist is not possible in many cases.

\begin{maintheorem}
There is no twist star product on the $ 2 $-sphere and the pretzel surfaces of genus $ g>1 $ deforming a
symplectic structure.
\end{maintheorem}

Furthermore, the theorem can be applied to any symplectic foliation of a Poisson manifold.
This is no classification result, but we show the constraints of twist deformation theory.
Consequently, for many algebras of classical observables one has to consider a different quantization approach.
This theorem will also appear in a preprint as a collaboration with Pierre Bieliavsky, Chiara Esposito
and Stefan Waldmann \cite{NoTwistPaper}.

\section*{Outlook}

There are many possibilities to generalize and expand the obstructions given in this thesis. We prove in
Corollary \ref{CompactTwistHomSp} that any connected compact symplectic manifold, which inherits a twist
star product, has to be a homogeneous space. Thus there are no twist star products on the higher pretzel surfaces
(see Theorem \ref{ThmPretzel}), but the contradiction applies to any connected compact symplectic homogeneous space.
In particular, one can consider connected compact symplectic manifolds with negative Euler characteristic to
obtain new counterexamples immediately. Also the argumentation in the proof of Theorem \ref{LastTheorem} can be
adapted. There we use the classification of effective transitive actions on the $ 2 $-sphere (see
Theorem \ref{ThmAllSphereActions}). The existence of a twist star product on $ \mathbb{S}^{2} $ induces a transitive
effective action of a connected non-semisimple Lie group on the $ 2 $-sphere. This is a contradiction to the
classification that only involves semisimple Lie groups. Thus the same argumentation holds for connected
compact symplectic manifolds that only admit transitive effective actions of semisimple Lie groups. The condition
of being compact can be dropped if one assures the integrability of the involved Lie algebra actions. Maybe
some candidates to consider are the complex projective spaces $ \mathbb{CP}^{n} $ and the Lagrangian Grassmannian.
Furthermore, this thesis could be helpful to give a classification of Drinfel\textquoteright d twists, since it
shows how obstructions apply on topological and geometrical level.

\section*{Organization}

This master thesis is structured as follows: Chapter \ref{chapTransAct} starts with an introduction to Lie group
actions. In particular, orbits and stabilizers of Lie group actions are
treated. We also recap some essential results in the theory of differential geometry, but only refer to their proofs.
The study of homogeneous spaces leads naturally to transitive actions and vice versa. We prove this
correspondence in full detail, since this is the most important characterization of homogeneous spaces for our
purpose. Also three possibilities on how the $ 2 $-sphere can be structured as a homogeneous space are given. It was
already mentioned that compact homogeneous spaces have non-negative Euler characteristic. We give a proof assuming
slightly stronger conditions by using group-invariant de Rham cohomology and Lie group representation theory.
This chapter ends by summarizing the classification of connected Lie groups that act transitively and effectively
on the $ 2 $-sphere. The following chapter is more algebraic. After studying some basic Lie algebra representation
theory, we consider Lie bialgebras and mention Poisson-Lie groups as their global counterpart. In particular, the dual
character of Lie bialgebras is pointed out. In fact, we are interested in a more specific situation: some Lie bialgebra
structures are cocycles, i.e. they are obtained as the Chevalley-Eilenberg differential of an element
$ r\in\mathfrak{g}\wedge\mathfrak{g} $. This naturally leads to $ r $-matrices and the classical Yang-Baxter
equation. While until now most of the theory was basic, the last section of this chapter is
quite specific: we define the Etingof-Schiffmann subalgebra as the Lie subalgebra, in which a $ r $-matrix is
non-degenerate.
The Etingof-Schiffmann subalgebra is never semisimple, which is an essential argument. The first section of Chapter
\ref{ChapTwistDef} is again very algebraic. We examine the notion of Drinfel\textquoteright d twists in great detail.
First we focus on twists on arbitrary Hopf algebras and in the second section on twists on formal power series of the
universal enveloping algebra. As mentioned in the introduction, the twist induces deformation in a more
categorial frame: not only the Hopf algebra but every module algebra can be deformed. The deformed algebra
multiplication is said to be a twist star product in the context of formal power series and star products.
As a last statement, we view a twist on formal power series as a quantization of a $ r $-matrix. In Chapter
\ref{rmatrixhomogeneous} we prove one of the main results of the thesis by connecting all previous notions and results.
Assuming the Poisson bivector of a connected compact symplectic manifold to be the image of a $ r $-matrix under a
Lie algebra action, we construct a transitive Lie group action on this manifold. This Lie group is the
Etingof-Schiffmann subgroup
corresponding to the $ r $-matrix. Thus this special Poisson bivector is an indicator for homogeneous spaces.
In particular, this situation occurs if there exists a twist star product. Thus we prove that a connected
compact symplectic manifold, that can be equipped with a twist star product, is a homogeneous space and there is
a $ r $-matrix which is non-degenerate in a Lie algebra, whose Lie group acts transitively on the homogeneous space.
In the last chapter we use this result to produce obstructions to twist star products. The higher pretzel
surfaces are not homogeneous spaces and the sphere only admits semisimple transitive effective Lie group actions,
which stands in contradiction to the action of the Etingof-Schiffmann subgroup. Remark that there are star products on 
all these symplectic manifolds according to the Fedosov construction. Thus there are indeed star products that can
not be induced by Drinfel\textquoteright d twists. The two appendices are short introductions to Hopf algebras
and semisimple Lie algebras, respectively.

\section*{Thanks to...}

I want to thank my advisors, Chiara Esposito and Stefan Waldmann, for their support and patience. They invested
a lot of time in discussions that promoted the thesis. Not to mention the nice $ \hbar $bqs
regardlessly of the weather.
I am grateful to Pierre Bieliavsky, since the idea that led to this thesis is based on a discussion of him with
Chiara Esposito and Stefan Waldmann in Oberwolfach \cite{PierreDiscussion}.
I want to thank Jonas Schnitzer and Thorsten Reichert for giving me their assistance and advice.
Finally, I am thankful to my family and Verónica for their encouragement that strengthens me.

%

\chapter{Transitive Actions and Homogeneous Spaces}\label{chapTransAct}

Homogeneous spaces are one of the central concepts of this thesis. A first and very simple definition of a homogeneous
space is the following: if $ H $ is a closed subgroup of a Lie group $ G $, the set of left cosets of $ H $ in $ G $,
i.e. $ \left\lbrace\left\lbrace gh~|~h\in H\right\rbrace~|~g\in G\right\rbrace $, is said to be a homogeneous space
$ G/H $. This also works for topological groups. But with this point of view we do not see the actual strength of
homogeneous spaces. The original motivation to consider them is their nature to encode symmetries of spaces.
To make this idea a bit more concrete one needs the notion of Lie group actions. They are smooth functions that
assign to any point of a Lie group $ G $ a diffeomorphism on a smooth manifold $ M $ in a way that respects the
left multiplication on $ G $. Also, one wants the identity on $ G $ to induce the identity map on $ M $. There a
various types of Lie group actions with many nice and useful properties and we discuss the ones that are interesting
for our purpose in Section \ref{SecGroupAction}. For any Lie group action there are two fundamental structures, one
on the Lie group and one on the manifold. The structure on the Lie group is the set of elements which induce
diffeomorphisms that reproduce a given point $ x $ on $ M $ and it is said to be the stabilizer of $ x $. We can
prove that it is always a Lie subgroup of $ G $. On the other hand the interesting structures on the manifold
are the orbits of the action. Intuitively, everyone has an idea of what an orbit should be: one might think of the
way a satellite circuits a planet or electrons an atomic nucleus in the Bohr model. There is a way to phrase these
ideas also in a mathematical definition: for a fixed point $ x $ on the manifold $ M $ the orbit of $ x $ is the set of
points on $ M $ that are reached by applying $ x $ to every diffeomorphism that is produced by elements of $ G $
via the action. Unfortunately, not every orbit is a smooth submanifold of $ M $, but we prove that this is true
if one stays close to the identity in $ G $ or if the Lie group is compact. A very interesting situation is obtained
if there is only one orbit, i.e. if all elements of $ x $ produce the same orbit which coincides with $ M $ in this
case. Such actions are said to be transitive and they connect homogeneous spaces to Lie group actions. It is the task
of Section \ref{sectionhomspace} to establish this connection. First of all, it is easy to conclude that the left
cosets of a closed subgroup $ H $ of the Lie group $ G $ are exactly the orbits of the right multiplication of
elements of $ H $ in $ G $. Thus one could define a homogeneous space $ G/H $ as the orbit space of this
particular action.
One benefit of this definition is that one is able to build the structure of a smooth manifold
on the homogeneous space as a simple consequence of the ``Free and Proper Action Theorem''.
But there is more: we get a whole classification of homogeneous spaces by transitive actions. On one hand
there is a natural transitive action of $ G $ on any homogeneous space $ G/H $. Surprisingly, the converse
statement is also true: for any transitive action of a Lie group $ G $ on a smooth manifold $ M $ there
is a diffeomorphism between $ M $ and the homogeneous space $ G/G_{x} $, where $ G_{x} $ denotes the stabilizer
of any point $ x $ of $ M $. Thus we are able to identify $ M $ with the orbit space $ G/G_{x} $, i.e. any point of
$ M $ is viewed as a left coset or orbit of an element in $ G $ with respect to $ G_{x} $. This also explains why
one calls $ G $ the symmetries of $ M $ which was one of our initial statements.
The last perception is fundamental for the proceeding of this thesis. Instead of searching
for coset or orbit spaces we are looking for transitive actions, that are easy to handle. As an example, we
give three different ways to structure the $ 2 $-sphere as a homogeneous space via different Lie groups that
act transitively on it. The last two sections of this chapter are then more specific classifications and indicators of
homogeneous spaces with topological arguments. Section \ref{SecEulerPos} is all about to prove that connected
compact homogeneous spaces have non-negative Euler characteristic. The Euler characteristic is a
topological invariant of a manifold, namely the alternating series of the dimensions of its cohomology groups.
It is an integer that is known for many manifolds. In particular, the connected compact pretzel surfaces
$ \mathrm{T}(g) $ of genus $ g\in\mathbb{N}_{0} $ have Euler characteristic $ 2(1-g) $. It follows that
$ \mathrm{T}(g) $ is not a homogeneous space
if $ g>1 $ and we experience the first time a very typical spirit of this thesis: we found an
obstruction, i.e. there is no Lie group that acts transitively on those manifolds. In Chapter \ref{ChapObstruction}
we expand these obstructions to twist star products. But first we continue with Section \ref{sectiononish}
and classify all transitive actions that act on the $ 2 $-sphere with some additional properties. In fact,
these are the three Lie group actions that we discussed in Section \ref{sectionhomspace} before. In particular,
they are semisimple Lie groups.

\section{Lie Group Actions}\label{SecGroupAction}

We have to assume that the reader has a basic understanding of Lie theory. The symmetries of our systems are
represented by Lie groups and we often investigate their infinitesimal information, i.e. their Lie algebras.
For a introduction to these two fundamental objects and their connection we refer to
\cite[Chapter~III]{bourbaki1998lie}, \cite[Chapter~1]{duistermaat2012lie}, \cite[Chapter~4]{kosmann2009groups}
or \cite[Chapter~4]{procesi2007lie}. Most of the time we argue with a very geometric point of view, thus the proofs
often inherit tools known from differential geometry. An efficient approach to this huge topic is given in
\cite{lee2003introduction} while we also refer to \cite{lee2009manifolds} and \cite{michortopics}.
The notion of Lie group actions is also very common, but nevertheless we present the basic definitions and results
in this section since these ideas are very central in this thesis. However, we proceed in a colloquial style to
accelerate the process, but also to carve out the important concepts. The major source is
\cite[Part~I~Chapter~1]{onishchik1993lie}, but the elementary theory of Lie group actions can also be found in
\cite[Lecture~1~and~2]{hsiang2000lectures}, \cite[Chapter~2]{kirillov2008introduction},
\cite[Chapter~9]{lee2003introduction} and \cite[Chapter~2]{michortopics}.

In the following $ M $ always denotes a smooth manifold and $ G $ a Lie group with identity element $ e\in G $. A
\textit{(left) action} of $ G $ on $ M $ is a smooth map
\begin{align}\label{action}
\Phi\colon G\times M\rightarrow M
\end{align}
that satisfies $ \Phi(g,\Phi(h,x))=\Phi(gh,x) $ and $ \Phi(e,x)=x $, for all $ g,h\in G $ and $ x\in M $. We further
denote for any $ g\in G $ the induced diffeomorphism on $ M $ by
\begin{align}
\Phi_{g}\colon M\ni x\mapsto\Phi_{g}(x)=\Phi(g,x)\in M.
\end{align}
This is clearly a diffeomorphism since there is the inverse map $ \Phi_{g^{-1}} $ and the smoothness of both maps
follows from the smoothness of $ \Phi $. Thus one can consider
\begin{align}
\tilde{\Phi}\colon G\ni g\mapsto\Phi_{g}\in\text{Diff}(M),
\end{align}
which maps an element of the Lie group to a diffeomorphism of $ M $. By definition $ \Phi_{e}=\text{id}_{M} $ is the
identity map on $ M $, which means $ e\in\ker\tilde{\Phi} $. If we have $ \ker\tilde{\Phi}=\left\lbrace
e\right\rbrace $ the action $ \Phi $ is said to be \textit{effective} or \textit{faithful}. Actions are often
demanded to be effective since one wants different group elements to induce different diffeomorphisms and this is
exactly what effective actions do. If $ \Phi $ is effective and $ g,h\in G $ are two different elements there is an
element $ x\in M $ such that $ \Phi_{g}(x)\neq \Phi_{h}(x) $, since otherwise $ \Phi_{h^{-1}g}=\text{id}_{M} $ while
$ h^{-1}g\neq e $, which is a contradiction to the effectiveness of $ \Phi $. In particular, this situation is
fulfilled
if for two different elements $ g,h\in G $ there is no $ x\in M $ such that $ \Phi_{g}(x)=\Phi_{h}(x) $. This
kind of action is said to be \textit{free}. In general $ \ker
\tilde{\Phi}\subseteq G $ is a Lie subgroup. This holds because $ \tilde{\Phi} $ is a group homomorphism by the
action property of $ \Phi $. If $ \ker\tilde{\Phi} $ is a discrete subgroup we call the action
$ \Phi $ \textit{locally effective}. Thus by definition any effective action is locally effective but not vice versa.
Two more notions induced by
actions can be obtained by considering for any $ x\in M $ the smooth map
\begin{align}
\Phi_{x}\colon G\ni g\mapsto\Phi_{x}(g)=\Phi(g,x)\in M.
\end{align}
We call the image
\begin{align}
G\cdot x=\left\lbrace\Phi_{x}(g)\in M~|~g\in G\right\rbrace
\end{align}
of $ G $ under $ \Phi_{x} $ the \textit{orbit} of $ x $ and the preimage
\begin{align}
G_{x}=\left\lbrace g\in G~|~\Phi_{x}(g)=x\right\rbrace
\end{align}
of $ x $ through $ \Phi_{x} $ the \textit{stabilizer} or \textit{isotropy group} of $ x $. Orbits are interesting
objects because they build a partition of $ M $, i.e. $ M $ can be regarded as the disjoint union of the orbits of
any action $ \Phi $ on $ M $. To prove this one first recognizes that the orbits of two different points $ x,y\in M $
are either equal or disjoint, since if $ (G\cdot x)\cap(G\cdot y)\neq\emptyset $ there is
an $ z\in(G\cdot x)\cap(G\cdot y) $, i.e. there are $ g,h\in G $ such that $ \Phi_{x}(g)=z=\Phi_{y}(h) $. Then for
any $ m\in G $ the point $ \Phi_{y}(m) $ of the orbit $ G\cdot y $ is an element of $ G\cdot x $ since
\begin{align*}
\Phi_{x}(mh^{-1}g)&=\Phi_{mh^{-1}g}(x)=\Phi_{mh^{-1}}(\Phi(g)(x))=\Phi_{mh^{-1}}(\Phi(h)(y))\\
&=\Phi_{mh^{-1}h}(y)=\Phi_{y}(m).
\end{align*}
In the same fashion $ G\cdot x $ is a subset of $ G\cdot y $ which implies $ G\cdot x=G\cdot y $. Also any point
$ x\in M $ lies inside one orbit, namely in its own since $ \Phi_{x}(e)=x $. Thus we achieved the partition of $ M $
as it was stated above. This gives the possibility to define an equivalence relation on $ M $ by declaring
$ x\sim y $ for $ x,y\in M $ if and only if $ x $ and $ y $ lie in the same orbit. We are interested in the
situation
in which there is only one orbit because such actions are very much related to homogeneous spaces. We call them
\textit{transitive} actions. Equivalently, one can say that an action is transitive if for each pair $ (x,y)\in M
\times M $ there is an element $ g\in G $ such that $ \Phi(g,x)=y $. One missing piece to connect transitive actions
to homogeneous spaces is the second object we defined along with orbits, the stabilizer. In the next section we
see that in the case of a transitive action all stabilizers are conjugate to each other and the quotient space of the
Lie group and any stabilizer is isomorphic to $ M $. This procedure is said to structure $ M $ as a homogeneous
space via a transitive action. Moreover, it turns out that the concepts of homogeneous spaces and transitive actions
are equivalent.

Before we study this connection we want to prove some results that already apply to arbitrary actions. The question
is what kind of substructures orbits and stabilizers are. For this we consider the map $ \Phi_{x} $ for a $ x\in M $
and prove that is has constant rank. Remember that the \textit{rank} $ \text{rank}_{p}f $ of a smooth map
$ f\colon M\rightarrow N $ between smooth manifolds at a point $ p\in M $ is the dimension of the image of the
derivative $ T_{p}f\colon T_{p}M\rightarrow T_{f(p)}N $ of $ f $ at $ p $, i.e. $ \text{rank}_{p}f=\dim\text{im}T_{p}
f $. If the rank does not depend on the point $ p $ which was chosen, the \textit{Constant Rank Theorem} (c.f.
\cite[Theorem~8.8]{lee2003introduction}) states that each level set of $ f $ is a closed
embedded submanifold of $ M $ of codimension $ \text{rank}_{p}f $. In this case we denote the rank of $ f $ by
$ \text{rank}f $ since the point we choose is not relevant. A further consequence of this theorem is that
\begin{align}\label{kernelformula}
T_{p}f^{-1}(\left\lbrace q\right\rbrace)=\ker T_{p}f
\end{align}
holds for all $ q\in N $ (c.f. \cite[Lemma~8.15]{lee2003introduction}).
Also recall that $ f $ is a \textit{local diffeomorphism} if for every $ p\in M $ there is an open
neighbourhood $ U\subseteq M $ of $ p $ such that $ \left.f\right|_{U}\colon U\rightarrow f(U) $ is a diffeomorphism.
As a consequence of the \textit{Inverse Function Theorem} (c.f \cite[Theorem~7.10]{lee2003introduction})
this definition is equivalent to the requirement that $ T_{p}f\colon T_{p}M\rightarrow T_{f(p)}N $ is an isomorphism
for any $ p\in M $.

\begin{theorem}\label{thmsubmfd}
Let $ \Phi\colon G\times M\rightarrow M $ be an action and $ x\in M $ an arbitrary point. Then the following
statements hold:
\begin{compactenum}
\item The map $ \Phi_{x} $ has constant rank $ \operatorname{rank}(\Phi_{x})
=k\in\left\lbrace 0,\ldots,\dim M\right\rbrace $.
\item The stabilizer $ G_{x}\subseteq G $ is a Lie subgroup of codimension $ k $ and
\begin{align}
T_{e}G_{x}=\ker T_{e}\Phi_{x}.
\end{align}
\item There is a neighbourhood $ U\subseteq G $ of the identity such that the set $ U\cdot x\subseteq M $ is a
$ k $-dimensional submanifold and
\begin{align}
T_{x}(U\cdot x)=T_{e}\Phi_{x}(T_{e}G).
\end{align}
\item If the orbit $ G\cdot x\subseteq G $ is a submanifold, then $ \dim(G\cdot x)=k $.
\item If $ G $ is compact, then $ G\cdot x\subseteq M $ is a submanifold of dimension $ k $.
\end{compactenum}
\end{theorem}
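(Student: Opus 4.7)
My plan is to derive all five claims from two ingredients: the equivariance of $\Phi_x$ under left multiplication, and the Constant Rank Theorem applied to $\Phi_x$. For (1), I would observe that the action law yields $\Phi_x(gh)=\Phi_g(\Phi_x(h))$, i.e., $\Phi_x\circ L_g=\Phi_g\circ\Phi_x$ for every $g\in G$, where $L_g$ denotes left translation by $g$. Differentiating this identity at $h\in G$ and using that $L_g$ and $\Phi_g$ are diffeomorphisms (hence induce linear isomorphisms on tangent spaces), the rank of $T_h\Phi_x$ equals the rank of $T_{gh}\Phi_x$. Since $g,h$ are arbitrary, the rank is the same at every point of $G$; call this common value $k$.

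For (2) and (3), I would feed the constant rank into the Constant Rank Theorem. The stabilizer is the level set $G_x=\Phi_x^{-1}(\{x\})$ of a constant rank map, hence a closed embedded submanifold of $G$ of codimension $k$; combined with its group structure, this makes it a Lie subgroup, and the tangent space identity $T_eG_x=\ker T_e\Phi_x$ is a direct application of Eq.~(\ref{kernelformula}). The Rank Theorem moreover supplies coordinates near $e\in G$ and near $x\in M$ in which $\Phi_x$ becomes a linear projection onto the first $k$ coordinates. Taking $U$ to be this coordinate neighbourhood of $e$, the image $U\cdot x=\Phi_x(U)$ becomes a $k$-dimensional embedded submanifold of $M$, and the tangent space formula $T_x(U\cdot x)=T_e\Phi_x(T_eG)$ is read off directly from the coordinate description.

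For (4), suppose $G\cdot x\subseteq M$ is a submanifold of dimension $d$. Then $\Phi_x\colon G\to G\cdot x$ is smooth, surjective, and still has constant rank $k$. On the one hand $G\cdot x$ contains the $k$-dimensional embedded submanifold $U\cdot x$ from (3), forcing $d\geq k$. On the other hand, if $d>k$ every point of $G$ would be a critical point of $\Phi_x\colon G\to G\cdot x$, so Sard's theorem would force $\Phi_x(G)$ to have measure zero in $G\cdot x$, contradicting surjectivity; hence $d\leq k$, and equality follows.

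Finally, for (5), assume $G$ is compact. For any $y\in G\cdot x$ write $y=g\cdot x$ and observe that $\Phi_g(U\cdot x)\subseteq G\cdot x$ is a $k$-dimensional embedded submanifold of $M$ containing $y$, which equips the orbit with the structure of a $k$-dimensional immersed submanifold. The hardest step of the theorem is upgrading this from \emph{immersed} to \emph{embedded}; here compactness rescues us, for $G\cdot x=\Phi_x(G)$ is a continuous image of a compact space, hence compact in $M$, and any continuous bijection from a compact space to a Hausdorff space is a homeomorphism. Applied to the comparison between the intrinsic immersed topology on the orbit and its subspace topology from $M$, this forces the two topologies to coincide, so $G\cdot x$ is in fact a $k$-dimensional embedded submanifold of $M$.
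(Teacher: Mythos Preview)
Your treatment of (1)--(3) matches the paper's: both use the equivariance $\Phi_x\circ L_g=\Phi_g\circ\Phi_x$ to obtain constant rank, and then invoke the Constant Rank Theorem for the stabilizer and the local image. For (4) you argue via Sard, whereas the paper simply observes that $G\cdot x$ is covered by countably many $k$-dimensional pieces $\Phi_g(U\cdot x)$; your argument is more explicit and equally valid.

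For (5) the approaches diverge. The paper gives a direct local argument: setting $C=G\setminus(UG_x)$, it shows $(U\cdot x)\cap(C\cdot x)=\emptyset$ and $G\cdot x=(U\cdot x)\cup(C\cdot x)$; compactness of $G$ makes $C$ compact, hence $C\cdot x$ is closed in $M$, and then $(G\cdot x)\cap(M\setminus C\cdot x)=U\cdot x$ exhibits the orbit as a submanifold in an open neighbourhood of $x$. Translating by $\Phi_g$ handles all other orbit points. Your route via ``compact immersed implies embedded'' is a standard alternative, but there is a gap as written: you establish that $G\cdot x$ is compact in the \emph{subspace} topology (as $\Phi_x(G)$), yet the compact-to-Hausdorff bijection argument requires compactness of the \emph{domain}, i.e., of the orbit in its \emph{intrinsic} immersed topology. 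These are not a priori the same, and invoking the latter from the former is precisely what you are trying to prove. The clean fix is to pass through $G/G_x$: by (2) this is a smooth manifold, the induced map $G/G_x\to M$ is an injective immersion of rank $k$, and $G/G_x$ is compact because $G$ is compact and $G_x$ is closed. Then the compact-to-Hausdorff argument applies legitimately. The paper's argument avoids this detour by never introducing an auxiliary topology on the orbit at all.
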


\begin{proof}
The proof is inspired by \cite[Theorem~I.2.1~and~Theorem~I.2.3]{onishchik1993lie}.
Fix a point $ x\in M $. Because $ \Phi_{x} $ is smooth we can differentiate it at $ g\in G $ and obtain a linear map
$ T_{g}\Phi_{x}\colon
T_{g}G\rightarrow T_{\Phi_{x}(g)}M $ between vector spaces and a natural number $ k\in\left\lbrace
0,\ldots,\dim M\right\rbrace $ such that
\begin{align}
\text{rank}_{g}\Phi_{x}=\dim\text{im}(T_{g}\Phi_{x})=k.
\end{align}
We show that $ k $
is independent of the element $ g\in G $. Let $ h\in G $ be another element. By the action property one has
\begin{align}\label{eq01t}
\Phi_{h}(\Phi_{x}(g))=\Phi(h,\Phi(g,x))=\Phi(hg,x)=\Phi_{x}(L_{h}(g)),
\end{align}
where we denoted the left multiplication with the element $ h $ on $ G $ by $ L_{h}\colon G\ni g\mapsto hg\in G $.
Since all the involved mappings are smooth, differentiating Eq. (\ref{eq01t}) gives by the chain rule
\begin{align}\label{eq02t}
T_{\Phi_{x}(g)}\Phi_{h}~T_{g}\Phi_{x}=T_{L_{h}(g)}\Phi_{x}~T_{g}L_{h}=T_{hg}\Phi_{x}~T_{g}L_{h}.
\end{align}
Since $ L_{h} $ is a diffeomorphism it is also a local diffeomorphism, i.e. we have $ \text{im}T_{g}L_{h}=T_{hg}G $
and since $ \Phi_{h} $ is a diffeomorphism we get by the same argument
\begin{align}
\dim\text{im}(T_{\Phi_{x}(g)}\left.\Phi_{h}\right|_{\text{im}T_{g}\Phi_{x}})=\dim(\text{im}T_{g}\Phi_{x})=k.
\end{align}
These two observations together with Eq. (\ref{eq02t}) imply that
$ \text{rank}_{hg}\Phi_{x}=\dim\text{im}T_{hg}\Phi_{x}
=\dim\text{im}T_{g}\Phi_{x}=\text{rank}_{g}\Phi_{x} $. Thus if we set $ h=g'g^{-1} $ we see that the rank of
$ \Phi_{x} $ at $ g $ coincides with the rank of $ \Phi_{x} $ at any other point $ g'\in G $. Then the rank is indeed
constant and the first claim is proved. As argumented before the Constant Rank Theorem implies then that the preimage
$ \Phi_{x}^{-1}(\left\lbrace x\right\rbrace)=G_{x} $ is an embedded submanifold of $ G $ of codimension $ k $ and
$ T_{e}G_{x}=T_{e}\Phi_{x}^{-1}(\left\lbrace x\right\rbrace)=\ker T_{e}\Phi_{x} $. Because the stabilizer is a closed
subgroup, as the preimage of a closed subset $ \left\lbrace x\right\rbrace\subseteq M $ under a smooth function, it is
also a Lie subgroup of $ G $ (c.f. \cite[Theorem~20.10]{lee2003introduction}).
This proves the second claim. Now since $ \Phi_{x} $ is of constant rank, the third
statement is just another consequence of the Constant Rank Theorem (c.f. \cite[Theorem~7.13]{lee2003introduction}).
Then the next statement follows since there is a countable cover of $ G\cdot x $ of $ k $-dimensional
submanifolds according to iii.). Finally,
according to iii.) there is a neighbourhood $ U\subseteq G $ of the identity such that
$ U\cdot x\subseteq M $ is a submanifold. The set $ C=G\setminus(UG_{x}) $ fulfils $ (U\cdot x)\cap(C\cdot x)=\emptyset
$ and $ G\cdot x=(U\cdot x)\cup(C\cdot x) $. To prove the first property assume there is an element
$ x'\in(U\cdot x)\cap(C\cdot x) $. Then, on one hand there is a $ g\in U $ such that $ \Phi_{x}(g)=x' $ and on the
other hand there is a $ h\in C $ such that $ \Phi_{x}(h)=x' $. But this means $ \Phi_{g}(x)=\Phi_{x}(g)=\Phi_{x}(h)
=\Phi_{h}(x) $ which implies $ \Phi_{g^{-1}h}(x)=x $, i.e. $ g^{-1}h\in G_{x} $. Thus $ h\in gG_{x}\subseteq UG_{x} $
which gives the contradiction. For the second property take $ x'\in G\cdot x $ and assume that $ x'\notin U\cdot x $,
i.e. there is a $ g\in G $ such that $ x'=\Phi_{x}(g) $, but there is no $ u\in U $ such that $ x'=\Phi_{x}(u) $. This
is
exactly the condition $ x'\in C\cdot x $ with $ C=G\setminus(UG_{x}) $. Since $ U $ is open and $ G_{x} $ closed the
set $ UG_{x}\subseteq G $ is open. That means $ C $ is closed and for this compact as a subset of a compact set $ G $.
We already mentioned that $ \Phi_{x} $ is continuous, so it is well known that $ C\cdot x=\Phi_{x}(C)\subseteq M $ is
still compact. We proved that the intersection of $ G\cdot x $ with the open set $ M\setminus C\cdot x\subseteq M $ is
a submanifold, i.e. $ (G\cdot x)\cap(M\setminus C\cdot x)=U\cdot x $ is a submanifold. This construction is
independent of the point $ x\in M $. This concludes the proof.
\end{proof}
Remark that $ G\cdot x $ is not always a submanifold of $ M $. One might think of the \textit{dense
winding of the torus} (c.f. \cite[page~14]{onishchik1993lie}).

We want to stress that there is also a notion of \textit{right actions} which works completely
analog to left actions. One calls $ \Phi\colon G\times M\rightarrow M $ a right action of $ G $ on $ M $ if for every
$ x\in M $ and $ g,h\in G $ one has $ \Phi(gh,x)=\Phi(h,\Phi(g,x)) $ and $ \Phi(e,x)=x $. We adapt the same notations
and results we developed for left actions. It is intuitive to denote the orbit of an element $ x\in M $ in the
case of a right action by $ x\cdot G $. In the following sections we need both kinds of actions but left actions more
frequently. For this reason, we refer to left actions just as actions.

To conclude this section we mention the correspondence of Lie group actions and Lie algebra actions. A (left)
\textit{Lie algebra action} of a Lie algebra $ \mathfrak{g} $ on $ M $ is an anti-homomorphism
\begin{align}
\phi\colon\mathfrak{g}\rightarrow\Gamma^{\infty}(TM)
\end{align}
of Lie algebras, while a right Lie algebra action is a homomorphism. Consider a Lie group action
$ \Phi\colon G\times M\rightarrow M $ and fix a $ x\in M $. One can check that
\begin{align}\label{LieAlgebraAction}
\phi|_{x}=T_{e}\Phi_{x}\colon\mathfrak{g}\rightarrow T_{x}M
\end{align}
determines a Lie algebra action $ \phi $ of the Lie algebra $ \mathfrak{g} $ corresponding to $ G $ on $ M $
(see \cite[Section~6.2]{michortopics}). Then one calls $ \phi $ the Lie algebra action corresponding to $ \Phi $.
A Lie group action $ \Phi $ is said to be \textit{locally transitive} if for any $ x\in M $ the map defined in Eq.
(\ref{LieAlgebraAction}) is surjective.
For $ \xi\in\mathfrak{g} $ one calls $ \xi_{M} $ defined for any $ x\in M $ by
\begin{align}
\xi_{M}(x)=\left.\frac{\mathrm{d}}{\mathrm{d}t}\right|_{t=0}\Phi_{\exp(t\xi)}(x)=T_{e}\Phi_{x}\xi\in T_{x}M
\end{align}
the \textit{fundamental vector field} of $ \xi $ on $ M $. Conversely, if the flow of every vector field
$ \phi(\xi)\in\Gamma^{\infty}(TM) $ of a Lie algebra action $ \phi $ of $ \mathfrak{g} $ on $ M $ is complete
there is a Lie group action $ \Phi $ of the connected Lie group corresponding to $ \mathfrak{g} $ on $ M $ such
that Eq. (\ref{LieAlgebraAction}) holds. This is a famous theorem by R. Palais (c.f. \cite[Theorem~6.5]{michortopics}).
In this case the Lie algebra action $ \phi $ is said to \textit{integrate} to the Lie group action $ \Phi $.

\section{Homogeneous Spaces}\label{sectionhomspace}

We start this section with the definition of a homogeneous spaces, an essential object of this thesis. After
clarifying the
smooth structure on homogeneous spaces we discover that there is always a natural transitive action on them and
the corresponding stabilizers can be easily calculated. This can also be done by using an infinitesimal point of view.
What follows is the most important classification of homogeneous spaces: they are induced by transitive actions, which
help us to detect a manifold $ M $ as a homogeneous space. Of course this
is also a way to find obstructions to homogeneous structures on manifolds. Relevant sources are
\cite[Chapter~8]{baker2002matrix}, \cite[Chapter~II]{helgason2001differential},
\cite[Chapter~2]{kirillov2008introduction}, and \cite[Chapter~2]{michortopics} and as in the section before we are
very close to \cite[Part~I~Chapter~1]{onishchik1993lie}.

Let $ H\subseteq G $ be a closed subgroup of the Lie group $ G $ and consider the right multiplication
$ R\colon H\times
G\ni(h,g)\mapsto gh\in G $. It is known that a closed subgroup of a Lie group is indeed a Lie subgroup (c.f.
\cite[Theorem~20.10]{lee2003introduction}), thus $ R $ is a right action of $ H $ on $ G $. The orbits of this right
action are the left cosets of $ H $ in $ G $, i.e. $ g\cdot H=\left\lbrace g'\in G~|~\exists h\in H
\text{ such that }g'=gh\right\rbrace $. Surprisingly, there is always a smooth structure on these set of left cosets.
One can prove this claim directly (c.f. \cite[Theorem~I.3.1]{onishchik1993lie}), but we want to introduce a
nice tool such that the proof becomes quite trivial: the \textit{Free and Proper Action Theorem}
(c.f. \cite[Corollary~6.5.1]{rudolph2012differential}). Remember that a (left or right) action
$ \Phi\colon G\times M\rightarrow M $ is said to be \textit{proper} if the extended map $ \overline{\Phi}\colon
G\times M\ni(g,x)\mapsto(\Phi(g,x),x)\in M\times M $ is proper, i.e. if the preimage of all compact subsets of
$ M\times M $ under $ \overline{\Phi} $ is again compact in $ G\times M $.

\begin{theorem}[Free and Proper Action Theorem]
Let $ \Phi\colon G\times M\rightarrow M $ be a (left or right) action that is free and proper. Then there is a unique
way to structure the set of orbits $ M/G $ as a smooth manifold such that the natural projection
\begin{align}
\operatorname{pr}\colon M\rightarrow M/G
\end{align}
is a smooth surjective submersion.
\end{theorem}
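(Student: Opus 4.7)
The plan is to equip $M/G$ first with a suitable topology and then with charts coming from local slices to the action, after which the universal property delivers uniqueness for free.

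First, I would give $M/G$ the quotient topology induced by $\operatorname{pr}$. Since $\Phi$ is proper, the orbit equivalence relation $R=\overline{\Phi}(G\times M)\subseteq M\times M$ is closed, because properness (combined with local compactness of $M$) forces $\overline{\Phi}$ to be a closed map; this gives Hausdorffness of $M/G$. Second countability and local compactness of $M/G$ follow from the fact that $\operatorname{pr}$ is an open map (the preimage $\Phi(G,U)$ of any open $U\subseteq M$ is open), together with the corresponding properties of $M$.

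Second, and this is the technical heart of the argument, I would build smooth charts on $M/G$ by producing local slices. Fix $x\in M$. Because the action is free, the map $\Phi_x\colon G\to M$ is an injective immersion (by Theorem~\ref{thmsubmfd} its rank equals $\dim G$, and freeness gives injectivity), so I can pick a submanifold $S\subseteq M$ through $x$ of codimension $\dim G$ satisfying $T_xM=T_x(G\cdot x)\oplus T_xS$. Consider
\begin{align}
\Psi\colon G\times S\ni(g,s)\mapsto\Phi(g,s)\in M.
\end{align}
A standard inverse-function-theorem argument shows $T_{(e,x)}\Psi$ is an isomorphism. After shrinking $S$, I would use properness and freeness to upgrade this to global injectivity of $\Psi$ on $G\times S$: if $\Phi(g_n,s_n)=\Phi(h_n,s_n')$ with $s_n,s_n'\to x$, then properness forces a convergent subsequence of $h_n^{-1}g_n$, whose limit fixes $x$ and thus equals $e$ by freeness; this contradicts the assumption that $g_n\neq h_n$ for small enough slices. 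Hence $\Psi$ becomes a diffeomorphism of $G\times S$ onto an open, $G$-saturated subset of $M$, so $\operatorname{pr}|_S\colon S\to\operatorname{pr}(S)$ is a homeomorphism and supplies a chart on $M/G$.

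Third, I would check compatibility of two such slice charts $S_1,S_2$ meeting at an orbit. On the overlap, a point $s_1\in S_1$ corresponds to the unique $s_2\in S_2$ with $s_2=\Phi(g(s_1),s_1)$ for some $g(s_1)\in G$; smoothness of the transition $s_1\mapsto s_2$ follows from the implicit function theorem applied to $\Psi_2^{-1}\circ\Psi_1$. By construction $\operatorname{pr}$ reads in these charts as the projection $G\times S\to S$, hence is a smooth surjective submersion. Uniqueness of the smooth structure is the easy part: given two such structures, the identity of $M/G$ has smooth composition with $\operatorname{pr}$ in either direction, and because $\operatorname{pr}$ is a surjective submersion this forces the identity to be a diffeomorphism between them.

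The main obstacle I expect is the slice construction, specifically turning the infinitesimal splitting $T_xM=T_x(G\cdot x)\oplus T_xS$ into a genuine diffeomorphism $\Psi$ on a $G$-invariant neighbourhood; this is where both hypotheses on $\Phi$ are really needed, freeness to prevent the orbit from self-intersecting and properness to prevent distant group elements from folding a small slice back onto itself.
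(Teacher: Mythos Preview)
The paper does not actually prove this theorem; it is stated as a tool and simply cited from \cite[Corollary~6.5.1]{rudolph2012differential}. Your proposal follows the standard slice-chart argument that underlies that reference (and essentially every textbook proof of this result): quotient topology, Hausdorffness from closedness of the orbit relation via properness, slice construction via the inverse function theorem plus a properness/freeness argument to make $\Psi$ injective on a saturated neighbourhood, and uniqueness from the universal property of surjective submersions. So there is nothing to compare against in the paper itself, and your outline is the correct route to a proof.
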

Now it is easy to prove that the set $ G/H $ of orbits of $ R $ has the structure of a smooth manifold. One just
has to check that $ R $ is free and proper.

\begin{theorem}\label{hommfd}
Let $ H $ be a closed subgroup of $ G $. Then there is a unique way to structure the orbit space $ G/H $ as a
smooth manifold such that the map
\begin{align}\label{eq03t}
\operatorname{pr}\colon G\ni g\mapsto g\cdot H\in G/H
\end{align}
is a smooth surjective submersion.
\end{theorem}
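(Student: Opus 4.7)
The plan is to apply the Free and Proper Action Theorem to the right action $R\colon H\times G\ni(h,g)\mapsto gh\in G$, whose orbit space is by construction $G/H$. Since any closed subgroup of a Lie group is automatically a Lie subgroup (Lee, Theorem~20.10), $R$ is indeed a smooth right action, and its orbit space coincides with the set of left cosets $\{g\cdot H\mid g\in G\}$ appearing in the statement. Therefore it suffices to verify that $R$ is both free and proper; the uniqueness of the smooth structure and the fact that $\operatorname{pr}$ is a smooth surjective submersion will then follow at once from the cited theorem.

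Freeness is immediate: if $R(h,g)=g$ for some $g\in G$, then $gh=g$, so $h=e$. Hence the stabilizer of every point under $R$ is trivial.

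For properness, I would consider the extended map $\overline{R}\colon H\times G\ni(h,g)\mapsto(gh,g)\in G\times G$ and show that the preimage of any compact set $K\subseteq G\times G$ is compact in $H\times G$. If $\pi_{1},\pi_{2}\colon G\times G\rightarrow G$ denote the projections, then $(h,g)\in\overline{R}^{-1}(K)$ forces $g\in\pi_{2}(K)$ and $gh\in\pi_{1}(K)$, so $h=g^{-1}(gh)\in\pi_{2}(K)^{-1}\cdot\pi_{1}(K)$. Since multiplication and inversion in the Lie group $G$ are continuous, the latter set is compact in $G$, and its intersection with the closed subset $H\subseteq G$ is a compact subset of $H$. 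Thus $\overline{R}^{-1}(K)$ is contained in a compact set and, being closed as the preimage of the closed set $K$ under the continuous map $\overline{R}$, is itself compact.

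With $R$ shown to be free and proper, the Free and Proper Action Theorem supplies a unique smooth structure on $G/H$ such that $\operatorname{pr}$ in Eq.~(\ref{eq03t}) is a smooth surjective submersion, which is exactly the claim. I expect the main (only) delicate point to be the properness verification; the rest is a direct unpacking of definitions, together with the crucial use of the closedness of $H$ to ensure that the a priori bound for $h$ actually lies in a compact subset of $H$ rather than merely of $G$.
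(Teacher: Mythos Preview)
Your proof is correct and follows essentially the same approach as the paper: both apply the Free and Proper Action Theorem to the right $H$-action $R(h,g)=gh$ and verify freeness and properness. The only cosmetic difference is in the properness check: the paper observes that $\overline{R}^{-1}(K)=(H\times G)\cap\overline{r}^{-1}(K)$, where $\overline{r}(g_{1},g_{2})=(g_{2}g_{1},g_{2})$ is a diffeomorphism of $G\times G$ (hence proper), and then uses that $H\times G$ is closed; your version instead bounds $h$ and $g$ separately via the projections and reaches the same conclusion.
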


\begin{proof}
We already mentioned that the right multiplication $ R\colon H\times G\ni(h,g)\mapsto gh\in G $ is a right action of
the Lie subgroup $ H $ on $ G $ and the corresponding orbits are the left cosets of $ H $ in $ G $. The action $ R $
is free, since for any $ h_{1},h_{2}\in H $ and $ g\in G $ the equation $ gh_{1}=R(h_{1},g)=R(h_{2},g)=gh_{2} $ implies
$ h_{1}=h_{2} $. To prove that $ R $ is proper consider $ \overline{R}\colon H\times G
\ni(h,g)\mapsto(gh,g)\in G\times G $ and an arbitrary compact subset $ K\subseteq G\times G $. If we also consider
the usual right multiplication $ r\colon G\times G\ni(g_{1},g_{2})\mapsto g_{2}g_{1}\in G $ on $ G $ and
$ \overline{r}\colon G\times G\ni(g_{1},g_{2})\mapsto(g_{2}g_{1},g_{2})\in G\times G $ one has
\begin{align}\label{eq06t}
\overline{R}^{-1}(K)=(H\times G)\cap\overline{r}^{-1}(K).
\end{align}
Now $ \overline{r} $ is a proper map since it is a diffeomorphism and for this
$ \overline{r}^{-1}(K)\subseteq G\times G $ is
compact. By assumption $ H\subseteq G $ is closed and so is $ H\times G\subseteq G\times G $. Then Eq. (\ref{eq06t})
implies that $ \overline{R}^{-1}(K) $ is compact. This concludes the proof.
\end{proof}
After discovering the smooth structure on $ G/H $ we introduce the following

\begin{definition}[Homogeneous Space]
A homogeneous space is the set $ G/H $ of left cosets of a closed subgroup $ H $ of a Lie group $ G $ endowed
with the unique smooth structure that exists according to Theorem \ref{hommfd}.
\end{definition}

Let us discuss homogeneous spaces in detail. First we want to see that there is a natural action of
$ G $ on $ G/H $ and that is action is transitive. The second step will be a kind of converse statement, namely
that any transitive action on a manifold makes it into a homogeneous space. To achieve it, we need some preparation.

The statement that Eq. (\ref{eq03t}) defines a surjective submersion is very beneficial and we will use this in some
proofs to get smoothness of maps. To do so we refer to the

\begin{theorem}[Surjective Submersion Theorem]\label{ThmSurjSubm}
Let $ M,N $ and $ P $ be smooth manifolds and $ \pi\colon M\rightarrow N $ a smooth surjective submersion. Then
a map $ F\colon N\rightarrow P $ is smooth if and only if $ F\circ\pi $ is smooth.
\begin{equation}
\begin{tikzpicture}
  \matrix (m) [matrix of math nodes,row sep=3em,column sep=0.5em,minimum width=2em]
  {
      M & & &  \\
      N & & & P \\};
  \path[-stealth]
    (m-1-1) edge node [left] {$\pi$} (m-2-1)
    (m-1-1) edge node [right] {$F\circ\pi$} (m-2-4)
    (m-2-1) edge node [below] {$F$} (m-2-4);
\end{tikzpicture}
\end{equation}
\end{theorem}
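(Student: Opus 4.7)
The forward direction is immediate: if $ F $ is smooth, then $ F\circ\pi $ is smooth as a composition of smooth maps, since $ \pi $ is smooth by hypothesis. So the whole content of the statement lies in the converse: given that $ F\circ\pi $ is smooth, deduce that $ F $ itself is smooth.

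My plan for the converse is to exploit the fact that smooth submersions admit smooth local sections. Since smoothness of $ F $ is a local property on $ N $, it suffices to show that every point of $ N $ has an open neighborhood on which $ F $ is smooth. Fix $ n\in N $. Because $ \pi $ is surjective, there exists some $ m\in\pi^{-1}(\{n\})\subseteq M $. Since $ \pi $ is a smooth submersion at $ m $, the local submersion theorem (which follows directly from the Rank Theorem applied in the form used earlier, \cite[Theorem~7.13]{lee2003introduction}) provides charts $ (U,\varphi) $ around $ m $ and $ (V,\psi) $ around $ n $ in which $ \pi $ has the coordinate representation $ (x_{1},\ldots,x_{\dim M})\mapsto(x_{1},\ldots,x_{\dim N}) $, i.e. a standard projection. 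Using these coordinates one builds a smooth map $ \sigma\colon V'\to U $ on a possibly shrunken neighborhood $ V'\subseteq V $ of $ n $ by the formula $ \psi(p)\mapsto\varphi^{-1}(\psi(p),0,\ldots,0) $, and this $ \sigma $ is a smooth local section of $ \pi $ in the sense that $ \pi\circ\sigma=\mathrm{id}_{V'} $.

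With the local section in hand, the rest of the argument is a one-line composition:
\begin{equation*}
\left.F\right|_{V'} = F\circ\mathrm{id}_{V'} = F\circ\pi\circ\sigma = (F\circ\pi)\circ\sigma,
\end{equation*}
which is a composition of smooth maps and hence smooth. Since every $ n\in N $ has such a neighborhood $ V' $ on which $ F $ is smooth, the map $ F\colon N\to P $ is smooth globally.

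There is no real obstacle to this proof beyond invoking the local normal form for submersions; the only point that requires a little care is ensuring that the section $ \sigma $ genuinely lives in $ M $ and lands in the domain of a chart compatible with the chosen neighborhood of $ n $, which is automatic once one works in the adapted coordinates from the Rank Theorem. I would present it in this order because it highlights that the statement is really a local fact about submersions and not about $ P $ at all — the target manifold plays no role in the argument beyond providing a codomain for $ F $.
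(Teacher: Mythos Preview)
Your proof is correct and is precisely the standard argument via local sections of a submersion. The paper does not give its own proof of this theorem but simply refers to \cite[Theorem~4.29]{lee2003introduction}, where the same local-section approach is used.
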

For a proof consider \cite[Theorem~4.29]{lee2003introduction}.

\begin{proposition}\label{prophomspace}
Let $ H $ be a closed subgroup of $ G $. There is a natural transitive action
\begin{align}\label{eq07t}
G\times G/H\ni(g,g'\cdot H)\mapsto(gg')\cdot H\in G/H
\end{align}
of $ G $ on $ G/H $ induced by the left multiplication on $ G $. For $ g\cdot H\in G/H $ the stabilizer is
\begin{align}
G_{g\cdot H}=\text{Conj}_{g}(H)=gHg^{-1}.
\end{align}
Moreover, if we denote the Lie algebras corresponding to $ G $ and $ H $, respectively, by $ \mathfrak{g} $ and
$ \mathfrak{h} $ the tangent map at $ e\in G $ of the map $ \operatorname{pr} $ defined in Eq. (\ref{eq03t}) is an
isomorphism and
\begin{align}\label{eq08t}
T_{e}\operatorname{pr}\colon\mathfrak{g}/\mathfrak{h}\rightarrow T_{\operatorname{pr}(e)}(G/H).
\end{align}
\end{proposition}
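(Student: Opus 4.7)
The plan is to handle the three assertions in the order they are stated, reusing the smooth structure on $G/H$ provided by Theorem \ref{hommfd}.

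First I would verify that the formula in Eq.~(\ref{eq07t}) makes sense on cosets: if $g' \cdot H = g'' \cdot H$ then $g'' = g'h$ for some $h \in H$, so $gg'' = (gg')h$ and hence $(gg') \cdot H = (gg'') \cdot H$. The action axioms $\Phi(e, g' \cdot H) = g' \cdot H$ and $\Phi(g_1, \Phi(g_2, g' \cdot H)) = \Phi(g_1 g_2, g' \cdot H)$ follow immediately from associativity and the identity in $G$. Transitivity is just the observation that $g_2 g_1^{-1}$ sends $g_1 \cdot H$ to $g_2 \cdot H$. The one nontrivial point is smoothness: I would consider the map $\widetilde{\Phi}\colon G \times G \to G/H$ given by $(g, g') \mapsto \mathrm{pr}(gg')$, which is smooth as a composition of multiplication with $\mathrm{pr}$. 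Since $\mathrm{id}_G \times \mathrm{pr}\colon G \times G \to G \times G/H$ is a smooth surjective submersion (being the product of two such maps) and $\widetilde{\Phi}$ factors through it as $\Phi \circ (\mathrm{id}_G \times \mathrm{pr})$, the Surjective Submersion Theorem (Theorem \ref{ThmSurjSubm}) delivers smoothness of $\Phi$.

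For the stabilizer computation I would unwrap the definitions: $g' \in G_{g \cdot H}$ iff $(g'g) \cdot H = g \cdot H$ iff $g^{-1} g' g \in H$ iff $g' \in g H g^{-1} = \mathrm{Conj}_g(H)$.

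For the tangent map statement, the key point is that $\mathrm{pr}\colon G \to G/H$ is a smooth surjective submersion, so $T_e \mathrm{pr}\colon \mathfrak{g} \to T_{\mathrm{pr}(e)}(G/H)$ is surjective. Since $\mathrm{pr}$ is constant on the coset $H$, its restriction to $H$ is the constant map $e \cdot H$, whose differential at $e$ is zero; hence $\mathfrak{h} = T_e H \subseteq \ker T_e \mathrm{pr}$. Comparing dimensions, $\dim \ker T_e \mathrm{pr} = \dim \mathfrak{g} - \dim (G/H) = \dim \mathfrak{g} - (\dim G - \dim H) = \dim \mathfrak{h}$, so the inclusion is an equality and $T_e \mathrm{pr}$ descends to an isomorphism $\mathfrak{g}/\mathfrak{h} \xrightarrow{\sim} T_{\mathrm{pr}(e)}(G/H)$ by the first isomorphism theorem for linear maps.

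The only delicate step is the smoothness of $\Phi$ in part one, which is precisely what the Surjective Submersion Theorem is designed for; the rest reduces to elementary group-theoretic identities and a dimension count, so no serious obstacle is expected.
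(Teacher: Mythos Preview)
Your proposal is correct and follows essentially the same route as the paper: smoothness via the Surjective Submersion Theorem applied to $\mathrm{id}_G\times\mathrm{pr}$, the stabilizer via the elementary coset calculation, and the tangent space identification via surjectivity of $T_e\mathrm{pr}$ together with a dimension count showing $\mathfrak{h}=\ker T_e\mathrm{pr}$. The only cosmetic difference is that the paper verifies $\mathfrak{h}\subseteq\ker T_e\mathrm{pr}$ by differentiating $t\mapsto\mathrm{pr}(\exp(t\xi))$ while you invoke constancy of $\mathrm{pr}|_H$ directly; these are the same observation.
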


\begin{proof}
First, Theorem \ref{hommfd} states that $ G/H $ is a manifold. Denote the map that is defined in Eq. (\ref{eq07t}) by
$ \Phi $ and let $ g_{1},g_{2}\in G $ and $ g\cdot H\in G/H $ be arbitrary. Then
$ \Phi(e,g\cdot H)=(eg)\cdot H=g\cdot H $ and
\begin{align*}
\Phi(g_{1},\Phi(g_{2},g\cdot H))=(g_{1}(g_{2}g))\cdot H=((g_{1}g_{2})g)\cdot H=\Phi(g_{1}g_{2},g\cdot H),
\end{align*}
i.e. $ \Phi $ is a set theoretic action. Since the multiplication $ \mu\colon G\times G\ni(g_{1},g_{2})\mapsto
g_{1}g_{2}\in G $ is smooth and $ \text{pr} $ defined in Eq. (\ref{eq03t}) is smooth according to Theorem \ref{hommfd},
the map
\begin{align}
\text{pr}\circ\mu\colon G\times G\ni(g_{1},g_{2})\mapsto(g_{1}g_{2})\cdot H\in G/H
\end{align}
is smooth and the map
\begin{align}
\text{id}\times\text{pr}\colon G\times G\ni(g_{1},g_{2})\mapsto(g_{1},g_{2}\cdot H)\in G\times(G/H)
\end{align}
is a smooth surjective submersion. According to Theorem \ref{ThmSurjSubm} the smoothness of $ \Phi $
follows from the commutativity of the diagram
\begin{equation}
\begin{tikzpicture}
  \matrix (m) [matrix of math nodes,row sep=3em,column sep=0.5em,minimum width=2em]
  {
      G\times G & & & \\
      G\times(G/H) & & & G/H, \\};
  \path[-stealth]
    (m-1-1) edge node [left] {$\text{id}\times\text{pr}$} (m-2-1)
    (m-1-1) edge node [right] {$\text{pr}\circ\mu$} (m-2-4)
    (m-2-1) edge node [below] {$\Phi$} (m-2-4);
\end{tikzpicture}
\end{equation}
which holds since
\begin{align}
(\text{pr}\circ\mu)(g_{1},g_{2})=(g_{1}g_{2})\cdot H=(\Phi\circ(\text{id}\times\text{pr}))(g_{1},g_{2}),
\end{align}
for all $ g_{1},g_{2}\in G $. Thus $ \Phi $ defined in Eq.
(\ref{eq07t}) is indeed an action. It is also transitive since for arbitrary $ g_{1}\cdot H, g_{2}\cdot H\in
G/H $ the element $ g_{2}g_{1}^{-1}\in G $ satisfies $ \Phi(g_{2}g_{1}^{-1},g_{1}\cdot H)=g_{2}\cdot H $.
If we take now an arbitrary element $ g\cdot H $, the corresponding stabilizer is given by
\begin{align}
G_{g\cdot H}=\left\lbrace g'\in G~|~g\cdot H=\Phi(g',g\cdot H)=(g'g)\cdot H\right\rbrace.
\end{align}
We see that if this condition is satisfied there is an element $ h\in H $ such that $ gh=g'g $, which is
equivalent to $ g'=ghg^{-1}=\text{Conj}_{g}(h)\in\text{Conj}_{g}(H) $. Thus
$ G_{g\cdot H}\subseteq\text{Conj}_{g}(H) $. Conversely, any $ h\in H $ satisfies
\begin{align}
\Phi(ghg^{-1},g\cdot H)=(ghg^{-1}g)\cdot H=(gh)\cdot H=g\cdot H.
\end{align}
This implies $ G_{g\cdot H}=\text{Conj}_{g}(H) $. Let us come to the
infinitesimal part. Since $ \text{pr}\colon G\rightarrow G/H $ is a submersion the map $ T_{e}\text{pr}\colon
T_{e}G\rightarrow T_{\text{pr}(e)}(G/H) $ is surjective. By assumption $ T_{e}G=\mathfrak{g} $. We first show that
$ \mathfrak{h}=\ker T_{e}\text{pr} $. For any $ \xi\in\mathfrak{h} $, one has $ \exp(t\xi)\in H $ for all $
t\in\mathbb{R} $ and then $ \text{pr}(\exp(t\xi))=\text{pr}(e) $ for all $ t\in\mathbb{R} $. Thus by the chain rule
we have
\begin{align}
T_{e}\text{pr}(\xi)=T_{e}\text{pr}\left.\frac{\mathrm{d}}{\mathrm{d}t}\right|_{t=0}\exp(t\xi)
=\left.\frac{\mathrm{d}}{\mathrm{d}t}\right|_{t=0}\text{pr}
(\exp(t\xi))=\left.\frac{\mathrm{d}}{\mathrm{d}t}\right|_{t=0}\text{pr}(e)=0.
\end{align}
We showed $ \mathfrak{h}\subseteq\ker T_{e}\text{pr} $. In fact, this is sufficient since $ \text{pr}\colon
G\rightarrow G/H $ is submersive and the dimension formula implies
\begin{align*}
\dim(H)=\dim(G)-\dim(G/H)=\dim(G)-(\dim(G)-\dim\ker\text{pr})=\dim\ker\text{pr}.
\end{align*}
Then $ \mathfrak{h}=\ker T_{e}\text{pr} $ follows. By dividing by the kernel one gets the isomorphism
defined in Eq. (\ref{eq08t}).
\end{proof}

In the last theorem we have seen that there is a natural transitive action of $ G $ on any homogeneous space $ G/H $.
The next theorem gives the converse statement: if $ G $ acts transitively on $ M $ then $ M $ can be regarded as a
homogeneous space $ G/H $ where $ H $ is the stabilizer of an arbitrary point of $ M $.

\begin{theorem}\label{homspacetransact}
Let $ \Phi\colon G\times M\rightarrow M $ be a transitive action. Then for any $ x\in M $ the map
\begin{align}\label{eq10t}
\beta_{x}\colon G/G_{x}\ni g\cdot G_{x}\mapsto\Phi(g,x)\in M
\end{align}
is a diffeomorphism, which commutes with the action of $ G $ on $ G/G_{x} $. Moreover, all stabilizers are isomorphic
via
\begin{align}\label{eq09t}
\operatorname{Conj}_{g}\colon G_{x}\rightarrow G_{\Phi_{g}(x)},
\end{align}
where $ x\in M $.
\end{theorem}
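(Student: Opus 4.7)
The plan is to first verify that $\beta_x$ is well-defined and bijective, then use the universal property of the quotient (Theorem \ref{ThmSurjSubm}) to get smoothness, then exploit $G$-equivariance to reduce the local diffeomorphism check to a single point, where the constant-rank information from Theorem \ref{thmsubmfd} and the isomorphism from Proposition \ref{prophomspace} close the argument.

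First I would show well-definedness: two representatives $g_1, g_2$ define the same coset iff $g_2^{-1}g_1 \in G_x$, which by the action axiom is equivalent to $\Phi(g_1,x) = \Phi(g_2,x)$, so the map is well-defined and, running the equivalence backwards, also injective. Surjectivity is immediate from transitivity of $\Phi$. Equivariance $\beta_x(g' \cdot (g \cdot G_x)) = \Phi(g', \beta_x(g \cdot G_x))$ is a direct unpacking of the action axiom. For smoothness I would observe that $\Phi_x = \beta_x \circ \operatorname{pr}$, where $\operatorname{pr}\colon G \to G/G_x$ is the smooth surjective submersion from Theorem \ref{hommfd}; since $\Phi_x$ is smooth, Theorem \ref{ThmSurjSubm} gives that $\beta_x$ is smooth.

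The key step is showing $\beta_x$ is a local diffeomorphism. Because $\Phi$ is transitive, the orbit $G \cdot x$ equals $M$, which is trivially a submanifold of itself, so by part iv. of Theorem \ref{thmsubmfd} the constant rank $k$ of $\Phi_x$ equals $\dim M$. Combined with part ii. of that theorem, $\dim(G/G_x) = \dim G - \dim G_x = k = \dim M$. At the base point $\operatorname{pr}(e)$, Proposition \ref{prophomspace} identifies $T_{\operatorname{pr}(e)}(G/G_x) \cong \mathfrak{g}/\mathfrak{h}_x$, and under this identification $T_{\operatorname{pr}(e)}\beta_x$ becomes the map induced by $T_e\Phi_x\colon \mathfrak{g} \to T_x M$, whose kernel is exactly $\mathfrak{h}_x = T_e G_x$ and whose image has dimension $k = \dim M$; hence $T_{\operatorname{pr}(e)}\beta_x$ is an isomorphism. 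For any other coset $g \cdot G_x$, I would use equivariance: the left multiplication $\Phi_g$ on $M$ and the natural left action on $G/G_x$ are both diffeomorphisms, and they intertwine $\beta_x$, so $T_{g \cdot G_x}\beta_x$ factors as a composition of three linear isomorphisms. Thus $\beta_x$ is a bijective local diffeomorphism, hence a diffeomorphism.

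For the last claim, I would compute directly: $h \in G_{\Phi_g(x)}$ means $\Phi(hg, x) = \Phi(g, x)$, equivalently $g^{-1}hg \in G_x$, i.e.\ $h \in gG_xg^{-1} = \operatorname{Conj}_g(G_x)$. Since $\operatorname{Conj}_g$ is a Lie group automorphism of $G$, its restriction is an isomorphism of Lie groups $G_x \to G_{\Phi_g(x)}$. The main obstacle I expect is the smoothness of $\beta_x^{-1}$: rather than constructing it explicitly, the cleanest route is the equivariance argument above, since without it one would have to juggle local slice charts obtained from part iii. of Theorem \ref{thmsubmfd}.
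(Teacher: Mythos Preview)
Your proof is correct and follows essentially the same architecture as the paper's: well-definedness and bijectivity by elementary coset manipulations, smoothness via the Surjective Submersion Theorem applied to $\Phi_x = \beta_x \circ \operatorname{pr}$, isomorphism of the tangent map at the base point, and then equivariance to propagate this to all cosets. The one minor difference is that you establish the base-point isomorphism by a clean dimension count using parts ii.\ and iv.\ of Theorem~\ref{thmsubmfd} (rank $k = \dim M$ and $\ker T_e\Phi_x = \mathfrak{g}_x$), whereas the paper re-derives the injectivity of $T_{\operatorname{pr}(e)}\beta_x$ by an explicit flow computation showing $\exp(t\xi) \in G_x$; your route is slightly more economical since it cites the already-proven rank result rather than reproving its content.
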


\begin{proof}
Parts of the proof and the notation are inspired by \cite[Theorem~I.3.3]{onishchik1993lie}.
Let $ x\in M $ be an arbitrary point. We already mentioned that $ G_{x}\subseteq G $ is a closed subset. Then $ G/G_{x}
$ is a homogeneous space according to Theorem \ref{hommfd}. If we take any $ g\in G $ and $ h\in G_{x} $ we
know on the one hand that $ \beta_{x}(h\cdot G_{x})=\Phi(h,x)=x $ and on the other hand that
\begin{align}
\beta_{x}((gh)\cdot G_{x})=\Phi(gh,x)=\Phi(g,\Phi(h,x))=\Phi(g,x)=\beta_{x}(g\cdot G_{x}),
\end{align}
which proves that the map $ \beta_{x} $ defined in Eq. (\ref{eq10t}) is well-defined. The surjectivity of
$ \beta_{x} $ follows directly form the
transitivity of $ \Phi $, i.e. for any $ y\in M $ there is a $ g\in G $ such that $ \beta_{x}(g\cdot G_{x})=
\Phi(g,x)=y $. To prove the injectivity of $ \beta_{x} $ we take $ g\cdot G_{x},h\cdot G_{x}\in G/G_{x} $ and assume
that $ \Phi(g,x)=\beta_{x}(g\cdot G_{x})=\beta_{x}(h\cdot G_{x})=\Phi(h,x) $. This implies
\begin{align}
\Phi(g^{-1}h,x)=\Phi(g^{-1},\Phi(h,x))=\Phi(g^{-1},\Phi(g,x))=\Phi(e,x)=x,
\end{align}
i.e. $ g^{-1}h\in G_{x} $ from which $ h\in g\cdot G_{x} $ follows. This is the injectivity $ g\cdot G_{x}=
h\cdot G_{x} $. The smoothness of $ \beta_{x} $ follows similarly to the proof of Proposition \ref{prophomspace} from
Theorem \ref{ThmSurjSubm} and the commutativity of the diagram
\begin{equation}
\begin{tikzpicture}
  \matrix (m) [matrix of math nodes,row sep=3em,column sep=0.5em,minimum width=2em]
  {
      G & & & \\
      G/G_{x} & & & M,\\};
  \path[-stealth]
    (m-1-1) edge node [left] {$\text{pr}$} (m-2-1)
    (m-1-1) edge node [right] {$\Phi_{x}$} (m-2-4)
    (m-2-1) edge node [below] {$\beta_{x}$} (m-2-4);
\end{tikzpicture}
\end{equation}
since the projection is a smooth surjective submersion and $ \Phi_{x}\colon G\rightarrow M $
is smooth. The next step is to show that $ \beta_{x} $ is a bijective immersion. Then it has to be a diffeomorphism
since we have already proven its smoothness (c.f. \cite[Corollary~7.11]{lee2003introduction}). Thus consider
\begin{align}\label{eq11t}
T_{\text{pr}(e)}\beta_{x}\colon T_{\text{pr}(e)}G/G_{x}\rightarrow T_{x}M.
\end{align}
Denote the Lie algebra corresponding to $ G_{x} $ by $ \mathfrak{g}_{x} $ (remember that $ G_{x} $ is a Lie
group according to Theorem \ref{thmsubmfd} ii.)) and take $ \xi\in\mathfrak{g} $. We can
identify a equivalence class $ \left[\xi\right]\in\mathfrak{g}/\mathfrak{g}_{x} $ with
$ T_{e}\text{pr}\left[\xi\right]\in
T_{\text{pr}(e)}G/G_{x} $ according to Proposition \ref{prophomspace}. Thus if we assume $ \left[\xi\right]\in\ker
T_{\text{pr}(e)}\beta_{x} $, one has $ \exp(t\xi)\in G_{x} $ for any $ t\in\mathbb{R} $, since $ \exp(0\xi)=e
\in G_{x} $ and
\begin{align*}
\frac{d}{dt}\Phi_{x}(\exp(t\xi))&=\left.\frac{\mathrm{d}}{\mathrm{d}s}\right|_{s=0}\Phi_{\exp(t\xi)}
(\Phi_{\exp(s\xi)}(x))\\
&=T_{x}\Phi_{\exp(t\xi)}\Bigg(\left.\frac{\mathrm{d}}{\mathrm{d}s}\right|_{s=0}\Phi_{\exp(s\xi)}(x)\Bigg)\\
&=T_{x}\Phi_{\exp(t\xi)}\Bigg(\left.\frac{\mathrm{d}}{\mathrm{d}s}\right|_{s=0}\beta_{x}(\exp(s\xi)\cdot G_{x})\Bigg)\\
&=T_{x}\Phi_{\exp(t\xi)}(T_{\text{pr}(e)}\beta_{x}\left[\xi\right])\\
&=0.
\end{align*}
Thus $ \xi\in\mathfrak{g}_{x} $ for which $ \left[\xi\right]=0 $ and the map defined in (\ref{eq11t}) is injective.
For $ g\in G $, the derivative of the equality
\begin{align}
\beta_{x}(g\cdot G_{x})=\Phi(g,x)=\Phi(g,\Phi(e,x))=\Phi_{g}\circ\beta_{x}(e\cdot G_{x})
=\Phi_{g}\circ\beta_{x}\circ\ell_{g^{-1}}(g\cdot G_{x})
\end{align}
gives
\begin{align}
T_{\text{pr}(g)}\beta_{x}=T_{x}\Phi_{g}\circ T_{\text{pr}(e)}\beta_{x}\circ T_{g}\ell_{g^{-1}},
\end{align}
according to the chain rule, where $ \ell_{g^{-1}}\colon G/G_{x}\ni g'\cdot G_{x}\mapsto(g^{-1}g')\cdot G_{x}
\in G/G_{x} $. Since $ \Phi_{g} $ and $ \ell_{g^{-1}} $ are diffeomorphisms this implies the injectivity of
$ T_{\operatorname{pr}(g)}\beta_{x} $.
As we argued this is enough to show that $ \beta_{x} $ is a diffeomorphism since the involved spaces are second
countable.
Moreover, $ \beta_{x} $ is equivariant, i.e. it commutes with $ \Phi $, since for any $ g\in G $ and $ h\cdot G_{x}\in
G/G_{x} $ one has
\begin{align*}
\Phi_{g}(\beta_{x}(h\cdot G_{x}))=\Phi_{g}(\Phi_{h}(x))=\Phi_{gh}(x)=\beta_{x}((gh)\cdot G_{x})=\beta_{x}
(\tilde{\Phi}_{g}(h\cdot G_{x})),
\end{align*}
where $ \tilde{\Phi} $ denotes the action of $ G $ on $ G/G_{x} $ defined by Eq. (\ref{eq07t}). Thus $ \Phi_{g}\circ
\beta_{x}=\beta_{x}\circ\tilde{\Phi}_{g} $.
To prove that all stabilizers are isomorphic via the map defined in Eq. (\ref{eq09t}) we choose two points
$ x,y\in M $. Since $ \Phi $ is transitive there is a $ g\in G $ such that $ \Phi_{x}(g)=y $. If we take
$ g'\in G_{x} $, one has
\begin{align*}
\Phi_{\Phi_{g}(x)}(\text{Conj}_{g}(g'))&=\Phi_{\Phi_{g}(x)}(gg'g^{-1})=\Phi(gg'g^{-1},\Phi_{g}(x))=\Phi(gg',x)\\
&=\Phi(g,\Phi(g',x))=\Phi(g,x)=\Phi_{g}(x),
\end{align*}
which means that $ \text{Conj}_{g}(g')\in G_{\Phi_{g}(x)} $. Thus the map defined in Eq. (\ref{eq09t}) is well-defined.
$ \text{Conj}_{g} $ is
obviously smooth and also invertible with inverse $ \text{Conj}_{g^{-1}}\colon G_{\Phi_{g}(x)}\rightarrow G_{x} $.
Indeed, if $ g'\in G_{\Phi_{g}(x)} $ one gets
\begin{align*}
\Phi(\text{Conj}_{g^{-1}}(g'),x)&=\Phi(g^{-1}g'g,x)=\Phi(g^{-1},\Phi(g'g,x))=\Phi(g^{-1},\Phi(g',\Phi(g,x)))\\
&=\Phi(g^{-1},\Phi_{g}(x))=\Phi(e,x)=x.
\end{align*}
Then $ \text{Conj}_{g^{-1}} $ is a well-defined mapping from $ G_{\Phi_{g}(x)} $ to $ G_{x} $. Of course
$ \text{Conj}_{g}\circ\text{Conj}_{g^{-1}}=\text{id}_{G_{\Phi_{g}(x)}} $ and
$ \text{Conj}_{g^{-1}}\circ\text{Conj}_{g}=\text{id}_{G_{x}} $. Thus (\ref{eq09t}) is indeed an isomorphism
and $ G_{\Phi_{g}(x)}=G_{y} $.
\end{proof}

We want to illustrate this construction by considering three examples of transitive Lie group actions on the
$ 2 $-sphere
\begin{align}
\mathbb{S}^{2}=\left\lbrace(x_{1},x_{2},x_{3})\in\mathbb{R}^{3}~|~x_{1}^{2}+x_{2}^{2}+x_{3}^{2}=1\right\rbrace.
\end{align}

\begin{example}\label{exampleSO3}
The \textbf{special orthogonal group} in three dimensions (also called the \textbf{rotation group})
$ \operatorname{SO}(3)=\left\lbrace A\in M_{3\times 3}(\mathbb{R})~|~AA^{T}=\mathbb{1}=A^{T}A
\text{ and }\det(A)=1\right\rbrace $
is a connected compact $ 3 $-dimensional Lie group. Consider the map
\begin{align}\label{eq04t}
\Phi\colon\operatorname{SO}(3)\times\mathbb{S}^{2}\ni(A,p)\mapsto\Phi(A,p)=Ap\in\mathbb{S}^{2},
\end{align}
where the right side of Eq. (\ref{eq04t}) denotes the matrix-vector multiplication. It is well-defined, since
\begin{align*}
||Ap||^{2}=\langle Ap,Ap\rangle=\langle A^{T}Ap,p\rangle=\langle p,p\rangle=||p||^{2}=1,
\end{align*}
for any $ A\in\operatorname{SO}(3) $ and $ p\in\mathbb{S}^{2} $. The smoothness of the linear matrix-vector
multiplication induces the smoothness of $ \Phi $. Moreover, $ \Phi $ is a Lie group action of
$ \operatorname{SO}(3) $ on $ \mathbb{S}^{2} $ since for any $ A,B\in\operatorname{SO}(3) $ and
$ p\in\mathbb{S}^{2} $ one has
\begin{align*}
\Phi(A,\Phi(B,p))=A(Bp)=(AB)p=\Phi(AB,p)\text{ and }\Phi(\mathbb{1},p)=\mathbb{1}p=p,
\end{align*}
where $ \mathbb{1}\in\operatorname{SO}(3) $ denotes the identity matrix. This action is even transitive:
choose two points $ p\neq q $ on $ \mathbb{S}^{2} $. Then there is a unique plane $ E $ in $ \mathbb{R}^{3} $
such that $ 0,p,q\in E $, where $ 0=(0~0~0)^{T}\in\mathbb{R}^{3} $. Choose a vector $ 0\neq n\in\mathbb{R}^{3} $
perpendicular to $ E $. There is an angle $ \alpha\in\left[ 0,2\pi\right] $ such that a rotation of $ E $ around
$ n $ transfers $ p $ onto $ q $. After a change of coordinates we can assume that this is a rotation around the
$ z $-axis and $ p $ lies on the $ x $-axis. Then
\begin{align}
A=\begin{pmatrix}
\cos(\alpha) & -\sin(\alpha) & 0 \\
\sin(\alpha) & \cos(\alpha) & 0 \\
0 & 0 & 1
\end{pmatrix}
\in\text{SO}(3)
\end{align}
satisfies $ Ap=q $. If $ p=q\in\mathbb{S}^{2} $ one simply has $ \mathbb{1}p=q $. Altogether $ \Phi $ is transitive.
Since any $ \mathbb{1}\neq A\in\operatorname{SO}(3) $ has an eigenvalue $ 1\neq\lambda\in\mathbb{C} $, only
$ \mathbb{1} $ induces the identity diffeomorphism on $ \mathbb{S}^{2} $ and $ \Phi $ is effective in addition.
We want to compute the stabilizer at the point $ e_{1}=(1~0~0)^{T}\in\mathbb{S}^{2} $. Thus we are
searching for the elements $ A=\begin{pmatrix}
a & b & c \\
d & e & f \\
g & h & i
\end{pmatrix}\in\text{SO}(3) $ that satisfy $ Ae_{1}=e_{1} $. The matrix $ A $ preserves $ e_{1} $ if and only if
$ a=1 $ and $ d=g=0 $.
Then $ A^{T}A=\mathbb{1} $ implies $ b=c=0 $. The conditions remaining for $ e,h,f,i $
are those for $ B=\begin{pmatrix}
e & f \\
h & i
\end{pmatrix}
\in M_{2\times 2}(\mathbb{R}) $ to be an element of $ \text{SO}(2) $. This shows
\begin{align}
\text{SO}(3)_{e_{1}}=\begin{pmatrix}
1 & \vec{0}^{T} \\
\vec{0} & \text{SO}(2)
\end{pmatrix},
\end{align}
where $ \vec{0}=(0~0)^{T}\in\mathbb{R}^{2} $. Thus we can identify $ \text{SO}(3)_{e_{1}} $ with $ \text{SO}(2) $.
According to Theorem \ref{homspacetransact} the $ 2 $-sphere $ \mathbb{S}^{2} $ is a homogeneous space and there is a
diffeomorphism
\begin{align}
\text{SO}(3)/\text{SO}(2)\cong\mathbb{S}^{2}.
\end{align}
More explicit, the diffeomorphism reads
\begin{align}
\beta_{e_{1}}\colon\operatorname{SO}(3)/\operatorname{SO}(2)\ni A\cdot\begin{pmatrix}
1 & \vec{0}^{T} \\
\vec{0} & \text{SO}(2)
\end{pmatrix}\mapsto
Ae_{1}\in\mathbb{S}^{2}.
\end{align}
\end{example}

A non-compact Lie group containing the compact Lie group $ \text{SO}(3) $ is the special linear group
\begin{align}
\text{SL}(3,\mathbb{R})=\left\lbrace A\in M_{3\times 3}(\mathbb{R})~|~\det(A)=1\right\rbrace.
\end{align}
One can extend the linear action of $ \text{SO}(3) $ on $ \mathbb{S}^{2} $ to
$ \text{SL}(3,\mathbb{R}) $ as we can see in the following

\begin{example}\label{exampleSL3}
Consider the map $ \Phi\colon\text{SL}(3,\mathbb{R})\times\mathbb{S}^{2}\rightarrow\mathbb{S}^{2} $ defined by
\begin{align}\label{ActionSL3}
\Phi(A,p)=\frac{Ap}{||Ap||}.
\end{align}
The denominator of Eq. (\ref{ActionSL3}) is not zero since $ A\in\text{SL}(3,\mathbb{R}) $ is invertible and
$ 0\notin\mathbb{S}^{2} $. Moreover, $ \frac{Ap}{||Ap||}\in\mathbb{S}^{2} $ since $ ||\frac{Ap}{||Ap||}||=1 $. Thus
$ \Phi $ is well-defined. It is smooth since all involved mappings are smooth and it is a Lie group action because
for all $ x\in\mathbb{S}^{2} $ and $ A,B\in\operatorname{SL}(3,\mathbb{R}) $ one has
\begin{align*}
\Phi(\mathbb{1},x)=\frac{x}{||x||}=x\text{ and }\Phi(A,\Phi(B,x))=\Phi(A,\frac{Bx}{||Bx||})
=\frac{A\frac{Bx}{||Bx||}}{||A\frac{Bx}{||Bx||}||}=\frac{ABx}{||ABx||}=\Phi(AB,x).
\end{align*}
If we restrict $ \Phi $ to $ \text{SO}(3)\times\mathbb{S}^{2} $ the action reads
$ \Phi(A,p)=\frac{Ap}{||Ap||}=Ap $, since $ ||Ap||=1 $ as it was shown in the last example. Thus $ \Phi $ is indeed
an extension of the action in Example \ref{exampleSO3}. This shows immediately the transitivity of $ \Phi $.
It is also effective, since any $ \mathbb{1}\neq A\in\operatorname{SL}(3,\mathbb{R}) $ has an eigenvector
$ 1\neq\lambda\in\mathbb{C} $.
Consequently, $ \mathbb{S}^{2}\cong\text{SL}(3,\mathbb{R})/\text{SL}(3,\mathbb{R})_{p} $ for any
$ p\in\mathbb{S}^{2} $,
where the stabilizer $ \text{SL}(3,\mathbb{R})_{p} $ has to be $ 6 $-dimensional since $ \dim(\text{SL}(3,\mathbb{R}))
=8 $. Take for example $ e_{1}=(1~0~0)^{T}\in\mathbb{S}^{2} $. Then the corresponding stabilizer is
\begin{align*}
\operatorname{SL}(3,\mathbb{R})_{e_{1}}=\left.\left\lbrace \begin{pmatrix}
\frac{1}{\det(A)} & \alpha & \beta \\
0 & A_{11} & A_{12} \\
0 & A_{21} & A_{22}
\end{pmatrix}\in M_{3\times 3}(\mathbb{R})~\right|~
\begin{pmatrix}
A_{11} & A_{12} \\
A_{21} & A_{22}
\end{pmatrix}\in\operatorname{GL}(2,\mathbb{R}),~\alpha,\beta\in\mathbb{R}\right\rbrace.
\end{align*}
To prove this, consider an arbitrary matrix $ B=\begin{pmatrix}
B_{11} & B_{12} & B_{13} \\
B_{21} & B_{22} & B_{23} \\
B_{31} & B_{32} & B_{33}
\end{pmatrix}\in\operatorname{SL}(3,\mathbb{R}) $ and calculate $ ||Be_{1}||=\sqrt{B_{11}^{2}+B_{21}^{2}+B_{31}^{2}} $.
Then $ \frac{Be_{1}}{||Be_{1}||}=e_{1} $ implies $ B_{21}=B_{31}=0 $ and $ ||Be_{1}||=B_{11} $. The condition
$ \det(B)=1 $ forces $ \begin{pmatrix}
B_{22} & B_{23} \\
B_{32} & B_{33}
\end{pmatrix}\in\operatorname{GL}(2,\mathbb{R}) $ and $ B_{11}=\frac{1}{\det(B)} $. Counting degrees of freedom one
obtains $ \dim(\operatorname{SL}(3,\mathbb{R})_{e_{1}})=6 $. Then the diffeomorphism between
$ \operatorname{SL}(3,\mathbb{R})/\operatorname{SL}(3,\mathbb{R})_{e_{1}} $ and $ \mathbb{S}^{2} $ is
\begin{align}
\beta_{e_{1}}\colon\operatorname{SL}(3,\mathbb{R})/\operatorname{SL}(3,\mathbb{R})_{e_{1}}\ni
B\cdot\operatorname{SL}(3,\mathbb{R})_{e_{1}}\mapsto\frac{Be_{1}}{||Be_{1}||}\in\mathbb{S}^{2}.
\end{align}
\end{example}

For the next example we remind the reader of the notion of \textit{Minkowski space} which is common in
special relativity. Consider the matrix
\begin{align}\label{eq05t}
I_{1,3}=\begin{pmatrix}
1 & 0 & 0 & 0 \\
0 & -1 & 0 & 0 \\
0 & 0 & -1 & 0 \\
0 & 0 & 0 & -1
\end{pmatrix}
\in M_{4\times 4}(\mathbb{R}).
\end{align}
The space $ \mathbb{R}^{4} $ together with the \textit{Lorentz metric}
\begin{align}
\varphi_{1,3}\colon\mathbb{R}^{4}\times\mathbb{R}^{4}\ni((x^{0},...,x^{3}),(y^{0},...,y^{3}))\mapsto x^{0}y^{0}
-\sum\limits_{k=1}^{3}x^{k}y^{k}\in\mathbb{R}
\end{align}
associated to the matrix (\ref{eq05t}) is said to be the Minkowski space. The corresponding quadratic form is
\begin{align}
\varPhi_{1,3}\colon\mathbb{R}^{4}\ni(x^{0},...,x^{3})\mapsto(x^{0})^{2}-\sum\limits_{k=1}^{3}(x^{k})^{2}\in
\mathbb{R}.
\end{align}
The set $ \operatorname{O}(1,3,\mathbb{R}) $ of isometries of the Lorentz metric $ \varphi_{1,3} $ is said to be
the \textbf{Lorentz group}. Thus
$ \Lambda\in\operatorname{O}(1,3,\mathbb{R}) $ if $ \varphi_{1,3}(\Lambda x,\Lambda y)=\varphi_{1,3}(x,y) $ for all
$ x,y\in\mathbb{R}^{4} $. The set of matrices $ \Lambda\in\operatorname{O}(1,3,\mathbb{R}) $ that fulfil
$ \det(\Lambda)=1 $ is further denoted by $ \operatorname{SO}(1,3,\mathbb{R}) $. One often writes
$ x=\begin{pmatrix}
x^{0}\\
\vec{x}
\end{pmatrix}\in\mathbb{R}^{4} $ to indicate the \textit{time part} $ x^{0}\in\mathbb{R} $ and the
\textit{space part} $ \vec{x}\in\mathbb{R}^{3} $ of the vector $ x $.

\begin{example}\label{exampleSO31}
Consider the \textbf{proper orthochronous Lorentz group}
\begin{align}
\mathcal{L}_{3}^{\uparrow,+}=\left\lbrace\Lambda\in\operatorname{SO}(1,3,\mathbb{R})~|~\Lambda_{11}\geq 1\right\rbrace.
\end{align}
It is the connected component of the unit in $ \operatorname{O}(1,3,\mathbb{R}) $ and a $ 6 $-dimensional
non-compact Lie group. For vectors
$ \vec{x}\in\mathbb{S}^{2}\subseteq\mathbb{R}^{3} $ and matrices $ \Lambda\in\mathcal{L}_{3}^{\uparrow,+} $
one defines a map $ \Phi\colon\mathcal{L}_{3}^{\uparrow,+}\times\mathbb{S}^{2}\rightarrow\mathbb{S}^{2} $ by
\begin{align}
\Phi(\Lambda,\vec{x})=\frac{\operatorname{pr}\Bigg(\Lambda\begin{pmatrix}
1\\
\vec{x}
\end{pmatrix}\Bigg)}{\Bigg|\Bigg|\operatorname{pr}\Bigg(\Lambda\begin{pmatrix}
1\\
\vec{x}
\end{pmatrix}\Bigg)\Bigg|\Bigg|},
\end{align}
where $ \operatorname{pr} $ is the projection on the space part of a vector, i.e.
$ \operatorname{pr}((x^{0}~x^{1}~x^{2}~x^{3})^{T})=(x^{1}~x^{2}~x^{3})^{T} $ for all
$ (x^{0}~x^{1}~x^{2}~x^{3})^{T}\in\mathbb{R}^{4} $. The map $ \Phi $ is obviously well-defined and smooth. For any
$ \vec{x}\in\mathbb{S}^{2} $ one has
\begin{align*}
\Phi(\mathbb{1},\vec{x})=\frac{\operatorname{pr}\Bigg(\mathbb{1}\begin{pmatrix}
1\\
\vec{x}
\end{pmatrix}\Bigg)}{\Bigg|\Bigg|\operatorname{pr}\Bigg(\mathbb{1}\begin{pmatrix}
1\\
\vec{x}
\end{pmatrix}\Bigg)\Bigg|\Bigg|}=\frac{\vec{x}}{||\vec{x}||}=\vec{x},
\end{align*}
since $ ||\vec{x}||=1 $. Proving the action property and effectiveness of $ \Phi $ is a bit more involved.
We want to give another argument to prove that there is a transitive action of $ \mathcal{L}_{3}^{\uparrow,+} $
on $ \mathbb{S}^{2} $:
we calculate the Iwasawa decomposition (see Definition \ref{IwasawaDec}) of $ \mathcal{L}_{3}^{\uparrow,+} $
following \cite[Section~4.5]{GallierNotes}.
The Lie algebra corresponding to $ \mathcal{L}_{3}^{\uparrow,+} $ is
\begin{align}
\mathfrak{so}(1,3)=\left.\left\lbrace\begin{pmatrix}
0 & \vec{a}^{T} \\
\vec{a} & A
\end{pmatrix}
\in M_{4\times 4}(\mathbb{R})~\right|~\vec{a}\in\mathbb{R}^{3}, A^{T}=-A\in M_{3\times 3}(\mathbb{R})\right\rbrace,
\end{align}
what can be concluded quite easily since the Lie algebra of the special orthogonal matrices
$ \text{SO}(4) $ are the skew-symmetric matrices $ \mathfrak{so}(4) $. This can be adapted to signature $ (1,3) $.
Then the Cartan decomposition (see Definition \ref{CartanDec} of $ \mathfrak{so}(1,3) $ is
\begin{align}
\mathfrak{so}(1,3)=\mathfrak{k}\oplus\mathfrak{p},
\end{align}
where
\begin{align}
\mathfrak{k}=\left.\left\lbrace\begin{pmatrix}
0 & 0\\
0 & A
\end{pmatrix}
\in M_{4\times 4}(\mathbb{R})~\right|~A=-A^{T}\in M_{3\times 3}(\mathbb{R})\right\rbrace
\end{align}
and
\begin{align}
\mathfrak{p}=\left.\left\lbrace
\begin{pmatrix}
0 & \vec{a}^{T} \\
\vec{a} & 0_{3\times 3}
\end{pmatrix}
\in M_{4\times 4}(\mathbb{R})~\right|~\vec{a}\in\mathbb{R}^{3}\right\rbrace.
\end{align}
This can be obtained as follows: consider the Cartan
involution (see Definition \ref{CartanInv})
$ \Theta\colon\mathfrak{so}(1,3)\ni B\mapsto-B^{T}\in\mathfrak{so}(1,3) $ on $ \mathfrak{so}(1,3) $. Then
\begin{align*}
\left.\left\lbrace\begin{pmatrix}
0 & 0\\
0 & A
\end{pmatrix}
\in M_{4\times 4}(\mathbb{R})~\right|~A=-A^{T}\in M_{3\times 3}(\mathbb{R})\right\rbrace
=\left\lbrace B\in\mathfrak{so}(1,3)~|~\Theta(B)=B\right\rbrace
=\mathfrak{k}
\end{align*}
and
\begin{align*}
\left.\left\lbrace
\begin{pmatrix}
0 & \vec{a}^{T} \\
\vec{a} & 0_{3\times 3}
\end{pmatrix}
\in M_{4\times 4}(\mathbb{R})~\right|~\vec{a}\in\mathbb{R}^{3}\right\rbrace
=\left\lbrace B\in\mathfrak{so}(1,3)~|~\Theta(B)=-B\right\rbrace
=\mathfrak{p}.
\end{align*}
To get a maximal abelian subalgebra $ \mathfrak{a} $ of $ \mathfrak{p} $ one calculates the commutator of
$ A=\begin{pmatrix}
0 & \vec{a}^{T} \\
\vec{a} & 0_{3\times 3}
\end{pmatrix} $
and $ B=\begin{pmatrix}
0 & \vec{b}^{T} \\
\vec{b} & 0_{3\times 3}
\end{pmatrix}\in\mathfrak{p} $:
\begin{align*}
\left[ A,B\right]=\begin{pmatrix}
0 & 0 & 0 & 0 \\
0 & 0 & a^{1}b^{2}-a^{2}b^{1} & a^{1}b^{3}-a^{3}b^{1} \\
0 & a^{2}b^{1}-a^{1}b^{2} & 0 & a^{2}b^{3}-a^{3}b^{2} \\
0 & a^{3}b^{1}-a^{1}b^{3} & a^{3}b^{2}-a^{2}b^{3} & 0
\end{pmatrix}.
\end{align*}
Thus, it is easy to see that
\begin{align}
\mathfrak{a}=\left.\left\lbrace\begin{pmatrix}
0 & a & 0 & 0 \\
a & 0 & 0 & 0 \\
0 & 0 & 0 & 0 \\
0 & 0 & 0 & 0
\end{pmatrix}~\right|~a\in\mathbb{R}\right\rbrace
\end{align}
is a maximal abelian subalgebra of $ \mathfrak{p} $.
A nilpotent Lie subalgebra $ \mathfrak{n} $ of $ \mathfrak{so}(1,3) $ given as the sum of root spaces of a choice
of positive roots on $ \mathfrak{a} $ is given by
\begin{align}
\mathfrak{n}=\left.\left\lbrace\begin{pmatrix}
0 & 0 & b & c \\
0 & 0 & b & c \\
b & -b & 0 & 0 \\
c & -c & 0 & 0
\end{pmatrix}\in M_{4\times 4}(\mathbb{R})~\right|~b,c\in\mathbb{R}\right\rbrace.
\end{align}
Thus the Iwasawa decomposition of $ \mathfrak{so}(1,3) $ is
\begin{align}
\mathfrak{so}(1,3)=\mathfrak{k}\oplus\mathfrak{a}\oplus\mathfrak{n}.
\end{align}
If we denote the corresponding connected Lie groups by $ K, A $ and $ N $ we get the Iwasawa decomposition of
$ \mathcal{L}_{3}^{\uparrow,+} $, namely
\begin{align}
\mathcal{L}_{3}^{\uparrow,+}=KAN.
\end{align}
It is clear by definition that $ K\cong\text{SO}(3) $. Defining $ M=\text{SO}(2) $
we conclude by Example \ref{exampleSO3}
\begin{align}
\mathbb{S}^{2}\cong\text{SO}(3)/\text{SO}(2)\cong K/M\cong KAN/MAN\cong\mathcal{L}_{3}^{\uparrow,+}/MAN.
\end{align}
Since $ \left[\mathfrak{a},\mathfrak{n}\right]\subseteq\mathfrak{n} $ we know that $ \mathfrak{a}\oplus\mathfrak{n} $
is a Lie subalgebra of $ \mathfrak{so}(1,3) $. Thus $ AN $ is a subgroup of $ \mathcal{L}_{3}^{\uparrow,+} $
and we showed that $ \mathbb{S}^{2} $ can be structured as a homogeneous space via a transitive action of
$ \mathcal{L}_{3}^{\uparrow,+} $.
\end{example}

We see in Section \ref{sectiononish} that Example \ref{exampleSO3}, Example \ref{exampleSL3} and Example
\ref{exampleSO31} provide the only three (non-equivalent) ways to structure $ \mathbb{S}^{2} $ as a homogeneous space
via an effective action.

\section{The Euler Characteristic of Compact Homogeneous Spaces}\label{SecEulerPos}

In Proposition \ref{prophomspace} and Theorem \ref{homspacetransact} we gave a characterization of homogeneous
spaces by quotient spaces of Lie groups that act transitively on the homogeneous space modulo a stabilizer of the
action. There is another way to detect homogeneous spaces with a topological background.

For $ k\in\mathbb{N}_{0} $ we denote the set of smooth $ k $-differential forms on a smooth manifold $ M $ by
$ \Omega^{k}(M) $. An element $ \omega\in\Omega^{k}(M) $ is said to be \textit{closed} if the exterior derivative
of $ \omega $ is zero, i.e. $ \mathrm{d}\omega=0 $. It is called \textit{exact} if there is a $ (k-1) $-form
$ \alpha\in\Omega^{k-1}(M) $ such that $ \omega=\mathrm{d}\alpha $. Since $ \mathrm{d}^{2}=0 $ every exact
$ k $-form is closed and one can define the quotient space $ H_{\operatorname{dR}}^{k}(M) $ of all closed $ k $-forms
modulo all exact $ k $-forms which is called
the $ k $-th \textit{de Rham cohomology} of $ M $. From a Theorem by de Rham this is isomorphic to the $ k $-th
singular cohomology on $ M $ with coefficients in $ \mathbb{R} $. The direct sum of all these spaces
\begin{align}
H_{\operatorname{dR}}^{\bullet}(M)=\bigoplus_{k\in\mathbb{N}_{0}}H_{\operatorname{dR}}^{k}(M)
\end{align}
is said to be the de Rham cohomology of $ M $.

\begin{definition}[Euler characteristic]\label{DefEulerCharac}
The Euler characteristic $ \chi(M) $ of a smooth manifold $ M $ is defined as
\begin{align}\label{Euler}
\chi(M)=\sum\limits_{k=0}^{\infty}(-1)^{k}\dim(H_{\operatorname{dR}}^{k}(M)).
\end{align}
\end{definition}

\begin{remark}
In Definition \ref{DefEulerCharac} one has to argue that the series (\ref{Euler}) converges. While this is true
for compact smooth manifolds this will not be the case for arbitrary smooth manifolds. However we are mainly
interested in compact manifolds $ M $, that also fulfil $ \dim(H_{\operatorname{dR}}^{k}(M))<\infty $ for all
$ k\in\mathbb{N}_{0} $. For more informations about the Euler characteristic of a smooth
manifold consider \cite[Chapter~11]{bott2013differential}.
\end{remark}

This is our tool to detect homogeneous spaces. A homogeneous space $ M $ is compact if
for example the Lie group $ G $ that acts transitively on
$ M $ is compact. We prove that in this situation the Euler characteristic of $ M $ is non-negative, following
\cite[Section~5]{samelson1958}. We need to sum up some basic definitions and properties of Lie group representations
first.

\begin{definition}[Lie Group Representation]
A representation of a Lie group $ G $ is a finite-dimensional real vector space $ V $ together with a
Lie group homomorphism
\begin{align}
\pi\colon G\rightarrow\operatorname{GL}(V),
\end{align}
where $ \operatorname{GL}(V) $ denotes the general linear group of $ V $ consisting of the bijective linear maps
$ V\rightarrow V $ with concatenation as group multiplication.
\end{definition}

\begin{example}
Consider the \textbf{adjoint representation} $ (\text{Ad},\mathfrak{g}) $ of $ G $ on $ \mathfrak{g} $, where
\begin{align}
\text{Ad}\colon G\ni g\mapsto T_{e}\text{Conj}_{g}\in\text{GL}(\mathfrak{g}).
\end{align}
The map $ \text{Ad}^{*}\colon G\rightarrow\text{GL}(\mathfrak{g}^{*}) $ that assigns an element $ g\in G $ the
adjoint of the linear map $ \text{Ad}(g^{-1}) $ is said to be the \textbf{coadjoint representation} of $ G $ on
$ \mathfrak{g}^{*} $. For any $ k\in\mathbb{N} $ one can extend those two maps linearly to obtain representations
$ \Anti^{k}\text{Ad} $ and $ \Anti^{k}\text{Ad}^{*} $ of $ G $ on $ \Anti^{k}\mathfrak{g} $ and
$ \Anti^{k}\mathfrak{g}^{*} $, respectively.
\end{example}

Now choose an arbitrary representation $ (\pi,V) $ of $ G $. If there is a scalar product $ \langle\cdot,\cdot
\rangle $ on $ V $ such that
\begin{align}\label{OrthoRep}
\langle\pi(g)v,\pi(g)w\rangle=\langle v,w\rangle
\end{align}
for all $ g\in G $ and $ v,w\in V $ the representation $ (\pi,V) $ is said to be \textit{orthogonal}. A real
vector space can always be endowed with a scalar product. We want to
prove that there is
a scalar product on $ V $ satisfying (\ref{OrthoRep}) for a chosen representation $ (\pi,V) $ if $ G $ is
compact. The first step is to consider a special measure. Take a $ n $-dimensional manifold $ M $ and a
\textit{volume form} $ \omega $ on it, i.e.
$ \omega=w\mathrm{d}x_{1}\wedge\ldots\wedge\mathrm{d}x_{n} $ in local coordinates
$ (U,x_{i}) $. The \textit{measure} $ \mu $ corresponding to $ \omega $ is then defined by the relation
\begin{align}
\int f\mathrm{d}\mu=\int_{U} f(x_{1},\ldots,x_{n})|w|\mathrm{d}x_{1}\cdots\mathrm{d}x_{n}
\end{align}
for all $ f\in\Cinfty(U) $. One can check that this is independent of the chart $ (U,x_{i}) $. In the case of a Lie
group $ G $ we choose a non-zero element $ \tilde{\omega}\in\Anti^{n}(T_{e}G) $ and define
$ \omega=T_{e}L_{g}\tilde{\omega} $ for any $ g\in G $. If $ G $ is compact in addition we can normalize $ \omega $
such that the corresponding measure fulfils
\begin{align}
\int_{G}\mathrm{d}g=1.
\end{align}
It has been proved in \cite{HaarMeasure} that this measure is unique and bi-invariant, i.e.
\begin{align}
\int_{G}f(g)\mathrm{d}g=\int_{G}f(gh)\mathrm{d}g=\int_{G}f(hg)\mathrm{d}g
\end{align}
for all $ f\in\Cinfty(G) $ and $ h\in G $. We call it the \textit{normalized Haar measure} on $ G $. Now we come
back to orthogonal representations:

\begin{proposition}\label{PropOrthTrans}
Let $ (\pi,V) $ be a representation of a compact Lie group $ G $ and $ \langle\cdot,\cdot\rangle' $ any scalar
product on $ V $. Then
\begin{align}\label{OrthScalarProd}
\langle h_{1},h_{2}\rangle=\int_{G}\langle\pi(g)h_{1},\pi(g)h_{2}\rangle'\mathrm{d}g
\end{align}
is a scalar product on $ V $ such that Eq. (\ref{OrthoRep}) holds, i.e. $ (\pi,V) $ is a orthogonal representation
of $ G $.
\end{proposition}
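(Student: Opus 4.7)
The plan is to establish the three defining properties of a scalar product (bilinearity, symmetry, positive definiteness) and then the invariance under $\pi$, using properties of the normalized Haar measure introduced just above the statement.

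First I would check that the integrand in Eq.~(\ref{OrthScalarProd}) is continuous (indeed smooth) as a function $G \to \mathbb{R}$, so the integral is well-defined: for fixed $h_{1},h_{2}\in V$ the map $g \mapsto \langle \pi(g)h_{1},\pi(g)h_{2}\rangle'$ is the composition of the smooth Lie group homomorphism $\pi\colon G \to \operatorname{GL}(V)$ with the evaluation map at $(h_{1},h_{2})$ and the continuous bilinear form $\langle\cdot,\cdot\rangle'$. Since $G$ is compact, the integral over the normalized Haar measure exists.

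Bilinearity and symmetry of $\langle\cdot,\cdot\rangle$ then follow immediately from linearity of the integral together with the corresponding properties of $\langle\cdot,\cdot\rangle'$: for fixed $g$, $\pi(g)$ is linear and $\langle\cdot,\cdot\rangle'$ is bilinear and symmetric. For positive definiteness, let $h\in V$ with $h\neq 0$. Since $\pi(g)\in\operatorname{GL}(V)$ is invertible for every $g\in G$, one has $\pi(g)h\neq 0$, so $\langle \pi(g)h,\pi(g)h\rangle' > 0$ for every $g\in G$. The integrand is thus a strictly positive continuous function on $G$, and since the normalized Haar measure assigns $G$ total mass $1$ (and in particular is not the zero measure on any nonempty open set), we conclude $\langle h,h\rangle > 0$.

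Finally, the invariance property uses the right-invariance of the Haar measure recalled above the statement. For any $h\in G$ and $h_{1},h_{2}\in V$, the homomorphism property of $\pi$ gives
\begin{align*}
\langle \pi(h)h_{1},\pi(h)h_{2}\rangle
&=\int_{G}\langle \pi(g)\pi(h)h_{1},\pi(g)\pi(h)h_{2}\rangle'\mathrm{d}g\\
&=\int_{G}\langle \pi(gh)h_{1},\pi(gh)h_{2}\rangle'\mathrm{d}g\\
&=\int_{G}\langle \pi(g)h_{1},\pi(g)h_{2}\rangle'\mathrm{d}g
=\langle h_{1},h_{2}\rangle,
\end{align*}
where the third equality is the right-invariance $\int_{G}f(gh)\mathrm{d}g=\int_{G}f(g)\mathrm{d}g$ applied to $f(g)=\langle \pi(g)h_{1},\pi(g)h_{2}\rangle'$. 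This verifies Eq.~(\ref{OrthoRep}). No step presents a real obstacle; the only point demanding care is positive definiteness, which crucially relies on the invertibility of each $\pi(g)$ together with the fact that the Haar measure of $G$ is nonzero, both of which are guaranteed by the hypotheses.
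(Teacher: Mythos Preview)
Your proof is correct and follows essentially the same approach as the paper. The paper simply asserts that the formula defines a scalar product (``Clearly''), whereas you spell out bilinearity, symmetry, and positive definiteness; the invariance computation is identical to the paper's, line for line.
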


\begin{proof}
Clearly Eq. (\ref{OrthScalarProd}) defines a scalar product on $ G $. It fulfils (\ref{OrthoRep}) since for
$ g',h_{1},h_{2}\in G $ one has
\begin{align*}
\langle\pi(g')h_{1},\pi(g')h_{2}\rangle&=\int_{G}\langle\pi(g)\pi(g')h_{1},\pi(g)\pi(g')h_{2}\rangle'\mathrm{d}g\\
&=\int_{G}\langle\pi(gg')h_{1},\pi(gg')h_{2}\rangle'\mathrm{d}g\\
&=\int_{G}\langle\pi(g)h_{1},\pi(g)h_{2}\rangle'\mathrm{d}g\\
&=\langle h_{1},h_{2}\rangle.
\end{align*}
\end{proof}

Also remark the notion of maximal tori:

\begin{definition}
Let $ G $ be a Lie group. Then a Lie subgroup $ T $ is said to be a
\begin{compactenum}
\item \textbf{torus} in $ G $ if $ T $ is isomorphic to a product $ \mathbb{S}^{1}\times\cdots\times\mathbb{S}^{1} $
of $ 1 $-spheres.
\item \textbf{maximal torus} in $ G $ if for any torus $ T' $ in $ G $ such that $ T\subseteq T' $ one has $ T'=T $.
\end{compactenum}
\end{definition}
These objects are of fundamental use in the theory of connected compact Lie groups. Some properties are collected
in the following

\begin{proposition}
Let $ G $ be a connected compact Lie group.
\begin{compactenum}
\item Any tours in $ G $ is contained in a maximal torus in $ G $.
\item Any element in $ G $ is contained in a maximal torus in $ G $.
\item Any two maximal tori in $ G $ are conjugated to each other, i.e. if $ T $ and $ T' $ are maximal tori
in $ G $ there is a $ g\in G $ such that
\begin{align}
gTg^{-1}=T'.
\end{align}
For this one calls the dimension of any maximal torus in $ G $ the \textbf{rank} of $ G $.
\item Any maximal torus $ T $ in $ G $ has a \textbf{generating element} $ g_{0}\in T $, i.e. for any $ g\in T $
there is a natural number $ k\in\mathbb{N}_{0} $ such that $ g=g_{0}^{k} $.
\end{compactenum}
\end{proposition}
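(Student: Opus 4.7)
The plan is to prove the four assertions in the natural order, relying on two background facts that lie slightly outside Lie-group bookkeeping: surjectivity of the exponential map for connected compact Lie groups, and the Lefschetz fixed point theorem. Parts (i) and (ii) are short; (iv) is standard number theory; (iii) is the technical heart and will consume (iv) as an input.

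For (i) I would argue by dimension maximality. Since any torus $T'\subseteq G$ satisfies $\dim T'\leq\dim G$, among the tori in $G$ containing a given torus $T$ there is one of largest dimension $T^{\mathrm{max}}$; by construction no torus strictly contains $T^{\mathrm{max}}$, so it is maximal. The only delicate point is that the closure of a nested increasing union of tori is still a torus, which follows from the classification of connected compact abelian Lie groups. For (ii) I would use the surjectivity of $\exp\colon\mathfrak{g}\to G$ for connected compact $G$. Writing $g=\exp(\xi)$, the one-parameter subgroup $\{\exp(t\xi)\mid t\in\mathbb{R}\}$ is connected and abelian, hence its closure is a connected compact abelian Lie subgroup of $G$ containing $g$. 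Such a closure is itself a torus, which by (i) embeds into a maximal torus.

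For (iii) I would reduce the conjugacy of maximal tori to a fixed point problem on the compact manifold $G/T'$. Let $T$ and $T'$ be two maximal tori in $G$, and consider the smooth action of $T$ on $G/T'$ by left multiplication, $t\cdot(gT')=tgT'$. A coset $gT'$ is fixed by all of $T$ if and only if $g^{-1}Tg\subseteq T'$, and by maximality of $T'$ this is equivalent to $g^{-1}Tg=T'$. Using (iv) I would pick a topological generator $t_{0}\in T$ so that $T$-fixed cosets on $G/T'$ coincide with $t_{0}$-fixed ones. Since $t_{0}$ can be joined to $e$ by a continuous path inside $T$, the translation $\lambda_{t_{0}}\colon G/T'\to G/T'$ is homotopic to $\operatorname{id}_{G/T'}$, and its Lefschetz number therefore equals $\chi(G/T')$. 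What remains, and where I expect the main obstacle to lie, is the positivity $\chi(G/T')>0$: it is not a direct Lie-theoretic manipulation but a genuine cohomological statement about the flag-like manifold $G/T'$ (in the standard treatment one shows $\chi(G/T')=|W|$, the order of the Weyl group, via a cell decomposition or Morse theory on $G/T'$).

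For (iv) I would identify $T\cong\mathbb{R}^{n}/\mathbb{Z}^{n}$ and invoke Kronecker's density theorem: choosing reals $\alpha_{1},\dots,\alpha_{n}$ for which $1,\alpha_{1},\dots,\alpha_{n}$ are linearly independent over $\mathbb{Q}$, the class $g_{0}=(\alpha_{1},\dots,\alpha_{n})\bmod\mathbb{Z}^{n}$ has the property that $\{g_{0}^{k}\mid k\in\mathbb{N}_{0}\}$ is dense in $T$. This is the natural topological reading of the stated generator property, since for $\dim T\geq 1$ a torus cannot be algebraically cyclic. With (iv) in hand the circular dependence in (iii) is resolved, and all four items are proved.
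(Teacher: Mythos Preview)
The paper does not prove this proposition at all; immediately after stating it, the author writes that ``All statements are taken from [Section~2.4]{arvanitogeorgos2003introduction}, [Chapter~IV]{brocker2013representations} and [Chapter~6]{hofmann2013structure}'' and refers the reader to those sources. So there is no proof in the paper to compare against, and your proposal supplies strictly more than the paper does.

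Your arguments are the standard textbook ones and are essentially correct. Two comments. First, your reading of (iv) is the right one: as literally stated the claim is false for $\dim T\geq 1$ (a countable set of powers cannot exhaust an uncountable torus), and the intended meaning is that $\{g_{0}^{k}\}$ is \emph{dense} in $T$, i.e.\ $g_{0}$ is a topological generator. Kronecker's theorem handles this exactly as you say. Second, in (iii) the only genuinely nontrivial input is $\chi(G/T')>0$, and you flag this honestly. The cleanest self-contained route, avoiding any circularity, is to let the \emph{same} torus $T'$ act on $G/T'$: the identity coset is visibly fixed, the fixed points are isolated and coincide with $N(T')/T'$, and each has local Lefschetz index $+1$, giving $\chi(G/T')=|W(T')|>0$ directly. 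With that in hand your Lefschetz argument for a second torus $T$ goes through. The Morse-theoretic or Bruhat-cell computation you mention also works, but the fixed-point count is the shortest path.
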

All statements are taken from \cite[Section~2.4]{arvanitogeorgos2003introduction} ,
\cite[Chapter~IV]{brocker2013representations} and \cite[Chapter~6]{hofmann2013structure}.
Consider those references for more informations about maximal tori.

Furthermore, one has a topological result on representations of compact Lie groups. In analogy to
Definition \ref{EulerCompact} we want to define the Euler characteristic of a representation. Let $ (\pi,V) $ be a
representation of a compact Lie group $ G $ with $ n=\dim(V) $.
Then for any $ 1\leq k\leq n $ the map $ \pi $ induces a representation $ \Anti^{k}\pi $ on $ \Anti^{k}V $.
Denote by $ I_{k} $ the subspaces of $ \Anti^{k}V $ consisting of
the elements $ v $ that satisfy
$ (\Anti^{k}\pi(g))(v)=v $ for all $ g\in G $. Then define the \textit{Poincaré polynomial} $ P_{\pi} $ of
$ (\pi,V) $ by
\begin{align}\label{PoincPoly}
P_{\pi}(t)=\sum\limits_{k=1}^{n}\lambda_{k}t^{k},
\end{align}
where $ \lambda_{k}=\dim(I_{k}) $ are the \textit{Betti numbers} corresponding to $ \pi $. We know
from representation theory (c.f. \cite[Section~12.6.2]{chirikjian2011stochastic}) that the Betti numbers are obtained
by integration of the trace of the corresponding representation over the whole group, i.e.
\begin{align}
\lambda_{k}=\int_{G}\text{tr}(\Anti^{k}\pi(g))\mathrm{d}g.
\end{align}

\begin{definition}
Let $ (\pi,V) $ be a representation of a compact Lie group $ G $. The \textit{Euler characteristic} $ \chi(\pi) $
of $ (\pi,V) $ is defined by
\begin{align}
\chi(\pi)=P_{\pi}(-1).
\end{align}
\end{definition}

In the end we want to prove that the Euler characteristic of a connected compact homogeneous space is non-negative.
The next lemma is a step towards this:

\begin{lemma}\label{LemmaEulerRepresentation}
Let $ (\pi,V) $ be a representation of a connected compact Lie group $ G $. Then $ \chi(\pi)\geq 0 $. Moreover, one
has $ \chi(\pi)>0 $ if and only if for each maximal torus $ T $ in $ G $ one has $ \pi(g)v=v $ for all $ g\in G $, only
for a $ v\in T $ if $ v=0 $.
\end{lemma}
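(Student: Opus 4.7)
The plan is to rewrite the Euler characteristic $\chi(\pi)$ as an integral over $G$, and then reduce both non-negativity and the strict-positivity criterion to pointwise facts about orthogonal matrices.

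First I would combine the integral formula $\lambda_{k}=\int_{G}\operatorname{tr}(\Anti^{k}\pi(g))\,\mathrm{d}g$ with the elementary identity
\begin{align*}
\sum\limits_{k=0}^{n}(-1)^{k}\operatorname{tr}(\Anti^{k}A)=\det(\operatorname{id}_{V}-A),
\end{align*}
valid for any endomorphism $A$ of a finite-dimensional vector space $V$. Exchanging the finite sum with the Haar integral yields
\begin{align*}
\chi(\pi)=\sum\limits_{k=0}^{n}(-1)^{k}\lambda_{k}=\int_{G}\det(\operatorname{id}_{V}-\pi(g))\,\mathrm{d}g.
\end{align*}
By Proposition \ref{PropOrthTrans}, $V$ admits a $G$-invariant scalar product, so $\pi(g)$ is orthogonal for every $g\in G$. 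Its complex eigenvalues are therefore either $\pm 1$ or come in conjugate pairs $e^{\pm i\theta}$ with $\theta\in(0,\pi)$, and the corresponding factors of $\det(\operatorname{id}_{V}-\pi(g))$ are $0$, $2$ or $(1-e^{i\theta})(1-e^{-i\theta})=2-2\cos\theta>0$. Hence the integrand is non-negative pointwise and $\chi(\pi)\geq 0$.

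For the characterization of strict positivity, I would argue the two implications separately. If some maximal torus $T$ admits a non-zero fixed vector $v\in V$, then by conjugacy of maximal tori every $g\in G$ lies in a maximal torus of the form $hTh^{-1}$, on which $\pi(h)v\neq 0$ is fixed; so $\pi(g)$ has eigenvalue $1$ and $\det(\operatorname{id}_{V}-\pi(g))=0$ identically on $G$, giving $\chi(\pi)=0$. Conversely, assume every maximal torus acts without non-zero fixed vector. Pick a maximal torus $T$ and a generating element $g_{0}\in T$; since the powers of $g_{0}$ are dense in $T$ and $\pi$ is continuous, $V^{g_{0}}=V^{T}=0$, so $\pi(g_{0})$ has no eigenvalue $1$ and $\det(\operatorname{id}_{V}-\pi(g_{0}))>0$. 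By continuity this strict inequality persists on an open neighbourhood of $g_{0}$, which has positive Haar measure, forcing $\chi(\pi)>0$ via the integral formula above.

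The main obstacle is the reverse direction of the characterization: one must exhibit a single element at which $\det(\operatorname{id}_{V}-\pi(g))$ is strictly positive, starting from the torus-wise vanishing hypothesis. This is exactly where the generating element of a maximal torus is indispensable, together with the fact (recalled just before the lemma) that every element of $G$ lies in some maximal torus. Once that bridge is set up, the rest is standard orthogonal linear algebra combined with the integral expression for $\chi(\pi)$.
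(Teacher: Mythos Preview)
Your proof is correct and follows essentially the same route as the paper: express $\chi(\pi)$ as $\int_{G}\det(\operatorname{id}_{V}-\pi(g))\,\mathrm{d}g$, make $\pi$ orthogonal via Proposition~\ref{PropOrthTrans}, show the integrand is pointwise non-negative, and then use the generating element of a maximal torus together with conjugacy to handle the strict-positivity criterion. The only notable difference is cosmetic: the paper argues non-negativity by splitting into odd and even $n$ and invoking the real block-diagonal normal form of orthogonal matrices, whereas your complex-eigenvalue factorisation $(1-e^{i\theta})(1-e^{-i\theta})=2-2\cos\theta\geq 0$ is a bit more direct and avoids the case split.
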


\begin{proof}
The discussion above implies
\begin{align*}
\chi(\pi)&=P_{\pi}(-1)
=\sum\limits_{k=1}^{n}\lambda_{k}(-1)^{k}
=\sum\limits_{k=1}^{n}\int_{G}\text{tr}(\Anti^{k}\pi(g))\mathrm{d}g(-1)^{k}\\
&=\int_{G}\sum\limits_{k=1}^{n}\text{tr}(\Anti^{k}\pi(g))(-1)^{k}\mathrm{d}g
\overset{(\ast)}{=}\int_{G}\det(-\pi(g)+\text{id}_{V})\mathrm{d}g,
\end{align*}
where in $ (\ast) $ we used the formula
\begin{align}
\sum\limits_{k=1}^{n}\text{tr}(\Anti^{k}\pi(g))t^{k}=\det(t\pi(g)+\text{id}_{V})
\end{align}
resulting from a principal minor argument. According to Proposition \ref{PropOrthTrans} we can assume that $ \pi(g) $
is a orthogonal transformation for all $ g\in G $. Since $ G $ is connected $ \pi(g) $ is also orientation preserving.
First assume that $ n $ is odd. Then according to \cite[Lemma~7.3.1]{gallier2012geometric} $ \pi(g) $ has an
eigenvalue $ \lambda=1 $, i.e. $ \det(-\pi(g)+\text{id}_{V})=0 $. Thus in this case $ \chi(\pi)=0 $. If $ n $ is even
one has the normal form of orthogonal matrices
\begin{align}\label{normalform}
\begin{pmatrix}
R_{1} & & & & & \\
 & \ddots & & & & \\
 & & R_{m} & & & \\
 & & & \pm 1 & & \\
 & & & & \ddots & \\
 & & & & & \pm 1
\end{pmatrix},
\end{align}
with $ R_{i}=\begin{pmatrix}
\cos(\phi_{i}) & -\sin(\phi_{i})\\
\sin(\phi_{i}) & \cos(\phi_{i})
\end{pmatrix} $ for some $ \phi_{i}\in\mathbb{R} $. If at least one $ -1 $ appears in the matrix
(\ref{normalform}) one has $ \det(-\pi(g)+\text{id}_{V})=0 $. Since $ \det(R_{i})=1 $ and $ \pi(g) $ preserves
orientation there is a even number of $ 1 $ appearing in (\ref{normalform}) and since we have
\begin{align*}
\det\Bigg(-\begin{pmatrix}
\cos(\phi_{i}) -1 & -\sin(\phi_{i})\\
\sin(\phi_{i}) & \cos(\phi_{i})-1
\end{pmatrix}\Bigg)&=(-1)^{2}(\cos^{2}(\phi_{i})-2\cos(\phi_{i})+1+\sin^{2}(\phi_{i}))\\
&=2(1-\cos(\phi_{i}))\geq 0,
\end{align*}
this implies $ \det(-\pi(g)+\text{id}_{V})\geq 0 $. Thus $ \chi(\pi)\geq 0 $ and we have proven the first
claim of the proposition. We prove both directions of the second statement by contraposition. Assume
$ \chi(\pi)=0 $. Since $ \det(-\pi(g)+\text{id}_{V})\geq 0 $, this implies $ \det(-\pi(g)+\text{id}_{V})=0 $ for all
$ g\in G $. Choose any maximal torus $ T $ in $ G $ with generating element $ g_{0}\in G $. Then of course
$ \det(-\pi(g_{0})+\text{id}_{V})=0 $ which means that $ 1 $ is an eigenvalue of $ \pi(g_{0}) $, i.e. $ \pi(g_{0}) $
fixes a non-zero vector $ v\in V $. Since $ \pi $ is a Lie group homomorphism and any other element $ g $ in $ T $ is
obtained as a power $ p\in\mathbb{N} $ of $ g_{0} $ one has $ \pi(g)v=\pi(g_{0}^{p})v=\pi(g_{0})\cdots\pi(g_{0})v=v $
and every element
of $ T $ fixes $ v $. Conversely, assume that there is a maximal torus $ T $ fixing a non-zero vector $ v\in V $, i.e.
$ \det(-\pi(g)+\text{id}_{V})=0 $ for all $ g\in T $. Now any element $ g\in G $ is conjugated to an element
$ g_{0}\in T $, i.e. there is an $ h\in G $ such that $ g=hg_{0}h^{-1} $. Via cyclic permutations of the arguments
of the determinant this gives together with the homomorphism property of $ \pi $
\begin{align}
\det(-\pi(g)+\text{id}_{V})=\det(-\pi(h^{-1})\pi(h)\pi(g_{0})+\text{id}_{V})=0.
\end{align}
This proves $ \det(-\pi(g)+\text{id}_{V})=0 $ for all $ g\in G $ and $ \chi(\pi)=0 $.
\end{proof}

The next proof deals with differential forms on a manifold that are invariant under the action of a Lie group
that acts on the manifold. We denote by $ \alpha $ a differential $ k $-form
on a smooth manifold $ M $ and we say that it is \textit{invariant} under the action $ \Phi $ of a Lie group
$ G $ on $ M $ if
\begin{align}
\Phi_{g}^{*}\alpha=\alpha\text{ for all }g\in G,
\end{align}
i.e. for all $ p\in M $, $ g\in G $ and all tangent vectors $ X_{1},\ldots,X_{k}\in T_{p}M $ one has
\begin{align}
\alpha_{\Phi(g,p)}(T_{p}\Phi_{g}~X_{1},\ldots,T_{p}\Phi_{g}~X_{k})=\alpha_{p}(X_{1},\ldots,X_{k}).
\end{align}
We use the notation
\begin{align}
\Omega^{k}(M)^{G,\Phi}=\left.\left\lbrace\alpha\in\Omega^{k}(M)~\right|~\Phi_{g}^{*}\alpha=\alpha\text{ for all }g\in G\right\rbrace
\end{align}
to denote set of $ k $-forms on $ M $ that are invariant under the action $ \Phi $. The corresponding
\textit{$ k $-th invariant de Rham cohomology} is then defined by
\begin{align}
H_{\operatorname{dR}}^{k}(M)^{G,\Phi}=\frac{\left\lbrace\alpha\in\Omega^{k}(M)^{G,\Phi}~|~\mathrm{d}\alpha=0
\right\rbrace}{\left\lbrace\mathrm{d}\beta\in\Omega^{k}(M)~|~\beta\in\Omega^{k-1}(M)^{G,\Phi}\right\rbrace},
\end{align}
and the direct sum of all these spaces is the invariant de Rham cohomology
$ H_{\operatorname{dR}}^{\bullet}(M)^{G,\Phi} $. It is interesting to compare it to
$ H_{\operatorname{dR}}^{\bullet}(M) $.

\begin{lemma}\label{LemGInvDeRham}
Let $ G $ be a connected compact Lie group that acts on a connected manifold $ M $ via $ \Phi $. Then
\begin{align}
H_{\operatorname{dR}}^{\bullet}(M)\cong H_{\operatorname{dR}}^{\bullet}(M)^{G,\Phi}.
\end{align}
\end{lemma}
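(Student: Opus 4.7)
The plan is to exhibit a chain-level equivalence between $\Omega^\bullet(M)^{G,\Phi}$ and $\Omega^\bullet(M)$ by Haar averaging, and then transfer the result to cohomology.

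First I would use the normalized bi-invariant Haar measure $\mathrm{d}g$ on the compact Lie group $G$ to define the averaging operator
\begin{align*}
A\colon\Omega^k(M)\rightarrow\Omega^k(M),\qquad A(\alpha)=\int_G\Phi_g^*\alpha\,\mathrm{d}g.
\end{align*}
Bi-invariance gives $\Phi_h^*A(\alpha)=A(\alpha)$ for every $h\in G$, so $A$ actually takes values in $\Omega^k(M)^{G,\Phi}$. Since each pullback $\Phi_g^*$ commutes with the exterior differential, so does $A$. Normalization yields $A\circ\iota=\operatorname{id}$ on $\Omega^\bullet(M)^{G,\Phi}$, where $\iota\colon\Omega^\bullet(M)^{G,\Phi}\hookrightarrow\Omega^\bullet(M)$ denotes the inclusion. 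Hence on cohomology $A_*$ is a left inverse of $\iota_*$, and already induces a surjection $H_{\operatorname{dR}}^\bullet(M)\rightarrow H_{\operatorname{dR}}^\bullet(M)^{G,\Phi}$.

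The heart of the argument is to show that $\iota\circ A$ is chain homotopic to $\operatorname{id}_{\Omega^\bullet(M)}$, which then forces $\iota_*$ and $A_*$ to be mutually inverse isomorphisms. For this I would invoke the connectedness of $G$: for each $g\in G$ choose a smooth path $\gamma_g\colon[0,1]\rightarrow G$ from $e$ to $g$ and write $Y_t^g=\dot\gamma_g(t)\gamma_g(t)^{-1}\in\mathfrak{g}$ for the time-dependent right-logarithmic generator, so that $\frac{\mathrm{d}}{\mathrm{d}t}\Phi_{\gamma_g(t)}^*\alpha=\Phi_{\gamma_g(t)}^*\mathcal{L}_{(Y_t^g)_M}\alpha$. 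Cartan's magic formula $\mathcal{L}_X=\mathrm{d}\iota_X+\iota_X\mathrm{d}$ combined with the fundamental theorem of calculus then yields
\begin{align*}
\Phi_g^*\alpha-\alpha=\int_0^1\Phi_{\gamma_g(t)}^*\mathcal{L}_{(Y_t^g)_M}\alpha\,\mathrm{d}t=(\mathrm{d}K_g+K_g\mathrm{d})\alpha,
\end{align*}
with $K_g\alpha=\int_0^1\Phi_{\gamma_g(t)}^*\iota_{(Y_t^g)_M}\alpha\,\mathrm{d}t$. Averaging these chain homotopies against $\mathrm{d}g$, the operator $K\alpha=\int_G K_g\alpha\,\mathrm{d}g$ satisfies $\iota A-\operatorname{id}=\mathrm{d}K+K\mathrm{d}$, which concludes the argument.

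The main obstacle I foresee is the smooth (or at least measurable) dependence of the auxiliary paths $\gamma_g$ on the parameter $g\in G$, needed in order to exchange differentiation with the Haar integral when defining $K$. I would handle this first on an open neighbourhood $U$ of the identity on which $\exp$ is a diffeomorphism, by setting $\gamma_g(t)=\exp(t\exp^{-1}(g))$, and then propagate the construction to the rest of $G$ via left translation combined with a partition of unity on $G$. An alternative that sidesteps the selection issue entirely is the observation that for each fixed $g\in G$ the diffeomorphism $\Phi_g$ is homotopic to $\operatorname{id}_M$ by connectedness of $G$, so that $\Phi_g^*=\operatorname{id}$ on $H_{\operatorname{dR}}^\bullet(M)$, and hence $A_*=\operatorname{id}$ on cohomology directly, making $\iota_*$ an isomorphism without constructing a global chain homotopy.
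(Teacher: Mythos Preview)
Your proposal is correct, and in fact the alternative you sketch at the very end is precisely the route the paper takes: it fixes $g\in G$, observes that $\Phi_g$ is smoothly homotopic to $\operatorname{id}_M$ via any curve in $G$ from $e$ to $g$, concludes $\Phi_g^*=\operatorname{id}$ on $H_{\operatorname{dR}}^\bullet(M)$, and then argues that the averaging map $\hat\Xi([\alpha])=[\int_G\Phi_g^*\alpha\,\mathrm{d}g]$ equals $[\alpha]$, which gives injectivity; surjectivity is immediate since invariant forms are fixed by $\Xi$.

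Your primary approach via an explicit global chain homotopy $K=\int_G K_g\,\mathrm{d}g$ is a genuine refinement: it works at chain level rather than only on cohomology, and you are right to flag the measurable dependence of $\gamma_g$ on $g$ as the one nontrivial technical point. The paper, by contrast, passes to cohomology before integrating and implicitly uses the same fact (that $\int_G(\Phi_g^*\alpha-\alpha)\,\mathrm{d}g$ is exact) without spelling out why the pointwise exactness survives the integral. So your version is actually more careful than the paper's on this issue; the local-exponential-plus-partition-of-unity fix you propose is the standard way to close that gap.
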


\begin{proof}
Choose an element $ g\in G $. Since $ G $ is connected there is a smooth curve
$ \gamma\colon\left[0,1\right]\rightarrow G $ in $ G $ such that $ \gamma(0)=e $ and $ \gamma(1)=g $. Furthermore
one defines the smooth map
\begin{align}
\Psi_{g}\colon\left[0,1\right]\times M\ni(t,p)\mapsto\Psi_{g}(t,p)=\Phi_{\gamma(t)}(p)\in M.
\end{align}
Since $ \Psi_{g}(0,\cdot)=\Phi_{\gamma(0)}=\operatorname{id}_{M} $ and $ \Psi_{g}(1,\cdot)=\Phi_{\gamma(1)}=\Phi_{g} $,
the map $ \Psi_{g} $ is a homotopy of $ \operatorname{id}_{M} $ and $ \Phi_{g} $. By a homotopy argument
(see \cite[Satz~2.2.1]{BachelorThesis}) the two maps
\begin{align}\label{eqHomotop}
\operatorname{id}_{M}^{*}=\Phi_{g}^{*}\colon H_{\operatorname{dR}}^{\bullet}(M)\rightarrow
H_{\operatorname{dR}}^{\bullet}(M)
\end{align}
coincide on cohomology level. This construction is valid for all $ g\in G $. In a next step we consider the map
\begin{align}
\Xi\colon\Omega^{\bullet}(M)\ni\alpha\mapsto\int_{G}\Phi_{g}^{*}\alpha\mathrm{d}g
\in\Omega^{\bullet}(M)^{G,\Phi}.
\end{align}
It is well-defined, since
\begin{align*}
\Phi_{h}^{*}\int_{G}\Phi_{g}^{*}\alpha\mathrm{d}g=\int_{G}\Phi_{gh}^{*}\alpha\mathrm{d}g
=\int_{G}\Phi_{g}^{*}\alpha\mathrm{d}g,
\end{align*}
for any $ \alpha\in\Omega^{\bullet}(M) $, by the bi-invariance of the normalized Haar measure. We want to show that
\begin{align}
\hat{\Xi}\colon H_{\operatorname{dR}}^{\bullet}(M)\ni\left[\alpha\right]\mapsto
\left[\Xi(\alpha)\right]\in H_{\operatorname{dR}}^{\bullet}(M)^{G,\Phi}
\end{align}
defines a linear isomorphism. The map $ \hat{\Xi} $ is well-defined since
$ \Xi(\alpha)\in\Omega^{\bullet}(M)^{G,\Phi} $ for any $ \alpha\in\Omega^{\bullet}(M) $. Thus $ \Xi(\alpha) $
is a representative of the equivalence class $ \hat{\Xi}(\left[\alpha\right]) $ in
$ H_{\operatorname{dR}}^{\bullet}(M)^{G,\Phi} $. The linearity of $ \hat{\Xi} $ is clear. Let
$ \left[\alpha\right]\in H_{\operatorname{dR}}^{\bullet}(M)^{G,\Phi}\subseteq
H_{\operatorname{dR}}^{\bullet}(M) $ be arbitrary. Then
\begin{align*}
\left[\alpha\right]=\left[\int_{G}\Phi_{g}^{*}\alpha\mathrm{d}g\right]=\left[\Xi(\alpha)\right],
\end{align*}
which implies that $ \hat{\Xi} $ is surjective. By Eq. (\ref{eqHomotop}) one has for any
$ \left[\alpha\right]\in H_{\operatorname{dR}}^{\bullet}(M) $
\begin{align*}
\left[\alpha\right]&=\left[\int_{G}\operatorname{id}_{M}^{*}\alpha\mathrm{d}g\right]\\
&=\left[\int_{G}\Phi_{g}^{*}\alpha\mathrm{d}g\right]\\
&=\left[\Xi(\alpha)\right],
\end{align*}
i.e. $ \hat{\Xi} $ is injective. This concludes the proof.
\end{proof}
Now consider a connected homogeneous space $ M=G/H $, where $ G $ is a connected and compact Lie group.
Set $ e'=e\cdot H $. One can define a left $ H $-module structure on $ \Anti^{\bullet}T_{e'}^{*}M $ by
\begin{align}
H\times\Anti^{\bullet}T_{e'}^{*}M\ni(h,\alpha_{e'})\mapsto h\rhd\alpha_{e'}=
\Anti^{\bullet}\operatorname{Ad}_{h}^{*}(\alpha_{e'})\in\Anti^{\bullet}T_{e'}^{*}M.
\end{align}
We denote the elements that are invariant under this action by
\begin{align}
(\Anti^{\bullet}T_{e'}^{*}M)^{H}=\left\lbrace\alpha_{e'}\in\Anti^{\bullet}T_{e'}^{*}M~|~
\Anti^{\bullet}\operatorname{Ad}_{h}^{*}(\alpha_{e'})=\alpha_{e'}\text{ for all }h\in H\right\rbrace.
\end{align}
Then one can prove the following

\begin{lemma}\label{LemComplexHInv}
Let $ M=G/H $ be a connected homogeneous space such that $ G $ is connected and compact, where we denote the
transitive action of $ G $ on $ M $ by $ \Phi $. Then
\begin{align}
\Omega^{\bullet}(M)^{G,\Phi}\cong(\Anti^{\bullet}T_{e'}^{*}M)^{H}.
\end{align}
\end{lemma}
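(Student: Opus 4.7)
The plan is to exhibit the isomorphism explicitly as evaluation at the distinguished point $e' = e \cdot H$. Define
\begin{align*}
\operatorname{ev}_{e'}\colon \Omega^{\bullet}(M)^{G,\Phi}\ni\alpha\mapsto\alpha_{e'}\in(\Anti^{\bullet}T_{e'}^{*}M)^{H}.
\end{align*}
To see this lands in the $H$-invariants, note that for $h\in H$ the diffeomorphism $\Phi_{h}$ fixes $e'$, so the pointwise identity $(\Phi_{h}^{*}\alpha)_{e'}=\alpha_{e'}$ reads $\Anti^{\bullet}(T_{e'}\Phi_{h})^{*}\alpha_{e'}=\alpha_{e'}$. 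It remains to identify $T_{e'}\Phi_{h}$ with $\operatorname{Ad}_{h}$ under the isomorphism $\mathfrak{g}/\mathfrak{h}\cong T_{e'}M$ of Proposition \ref{prophomspace}. Using the natural action $\Phi_{h}(g\cdot H)=(hg)\cdot H$ from Proposition \ref{prophomspace} together with $hH=H$, one computes for $\xi\in\mathfrak{g}$
\begin{align*}
\Phi_{h}(\operatorname{pr}(\exp(t\xi)))=\operatorname{pr}(h\exp(t\xi))=\operatorname{pr}(h\exp(t\xi)h^{-1})=\operatorname{pr}(\exp(t\operatorname{Ad}_{h}\xi)),
\end{align*}
and differentiating at $t=0$ yields $T_{e'}\Phi_{h}\circ T_{e}\operatorname{pr}=T_{e}\operatorname{pr}\circ\operatorname{Ad}_{h}$. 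Hence, via the identification, $T_{e'}\Phi_{h}$ corresponds to $\operatorname{Ad}_{h}$, so the condition on $\alpha_{e'}$ is precisely $\Anti^{\bullet}\operatorname{Ad}_{h}^{*}\alpha_{e'}=\alpha_{e'}$ for every $h\in H$.

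Next I would construct the inverse by $G$-translation. Given $\alpha_{0}\in(\Anti^{\bullet}T_{e'}^{*}M)^{H}$ and $p\in M$, choose (by transitivity) any $g\in G$ with $\Phi_{g}(e')=p$ and set
\begin{align*}
\alpha_{p}(X_{1},\ldots,X_{k}):=\alpha_{0}\bigl(T_{p}\Phi_{g^{-1}}X_{1},\ldots,T_{p}\Phi_{g^{-1}}X_{k}\bigr).
\end{align*}
If $g'=gh$ with $h\in H$ also satisfies $\Phi_{g'}(e')=p$, then by the action property $T_{p}\Phi_{g'^{-1}}=T_{e'}\Phi_{h^{-1}}\circ T_{p}\Phi_{g^{-1}}$, and the $H$-invariance of $\alpha_{0}$ shows the value of $\alpha_{p}$ is independent of the representative. $G$-invariance of the resulting $\alpha$ is essentially tautological: replacing $(p,X_{i})$ by $(\Phi_{g_{0}}(p),T_{p}\Phi_{g_{0}}X_{i})$ and using the representative $g\cdot g_{0}^{-1}$ for the new base point gives $(\Phi_{g_{0}}^{*}\alpha)_{p}=\alpha_{p}$.

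The main technical point, and the step I expect to be the most delicate, is smoothness of the form $\alpha$ defined above. For this I would invoke the fact, already contained in Theorem \ref{hommfd}, that $\operatorname{pr}\colon G\to G/H$ is a smooth surjective submersion and in fact a principal $H$-bundle. Consequently, around any $p_{0}\in M$ there is an open neighbourhood $U\subseteq M$ admitting a smooth local section $s\colon U\to G$ with $\operatorname{pr}\circ s=\operatorname{id}_{U}$. On $U$ the prescription above can be written as $\alpha_{p}(\,\cdot\,)=\alpha_{0}\bigl(T_{p}\Phi_{s(p)^{-1}}\,\cdot\bigr)$, which is manifestly smooth in $p$ because $s$, group inversion, and $\Phi$ are smooth. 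Since smoothness is local, this shows $\alpha\in\Omega^{\bullet}(M)$.

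Finally, one checks that the two constructions are mutually inverse. Starting from $\alpha_{0}$, taking $g=e$ in the extension formula gives back $\alpha_{0}$ at $e'$, so $\operatorname{ev}_{e'}$ composed with the extension is the identity on $(\Anti^{\bullet}T_{e'}^{*}M)^{H}$. Conversely, starting from a $G$-invariant $\alpha$, its value at $p=\Phi_{g}(e')$ must be $(\Phi_{g^{-1}})^{*}\alpha_{e'}$ by the invariance $\Phi_{g}^{*}\alpha=\alpha$, and this is exactly the extension formula applied to $\alpha_{e'}$. Both maps are $\mathbb{R}$-linear by inspection, so $\operatorname{ev}_{e'}$ is the desired isomorphism.
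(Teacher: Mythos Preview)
Your proof is correct and follows essentially the same strategy as the paper: the isomorphism is evaluation at $e'$, with inverse given by $G$-translation of an $H$-invariant covector. The paper presents the extension map $F$ first and then checks injectivity and surjectivity, whereas you start with $\operatorname{ev}_{e'}$ and build the inverse, but the maps are the same. Your treatment is in fact more complete on two points the paper glosses over: you verify explicitly that $T_{e'}\Phi_{h}$ corresponds to $\operatorname{Ad}_{h}$ under $T_{e}\operatorname{pr}$, and you address smoothness of the extended form via local sections of the principal bundle $\operatorname{pr}\colon G\to G/H$; the paper's proof simply does not mention smoothness. One small slip: in the $G$-invariance check, the correct representative for the new base point $\Phi_{g_{0}}(p)=\Phi_{g_{0}g}(e')$ is $g_{0}g$, not $g\cdot g_{0}^{-1}$; with that correction the chain-rule computation goes through exactly as you intend.
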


\begin{proof}
Consider the map
\begin{align}
F\colon(\Anti^{\bullet}T_{e'}^{*}M)^{H}\ni\alpha_{e'}\mapsto(M\ni p\mapsto\Phi_{g}^{*}\alpha_{e'}\in
\Anti^{\bullet}T_{p}^{*}M,\text{ where }p=\Phi_{g}(e'))\in\Omega^{\bullet}(M)^{G,\Phi}.
\end{align}
It is easy to see that $ F(\alpha_{e'})\in\Omega^{\bullet}(M)^{G,\Phi} $ for any
$ \alpha_{e'}\in(\Anti^{\bullet}T_{e'}^{*}M)^{H}. $
To prove that this map is well-defined one has to argue that $ \Phi_{g}(e')=\Phi_{h}(e') $, for any $ g,h\in G $
and $ \alpha_{e'}\in(\Anti^{\bullet}T_{e'}^{*}M)^{H} $,
implies $ \Phi_{g}^{*}\alpha_{e'}=\Phi_{h}^{*}\alpha_{e'} $ or equivalently
$ (\Phi_{g}\circ\Phi_{h^{-1}})^{*}\alpha_{e'}=\alpha_{e'} $. Since $ \Phi_{g}(e')=\Phi_{h}(e') $ for $ g,h\in G $
implies that $ g^{-1}h\in H $, it suffices to prove that $ \Phi_{h}^{*}\alpha_{e'}=\alpha_{e'} $ for all $ h\in H $
and $ \alpha_{e'}\in(\Anti^{\bullet}T_{e'}^{*}M)^{H} $. Since $ \Phi $ is transitive it is also
locally transitive, i.e. the corresponding Lie algebra action
$ \phi|_{p}\colon\mathfrak{g}\rightarrow T_{p}M $ is surjective for all $ p\in M $. Thus, for any $ v_{p}\in T_{p}M $
there is a $ \xi\in\mathfrak{g} $ such that the fundamental vector field $ \xi_{M} $ for $ \xi $ on $ M $
satisfies
\begin{align}
\xi_{M}(p)=T_{e}\Phi_{p}\xi=v_{p},
\end{align}
for any $ p\in M $. Applying this to every tensor component of an arbitrary $ X_{p}\in\Anti^{k}T_{p}M $, one obtains
an element $ \xi=\xi_{1}\wedge\cdots\wedge\xi_{k}\in\Anti^{k}\mathfrak{g} $ such that
\begin{align}\label{FundMultVF}
T_{e}\Phi_{p}\xi:=T_{e}\Phi_{p}\xi_{1}\wedge\cdots\wedge T_{e}\Phi_{p}\xi_{1}=X_{p}.
\end{align}
We denote the left hand side of Eq. (\ref{FundMultVF}) also by $ \xi_{M} $.
Let $ \alpha_{e'}\in(\Anti^{k}T_{e'}^{*}M)^{H} $, $ X_{e'}\in\Anti^{k}T_{e'}M $ and $ h\in H $ be arbitrary.
Then
\begin{align*}
(\Phi_{h}^{*}\alpha_{e'})(X_{e'})&=\alpha_{e'}(T_{e'}\Phi_{h}X_{e'})
=\langle\alpha_{e'},T_{e'}\Phi_{h}\xi_{M}(e')\rangle
=\langle\alpha_{e'},\Phi_{h^{-1}}^{*}\xi_{M}(e')\rangle\\
&=\langle\alpha_{e'},\operatorname{Ad}_{u}(\xi_{M}(e'))\rangle
=\langle\operatorname{Ad}^{*}_{u^{-1}}\alpha_{e'},X_{e'}\rangle
=\langle\alpha_{e'},X_{e'}\rangle,
\end{align*}
where we used fundamental equations of the Cartan calculus. This shows $ \Phi_{h}^{*}\alpha_{e'}=\alpha_{e'} $
for all $ h\in H $, i.e. $ F $ is well-defined. It is clear that $ F $ is linear. To show that it is injective
choose $ \alpha_{e'},\beta_{e'}\in(\Anti^{k}T_{e'}^{*}M)^{H} $ and assume $ F(\alpha_{e'})=F(\beta_{e'}) $.
Since $ e'=\Phi_{e}(e') $ this implies
\begin{align*}
\alpha_{e'}=F(\alpha_{e'})(e)=F(\beta_{e'})(e)=\beta_{e'}.
\end{align*}
To prove that $ F $ is surjective choose an arbitrary $ \alpha\in\Omega^{k}(M)^{G,\Phi} $, i.e.
$ \Phi_{g}^{*}\alpha=\alpha $ for all $ g\in G $. Let $ X_{e'}\in\Anti^{k}T_{e'}M $. Then for any $ g\in G $ one has
\begin{align*}
\alpha_{e'}(X_{e'})=\Phi_{g}^{*}\alpha_{e'}(X_{e'})=\alpha_{\Phi_{g}(e')}(\Anti^{k}T_{e'}\Phi_{g}X_{e'})
\end{align*}
and since $ T_{e'}\Phi_{g} $ is a linear isomorphism this implies for any $ p\in M $ that
\begin{align*}
\alpha_{p}(X_{p})=\alpha_{e'}(\Anti^{k}T_{p}\Phi_{g^{-1}}X_{p}),
\end{align*}
where $ g\in G $ such that $ \Phi_{g}(e')=p $. This concludes the proof.
\end{proof}

Now we can prove that the Euler characteristic of a connected compact homogeneous space is non-negative if there
is a connected compact Lie group acting transitively on it.

\begin{proposition}\label{PropEulerNonNeg}
Let $ M=G/H $ be a connected compact homogeneous space such that $ G $ is a connected compact Lie group and $ H $ a
connected subgroup of $ G $. Then $ \chi(M)\geq 0 $. Moreover, $ \chi(M)>0 $ if and only if $ H $ is a subgroup of
maximal rank, i.e. if any maximal torus in $ H $ has the same dimension as $ H $.
\end{proposition}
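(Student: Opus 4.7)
The plan is to chain the identifications provided by the previous lemmas until $\chi(M)$ becomes the Euler characteristic of an $H$-representation, then invoke Lemma \ref{LemmaEulerRepresentation} and translate its fixed-vector condition into a statement about ranks. Since $G$ is compact and connected, Lemma \ref{LemGInvDeRham} identifies $H_{\operatorname{dR}}^{\bullet}(M)$ with $H_{\operatorname{dR}}^{\bullet}(M)^{G,\Phi}$, and Lemma \ref{LemComplexHInv} identifies the invariant de Rham complex with the finite-dimensional complex $(\Anti^{\bullet}T_{e'}^{*}M)^{H}$. Because alternating sums of dimensions agree for a finite-dimensional cochain complex and its cohomology (Euler–Poincaré), I would deduce
\[
\chi(M)=\sum_{k=0}^{\dim M}(-1)^{k}\dim(\Anti^{k}T_{e'}^{*}M)^{H}.
\]

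Next, I would use the isomorphism $T_{e'}M\cong\mathfrak{g}/\mathfrak{h}$ from Proposition \ref{prophomspace} to rewrite $(\Anti^{k}T_{e'}^{*}M)^{H}$ as the space of $H$-invariants in $\Anti^{k}(\mathfrak{g}/\mathfrak{h})^{*}$ under the restriction of $\Anti^{k}\operatorname{Ad}^{*}$. Its dimension is exactly the $k$-th Betti number of the representation $\pi$ of $H$ on $(\mathfrak{g}/\mathfrak{h})^{*}$ induced by the coadjoint action, so $\chi(M)=\chi(\pi)$. As $H$ is compact (being closed in $G$) and connected by hypothesis, Lemma \ref{LemmaEulerRepresentation} immediately gives $\chi(M)\geq 0$, together with the criterion that $\chi(M)>0$ holds if and only if every maximal torus $T\subseteq H$ fixes no nonzero vector in $(\mathfrak{g}/\mathfrak{h})^{*}$, equivalently $(\mathfrak{g}/\mathfrak{h})^{T}=\{0\}$.

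The hard part is the final translation into the rank condition. A class $\xi+\mathfrak{h}$ is $T$-invariant precisely when $[\eta,\xi]\in\mathfrak{h}$ for every $\eta$ in the Lie algebra $\mathfrak{t}$ of $T$, so $(\mathfrak{g}/\mathfrak{h})^{T}=\{0\}$ is equivalent to the centralizer condition $\mathfrak{z}_{\mathfrak{g}}(\mathfrak{t})\subseteq\mathfrak{h}$. Since $T$ is a maximal torus in the compact connected group $H$, the classical centralizer theorem for compact connected Lie groups yields $\mathfrak{z}_{\mathfrak{h}}(\mathfrak{t})=\mathfrak{t}$, and therefore any such inclusion forces $\mathfrak{z}_{\mathfrak{g}}(\mathfrak{t})=\mathfrak{t}$. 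Applying the same theorem to $G$, this last equality is equivalent to $T$ being a maximal torus of $G$ as well, i.e.\ $\dim T=\operatorname{rank}(G)$. Combined with $\dim T=\operatorname{rank}(H)$ and the conjugacy of maximal tori, this is exactly the assertion that $H$ has maximal rank in $G$, which completes the characterisation. The main obstacle is clearly this last step: one needs the centralizer theorem twice, once inside $H$ and once inside $G$, and some care is required to ensure that the condition is formulated uniformly for all maximal tori of $H$, which is permitted by their conjugacy.
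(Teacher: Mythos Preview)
Your argument is correct and follows essentially the same route as the paper: both chain Lemma \ref{LemGInvDeRham} and Lemma \ref{LemComplexHInv} with an Euler--Poincar\'e argument to identify $\chi(M)$ with $\chi(\pi)$ for the $H$-representation on $(\mathfrak{g}/\mathfrak{h})^{*}$, and then invoke Lemma \ref{LemmaEulerRepresentation}. The paper merely asserts that ``the statement about the maximal torus can be deduced from Lemma \ref{LemmaEulerRepresentation}'' without further detail, so your final paragraph---reducing $(\mathfrak{g}/\mathfrak{h})^{T}=\{0\}$ to $\mathfrak{z}_{\mathfrak{g}}(\mathfrak{t})\subseteq\mathfrak{h}$ and then to $\operatorname{rank}(H)=\operatorname{rank}(G)$ via the centralizer theorem---actually fills in what the paper leaves implicit.
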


\begin{proof}
Parts of the proof are taken from \cite[Theorem~II]{samelson1958}.
The exterior derivative $ \mathrm{d} $ makes $ (\Anti^{\bullet}T_{e'}^{*}M)^{H} $ into a complex. This is a
consequence of $ \mathrm{d}\alpha\in(\Anti^{k+1}T_{e'}^{*}M)^{H} $ for $ \alpha\in(\Anti^{k}T_{e'}^{*}M)^{H} $,
which is true since $ \mathrm{d} $ and the pull back of any function commute. Any $ k $-cochain of this complex
can be identified with a $ k $-cochain on the complex $ (\Omega^{\bullet}(M)^{G,\Phi},\mathrm{d}) $ according
to Lemma \ref{LemComplexHInv}. Any $ k $-cochain of $ \Omega^{\bullet}(M)^{G,\Phi} $ can be written as the direct sum
of an invariant $ (k+1) $-coboundary and an invariant $ k $-cocycle. Moreover, any invariant $ k $-cocycle
can be written as the direct sum of an invariant $ k $-coboundary and an element of the $ k $-th invariant de Rham
cohomology. Altogether, this implies
\begin{align}\label{InvCochain}
C^{k}(M)^{G,\Phi}\cong B^{k+1}(M)^{G,\Phi}\oplus H_{\operatorname{dR}}^{k}(M)^{G,\Phi}\oplus B^{k}(M)^{G,\Phi},
\end{align}
where $ C^{k}(M)^{G,\Phi} $ and $ B^{k}(M)^{G,\Phi} $ denote the set of all invariant $ k $-cocycles and
$ k $-coboundaries on $ M $, respectively. Then it follows from Lemma \ref{LemGInvDeRham} that
\begin{align*}
\sum_{k=0}^{\infty}(-1)^{k}\dim((\Anti^{k}T_{e'}^{*}M)^{H})&=\sum_{k=0}^{\infty}(-1)^{k}\dim(C^{k}(M)^{G,\Phi})\\
&\overset{(\ast)}{=}\sum_{k=0}^{\infty}(-1)^{k}\dim(H_{\operatorname{dR}}^{k}(M)^{G,\Phi})\\
&=\sum_{k=0}^{\infty}(-1)^{k}\dim(H_{\operatorname{dR}}^{k}(M))\\
&=\chi(M),
\end{align*}
where we used Eq. (\ref{InvCochain}) in $ (\ast) $.
But since there is the representation $ \Anti^{\bullet}\operatorname{Ad}^{*} $ of the connected compact Lie group
$ H $ on $ \Anti^{\bullet}T_{e'}^{*}M $ Lemma \ref{LemmaEulerRepresentation} implies
\begin{align}
0\leq\chi(\Anti^{\bullet}\operatorname{Ad}^{*})=\sum_{k=0}^{\infty}(-1)^{k}\dim((\Anti^{k}T_{e'}^{*}M)^{H})=\chi(M).
\end{align}
Also the statement about the maximal torus can be deduced from Lemma \ref{LemmaEulerRepresentation}.
\end{proof}

The next step consists in extending this result to connected non-compact Lie groups. This is in general very hard to
obtain. The situation gets easier if one assumes the space $ M $ to be simply connected in addition, what will be
enough for considering the sphere $ \mathbb{S}^{2} $ as an example. Nevertheless, we cite the proofs for the
case when $ M $ is only assumed to be connected since we are also interested in the pretzel surfaces that are
not simply connected. We want to get a tool to use
Proposition \ref{PropEulerNonNeg} also in the case of non-compact Lie groups $ G $ if $ M $ is simply connected.
If $ G $ is connected and $ M $ simply connected, any
stabilizer
$ G_{x} $, where $ x\in M $, of the transitive action of $ G $ on $ M $ has to be connected since $ M=G/G_{x} $. This
means we can use the following famous theorem of Cartan for each, $ G $ and $ G_{x} $ (c.f.
\cite[Theorem~2]{IwasawaOnSomeTypes}):

\begin{theorem}\label{LemmaCartan}
Every connected Lie group $ G $ is a Cartesian product of a maximal compact Lie subgroup $ K\subseteq G $ and an
Euclidean vector space $ E $, i.e.
\begin{align}
G=K\times E.
\end{align}
\end{theorem}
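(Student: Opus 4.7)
The plan is to reduce the statement to structural decompositions that are already classical, proceeding through the universal cover. First I would pass to the universal cover $\tilde{G}$ of $G$, which is a connected simply connected Lie group, and split $\tilde{G}$ using the Levi decomposition as a semidirect product $\tilde{G} = S \ltimes R$, where $R$ is the solvable radical and $S$ is a maximal semisimple subgroup. Since $\tilde{G}$ is simply connected, both $S$ and $R$ inherit simple connectedness.

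Next I would handle the two factors separately. For the semisimple simply connected factor $S$, I would invoke the global Cartan decomposition: fix a Cartan involution $\theta$ on $\mathfrak{s}$, so that $\mathfrak{s} = \mathfrak{k} \oplus \mathfrak{p}$ with $\mathfrak{k}$ the $+1$-eigenspace and $\mathfrak{p}$ the $-1$-eigenspace; the map $K_S \times \mathfrak{p} \to S$ sending $(k,X)$ to $k\exp(X)$ is then a diffeomorphism, with $K_S$ the analytic subgroup corresponding to $\mathfrak{k}$, and $K_S$ is maximal compact in $S$. For the simply connected solvable factor $R$, the exponential map $\mathfrak{r} \to R$ is a diffeomorphism, so $R$ is diffeomorphic to a Euclidean space $\mathbb{R}^m$. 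Piecing this together through the semidirect product structure gives a diffeomorphism $\tilde{G} \cong K_S \times \mathbb{R}^N$.

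Finally I would descend from $\tilde{G}$ to $G$ via the quotient $G = \tilde{G}/\Gamma$ by a discrete central subgroup $\Gamma \subseteq Z(\tilde{G})$. One uses that $\Gamma$ is finitely generated abelian to split it as $\Gamma_{\mathrm{tor}} \oplus \mathbb{Z}^s$; the finite torsion part contributes a finite quotient of $K_S$ (still compact), while the free part $\mathbb{Z}^s$ must be arranged so that $s$ of its generators lie in the Euclidean factor, with the effect of compactifying $s$ of those Euclidean directions into a torus which then joins $K_S$ to form the maximal compact subgroup $K$ of $G$. What remains is a Euclidean vector space $E$ of dimension $N - s$, giving $G \cong K \times E$ as a smooth manifold.

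The main obstacle will be this last step: controlling the way a lattice $\Gamma$ in the center of $\tilde{G}$ sits inside the product decomposition $K_S \times \mathbb{R}^N$, and showing that after a suitable change of coordinates it is compatible with the product structure. Here one must use that the center of a connected Lie group is contained in every maximal compact subgroup up to the Euclidean factor, so that $\Gamma$ acts trivially on the Euclidean directions except by translations in a rational subspace. The resulting $K$ is maximal compact, and (as a byproduct, which is needed to match the statement) any two such maximal compact subgroups turn out to be conjugate in $G$; this conjugacy, due to Cartan, Malcev and Iwasawa, is the truly delicate point of the argument.
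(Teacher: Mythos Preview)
The paper does not prove this theorem at all: it simply states it and cites Iwasawa's paper \cite[Theorem~2]{IwasawaOnSomeTypes} as a black box. So there is no ``paper's own proof'' to compare your sketch against, and your outline is already much more than what the paper provides.

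That said, your sketch contains one genuine error. You claim that for the simply connected solvable factor $R$ the exponential map $\mathfrak{r}\to R$ is a diffeomorphism. This is false in general: it holds for simply connected \emph{nilpotent} Lie groups, but not for solvable ones. A standard counterexample is the universal cover of the group of orientation-preserving Euclidean motions of the plane, which is simply connected and solvable, yet its exponential map fails to be injective. What \emph{is} true, and what you need, is that a simply connected solvable Lie group is diffeomorphic to $\mathbb{R}^m$; but this is proved by induction on the derived series (or by realising $R$ as an iterated extension by copies of $\mathbb{R}$), not via $\exp$.

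Your descent step is, as you yourself note, the delicate part, and your description of it is more of a wish than an argument: the discrete central subgroup $\Gamma$ of $\tilde{G}$ need not respect the product decomposition $K_S\times\mathbb{R}^N$ in any obvious way, and producing coordinates in which $\Gamma$ acts as a lattice in a coordinate subspace of the Euclidean factor is essentially the content of the Iwasawa--Malcev argument. This is where the real work lies, and it is precisely why the paper cites the result rather than proving it.
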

Assume that $ M=G/G_{x} $ is simply connected. Then there are compact Lie subgroups $ K,L $ and Euclidean vector
spaces $ E,F $ such that $ G=K\times E $ and $ G_{x}=L\times F $ according to Theorem \ref{LemmaCartan}.
Since $ G_{x}\subseteq G $ and any two maximal compact subsets of $ G $ are conjugate to each other we can assume
$ L\subseteq K $. Now $ G/L $ can be seen as the total space of a vector bundle once with basis $ G/K $ and fibre
$ K/L $ and twice with basis $ G/G_{x} $ and fibre $ G_{x}/L $. For example $ \pi\colon G/L\rightarrow G/K $ maps
each coset of $ L $ in $ G $ to the coset of $ K $ in $ G $ which contains it. But since $ G/K\cong E $ is an
Euclidean space, a general topological result says that this vector bundle is trivial, i.e.
\begin{align}
G/L=K/L\times E.
\end{align}
The same holds for the second vector bundle, which implies that there is a global section
\begin{align}
f\colon M\cong G/G_{x}\rightarrow G/L.
\end{align}
Then some topological estimations (see \cite[Corollary~1]{Montgomery:1950}) imply the following

\begin{theorem}\label{ThmCompactSubActing}
If $ M=G/H $ is a simply connected compact homogeneous space and $ G $ a connected Lie group (probably non-compact)
there exists a connected compact Lie subgroup $ K $ of $ G $ that acts transitively on $ M $, i.e.
\begin{align}
M=G/H=K/L,
\end{align}
where $ L=K\cap H $.
\end{theorem}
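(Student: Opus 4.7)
The plan is to use the product decompositions $G=K\times E$ and $G_{x}=L\times F$ (with $L\subseteq K$) that the excerpt has already extracted from Theorem \ref{LemmaCartan}, together with the trivialization $G/L\cong K/L\times E$ and the global smooth section $\sigma\colon M=G/G_{x}\to G/L$ whose existence the excerpt establishes, to produce a transitive $K$-action on $M$. I will in fact show $M\cong K/L$ as homogeneous $K$-spaces, which immediately gives the theorem.

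My first step is to define the map
\begin{equation}
\Psi\colon M\xrightarrow{\ \sigma\ } G/L\cong K/L\times E\xrightarrow{\ \operatorname{pr}_{1}\ } K/L.
\end{equation}
The $K$-action on $G/L$ preserves the product decomposition (it acts only on the $K/L$-factor by left multiplication), so $\Psi$ is automatically $K$-equivariant once a base point is fixed. If $\Psi$ turns out to be a diffeomorphism, then the transitive $K$-action on $K/L$ transports to a transitive $K$-action on $M$; the stabilizer at the base point is $K\cap G_{x}$, which equals $L$ because $G_{x}=L\times F$ with $L\subseteq K$ and $F$ Euclidean must intersect the maximal compact subgroup $K$ trivially. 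Hence $M=K/L$ with $L=K\cap H$ as required.

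The entire content of the theorem is therefore concentrated in showing that $\Psi$ is a diffeomorphism. Surjectivity is cheap: $\Psi(M)$ is a compact, $K$-invariant subset of the $K$-homogeneous space $K/L$, so it is either empty or everything. Injectivity is where simple connectedness of $M$ must enter. I would compare the two fibre bundle structures on $G/L$, namely
\begin{equation}
F\longrightarrow G/L\longrightarrow M\qquad\text{and}\qquad K/L\longrightarrow G/L\longrightarrow E,
\end{equation}
and run their long exact homotopy sequences. Contractibility of $E$ gives $\pi_{*}(G/L)\cong\pi_{*}(K/L)$, while contractibility of $F$ combined with the existence of $\sigma$ yields $\pi_{*}(G/L)\cong\pi_{*}(M)$. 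Thus $\Psi$ induces isomorphisms on all homotopy groups, so by Whitehead's theorem it is a homotopy equivalence between compact manifolds; a standard dimension/degree argument then upgrades this to a diffeomorphism.

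The hard part will be that last upgrade: ensuring that $\sigma(M)$ does not sit ``diagonally'' in $K/L\times E$ in a way that would make $\Psi$ fail to be a local diffeomorphism. The clean way to handle this is to verify directly from the bundle trivialization that $\dim E=\dim F$ (this is the Euler/rank comparison forced by compactness of $M$) so that $\dim M=\dim K/L$, and then use properness of $\Psi$ together with the homotopy-equivalence conclusion above. Once this topological bookkeeping is in place, the identification $M=K/L=G/H$ with $L=K\cap H$ is immediate, and no further argument is needed.
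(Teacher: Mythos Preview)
Your approach has the right ingredients but the route via $\Psi$ has two genuine gaps. First, the claim that the $K$-action on $G/L\cong K/L\times E$ ``acts only on the $K/L$-factor'' is false in general: the projection $G/L\to G/K\cong E$ is $K$-equivariant and $K$ typically acts \emph{nontrivially} on $G/K$ (think of $\mathrm{SO}(2)$ rotating the upper half plane $\mathrm{SL}(2,\mathbb{R})/\mathrm{SO}(2)$ around $i$), so no trivialization can make $\mathrm{pr}_1$ $K$-equivariant in the way you need. Likewise there is no reason for the section $\sigma$ to be $K$-equivariant. Without equivariance your surjectivity argument collapses, and even if $\Psi$ were a diffeomorphism you would only have transported a \emph{new} $K$-action to $M$, not shown that the restricted $G$-action is transitive. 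Second, a homotopy equivalence between closed manifolds of the same dimension is not a diffeomorphism by any ``standard dimension/degree argument'' --- already on $S^1$ there are degree-one self-maps that are not injective --- so the final upgrade step is a real hole.

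The fix, which is essentially Montgomery's argument (to which the paper defers) and which your last paragraph almost reaches, bypasses $\Psi$ entirely. Your two fibrations do give homotopy equivalences $M\simeq G/L\simeq K/L$; since $M$ and $K/L$ are both \emph{closed} manifolds, comparing top cohomology (with $\mathbb{Z}/2$-coefficients if orientability is in doubt) yields $\dim M=\dim K/L$, i.e.\ $\dim E=\dim F$. Now look directly at the $K$-orbit through $x$ in $M$: its stabilizer is $K\cap G_x=L$ (you already observed $L\subseteq K\cap G_x$, and $K\cap G_x$ is a compact subgroup of $G_x$ containing the maximal compact $L$, hence equals $L$), so the orbit is an embedded copy of $K/L$ --- compact and of full dimension $\dim M$. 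A compact full-dimensional submanifold of a connected manifold is open and closed, hence all of $M$. This is transitivity of the restricted $K$-action, and no auxiliary map $\Psi$ is needed.
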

An immediate consequence of the last theorem and Proposition \ref{PropEulerNonNeg} is

\begin{theorem}\label{EulerCompactSimply}
Let $ M=G/H $ be a simply connected compact homogeneous space. Then $ \chi(M)\geq 0 $. One has $ \chi(M)>0 $ if and
only if $ H $ has full rank.
\end{theorem}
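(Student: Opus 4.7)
The proof will combine Theorem \ref{ThmCompactSubActing} with Proposition \ref{PropEulerNonNeg}, the bridge being that simple connectedness forces the stabilizer to be connected, which in turn propagates through the Cartan decomposition of Theorem \ref{LemmaCartan}.

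First I would invoke Theorem \ref{ThmCompactSubActing} to replace the presentation $M=G/H$ with a new presentation $M=K/L$ in which $K\subseteq G$ is a connected compact Lie subgroup that still acts transitively on $M$ and $L=K\cap H$. This moves the problem into the setting of Proposition \ref{PropEulerNonNeg}, provided we can check that $L$ is connected. Here is where simple connectedness enters: the long exact homotopy sequence of the fibration $H\hookrightarrow G\to G/H$ yields $\pi_{0}(H)\hookrightarrow\pi_{1}(G/H)=0$, so $H$ is connected. Writing $G=K\times E$ and $H=L'\times F$ via Theorem \ref{LemmaCartan} (after conjugating so that $L'\subseteq K$), the connectedness of $H$ forces both factors to be connected, and the identification $K\cap H=L'$ shows $L$ is connected.

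Next I would apply Proposition \ref{PropEulerNonNeg} directly to the presentation $M=K/L$ with $K$ a connected compact Lie group and $L$ a connected subgroup, yielding $\chi(M)\geq 0$ and the characterization $\chi(M)>0$ iff $L$ has maximal rank in $K$.

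The last step is to translate the rank condition from the pair $(K,L)$ back to the pair $(G,H)$. Since $G=K\times E$ with $E$ a Euclidean vector space, every torus in $G$ lies in $K$, so $\operatorname{rank}(G)=\operatorname{rank}(K)$; analogously $\operatorname{rank}(H)=\operatorname{rank}(L)$. Hence $L$ has maximal rank in $K$ if and only if $H$ has maximal rank in $G$, giving the full equivalence $\chi(M)>0 \Leftrightarrow H$ has full rank. The main obstacle here is essentially bookkeeping: one must be careful to use simple connectedness precisely to deduce connectedness of $H$ (and hence of $L$), since Proposition \ref{PropEulerNonNeg} was stated under that hypothesis, and one must verify that the Cartan decompositions of $G$ and $H$ can be arranged compatibly, which is exactly the content of the discussion preceding Theorem \ref{ThmCompactSubActing}.
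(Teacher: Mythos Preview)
Your proposal is correct and follows essentially the same route as the paper: reduce to the connected compact case via Theorem \ref{ThmCompactSubActing} and then invoke Proposition \ref{PropEulerNonNeg}. The paper merely states this as ``an immediate consequence'' of those two results, whereas you have spelled out the two bookkeeping points the paper leaves implicit---namely, that simple connectedness of $M$ forces $H$ (hence $L$) to be connected so that Proposition \ref{PropEulerNonNeg} applies, and that the rank condition transfers between $(K,L)$ and $(G,H)$ because the Euclidean factors in the Cartan decomposition carry no tori.
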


There is an even more general result due to A. L. Onishchik and V. V. Gorbatsevich which shows that the assumption
of $ M $ being simply connected is not necessary (c.f. \cite[Part~II,~Chapter~5,~Theorem~1.2]{onishchik1993lie}).

\begin{theorem}\label{EulerCompact}
Let $ M=G/H $ be a connected compact homogeneous space. Then $ \chi(M)\geq 0 $. One has $ \chi(M)>0 $ if and
only if $ H $ has full rank.
\end{theorem}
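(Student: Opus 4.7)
The plan is to reduce the general case to Proposition \ref{PropEulerNonNeg} by replacing $ G $ by a connected compact Lie subgroup $ K\subseteq G $ that still acts transitively on $ M $, and then passing from the (possibly disconnected) stabilizer of the $ K $-action to its identity component. The simply connected case Theorem \ref{EulerCompactSimply} already achieved this via Cartan's decomposition together with the triviality of vector bundles over Euclidean bases; what is missing in the general case is an analogue of Theorem \ref{ThmCompactSubActing} that tolerates a non-trivial fundamental group.

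First I would invoke the structure theorem of Mostow and Montgomery--Samelson, which asserts that any connected compact homogeneous space of a connected Lie group $ G $ admits a transitive action of a connected compact Lie subgroup $ K\subseteq G $. This produces a new presentation $ M=K/L $ where $ L=K_{x} $ is the stabilizer of a fixed $ x\in M $ under the $ K $-action. Because $ K $ is compact, $ L\subseteq K $ is closed and compact as well, so $ L $ has only finitely many connected components and its identity component $ L_{0}\subseteq L $ is a closed connected Lie subgroup of $ K $.

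Next I would consider the natural projection $ K/L_{0}\rightarrow K/L $, which is a covering map of finite degree $ [L:L_{0}] $. Since the Euler characteristic is multiplicative under finite coverings, this yields
\begin{align}
\chi(K/L_{0})=[L:L_{0}]\cdot\chi(M).
\end{align}
As $ K/L_{0} $ is a connected compact homogeneous space of the connected compact Lie group $ K $ with connected stabilizer $ L_{0} $, Proposition \ref{PropEulerNonNeg} applies directly and gives $ \chi(K/L_{0})\geq 0 $, hence $ \chi(M)\geq 0 $. For the second claim one observes that every torus is connected, so every maximal torus of $ L $ lies in $ L_{0} $ and is simultaneously a maximal torus of $ L_{0} $; consequently $ L $ has full rank in $ K $ if and only if $ L_{0} $ does. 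Combining this equivalence with the full rank characterization in Proposition \ref{PropEulerNonNeg} applied to $ K/L_{0} $, and with the chain of equivalences $ \chi(M)>0\Leftrightarrow\chi(K/L_{0})>0 $, yields the desired criterion, modulo a routine translation between the full rank condition on $ L\subseteq K $ and on $ H\subseteq G $.

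The main obstacle is the very first step: the existence of the connected compact Lie subgroup $ K\subseteq G $ acting transitively on $ M $ in the general setting. The simply connected proof of Theorem \ref{ThmCompactSubActing} exploited the fact that $ G/K $ is contractible as a Euclidean vector space to trivialize certain vector bundles, which is no longer available once $ \pi_{1}(M) $ is non-trivial; instead one genuinely needs the deeper structural results on compact homogeneous spaces, and invoking them is the only non-elementary part of the argument. Once this compact reduction is granted, the rest of the proof is a short bookkeeping exercise based on the multiplicativity of $ \chi $ under finite coverings.
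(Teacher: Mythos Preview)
The paper does not actually prove this theorem; it only cites it as a result of Onishchik--Gorbatsevich (proved via Tits bundles) and mentions Mostow's alternative proof by iterated fibrations. So the comparison is between your attempted reduction and what the literature really does.

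Your argument has a genuine gap at the very first step. The ``structure theorem of Mostow and Montgomery--Samelson'' you invoke does not exist in the stated generality: it is \emph{not} true that every connected compact homogeneous space $G/H$ admits a transitive action of a compact subgroup of $G$. Montgomery's result (Theorem~\ref{ThmCompactSubActing} in the paper) genuinely requires $M$ to be simply connected, and this hypothesis cannot simply be dropped. A concrete counterexample is the Heisenberg nilmanifold $M=N/\Gamma$, where $N$ is the $3$-dimensional real Heisenberg group and $\Gamma$ its integer lattice. This is a connected compact homogeneous space with $\chi(M)=0$, but no compact Lie group can act transitively on it: for any compact $K$ and closed $L\subseteq K$, the long exact sequence of the fibration $L\to K\to K/L$ forces $\pi_{1}(K/L)$ to be virtually abelian (an extension of the finite group $\pi_{0}(L)$ by a quotient of the abelian group $\pi_{1}(K)$), whereas $\pi_{1}(M)\cong\Gamma$ is nilpotent of class $2$ and hence not virtually abelian.

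This is precisely why the actual proofs in the literature are more elaborate. Both the Tits-bundle argument and Mostow's iterated-fibration argument proceed by decomposing $G/H$ into a tower of fibrations whose pieces can be controlled (some are homogeneous under compact groups, others are aspherical with vanishing Euler characteristic), and then using multiplicativity of $\chi$ along the tower. Your covering-space trick from $K/L_{0}$ to $K/L$ and the rank bookkeeping are fine once one is in the compact world, but the reduction to that world is exactly the hard part, and it is not available as a single black-box theorem. You correctly sensed that the first step is the obstacle; the resolution is not a stronger Montgomery-type statement but a genuinely different structural analysis.
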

The proof is proposing Tits bundles.
Remark that there is a different proof of this theorem from 2005 of G. D. Mostow that uses iterated fibrations
(c.f. \cite{mostow2005}).

\section{The Classification of Transitive Actions on $ \mathbb{S}^{2} $}\label{sectiononish}

The goal of this section is to describe all ways one can structure the $ 2 $-sphere as a homogeneous space. As
we discovered in Section \ref{sectionhomspace} it is equivalent to determine all transitive actions on $ \mathbb{S}^{2}
$. This classification has to be done with a notion of equivalent actions.

\begin{definition}
Assume there are two Lie groups $ G $ and $ G' $ acting on a manifold $ M $ via $ \Phi $ and $ \Phi' $, respectively.
These two actions are said to be \textbf{equivalent} if there is a Lie group morphism $ \phi\colon G\rightarrow G' $
such that
\begin{align}
\Phi(g,x)=\Phi'(\phi(g),x)
\end{align}
for all $ g\in G $ and $ x\in M $.
They are called \textbf{locally equivalent} if there is an open subset $ U\subseteq M $ such that $ \Phi|_{G\times U} $
and $ \Phi'|_{G\times U} $ are equivalent.
\end{definition}

It is clear by definition that equivalent actions are locally equivalent.
It is also quite intuitive that a classification makes only sense if one forces the actions to be effective.
Otherwise one could enlarge the Lie group by adding elements that simply induce the identity diffeomorphism on
the manifold. It is reasonable to classify transitive Lie groups such that every element induces a different
diffeomorphism, i.e. actions that are effective in addition. In Chapter \ref{ChapObstruction} we realize that this
assumption is no restriction at all for our purpose.
In the 1960's many classifications of transitive effective actions appeared. Pioneer work was done by A.L. Onishchik.
He considered compact homogeneous spaces and gave classifications of compact Lie groups that act
as well as for non-compact Lie groups (see e.g. \cite{Onishchikcompact,OnishchikII,OnishchikIII,onishchik1993lie}).
In \cite[Theorem~2.6]{onishchik1993lie} he classified in particular all locally effective
transitive actions of connected compact Lie groups on simply connected homogeneous spaces of rank $ 1 $. We follow a
source Onishchik also uses, namely the paper \cite{MontgomeryTransSpheres} of D. Montgomery and H. Samelson. There they
classify all connected compact Lie groups that act transitively and effectively on $ \mathbb{S}^{n} $,
$ n\in\mathbb{N} $.
We want to mention that the classification splits into even and odd $ n $. But since we are just interested in $ n=2 $
we do not see these effects. We first prove that if $ G $ acts in the way described above it already has to be simple.
Then there is just a list of Lie groups left to check. For this we need a little

\begin{lemma}\label{LemmaSO3}
Let $ G_{1},G_{2} $ be connected compact Lie groups and $ N $ a finite normal subgroup of $ \overline{G}=
G_{1}\times G_{2} $. If $ G=(G_{1}\times G_{2})/N $ acts transitively on $ \mathbb{S}^{2} $ then for one $ i=1,2 $ and
$ x\in\mathbb{S}^{2} $ the Lie group $ G_{i}/(G_{i}\cap G_{x}) $ acts transitively on $ \mathbb{S}^{2} $.
\end{lemma}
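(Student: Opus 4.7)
The transitive action of $ G $ on $ \mathbb{S}^{2} $ lifts, via the quotient $ \overline{G}=G_{1}\times G_{2}\to G=\overline{G}/N $, to a transitive action $ \Phi $ of $ \overline{G} $. Its restrictions $ \Phi^{i} $ to the factors $ G_{i}\subseteq\overline{G} $ are smooth actions on $ \mathbb{S}^{2} $ that commute, since $ G_{1} $ and $ G_{2} $ commute in the direct product. Setting $ O_{i}=\Phi^{i}(G_{i})\cdot x $, each orbit is a compact connected embedded submanifold of $ \mathbb{S}^{2} $ by Theorem~\ref{thmsubmfd}~v., so $ G_{i}/(G_{i}\cap G_{x})\cong O_{i} $ equals $ \mathbb{S}^{2} $ precisely when $ \dim O_{i}=2 $. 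I assume for contradiction that $ \dim O_{1},\dim O_{2}\leq 1 $ and split into cases.

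Suppose first that $ \dim O_{1}=0 $, so that $ x $ is a $ \Phi^{1} $-fixed point. Commutativity of the two actions gives $ \Phi^{1}_{g_{1}}(\Phi^{2}_{g_{2}}(x))=\Phi^{2}_{g_{2}}(\Phi^{1}_{g_{1}}(x))=\Phi^{2}_{g_{2}}(x) $ for all $ g_{i}\in G_{i} $, whence $ \overline{G}\cdot x=O_{2} $. Transitivity then forces $ O_{2}=\mathbb{S}^{2} $, against the assumption $ \dim O_{2}\leq 1 $. The case $ \dim O_{2}=0 $ is symmetric, so we may assume both $ O_{1} $ and $ O_{2} $ are circles.

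In the remaining two-circle case I study $ F_{1}=\operatorname{Fix}(\Phi^{1})\subseteq\mathbb{S}^{2} $, which is a closed smooth submanifold as the fixed-point set of a smooth compact Lie group action. At any $ p\in F_{1} $ the derivative yields a representation $ G_{1}\to\operatorname{GL}(T_{p}\mathbb{S}^{2}) $ which, after averaging a Riemannian metric, takes values in $ \operatorname{SO}(T_{p}\mathbb{S}^{2})\cong\operatorname{SO}(2) $. If this representation were trivial, local linearization would make $ \Phi^{1} $ trivial on a whole neighbourhood of $ p $, forcing $ F_{1}=\mathbb{S}^{2} $ by connectedness and hence $ \dim O_{1}=0 $, a contradiction; so the representation is non-trivial, its only fixed vector is $ 0 $, and thus $ T_{p}F_{1}=0 $, showing $ F_{1} $ is discrete and hence finite. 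If $ F_{1}\neq\emptyset $, commutativity makes $ G_{2} $ permute the finite set $ F_{1} $, and connectedness of $ G_{2} $ forces it to fix every point of $ F_{1} $; any such point is then $ \overline{G} $-fixed, contradicting transitivity of $ \overline{G} $ on $ \mathbb{S}^{2} $. If instead $ F_{1}=\emptyset $, then every $ G_{1} $-orbit is one-dimensional and the orbits foliate $ \mathbb{S}^{2} $ by circles; since every real line bundle over the simply connected space $ \mathbb{S}^{2} $ is trivial, the tangent bundle of this foliation yields a nowhere-vanishing vector field on $ \mathbb{S}^{2} $, contradicting $ \chi(\mathbb{S}^{2})=2 $ via Poincar\'e--Hopf.

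The main obstacle is this two-circle case. Neither purely algebraic nor purely dimensional arguments suffice: one must marry the local information furnished by the linearization at a fixed point of $ \Phi^{1} $ with the global topology of $ \mathbb{S}^{2} $ (its Euler characteristic and the triviality of its real line bundles), and glue these pieces together by means of the commutativity of $ \Phi^{1} $ and $ \Phi^{2} $.
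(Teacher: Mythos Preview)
Your argument is correct and complete in its essential ideas, but it takes a genuinely different route from the paper's.

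The paper, following Montgomery--Samelson, argues structurally: since $\chi(\mathbb{S}^{2})>0$, a cited result of Samelson gives $\operatorname{rank}\overline{G}_{x}=\operatorname{rank}\overline{G}$, so a maximal torus of $\overline{G}_{x}$ is a maximal torus $T_{1}\times T_{2}$ of $\overline{G}=G_{1}\times G_{2}$. From this one reads off that $\overline{G}_{x}=H_{1}\times H_{2}$ with $H_{i}=G_{i}\cap\overline{G}_{x}$, whence $\mathbb{S}^{2}\cong G_{1}/H_{1}\times G_{2}/H_{2}$; since $\mathbb{S}^{2}$ is not a non-trivial product, one factor is a point and the other is all of $\mathbb{S}^{2}$.

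Your approach replaces the equal-rank input by direct differential topology specific to dimension two: orbit-dimension counting, Bochner linearization at fixed points, and the Poincar\'e--Hopf obstruction via the triviality of real line bundles on $\mathbb{S}^{2}$. This is more self-contained (no external Lie-theoretic citation) and arguably more elementary, at the cost of being tailored to $\mathbb{S}^{2}$; the paper's maximal-torus argument, by contrast, generalises immediately to other compact simply connected spaces with positive Euler characteristic. Two small points worth making explicit in your write-up: in the $F_{1}=\emptyset$ case you should note that no $G_{1}$-orbit can be two-dimensional (such an orbit would be open and closed, forcing $G_{1}$ to act transitively and contradicting $\dim O_{1}=1$), and the ``by connectedness'' step after a trivial linearization is really the open--closed argument that an interior point of $F_{1}$ forces the whole component of $\mathbb{S}^{2}$ to lie in $F_{1}$.
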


\begin{proof}
The proof and the notation is taken from \cite[Theorem~I']{MontgomeryTransSpheres}.
If $ G $ acts transitively on $ \mathbb{S}^{2} $ then of course also $ \overline{G} $ acts transitively on it. Take the
stabilizer $ \overline{G}_{x} $ of this action of any point $ x\in\mathbb{S}^{2} $. It is connected because
$ \mathbb{S}^{2} $ is simply connected and compact. Moreover, any element $ h\in\overline{G}_{x} $
can be written as $ h=h_{1}\times h_{2} $ with some $ h_{1}\in G_{1} $ and $ h_{2}\in G_{2} $ and is contained in a
maximal toral subgroup $ T\subseteq\overline{G}_{x} $. It is proven in \cite{SamelsonBeitraege1941} that
$ \overline{G} $ and
$ \overline{G}_{x} $ have the same rank, which implies that $ T\subseteq\overline{G} $, thus $ T=T_{1}\times T_{2} $,
where $ T_{1} $ and $ T_{2} $ are maximal toral subgroups of $ G_{1} $ and $ G_{2} $, respectively.
This implies that $ h_{1},h_{2}\in T
\subseteq\overline{G}_{x} $, i.e. $ \overline{G}_{x}=H_{1}\times H_{2} $ with $ H_{i}=G_{i}\cap\overline{G}_{x} $.
Thus, we have a decomposition
\begin{align}
\overline{G}/\overline{G}_{x}\cong G_{1}/H_{1}\times G_{2}/H_{2}
\end{align}
and $ G_{1}/H_{1}\times G_{2}/H_{2}\cong\overline{G}/\overline{G}_{x}\cong\mathbb{S}^{2} $. But in this case it is
known that $ G_{1}/H_{1} $ or $ G_{2}/H_{2} $ has to be a single point. Let $ G_{2}/H_{2}\cong\left\lbrace
\mathrm{pt}\right\rbrace $, i.e. $ G_{2}=H_{2} $ which means that the elements of $ G_{2} $ stabilize the point $ x $
in the action of $ \overline{G} $ on $ \mathbb{S}^{2} $. Then $ G_{1}/H_{1} $ has to be transitive on
$ \mathbb{S}^{2} $.
\end{proof}
Now, following \cite[Theorem~I]{MontgomeryTransSpheres}, we have

\begin{theorem}\label{ThmSphereSimpleCompact}
Every connected compact Lie group that acts transitively and effectively on $ \mathbb{S}^{2} $ is simple.
\end{theorem}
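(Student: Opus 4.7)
The plan is to argue by contradiction: assume $G$ is a connected compact Lie group acting transitively and effectively on $\mathbb{S}^{2}$ yet failing to be simple, and derive a contradiction by means of Lemma \ref{LemmaSO3}. First I would dispose of the degenerate case in which $\mathfrak{g}$ is Abelian, i.e.\ $G$ is a torus: then every closed subgroup is normal, so $\mathbb{S}^{2}\cong G/G_{x}$ would inherit the structure of a compact Abelian Lie group, which is absurd (for example because $\mathbb{S}^{2}$ is not parallelizable while every Lie group has trivial tangent bundle). Hence we may assume $\mathfrak{g}$ is nonabelian.

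Since the Lie algebra of a compact Lie group is reductive, I would use the decomposition $\mathfrak{g}=\mathfrak{z}(\mathfrak{g})\oplus[\mathfrak{g},\mathfrak{g}]$ with $[\mathfrak{g},\mathfrak{g}]$ semisimple and hence a direct sum of simple ideals. If $G$ fails to be simple, a nontrivial splitting $\mathfrak{g}=\mathfrak{g}_{1}\oplus\mathfrak{g}_{2}$ into two positive-dimensional ideals exists (either separating the center from the semisimple part, or splitting the semisimple part further). Let $G_{1},G_{2}\subseteq G$ be the connected Lie subgroups integrating $\mathfrak{g}_{1},\mathfrak{g}_{2}$; they are closed and compact (since $G$ is compact) and commute elementwise because $[\mathfrak{g}_{1},\mathfrak{g}_{2}]=0$. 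The multiplication map $G_{1}\times G_{2}\rightarrow G$ is then a surjective Lie group homomorphism with finite central kernel $N$, so $G\cong(G_{1}\times G_{2})/N$.

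With this presentation in place, Lemma \ref{LemmaSO3} applies directly: the transitive action of $G$ on $\mathbb{S}^{2}$ pulls back to an action of $\overline{G}=G_{1}\times G_{2}$, and after relabeling I obtain a point $x\in\mathbb{S}^{2}$ with $G_{2}\subseteq\overline{G}_{x}$ while $G_{1}/(G_{1}\cap\overline{G}_{x})$ still acts transitively on $\mathbb{S}^{2}$. Because $G_{1}$ and $G_{2}$ commute, for all $g_{1}\in G_{1},\,g_{2}\in G_{2}$ one has $g_{2}\cdot(g_{1}\cdot x)=g_{1}\cdot(g_{2}\cdot x)=g_{1}\cdot x$, and transitivity of $G_{1}$ then forces $G_{2}$ to fix every point of $\mathbb{S}^{2}$. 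Effectiveness of the $G$-action makes the image of $G_{2}$ in $G$ trivial, i.e.\ $G_{2}\subseteq N$; but $N$ is finite whereas $G_{2}$ is connected and positive-dimensional, a contradiction.

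The main obstacle I anticipate is not the contradiction itself but the structural step of producing the presentation $G\cong(G_{1}\times G_{2})/N$ with finite $N$ from a Lie algebra splitting. This requires invoking standard facts about ideals of compact Lie algebras and the closedness of the corresponding connected Lie subgroups inside a compact Lie group, i.e.\ the structure theorem for connected compact Lie groups. Once this is available, the commutativity argument, Lemma \ref{LemmaSO3}, and effectiveness together kill every nonabelian non-simple case, and the Abelian case has already been discarded.
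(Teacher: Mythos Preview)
Your proposal is correct and follows essentially the same route as the paper. Both proofs reduce to the structure theorem for connected compact Lie groups to obtain a presentation $G\cong(G_{1}\times G_{2})/N$ with $N$ finite, then invoke Lemma~\ref{LemmaSO3} to place one factor $G_{2}$ inside a stabilizer $\overline{G}_{x}$, and finally contradict effectiveness. The only cosmetic difference is that the paper phrases the last step as ``$\overline{G}_{x}$ cannot contain an infinite normal subgroup of $\overline{G}$'' (since any such subgroup lies in the kernel $N$), whereas you spell this out via the commutativity of $G_{1}$ and $G_{2}$ together with transitivity of $G_{1}$; these are the same observation. Your separate treatment of the abelian case is not strictly needed---the splitting argument already covers it---but it does no harm.
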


\begin{proof}
Assume that the connected compact Lie group $ G $ acts transitively and effectively on $ \mathbb{S}^{2} $ but is not
simple. Since $ G $ is a connected and compact Lie group there are two connected compact
subgroups $ G_{1} $ and $ G_{2} $ of $ G $ and a finite normal subgroup $ N $ of $ \overline{G}=G_{1}\times G_{2} $
such that $ G=(G_{1}\times G_{2})/N $ according to \cite[Theorem~6.19]{hofmann2013structure}. Since $ G $ acts
effectively we know that $ \overline{G} $ has to act
almost effectively, i.e. the action of $ \overline{G} $ on $ \mathbb{S}^{2} $ only has a finite set of fixed points,
namely the elements of $ N $. This implies that $ \overline{G}_{x} $ can not contain any infinite normal subgroup
of $ \overline{G} $ for any $ x\in\mathbb{S}^{2} $. But as in the proof of Lemma \ref{LemmaSO3} we conclude that
$ G_{2}\subseteq\overline{G}_{x} $ stabilizes the action of $ \overline{G} $. Now $ G_{2} $ is a non-finite normal
subgroup which gives the contradiction.
\end{proof}

\begin{theorem}\label{ThmCompActionSphere}
Any effective transitive action of a connected compact Lie group on $ \mathbb{S}^{2} $ is equivalent
to the linear action of $ \operatorname{SO}(3) $ on $ \mathbb{S}^{2} $.
\end{theorem}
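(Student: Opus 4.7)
The plan is to combine Theorem \ref{ThmSphereSimpleCompact} with two dimension-counting constraints and then appeal to the classification of connected compact simple Lie groups to force $G\cong\operatorname{SO}(3)$. Concretely, let $G$ be a connected compact Lie group acting transitively and effectively on $\mathbb{S}^{2}$, and fix $x\in\mathbb{S}^{2}$. By Theorem \ref{ThmSphereSimpleCompact}, $G$ is simple. By Theorem \ref{homspacetransact}, $\mathbb{S}^{2}\cong G/G_{x}$, so $\dim G_{x}=\dim G-2$. Since $\chi(\mathbb{S}^{2})=2>0$, Theorem \ref{EulerCompact} applied to the connected component of $G_{x}$ (whose connectedness follows because $\mathbb{S}^{2}$ is simply connected, making $G_{x}$ connected whenever $G$ is) forces $G_{x}$ to be of maximal rank in $G$.

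Next I would invoke the classification of connected compact simple Lie groups, namely the four classical series $\operatorname{SU}(n+1),\operatorname{Spin}(2n+1),\operatorname{Sp}(n),\operatorname{Spin}(2n)$ together with the exceptional types $G_{2},F_{4},E_{6},E_{7},E_{8}$, and test each for the existence of a closed subgroup of full rank and codimension exactly $2$. Since a proper full-rank closed subgroup of a compact simple Lie group is determined by removing simple roots from a chosen root system and excising at least the corresponding root strings, its codimension is bounded below by twice the number of removed simple roots multiplied by root multiplicities, and one checks that this minimum already exceeds $2$ whenever $\operatorname{rank}(G)\geq 2$. Hence only the rank-$1$ case survives; its two compact connected representatives are $\operatorname{SU}(2)$ and $\operatorname{SO}(3)$. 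In $\operatorname{SU}(2)$ the unique full-rank subgroup of dimension $1$ is a maximal torus $U(1)$, and the resulting transitive action on $\operatorname{SU}(2)/U(1)\cong\mathbb{S}^{2}$ has kernel $\{\pm\mathbb{1}\}$, so it is \emph{not} effective; passing to the quotient $\operatorname{SU}(2)/\{\pm\mathbb{1}\}\cong\operatorname{SO}(3)$ recovers exactly the standard action of Example \ref{exampleSO3}. In the remaining case $G=\operatorname{SO}(3)$ itself with $G_{x}\cong\operatorname{SO}(2)$, Theorem \ref{homspacetransact} together with Example \ref{exampleSO3} already identifies the action with the linear one.

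To upgrade this from an abstract isomorphism $G\cong\operatorname{SO}(3)$ to an equivalence of actions in the sense defined at the beginning of the section, I would take the isomorphism $\varphi\colon G\to\operatorname{SO}(3)$ produced above and verify $\Phi(g,\cdot)=\Phi'(\varphi(g),\cdot)$ pointwise on $\mathbb{S}^{2}$ using the $G$-equivariant diffeomorphism $\beta_{x}\colon G/G_{x}\to\mathbb{S}^{2}$ from Theorem \ref{homspacetransact}: both actions factor through $\beta_{x}$ in a canonical way, so equivariance is automatic once $\varphi$ is chosen to intertwine the stabilizers.

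I expect the hard part to be the case-by-case exclusion in the second paragraph, since it requires quoting the dimension and rank tables of all compact simple Lie groups and the structural fact that closed connected full-rank subgroups correspond to subdiagrams of the extended Dynkin diagram (Borel--de Siebenthal). Once that list is in hand, the dimensional bookkeeping $\dim G-\dim H=2$ eliminates every candidate except $(\operatorname{SU}(2),U(1))$ and $(\operatorname{SO}(3),\operatorname{SO}(2))$ essentially at a glance, after which the effectiveness hypothesis cleanly collapses the first option onto the second.
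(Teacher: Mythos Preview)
Your argument is correct, but it follows a genuinely different route from the paper's sketch. The paper, following Montgomery--Samelson, invokes Samelson's computation of the homology rings of compact simple Lie groups (identifying them with the cohomology of products of odd spheres) and extracts the constraint directly from that Poincar\'e-polynomial data. You instead use the Euler-characteristic criterion of Theorem~\ref{EulerCompact} to force the stabilizer $G_{x}$ to have full rank, and then finish with a Borel--de~Siebenthal / root-system argument to exclude rank~$\geq 2$.

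Your approach has the pleasant feature that the key structural input, Theorem~\ref{EulerCompact}, is already proved in the thesis, whereas the paper outsources the decisive step to \cite{SamelsonBeitraege1941} and \cite{MontgomeryTransSpheres}. Two points in your write-up deserve tightening. First, the sentence ``codimension is bounded below by twice the number of removed simple roots multiplied by root multiplicities'' is imprecise; the clean statement is that a full-rank subalgebra of codimension~$2$ would correspond to a closed root subsystem missing exactly one pair $\pm\alpha$, and in any irreducible root system of rank~$\geq 2$ every root $\alpha$ can be written as $\beta+\gamma$ with $\beta,\gamma\in\Delta\setminus\{\pm\alpha\}$ (take an adjacent simple root $\alpha_{j}$ and write $\alpha_{i}=(\alpha_{i}+\alpha_{j})+(-\alpha_{j})$, or decompose a non-simple root along a simple one), so no such subsystem exists. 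Second, your last paragraph on upgrading the abstract isomorphism $G\cong\operatorname{SO}(3)$ to an equivalence of actions is somewhat informal; note however that the paper's own discussion is equally sketchy on this point and simply refers back to Example~\ref{exampleSO3}.
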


The proof of Theorem \ref{ThmCompActionSphere} can be summarized as follows:
according to Theorem \ref{ThmSphereSimpleCompact} any connected compact Lie group $ G $ that acts transitively and
effectively on $ \mathbb{S}^{2} $ has to be simple. By the Cartan-Killing classification there is a finite list of
all connected compact simple Lie groups that are not locally isomorphic (c.f. \cite{CartanKillingBosshardt}).
In \cite{SamelsonBeitraege1941} the
homology rings of these simple Lie groups are connected to the homology rings of Cartesian products of spheres
of several dimensions. It comes out that for dimensional reasons the only simple Lie group satisfying these conditions
is $ \text{SO}(3) $ with stabilizer $ \text{SO}(2) $ (c.f. \cite[Theorem~II'~and~Lemma~1]{MontgomeryTransSpheres}).
Remark that the action of $ \text{SO}(3) $ on $ \mathbb{S}^{2} $ is the one we discussed in Example \ref{exampleSO3}.
In a very similar way one proves the following result for connected non-compact Lie groups:

\begin{lemma}\label{SphereSemiSimple}
Any connected Lie group that acts transitively and locally effectively on $ \mathbb{S}^{2} $ is semisimple.
\end{lemma}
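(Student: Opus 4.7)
The plan is to argue by contradiction. Suppose $G$ is connected, acts transitively and locally effectively on $\mathbb{S}^{2}$, but is not semisimple. Then the solvable radical of $G$ is nontrivial, so the Lie algebra $\mathfrak{g}$ contains a nontrivial abelian ideal, and exponentiating its identity component yields a nontrivial connected normal abelian Lie subgroup $A\subseteq G$. I will show that the existence of such an $A$ is incompatible with the hypotheses.

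First I observe that the $A$-orbits in $\mathbb{S}^{2}$ are all of the same positive dimension $k\in\{1,2\}$. Local effectiveness forces $\ker\tilde{\Phi}$ to be discrete, while $A$ is connected and positive-dimensional, so $A\not\subseteq\ker\tilde{\Phi}$ and $A$ acts nontrivially. Since $A$ is normal in $G$, for each $g\in G$ and $x\in\mathbb{S}^{2}$ one has
\begin{align}
\Phi_{g}(A\cdot x)=(gAg^{-1})\cdot\Phi_{g}(x)=A\cdot\Phi_{g}(x),
\end{align}
so $\Phi_{g}$ maps $A$-orbits to $A$-orbits. Transitivity of $G$ then implies that any two $A$-orbits are carried onto each other by some $\Phi_{g}$, hence are diffeomorphic. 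In particular, the Lie algebra action of $\mathfrak{a}=\operatorname{Lie}(A)$ has constant rank $k$ on $\mathbb{S}^{2}$, and since $A$ acts nontrivially $k\geq 1$.

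If $k=2$, then $A$ acts transitively on $\mathbb{S}^{2}$, and Theorem \ref{homspacetransact} gives a diffeomorphism $\mathbb{S}^{2}\cong A/A_{x}$. But $A/A_{x}$ is a quotient of a connected abelian Lie group by a closed subgroup, hence a connected abelian Lie group of dimension two, i.e.\ one of $\mathbb{R}^{2}$, $\mathbb{R}\times\mathbb{T}$ or $\mathbb{T}^{2}$; compactness of $\mathbb{S}^{2}$ singles out $\mathbb{T}^{2}$, and $\chi(\mathbb{T}^{2})=0\neq 2=\chi(\mathbb{S}^{2})$ is a contradiction. If $k=1$, the constant-rank images $\phi|_{x}(\mathfrak{a})\subseteq T_{x}\mathbb{S}^{2}$ assemble into a smooth rank-$1$ subbundle $D\subseteq T\mathbb{S}^{2}$: near any $x$ I pick $\xi\in\mathfrak{a}$ with $\xi_{M}(x)\neq 0$, and the fundamental vector field $\xi_{M}$ is a local nowhere-vanishing section by continuity and constant rank. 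Since $H^{1}(\mathbb{S}^{2},\mathbb{Z}/2\mathbb{Z})=0$, the line bundle $D$ is trivial and admits a global nowhere-vanishing section, contradicting the hairy ball theorem $\chi(\mathbb{S}^{2})=2\neq 0$. Both cases being impossible, $G$ is semisimple.

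The step I expect to be most delicate is the $k=1$ case: one must confirm that the pointwise images $\phi|_{x}(\mathfrak{a})$ do form a genuinely smooth line subbundle of $T\mathbb{S}^{2}$, rather than merely a set-theoretic distribution. This is essentially a constant-rank argument in the spirit of Theorem \ref{thmsubmfd}, together with the observation that the fundamental vector fields are smooth and locally trivialize $D$; granting this, the topological obstruction argument via $H^{1}(\mathbb{S}^{2},\mathbb{Z}/2\mathbb{Z})$ and $\chi(\mathbb{S}^{2})=2$ closes the proof.
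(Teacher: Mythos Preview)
Your argument is correct and considerably more elementary than what the paper offers. The paper does not actually prove Lemma~\ref{SphereSemiSimple}; it merely cites Onishchik's general result for compact homogeneous spaces with positive Euler characteristic (\cite[Theorem~2]{OnishchikIII}), whose proof rests on the classification of $\mathfrak{t}$- and $\mathfrak{k}$-subalgebras of semisimple Lie groups developed in \cite{OnishchikII}. That is heavy structural machinery aimed at a much broader class of spaces.

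By contrast, you exploit only the specific topology of $\mathbb{S}^{2}$. The key observation---that normality of $A$ together with transitivity of $G$ forces the rank of $\phi|_{x}$ restricted to $\mathfrak{a}$ to be constant---is exactly the $\operatorname{Ad}_{g}$-equivariance computation from the paper's Cartan calculus, and it is what makes the distribution $D$ a genuine smooth subbundle in the $k=1$ case. Your treatment of the two cases is clean: $k=2$ is killed because $\mathbb{S}^{2}$ is not a connected abelian Lie group (your Euler-characteristic comparison suffices, though simple connectedness also rules out $\mathbb{T}^{2}$ directly), and $k=1$ is killed by the hairy ball theorem after observing that line bundles over $\mathbb{S}^{2}$ are trivial since $H^{1}(\mathbb{S}^{2},\mathbb{Z}/2)=0$. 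One small wording issue: the phrase ``exponentiating its identity component'' is slightly off, since the abelian ideal $\mathfrak{a}\subseteq\mathfrak{g}$ is a vector space; you mean taking the connected Lie subgroup $A\subseteq G$ with $\operatorname{Lie}(A)=\mathfrak{a}$, which is normal because $G$ is connected and $\mathfrak{a}$ is an ideal. With that adjusted, the proof stands, and it has the advantage of being entirely self-contained within the tools the thesis already develops.
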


Again we just want to sketch the proof:
Onishchik proves this theorem for all compact homogeneous spaces with positive Euler characteristic and
also gives a decomposition of the Lie group into simple normal divisors (c.f. \cite[Theorem~2]{OnishchikIII}). The
proof
bases on the classification of the $ \mathfrak{t} $-subalgebras and $ \mathfrak{k} $-subalgebras of all semisimple
Lie groups which was also done by Onishchik in \cite{OnishchikII}. In this paper he constructs standard
$ \mathfrak{t} $- and $ \mathfrak{k} $-subgroups and shows that any $ \mathfrak{t} $- and $ \mathfrak{k} $-subgroup
of a semisimple Lie group is conjugate to a standard one on Lie algebra level (c.f.
\cite[Theorem~3~and~Theorem~4]{OnishchikII}).

Since semisimple Lie groups are well known and classified one can conclude the next theorem (c.f.
\cite[Theorem~3]{OnishchikIII}).

\begin{theorem}\label{ThmAllSphereActions}
Any connected Lie group that acts transitively and effectively on $ \mathbb{S}^{2} $ is equivalent to
$ \operatorname{SO}(3) $, $ \operatorname{SL}(3,\mathbb{R}) $ or $ \mathcal{L}_{3}^{\uparrow,+} $.
\end{theorem}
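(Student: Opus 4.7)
The plan is to combine Lemma \ref{SphereSemiSimple} with the Cartan-Killing classification of real simple Lie algebras and then invoke the classification of their codimension-two subgroups. First, by Lemma \ref{SphereSemiSimple}, any connected Lie group $ G $ acting transitively and effectively on $ \mathbb{S}^{2} $ is semisimple. Since $ \mathbb{S}^{2} $ is simply connected and compact, Theorem \ref{ThmCompactSubActing} yields a connected compact subgroup $ K\subseteq G $ that still acts transitively on $ \mathbb{S}^{2} $; effectiveness of the $ G $-action passes to $ K $, so by Theorem \ref{ThmCompActionSphere} the group $ K $ must be isomorphic to $ \operatorname{SO}(3) $ acting linearly (the double cover $ \operatorname{SU}(2) $ is excluded because $ \pm\mathbb{1} $ would act trivially, contradicting effectiveness).

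Next, I would feed this into the Iwasawa decomposition $ G=KAN $ of the connected semisimple Lie group $ G $. Because $ K\cong\operatorname{SO}(3) $ is simple, any nontrivial factorization of $ G $ into commuting semisimple ideals would descend to a corresponding factorization of its maximal compact subgroup, so $ G $ itself must be simple. Consequently $ \mathfrak{g} $ is a real simple Lie algebra whose maximal compactly embedded subalgebra is $ \mathfrak{so}(3) $. Inspecting the Cartan-Killing list of real simple Lie algebras (or equivalently the Vogan/Satake diagrams of low real rank), the only possibilities are the compact form $ \mathfrak{so}(3) $ itself (real rank $ 0 $), the rank-one real form $ \mathfrak{so}(1,3)\cong\mathfrak{sl}(2,\mathbb{C})_{\mathbb{R}} $, and the split form $ \mathfrak{sl}(3,\mathbb{R}) $ of real rank two. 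Effectiveness fixes the global form of $ G $ to be the centerless one, giving the three groups $ \operatorname{SO}(3) $, $ \mathcal{L}_{3}^{\uparrow,+} $ and $ \operatorname{SL}(3,\mathbb{R}) $.

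Finally, I would show that each of these three groups admits, up to equivalence, only one transitive effective action on $ \mathbb{S}^{2} $. Existence is already witnessed by Example \ref{exampleSO3}, Example \ref{exampleSL3} and Example \ref{exampleSO31}. Uniqueness reduces, by Theorem \ref{homspacetransact}, to classifying the conjugacy classes of closed codimension-two subgroups $ H\subseteq G $ with $ G/H $ diffeomorphic to $ \mathbb{S}^{2} $. For each candidate $ G $ one matches $ H $ against Onishchik's list of standard $ \mathfrak{t} $- and $ \mathfrak{k} $-subalgebras of semisimple Lie algebras from \cite{OnishchikII}, obtaining in each case a single conjugacy class, namely $ \operatorname{SO}(2) $, $ \operatorname{SO}(2)AN $, respectively the isotropy computed explicitly in Example \ref{exampleSL3}.

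The hard part is the last step: narrowing $ \mathfrak{g} $ to the three candidate Lie algebras is comparatively elementary once semisimplicity and the shape of the maximal compact subgroup are fixed, but identifying the admissible stabilizers globally requires the full machinery of Onishchik's classification of maximal subgroups of real simple Lie groups of low rank. This is precisely where one must cite \cite[Theorem~3]{OnishchikIII} rather than reprove everything from scratch.
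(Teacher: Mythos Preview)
The paper does not actually prove this theorem; after Lemma~\ref{SphereSemiSimple} it simply remarks that semisimple Lie groups are classified and cites \cite[Theorem~3]{OnishchikIII}. Your sketch fills in considerably more of the argument than the paper does: passing to the maximal compact subgroup via Theorem~\ref{ThmCompactSubActing}, pinning it down as $\operatorname{SO}(3)$ via Theorem~\ref{ThmCompActionSphere}, deducing simplicity of $\mathfrak{g}$, and then reading off from the Cartan--Killing tables that the only real simple Lie algebras with maximal compact subalgebra $\mathfrak{so}(3)$ are $\mathfrak{so}(3)$, $\mathfrak{so}(1,3)$ and $\mathfrak{sl}(3,\mathbb{R})$ --- this is a clean and correct route to the three candidate groups. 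You also correctly identify that the genuinely hard step, namely showing that each of the three groups carries a \emph{unique} transitive effective action on $\mathbb{S}^{2}$ up to equivalence, is where one must invoke Onishchik's classification of $\mathfrak{t}$- and $\mathfrak{k}$-subalgebras rather than argue from scratch. So your argument terminates at exactly the same citation the paper relies on, just with substantially more of the landscape mapped out along the way.
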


Remark that the actions mentioned in Theorem \ref{ThmAllSphereActions} are given in Example \ref{exampleSO3}, Example
\ref{exampleSL3} and Example \ref{exampleSO31}, respectively. This is also consistent with Theorem
\ref{ThmCompactSubActing} since $ \text{SO}(3) $ is a compact subgroup
of $ \mathcal{L}_{3}^{\uparrow,+} $ and $ \text{SL}(3,\mathbb{R}) $ that acts transitively on $ \mathbb{S}^{2} $.

%

\chapter{Lie Bialgebras and $ r $-Matrices}\label{chapBialgrMatrix}

We discussed the geometry of homogeneous spaces and their connection to transitive Lie group actions. In a
way, it is natural to study those Lie groups in order to learn more about homogeneous spaces. In Subsection
\ref{sectiononish} we showed that for some examples only very few Lie groups are relevant. In general, if we ask the
homogeneous space to have additional structures the situation is of particular interest and it is natural to ask
whether there is an impact on the corresponding Lie group. The converse might be even more interesting: is there a
structure on a Lie group that acts on a manifold, that forces the manifold to be a
homogeneous space? It turns out that we have to treat this on infinitesimal level, i.e. on the corresponding
Lie algebra. These ideas are made concrete in Chapter \ref{rmatrixhomogeneous}, but shall suffice as a rough
motivation right now.

The structures we mentioned are $ r $-matrices. Those objects where developed in the course of
integrable lattice systems. This is because $ r $-matrix theory gives rise to Lax equations
(see e.g. \cite{RMatrixIntegrableSystem}).
In order to motivate
the notion of $ r $-matrices we embed them in the theory of Lie bialgebras, which are themselves interesting for
several reasons: first they occur as infinitesimal objects of Poisson-Lie groups that are involved again in
integrable systems via monodromy matrices (consider for example \cite{DressingSymmetriesBabelon}). We state the
connection between Lie bialgebras and Poisson-Lie groups in Section \ref{secLieBiPoLie} without proving this
statement due to V. G. Drinfel\textquoteright d. Second, Lie bialgebras are semi-classical limits of quantum groups and
conversely any Lie bialgebra can be quantized in that way. This is the famous theorem by P. Etingof
and D. Kazhdan (c.f. \cite{etingofbialg}). In the same manner $ R $-matrices appear as quantizations of
$ r $-matrices (c.f \cite[Theorem~9.2]{etingof2002lectures}) and the quantum Yang-Baxter equation as a quantization
of the classical Yang-Baxter equation (c.f. \cite[Theorem~9.3]{etingof2002lectures}). We introduce the classical
Yang-Baxter equation as a measure for $ r $-matrices.

For Lie bialgebras we need some general Lie algebra
cohomology and in particular the Chevalley-Eilenberg complex which we develop in Section \ref{sec1thomas}
together with some well-known results in this topic. Then we are able to make a motivated definition and
classification of Lie bialgebras in Section \ref{secLieBiPoLie}. Some interesting examples of Lie bialgebras
finally lead to $ r $-matrices in Section \ref{sec2thomas}. We classify them and also consider the case
when the corresponding Lie algebra is simple what will be helpful in the later sections. As a tool we define
the Etingof-Schiffmann subalgebra in Section \ref{SectionEtingof}, a finite-dimensional Lie subalgebra
in which the $ r $-matrix is non-degenerate.

\section{Lie Algebra Representations and Cohomology}\label{sec1thomas}

Working with Lie algebras is interesting because they can act on other
structures. We say that a Lie algebra $ (\mathfrak{g},\left[\cdot,\cdot\right]) $
over a field $ \mathbb{k} $ \textit{acts} on a $ \mathbb{k} $-vector space $ V $
if for every $ x\in\mathfrak{g} $ there is a linear map $ \rho_{x}\colon V\rightarrow V $ such that the assignment
$ \rho\colon\mathfrak{g}\ni x\mapsto\rho_{x}\in\mathfrak{gl}(V) $ is a Lie algebra homomorphism, i.e. for all $
x,y\in\mathfrak{g} $ one has
\begin{align}\label{eq1thomas}
\rho _{\left[x,y\right]}=\rho_{x}\rho_{y}-\rho_{y}\rho_{x},
\end{align}
where we endowed the general linear Lie algebra $ \mathfrak{gl}(V) $ with the commutator. In this case we say that
$ (\rho,V) $ is a \textit{representation} of $ \mathfrak{g} $ or if we want to emphasize that the Lie algebra
$ \mathfrak{g} $
acts on $ V $ we call $ V $ a \textit{$ \mathfrak{g} $-module}. We also use the short notation $ x.v=\rho_{x}(v) $, for
$ x\in\mathfrak{g} $, $ v\in V $. The following is inspired by \cite{WagemannLecture}.

\begin{example}
Let $ \mathfrak{g} $ be a Lie algebra over a field $ \mathbb{k} $ of characteristic
zero. We give some examples that are of great interest in our topic.
\begin{compactenum}
\item First of all, $ \mathfrak{g} $ can act trivially
on $ \mathbb{k} $, i.e. $ x.\lambda=0\in\mathbb{k} $ for all $ x\in\mathfrak{g} $, $ \lambda\in\mathbb{k} $. In this
way $ \mathbb{k} $ becomes a $ \mathfrak{g} $-module, called the \textit{trivial module}.
\item Furthermore, $ \mathfrak{g} $ can
act on itself by the \textit{adjoint action}, i.e. $ \text{ad}_{x}(y)=\left[x,y\right] $ for $
x,y\in\mathfrak{g} $. Indeed, $ \mathfrak{g} $ becomes a $ \mathfrak{g} $-module in this way because the Lie bracket is
bilinear by definition and Eq. (\ref{eq1thomas}) is just the Jacobi identity in this case.
\item In the same fashion we can define adjoint
actions of $ \mathfrak{g} $ on tensor powers of $ \mathfrak{g} $. Explicitly, for $ x\in\mathfrak{g} $ and factorizing
tensors $ y_{1}\tensor\cdots\tensor y_{n}\in\mathfrak{g}\tensor\cdots\tensor\mathfrak{g} $ we define
\begin{align}
\text{ad}_{x}^{(n)}(y_{1}\tensor\cdots\tensor y_{n})=\sum\limits_{k=1}^{n}y_{1}\tensor\cdots\tensor\text{ad}
_{x}(y_{k})\tensor\cdots\tensor y_{n}
\end{align}
and extend this linearly to $ \mathfrak{g}\tensor\cdots\tensor\mathfrak{g} $. To prove Eq. (\ref{eq1thomas}) it is
enough to consider factorizing elements $ y_{1}\tensor\cdots\tensor y_{n}\in\mathfrak{g}\tensor
\cdots\tensor\mathfrak{g} $.  By the Jacobi identity for $ \mathfrak{g} $ we get
\begin{align*}
(\text{ad}_{x}^{(n)}&\circ\text{ad}_{y}^{(n)}-\text{ad}_{y}^{(n)}\circ\text{ad}_{x}^{(n)})(y_{1}\tensor
\cdots\tensor y_{n})\\
&=\text{ad}_{x}^{(n)}\Bigg(\sum\limits_{k=1}^{n}y_{1}\tensor\cdots\tensor\text{ad}
_{y}(y_{k})\tensor\cdots\tensor y_{n}\Bigg)\\
&-\text{ad}_{y}^{(n)}\Bigg(\sum\limits_{k=1}^{n}y_{1}
\tensor\cdots\tensor\text{ad}_{x}(y_{k})\tensor\cdots\tensor y_{n}\Bigg)\\
&=\sum\limits_{k\neq
j=1}^{n}y_{1}\tensor\cdots\tensor\text{ad}_{y}(y_{k})\tensor\cdots\tensor\text{ad}_{x}(y_{j})\tensor\cdots
\tensor y_{n}\\
&+\sum\limits_{k=1}^{n}y_{1}\tensor\cdots\tensor\text{ad}_{x}(\text{ad}_{y}(y_{k}))\tensor\cdots\tensor y_{n}\\
&-\sum\limits_{k\neq
j=1}^{n}y_{1}\tensor\cdots\tensor\text{ad}_{x}(y_{k})\tensor\cdots\tensor\text{ad}_{y}(y_{j})\tensor\cdots
\tensor y_{n}\\
&-\sum\limits_{k=1}^{n}y_{1}\tensor\cdots\tensor\text{ad}_{y}(\text{ad}_{x}(y_{k}))\tensor\cdots\tensor y_{n}\\
&=\sum\limits_{k=1}^{n}(y_{1}\tensor\cdots\tensor\text{ad}_{x}(\text{ad}_{y}(y_{k}))\tensor\cdots\tensor y_{n}
-y_{1}\tensor\cdots\tensor\text{ad}_{y}(\text{ad}_{x}(y_{k}))\tensor\cdots\tensor y_{n})\\
&=\sum\limits_{k=1}^{n}y_{1}\tensor\cdots\tensor(\text{ad}_{x}(\text{ad}_{y}(y_{k}))-\text{ad}_{y}
(\text{ad}_{x}(y_{k})))
\tensor\cdots\tensor y_{n}\\
&=\sum\limits_{k=1}^{n}y_{1}\tensor\cdots\tensor\text{ad}_{\left[x,y\right]}(y_{k})\tensor\cdots\tensor y_{n}\\
&=\text{ad}_{\left[x,y\right]}^{(n)}(y_{1}\tensor\cdots\tensor y_{n}),
\end{align*}
where in the third equation the summand $ (k,j) $ of the first sum cancels with the summand $ (j,k) $ in the third
sum. The linearity is obvious and for this $ \mathfrak{g}\tensor\cdots\tensor\mathfrak{g} $ has the structure of a
$ \mathfrak{g} $-module. If not stated differently we always view $ \mathfrak{g}\tensor\cdots\tensor\mathfrak{g} $
as this $ \mathfrak{g} $-module (see Chapter \ref{sec2thomas}).
For example for $ n=2 $ we get
\begin{align}
\text{ad}_{x}^{(2)}(y_{1}\tensor y_{2})=\left[x,y_{1}\right]\tensor y_{2}+y_{1}\tensor\left[x,y_{2}\right],
\end{align}
for $ x\in\mathfrak{g} $, $ y_{1}\tensor y_{2}\in\mathfrak{g}\tensor\mathfrak{g} $. This motivates the notations
\begin{align}
\text{ad}_{x}^{(2)}(v)=(\text{ad}_{x}\tensor 1+1\tensor\text{ad}_{x})(v)=\left[x\tensor 1+1\tensor x,v\right],
\end{align}
for $ x\in\mathfrak{g} $ , $ v\in\mathfrak{g}\tensor\mathfrak{g}, $ where
$ 1\colon\mathfrak{g}\rightarrow\mathfrak{g} $ is the identity map.
\end{compactenum}
\end{example}

Having in mind these examples we consider again some arbitrary $ \mathfrak{g} $-module $ V $, where $ \mathfrak{g} $
is a Lie algebra over a field $ \mathbb{k} $ of characteristic zero. For a natural number $ n\in\mathbb{N} $ we
denote by $ C^{n}(\mathfrak{g},V)=\text{Hom}_{\mathbb{k}}(\Anti^{n}\mathfrak{g},V) $ the vector space of
$ \mathbb{k} $-linear maps from the $ n $-th exterior power $ \Anti^{n}\mathfrak{g} $ of $ \mathfrak{g} $ to $ V $.
The elements of $ C^{n}(\mathfrak{g},V) $ are called \textit{$ n $-cochains on $ \mathfrak{g} $ with values in $ V $}
(or just $ n $-cochains if there is no confusion which Lie algebra module is meant). We define a mapping
$ \delta_{n}\colon
C^{n}(\mathfrak{g},V)\rightarrow C^{n+1}(\mathfrak{g},V) $ from the set of $ n $-cochains to the set of $ (n+1)
$-cochains by
\begin{align*}
(\delta_{n}c)(x_{1}\wedge\cdots\wedge x_{n+1})&=\sum\limits_{k=1}^{n+1}x_{k}.c(x_{1}
\wedge\cdots\wedge\hat{x_{k}}\wedge\cdots\wedge x_{n+1})\\
&+\sum\limits_{k<j}c(\left[x_{k},x_{j}\right]\wedge\cdots\wedge\hat{x_{k}}\wedge\cdots\wedge\hat{x_{j}}
\wedge\cdots\wedge x_{n+1}),
\end{align*}
for all $ c\in C^{n}(\mathfrak{g},V) $ and $ x_{1}\wedge\cdots\wedge x_{n+1}\in\Anti^{n}\mathfrak{g} $. The point
in the first sum stands of course for the representation $ \rho $ corresponding to the $ \mathfrak{g} $-module
structure of $ V $, i.e. $ x_{k}.c(x_{1}\wedge\cdots\wedge\hat{x_{k}}\wedge\cdots\wedge x_{n+1})=\rho_{x_{k}}
(c(x_{1}\wedge\cdots\wedge\hat{x_{k}}\wedge\cdots\wedge x_{n+1})) $ and $ \hat{x_{k}} $
denotes that the element $ x_{k} $
is removed in the exterior product. In particular, we are interested in the kernel of $ \delta_{n} $. We denote this
vector space by $ Z_{n}(\mathfrak{g},V)=\ker(\delta_{n}) $ and call its elements \textit{$ n $-cocycles on $
\mathfrak{g} $ with values in $ V $} or just $ n $-cocycles. This space can be compared to the image of $ \delta_{n-1}
$, i.e. the vector space $ B_{n}(\mathfrak{g},V)=\text{im}(\delta_{n-1}) $ whose elements are said to be the
\textit{$ n $-coboundaries on $ \mathfrak{g} $ with values in $ V $} or again just $ n $-coboundaries, where we set $
B_{0}(\mathfrak{g},V)=\left\lbrace 0\right\rbrace $. The reader can check that
$ \delta_{n+1}\circ\delta_{n}=0 $. We often just write $ \delta=\delta_{n} $. The
equation $ \delta^{2}=0 $ implies $ B_{n}(\mathfrak{g},V)\subseteq Z_{n}(\mathfrak{g},V)\subseteq C^{n}(\mathfrak{g},V)
$, i.e. every $ n $-coboundary is a $ n $-cocycle. Then the set
\begin{align}
H^{n}(\mathfrak{g},V)=Z_{n}(\mathfrak{g},V)/B_{n}(\mathfrak{g},V)
\end{align}
is a well-defined $ \mathbb{k} $-vector space.

\begin{definition}
Let $ (\rho,V) $ be a Lie algebra representation of $ \mathfrak{g} $. The tuple $ (C^{n}(\mathfrak{g},V),\delta_{n})
_{n\in\mathbb{N}_{0}} $
is said to be the \textbf{Chevalley-Eilenberg complex} and $ \delta_{n}=\delta $ the \textbf{Chevalley-Eilenberg
differential}. Moreover, one calls $ H^{n}(\mathfrak{g},V) $ the \textbf{$ n $-th cohomology group on $ \mathfrak{g} $
with values in $ V $}, or just the $ n $-th cohomology group.
\end{definition}

\begin{remark}
Let us consider the low order cohomologies.
\begin{compactenum}
\item For $ n=0 $
one has $ c\in H^{0}(\mathfrak{g},V)=Z_{0}(\mathfrak{g},V)\subseteq V $ if and only if
\begin{align*}
0=\delta c(x)=x.c,
\end{align*}
for all $ x\in\mathfrak{g} $. This vector space is sometimes denoted by $ V^{\mathfrak{g}} $. Thus the $ 0 $-th
cohomology group consists of the elements of $ V $ that are annihilated by all elements of $ \mathfrak{g} $ via the
corresponding $ \mathfrak{g} $-module representation, i.e.
\begin{align}
H^{0}(\mathfrak{g},V)=V^{\mathfrak{g}}\subseteq V.
\end{align}
If $ \mathfrak{g} $ acts trivially on $ \mathbb{k} $ we obtain $ H^{0}(\mathfrak{g},\mathbb{k})=
\mathbb{k} $. If $ \mathfrak{g} $ acts on $ \mathfrak{g} $ by the adjoint action we get
\begin{align}
H^{0}(\mathfrak{g},\mathfrak{g})=\left\lbrace y\in\mathfrak{g}~|~\left[x,y\right]=0\text{ for all }x\in
\mathfrak{g}\right\rbrace
\end{align}
and if $ \mathfrak{g} $ acts on $ \mathfrak{g}\tensor\mathfrak{g} $ by $ \text{ad}^{(2)} $ we see that
\begin{align}
H^{0}(\mathfrak{g},\mathfrak{g}\tensor\mathfrak{g})=
\operatorname{span}_{\mathbb{k}}\left\lbrace y_{1}\tensor y_{2}\in\mathfrak{g}\tensor\mathfrak{g}
~|~\left[y_{1},x\right]\tensor y_{2}=y_{1}\tensor\left[x,y_{2}\right]\text{ for all }x\in\mathfrak{g}\right\rbrace.
\end{align}
\item For $ n=1 $ we observe
that $ c\in Z_{1}(\mathfrak{g},V) $ if and only if for all $ x_{1}, x_{2}\in\mathfrak{g} $
\begin{align}
0=(-1)^{1+1}x_{1}.c(x_{2})+(-1)^{2+1}x_{2}.c(x_{1})+(-1)^{1+2}c(\left[x_{1},x_{2}\right])
\end{align}
or equivalently
\begin{align}
c(\left[x_{1},x_{2}\right])=x_{1}.c(x_{2})-x_{2}.c(x_{1}),
\end{align}
i.e. if and only if $ c\colon\mathfrak{g}\rightarrow V $ is a \textit{derivation}. We denote the vector space of
derivations on $ \mathfrak{g} $ with values in $ V $ by $ \text{Der}(\mathfrak{g},V) $. Then $ Z_{1}(\mathfrak{g},V)=
\text{Der}(\mathfrak{g},V) $. Moreover, $ c\in B_{1}(\mathfrak{g},V) $ if and only if there is an element $ b\in
C^{0}(\mathfrak{g},V)=V $ such that $ c(x)=\delta b(x)=x.b $ for all $ x\in\mathfrak{g} $. Such elements are said to
be the \textit{inner derivations} on $ \mathfrak{g} $ with values in $ V $ and the vector space of them is denoted
by $ \text{IDer}(\mathfrak{g},V) $. Thus $ B_{1}(\mathfrak{g},V)=\text{IDer}(\mathfrak{g},V) $ and
\begin{align}
H^{1}(\mathfrak{g},V)=\text{Der}(\mathfrak{g},V)/\text{IDer}(\mathfrak{g},V)=\text{Out}(\mathfrak{g},V),
\end{align}
where the quotient of derivations and inner derivations results in the so-called \textit{outer derivations}
$ \text{Out}(\mathfrak{g},V) $ on $ \mathfrak{g} $ with values in $ V $. In particular, if $ \mathfrak{g} $ acts
trivially on $ \mathbb{k} $ we obtain
\begin{align}
H^{1}(\mathfrak{g},\mathbb{k})=\left.\left\lbrace
c\in C^{1}(\mathfrak{g},\mathbb{k})~\right|~c(\left[x_{1},x_{2}\right])=0\text{
 for all }x_{1}, x_{2}\in\mathfrak{g}\right\rbrace/\left\lbrace 0\right\rbrace=(\mathfrak{g}/\left[\mathfrak{g},
\mathfrak{g}\right])^{*},
\end{align}
where the star denotes the dual space and $ \left[\mathfrak{g},\mathfrak{g}\right] $ the $ \mathbb{k} $-span of all
brackets $ \left[x,y\right] $, for some $ x,y\in\mathfrak{g} $.
\item For $ n=2 $ one can prove (c.f \cite[Proposition~I.3.]{neebabelian})
that
\begin{align}
H^{2}(\mathfrak{g},V)\cong\text{Ext}(\mathfrak{g},V),
\end{align}
where $ \text{Ext}(\mathfrak{g},V) $ is the set of equivalence classes of abelian extensions of $ \mathfrak{g} $ by
$ V $. If $ \mathfrak{g} $ is a simple Lie algebra over $ \mathbb{C} $ and $ \Omega\in\mathfrak{g}\tensor\mathfrak{g} $
the \textit{Casimir element} associated to a non-degenerate invariant form on $ \mathfrak{g} $ we get
\begin{align}
(\Anti^{3}\mathfrak{g})^{\mathfrak{g}}=\mathbb{C}Z,
\end{align}
where $ Z=\left[\Omega\tensor 1,1\tensor\Omega\right] $.
\end{compactenum}
\end{remark}

Many details about Lie algebra cohomology can be found in \cite{hilton2013course}, \cite{neebabelian} and
\cite{onishchik2000lie}. To end this section we want to state some famous results about Lie algebra cohomology.

\begin{theorem}\label{Whitehead}
Let $ \mathfrak{g} $ be a complex semisimple Lie algebra and $ V $ a
finite-dimensional $ \mathfrak{g} $-module. Then
$ H^{1}(\mathfrak{g},V)=H^{2}(\mathfrak{g},V)=\left\lbrace 0\right\rbrace $. If $ V^{\mathfrak{g}}=
\left\lbrace 0\right\rbrace $ we even get $ H^{n}(\mathfrak{g},V)=\left\lbrace 0\right\rbrace $ for all
$ n\in\mathbb{N}_{0} $.
\end{theorem}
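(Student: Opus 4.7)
The plan is to reduce to two extreme cases via complete reducibility. By Weyl's theorem on complete reducibility of finite-dimensional representations of complex semisimple Lie algebras, I decompose $V=V^{\mathfrak{g}}\oplus V'$, where $V'$ is a direct sum of non-trivial irreducible subrepresentations so that $(V')^{\mathfrak{g}}=\left\lbrace 0\right\rbrace$. Since $C^{n}(\mathfrak{g},-)=\operatorname{Hom}_{\mathbb{k}}(\Anti^{n}\mathfrak{g},-)$ is an additive functor and the Chevalley-Eilenberg differential is natural, the cohomology splits along this decomposition and the two summands can be treated separately.

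For the summand $V'$ I would use the Casimir argument. Pick a non-degenerate invariant symmetric bilinear form on $\mathfrak{g}$ (the Killing form is non-degenerate because $\mathfrak{g}$ is semisimple), choose dual bases $\{e_{i}\}$ and $\{f^{i}\}$, and form the central element $C=\sum_{i}e_{i}f^{i}\in\mathcal{U}(\mathfrak{g})$. By Schur's lemma $C$ acts as a scalar on each irreducible component of $V'$, and a short trace calculation shows this scalar is nonzero on every non-trivial irreducible, so $C$ acts invertibly on $V'$. The technical heart is to produce a chain homotopy $\Theta\colon C^{n}(\mathfrak{g},V')\rightarrow C^{n-1}(\mathfrak{g},V')$ satisfying
\begin{align*}
\delta\circ\Theta+\Theta\circ\delta=\rho_{V'}(C),
\end{align*}
where the right-hand side is post-composition with the Casimir action on $V'$. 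Cartan's magic formula $L_{x}=\delta\iota_{x}+\iota_{x}\delta$ for the interior multiplication $\iota_{x}$ at a single $x\in\mathfrak{g}$ is the prototype; summing analogous contributions over the dual basis and invoking the invariance of the bilinear form together with the Jacobi identity should produce the required homotopy. Once $\Theta$ is in place, invertibility of $\rho_{V'}(C)$ forces $H^{n}(\mathfrak{g},V')=\left\lbrace 0\right\rbrace$ for every $n\in\mathbb{N}_{0}$.

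For the invariant summand $V^{\mathfrak{g}}$ it suffices, by additivity, to consider the trivial module $\mathbb{C}$ and check $H^{1}=H^{2}=\left\lbrace 0\right\rbrace$. The case $n=1$ is immediate from the explicit identification $H^{1}(\mathfrak{g},\mathbb{C})=(\mathfrak{g}/[\mathfrak{g},\mathfrak{g}])^{*}$ recorded in the remark after the definition of the Chevalley-Eilenberg complex, since semisimple Lie algebras are perfect. For $n=2$ I would read a cocycle $\omega$ as a central extension $0\rightarrow\mathbb{C}\rightarrow\tilde{\mathfrak{g}}\rightarrow\mathfrak{g}\rightarrow 0$. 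The adjoint action of $\mathfrak{g}$ on $\tilde{\mathfrak{g}}$ is well-defined because $\mathbb{C}$ is central, so the sequence is a short exact sequence of $\mathfrak{g}$-modules. Weyl's theorem supplies a $\mathfrak{g}$-equivariant section $s\colon\mathfrak{g}\rightarrow\tilde{\mathfrak{g}}$, and equivariance combined with centrality of the kernel forces $s([x,y])=[s(x),s(y)]$, so $s$ is a Lie-algebra splitting and $\omega$ is a coboundary.

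The main obstacle is the explicit construction and verification of the homotopy $\Theta$ in step two: the bookkeeping with dual bases, the invariance identity $\kappa([x,e_{i}],f^{j})+\kappa(e_{i},[x,f^{j}])=0$, and the Jacobi identity have to conspire to produce exactly $\rho_{V'}(C)$ on the right-hand side. Everything else reduces to Weyl's theorem, the low-degree computations already present in the excerpt, and the centrality of $C$ in $\mathcal{U}(\mathfrak{g})$.
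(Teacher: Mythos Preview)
The paper does not actually prove this theorem: it simply records that the first statement is Whitehead's Lemma and cites Varadarajan, and attributes the second statement to Hochschild--Serre. Your proposal, by contrast, supplies a genuine outline, and it is the standard one: split off the invariants by complete reducibility, kill the complement in all degrees via a Casimir homotopy, and handle the trivial-coefficient summand in degrees $1$ and $2$ by perfectness and the central-extension splitting argument.

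One point to watch is potential circularity. You invoke Weyl's complete reducibility theorem twice (to decompose $V$ and to split the central extension), but Whitehead's first lemma is itself one of the classical routes to Weyl's theorem. This is not fatal, since complete reducibility can be established independently---for instance by the unitarian trick for a compact real form, or by a direct Casimir argument on short exact sequences---but you should flag which independent proof you are relying on. Apart from that, the homotopy identity $\delta\Theta+\Theta\delta=\rho_{V'}(C)$ with $\Theta(c)(x_{1},\ldots,x_{n-1})=\sum_{i}e_{i}\cdot c(f^{i},x_{1},\ldots,x_{n-1})$ is exactly the computation you anticipate, and the invariance of the Killing form together with the Jacobi identity does make the cross-terms cancel.
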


\begin{proof}
The
first statement is known as \textbf{Whitehead's Lemma} and can be found in \cite{varadarajan1984lie}. The second
statement is a consequence of this lemma (c.f. \cite{hochschildserrecohomo}).
\end{proof}

\begin{remark}
In \cite{Deninger1988} it is proven that if $ \mathfrak{g} $
is a nilpotent Lie algebra of dimension $ n\in\mathbb{N} $ the dimension of $ H^{k}(\mathfrak{g},\mathbb{k}) $ is at
least $ 2 $ for $ 1\leq k\leq n-1 $. This gives a helpful intuition that nilpotent Lie algebras have ``big'' Lie
algebra cohomologies while the Lie algebra cohomologies of semisimple Lie algebras are ``small''.
Finally, for a semisimple Lie algebras there is a splitting
of $ H^{n}(\mathfrak{g},V) $ into $ 1 $-dimensional modules with respect to a root decomposition of $ \mathfrak{g} $.
This is discussed in \cite{kostantcohomborel}.
\end{remark}

\section{Lie Bialgebras and Poisson-Lie Groups}\label{secLieBiPoLie}

Lie bialgebras and Poisson-Lie groups are well-studied objects. We
refer to the first section of the notes \cite{kosmann2004integrability} of Y. Kosmann-Schwarzbach for a introduction
to Lie bialgebras and to the fourth section to examine Poisson-Lie groups. Further sources are
\cite[Chapter~1]{chari1995guide}, \cite[Chapter~2]{etingof2002lectures}, \cite[Chapter~11]{laurent2012poisson},
\cite[Chapter~8]{majid2000foundations} and \cite[Chapter~10]{vaisman1994lectures}.

Consider a Lie algebra $ \mathfrak{g} $ over a field $ \mathbb{k} $ of characteristic zero. We want to equip $ 
\mathfrak{g} $ with some additional structure. As motivated from the last section we regard a linear map
\begin{align}
\gamma\colon\mathfrak{g}\rightarrow\mathfrak{g}\tensor\mathfrak{g}
\end{align}
that is a $ 1 $-cocycle on $ \mathfrak{g} $ with values in $ \mathfrak{g}\tensor\mathfrak{g} $, where we view
$ \mathfrak{g}\tensor\mathfrak{g} $ as a $ \mathfrak{g} $-module via $ \text{ad}^{(2)} $ as usual, i.e.
$ \gamma\in Z_{1}(\mathfrak{g},\mathfrak{g}\tensor\mathfrak{g}) $. It is common to use the short notation
\begin{align}\label{eq8thomas}
\gamma(x)=\sum x_{\left[1\right]}\tensor x_{\left[2\right]}=x_{\left[1\right]}\tensor x_{\left[2\right]}
\end{align}
for $ x\in\mathfrak{g} $ to represent the element $ \gamma(x)\in\mathfrak{g}\tensor\mathfrak{g} $. The
\textit{cocycle condition} reads
\begin{align}
\delta\gamma=0
\end{align}
or equivalently
\begin{align}\label{eq5thomas}
\gamma(\left[x,y\right])=\text{ad}_{x}^{(2)}(\gamma(y))-\text{ad}_{y}^{(2)}(\gamma(x)),
\end{align}
for all $ x,y\in\mathfrak{g} $. If we require $ \gamma $ to satisfy in addition
\begin{align}\label{eq3thomas}
x_{\left[1\right]}\tensor x_{\left[2\right]}=-x_{\left[2\right]}\tensor x_{\left[1\right]}
\end{align}
and
\begin{align}\label{eq4thomas}
\text{Alt}((\gamma\tensor 1)\gamma(x))=0
\end{align}
for all $ x\in\mathfrak{g} $, where $ \text{Alt} $ is the $ \mathbb{k} $-linear function $ \bigotimes^{3}\mathfrak{g}
\rightarrow\bigotimes^{3}\mathfrak{g} $ that satisfies on factorizing tensors
$ \text{Alt}(x\tensor y\tensor z)=x\tensor y\tensor z+y\tensor z\tensor x
+z\tensor x\tensor y $, the Lie algebra $ \mathfrak{g} $ becomes also a Lie coalgebra, or equivalently the dual
$ \mathfrak{g}^{*} $ of $ \mathfrak{g} $ gets a Lie algebra. The Lie bracket on $ \mathfrak{g}^{*} $ is given by the
left transpose of $ \gamma $, i.e.
\begin{align}
^{t}\gamma\colon\mathfrak{g}^{*}\tensor\mathfrak{g}^{*}\rightarrow\mathfrak{g}^{*}
\end{align}
is a Lie bracket on $ \mathfrak{g}^{*} $. Indeed we can define a bracket $ \left[\cdot,\cdot\right]_{\mathfrak{g}^{*}}
\colon\mathfrak{g}^{*}\tensor\mathfrak{g}^{*}\rightarrow\mathfrak{g}^{*} $ on $ \mathfrak{g}^{*} $ by
\begin{align}\label{eq2thomas}
\left[\xi,\eta\right]_{\mathfrak{g}^{*}}={}^t\gamma(\xi\tensor\eta),
\end{align}
for all $ \xi,\eta\in\mathfrak{g}^{*} $. If we introduce the non-degenerate dual pairing $ \langle\cdot,\cdot\rangle
\colon\bigotimes^{n}\mathfrak{g}^{*}\times\bigotimes^{n}\mathfrak{g}\rightarrow\mathbb{k} $ we can prove that
Eq. (\ref{eq2thomas}) defines a Lie bracket on $ \mathfrak{g}^{*} $ if and only if $ \gamma $ satisfies conditions
(\ref{eq3thomas}) and (\ref{eq4thomas}). Of course condition (\ref{eq3thomas}) can immediately be absorbed if we
require $ \gamma $ to be a map $ \mathfrak{g}\rightarrow\mathfrak{g}\wedge\mathfrak{g} $. The skew-symmetry of
$ \left[\cdot,\cdot\right]_{\mathfrak{g}^{*}} $ follows from condition (\ref{eq3thomas}) because for all
$ x\in\mathfrak{g} $, $ \xi,\eta\in\mathfrak{g}^{*} $ one has
\begin{align*}
\langle\left[\xi,\eta\right]_{\mathfrak{g}^{*}},x\rangle&=\langle^t\gamma(\xi\tensor\eta),x\rangle=\langle
\xi\tensor\eta,\gamma(x)\rangle\\
&=\langle\xi\tensor\eta,x_{\left[1\right]}\tensor x_{\left[2\right]}\rangle=\langle\eta\tensor\xi,
x_{\left[2\right]}\tensor x_{\left[1\right]}\rangle\\
&=-\langle\eta\tensor\xi,x_{\left[1\right]}\tensor x_{\left[2\right]}\rangle=-\langle^t\gamma
(\eta,\xi),x\rangle\\
&=\langle-\left[\eta,\xi\right]_{\mathfrak{g}^{*}},x\rangle.
\end{align*}
This implies $ \left[\xi,\eta\right]_{\mathfrak{g}^{*}}=-\left[\eta,\xi\right]_{\mathfrak{g}^{*}} $ because the
dual pairing is non-degenerate. Arranging the above equations in another order we see that the skew-symmetry of
$ \left[\cdot,\cdot\right]_{\mathfrak{g}^{*}} $ implies condition (\ref{eq3thomas}). Moreover, the Jacobi identity of
$ \left[\cdot,\cdot\right]_{\mathfrak{g}^{*}} $ is equivalent to condition (\ref{eq4thomas}), the so called
\textit{coJacobi identity}. To prove this, observe that for $ x\in\mathfrak{g} $ and
$ \xi,\eta,\zeta\in\mathfrak{g}^{*} $ one has
\begin{align*}
\langle\big[\big[\xi,\eta\big]_{\mathfrak{g}^{*}},\zeta\big]_{\mathfrak{g}^{*}},x\rangle&=
\langle^t\gamma(^t\gamma(\xi\tensor\eta)\tensor\zeta),x\rangle=\langle^t\gamma(\xi\tensor\eta)\tensor\zeta,
\gamma(x)\rangle\\
&=\langle\xi\tensor\eta\tensor\zeta,(\gamma\tensor 1)\circ\gamma(x)\rangle.
\end{align*}
Thus every term of the coJacobi identity corresponds to one term of the Jacobi identity of $ \left[\cdot,\cdot\right]
_{\mathfrak{g}^{*}} $ and the equivalence is proved. After recognizing the powerful impact of the above considerations
we give the well-motivated definition of a Lie bialgebra.

\begin{definition}[Lie Bialgebra]
A Lie bialgebra $ (\mathfrak{g},\left[\cdot,\cdot\right],\gamma) $ is a Lie algebra $ (\mathfrak{g},
\left[\cdot,\cdot\right]) $ together with a map
\begin{align}
\gamma\colon\mathfrak{g}\rightarrow\mathfrak{g}\wedge\mathfrak{g}
\end{align}
satisfying the coJacobi identity (\ref{eq4thomas}) and the cocycle identity (\ref{eq5thomas}). The cocycle $ \gamma $
is called the Lie bialgebra structure of $ (\mathfrak{g},\left[\cdot,\cdot\right],\gamma) $.
\end{definition}
We already proved the following proposition in the above lines.

\begin{proposition}\label{pro1thomas}
A Lie algebra $ (\mathfrak{g},\left[\cdot,\cdot\right]) $ endowed with a $ 1 $-cocycle $ \gamma\colon\mathfrak{g}
\rightarrow\mathfrak{g}\tensor\mathfrak{g} $
is a Lie bialgebra $ (\mathfrak{g},\left[\cdot,\cdot\right],
\gamma) $ if and only if
\begin{align}
\left[\xi,\eta\right]_{\mathfrak{g}^{*}}={}^t\gamma(\xi\tensor\eta),
\end{align}
where $ \xi,\eta\in\mathfrak{g}^{*} $, defines a Lie bracket on $ \mathfrak{g}^{*} $.
\end{proposition}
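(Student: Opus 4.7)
The plan is to identify the three conditions a Lie bracket on $\mathfrak{g}^{*}$ must satisfy — bilinearity, skew-symmetry, and the Jacobi identity — and translate each one into an equivalent condition on $\gamma$ via the non-degenerate dual pairing. Bilinearity is automatic since $^{t}\gamma$ is $\mathbb{k}$-linear, so there is nothing to check on that side. The cocycle condition on $\gamma$ is already built into the hypothesis, so it plays no role in the equivalence; it only enters the definition of a Lie bialgebra.

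First I would show that skew-symmetry of $[\cdot,\cdot]_{\mathfrak{g}^{*}}$ is equivalent to Eq.~(\ref{eq3thomas}), i.e.\ to $\gamma$ actually taking values in $\mathfrak{g}\wedge\mathfrak{g}$. The key computation, which is essentially carried out in the prose preceding the proposition, is to pair $\langle[\xi,\eta]_{\mathfrak{g}^{*}},x\rangle$ with $\langle\xi\tensor\eta,\gamma(x)\rangle$ for arbitrary $\xi,\eta\in\mathfrak{g}^{*}$ and $x\in\mathfrak{g}$, and then swap the roles of $\xi$ and $\eta$. Non-degeneracy of the pairing lets me cancel the test tensors on both sides, and the equivalence follows in both directions.

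Second, I would establish that the Jacobi identity for $[\cdot,\cdot]_{\mathfrak{g}^{*}}$ corresponds exactly to the coJacobi condition Eq.~(\ref{eq4thomas}). The crucial identity is
\begin{align*}
\langle[[\xi,\eta]_{\mathfrak{g}^{*}},\zeta]_{\mathfrak{g}^{*}},x\rangle
=\langle\xi\tensor\eta\tensor\zeta,(\gamma\tensor 1)\gamma(x)\rangle,
\end{align*}
so that summing over the three cyclic permutations of $(\xi,\eta,\zeta)$ on the left matches $\langle\xi\tensor\eta\tensor\zeta,\operatorname{Alt}((\gamma\tensor 1)\gamma(x))\rangle$ on the right. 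Again non-degeneracy of the pairing closes the equivalence, and combining both steps yields the proposition.

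I do not expect any genuine obstacles here; the proof is essentially a dualisation argument resting on the non-degenerate pairing between $\bigotimes^{n}\mathfrak{g}$ and $\bigotimes^{n}\mathfrak{g}^{*}$. The only point that needs care is bookkeeping: making sure the alternation on $\Anti^{3}\mathfrak{g}$ aligns correctly with the cyclic sum appearing in the Jacobi identity on $\mathfrak{g}^{*}$, and that the transpose $^{t}\gamma$ is unfolded with the right conventions on both sides of every equation.
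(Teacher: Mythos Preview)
Your proposal is correct and follows essentially the same approach as the paper: the proof consists of the computations in the prose immediately preceding the proposition, which establish via the non-degenerate dual pairing that skew-symmetry of $[\cdot,\cdot]_{\mathfrak{g}^{*}}$ is equivalent to condition~(\ref{eq3thomas}) and that the Jacobi identity is equivalent to the coJacobi identity~(\ref{eq4thomas}). You have identified exactly these two equivalences and the mechanism (non-degeneracy of the pairing) that drives them.
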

Every new object goes hand in hand with its morphisms and substructures.

\begin{definition}
Let $ (\mathfrak{g},\left[\cdot,\cdot\right]_{\mathfrak{g}},\gamma_{\mathfrak{g}}) $ and
$ (\mathfrak{k},\left[\cdot,\cdot\right]_{\mathfrak{k}},\gamma_{\mathfrak{k}}) $ be two Lie bialgebras.
\begin{compactenum}
\item A Lie algebra morphism $ \phi\colon\mathfrak{g}\rightarrow\mathfrak{k} $ is said to be a \textbf{morphism of Lie
bialgebras} if it respects the cocycles, i.e.
\begin{align}
(\phi\tensor\phi)(\gamma_{\mathfrak{g}}(x))=\gamma_{\mathfrak{k}}(\phi(x)),
\end{align}
for all $ x\in\mathfrak{g} $. The condition being a Lie algebra morphism reads
\begin{align}
\phi(\left[x,y\right]_{\mathfrak{g}})=\left[\phi(x),\phi(y)\right]_{\mathfrak{k}},
\end{align}
for all $ x,y\in\mathfrak{g} $.
\item A Lie subalgebra $ \mathfrak{h} $ of $ \mathfrak{g} $ is said to be a \textbf{Lie subbialgebra} if the
cocycle on $ \mathfrak{g} $ restricts to a map
\begin{align}
\gamma_{\mathfrak{g}}\colon\mathfrak{h}\rightarrow\mathfrak{h}\wedge\mathfrak{h},
\end{align}
i.e. $ \gamma_{\mathfrak{g}}(\mathfrak{h})\subseteq\mathfrak{h}\wedge\mathfrak{h} $. The condition being a Lie
subalgebra of $ \mathfrak{g} $ reads
$ \left[\mathfrak{h},\mathfrak{h}\right]_{\mathfrak{g}}\subseteq\mathfrak{h} $.
\item A Lie ideal $ \mathfrak{h} $ of $ \mathfrak{g} $ is said to be \textbf{Lie coideal} if
\begin{align}
\gamma_{\mathfrak{g}}(\mathfrak{h})\subseteq\mathfrak{g}\tensor\mathfrak{h}+\mathfrak{h}\tensor\mathfrak{g}.
\end{align}
The condition being a Lie ideal of $ \mathfrak{g} $ is that $ \mathfrak{h} $ is a left ideal of $ \mathfrak{g} $
with respect to the Lie bracket (or equivalently a right ideal or an ideal), i.e. $ \left[\mathfrak{h},\mathfrak{g}
\right]_{\mathfrak{g}}\subseteq\mathfrak{h} $.
\end{compactenum}
\end{definition}

Remark that if $ \mathfrak{h} $ is a Lie ideal of $ (\mathfrak{g},\left[\cdot,\cdot\right]) $ the quotient Lie algebra
$ (\mathfrak{g}/\mathfrak{h},\left[\cdot,\cdot\right]) $ inherits a Lie bialgebra structure from $ \mathfrak{g} $
if and only if $ \mathfrak{h} $ is a Lie coideal. We already realized in Proposition \ref{pro1thomas} that the cocycle
$ \gamma\colon\mathfrak{g}\rightarrow\mathfrak{g}\wedge\mathfrak{g} $ on $ \mathfrak{g} $ that satisfies the coJacobi
identity is equivalent to a Lie bracket on the dual $ \mathfrak{g}^{*} $ of $ \mathfrak{g} $. We expand this result
in the sense that if $ \gamma $ satisfies the above conditions and the cocylce condition, not only $ \mathfrak{g} $ but
also $ \mathfrak{g}^{*} $ has the structure of a Lie bialgebra. Thus the notion of Lie bialgebra is totally self dual.

\begin{proposition}\label{pro2thomas}
For a finite-dimensional Lie bialgebra $ (\mathfrak{g},\left[\cdot,\cdot\right],\gamma) $ consider the dual maps
\begin{align}
^{t}\left[\cdot,\cdot\right]\colon\mathfrak{g}^{*}\rightarrow\mathfrak{g}^{*}\wedge\mathfrak{g}^{*}\text{   and   }
^{t}\gamma\colon\mathfrak{g}^{*}\wedge\mathfrak{g}^{*}\rightarrow\mathfrak{g}^{*}
\end{align}
of $ \left[\cdot,\cdot\right] $ and $ \gamma $. Then $ (\mathfrak{g}^{*},{}^{t}\gamma,{}^{t}\left[\cdot,\cdot\right]) $
is a Lie bialgebra, where $ ^{t}\gamma $ is the Lie bracket and $ ^{t}\left[\cdot,\cdot\right] $ the cocycle on
$ \mathfrak{g}^{*} $.
\end{proposition}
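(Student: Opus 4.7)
The plan is to exploit finite-dimensional duality: since $\mathfrak{g}^{**}\cong\mathfrak{g}$ canonically and the operation of taking transposes is an involution, the structures $(\mathfrak{g},[\cdot,\cdot],\gamma)$ and $(\mathfrak{g}^{*},{}^{t}\gamma,{}^{t}[\cdot,\cdot])$ are swapped by a single transposition. Consequently, each of the three axioms for a Lie bialgebra on $\mathfrak{g}^{*}$ translates, via the non-degenerate pairing $\langle\cdot,\cdot\rangle\colon\mathfrak{g}^{*}\otimes\mathfrak{g}\to\mathbb{k}$, into an assertion about the original data on $\mathfrak{g}$ that we already know.

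First I would observe that ${}^{t}\gamma$ is a Lie bracket on $\mathfrak{g}^{*}$; this is precisely the content of Proposition \ref{pro1thomas}, whose forward direction was verified explicitly in the paragraphs preceding that proposition (skew-symmetry from (\ref{eq3thomas}) and Jacobi identity from (\ref{eq4thomas})). Next, to see that ${}^{t}[\cdot,\cdot]\colon\mathfrak{g}^{*}\to\mathfrak{g}^{*}\wedge\mathfrak{g}^{*}$ really takes values in the exterior square and satisfies the coJacobi identity, I would invoke the same proposition in the reverse direction on $\mathfrak{g}^{*}$: its transpose is $[\cdot,\cdot]\colon\mathfrak{g}\wedge\mathfrak{g}\to\mathfrak{g}$ under the identification $\mathfrak{g}^{**}=\mathfrak{g}$, which is a Lie bracket by hypothesis, so by Proposition \ref{pro1thomas} applied to the Lie algebra $(\mathfrak{g}^{*},{}^{t}\gamma)$ the map ${}^{t}[\cdot,\cdot]$ is a Lie cobracket.

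The substantive step is the cocycle condition: I must show that for all $\xi,\eta\in\mathfrak{g}^{*}$
\begin{align*}
{}^{t}[\cdot,\cdot]\bigl([\xi,\eta]_{\mathfrak{g}^{*}}\bigr)=\widetilde{\mathrm{ad}}_{\xi}^{(2)}\bigl({}^{t}[\cdot,\cdot](\eta)\bigr)-\widetilde{\mathrm{ad}}_{\eta}^{(2)}\bigl({}^{t}[\cdot,\cdot](\xi)\bigr),
\end{align*}
where $\widetilde{\mathrm{ad}}_{\xi}\zeta=[\xi,\zeta]_{\mathfrak{g}^{*}}={}^{t}\gamma(\xi\wedge\zeta)$. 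I would pair both sides with an arbitrary factorizing element $x\otimes y\in\mathfrak{g}\otimes\mathfrak{g}$, using the defining relation $\langle{}^{t}\varphi(\cdot),\cdot\rangle=\langle\cdot,\varphi(\cdot)\rangle$ repeatedly. The left-hand side unwinds to $\langle\xi\otimes\eta,\gamma([x,y])\rangle$. The first term on the right-hand side unwinds to $\langle\xi\otimes\eta,\mathrm{ad}_{x}^{(2)}\gamma(y)\rangle$ and the second, up to sign, to $\langle\xi\otimes\eta,\mathrm{ad}_{y}^{(2)}\gamma(x)\rangle$, once one recognizes that the dual of the endomorphism $\mathrm{ad}_{x}^{(2)}=\mathrm{ad}_{x}\otimes 1+1\otimes\mathrm{ad}_{x}$ of $\mathfrak{g}\otimes\mathfrak{g}$ corresponds exactly to $\widetilde{\mathrm{ad}}_{\xi}^{(2)}$ when the pairing is re-evaluated. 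The resulting equality of scalars is precisely the cocycle identity (\ref{eq5thomas}) paired against $\xi\otimes\eta$; as the pairing is non-degenerate this is equivalent to (\ref{eq5thomas}) itself, which holds by hypothesis.

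The main obstacle is the bookkeeping in this last step: verifying that the transpose of $\mathrm{ad}_{x}^{(2)}$ acts on $\mathfrak{g}^{*}\otimes\mathfrak{g}^{*}$ in such a way that it produces precisely the adjoint representation $\widetilde{\mathrm{ad}}_{\xi}^{(2)}$ coming from the new bracket ${}^{t}\gamma$, and not some asymmetric or sign-twisted variant. Once the pairings are set up with the correct conventions, the self-dual character of the cocycle identity and the symmetry of $\mathrm{ad}^{(2)}$ between the two tensor factors make the computation go through without extra work.
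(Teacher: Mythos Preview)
Your overall strategy---pair both sides of the dual cocycle identity against $x\otimes y\in\mathfrak{g}\otimes\mathfrak{g}$ and reduce to the cocycle identity (\ref{eq5thomas}) on $\mathfrak{g}$---is exactly what the paper does. The use of Proposition~\ref{pro1thomas} in both directions for the Lie algebra and Lie coalgebra parts is also the same.

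There is, however, a genuine error in your bookkeeping sketch. You claim the first RHS term $\langle\widetilde{\mathrm{ad}}_{\xi}^{(2)}({}^{t}[\cdot,\cdot](\eta)),x\otimes y\rangle$ unwinds to $\langle\xi\otimes\eta,\mathrm{ad}_{x}^{(2)}\gamma(y)\rangle$, justified by the heuristic that the transpose of $\mathrm{ad}_{x}^{(2)}$ ``corresponds exactly to $\widetilde{\mathrm{ad}}_{\xi}^{(2)}$''. This heuristic is false: the transpose of $\mathrm{ad}_{x}$ on $\mathfrak{g}$ is the \emph{coadjoint} action of $x$ on $\mathfrak{g}^{*}$, built from $[\cdot,\cdot]$, whereas $\widetilde{\mathrm{ad}}_{\xi}$ is built from ${}^{t}\gamma$. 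These are different operators on $\mathfrak{g}^{*}$. If you carry out the pairing explicitly you find, for instance, that one sub-term of your RHS gives $\langle\xi,x_{[1]}\rangle\langle\eta,[x_{[2]},y]\rangle$ while the corresponding sub-term of $\langle\xi\otimes\eta,\mathrm{ad}_{x}^{(2)}\gamma(y)\rangle$ is $\langle\xi,[x,y_{[1]}]\rangle\langle\eta,y_{[2]}\rangle$; these do not match individually.

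What actually happens---and what the paper's explicit computation shows---is that the four sub-terms on each side reshuffle: terms coming from $\widetilde{\mathrm{ad}}_{\xi}^{(2)}({}^{t}[\cdot,\cdot](\eta))$ mix with terms from $\widetilde{\mathrm{ad}}_{\eta}^{(2)}({}^{t}[\cdot,\cdot](\xi))$ before regrouping into $\mathrm{ad}_{x}^{(2)}\gamma(y)-\mathrm{ad}_{y}^{(2)}\gamma(x)$. So the identity holds only for the full expression, not term by term, and there is no single ``transpose of $\mathrm{ad}^{(2)}$'' shortcut. Your last paragraph correctly flags the bookkeeping as the obstacle, but the mechanism you propose for resolving it would not work; you need the honest expansion the paper carries out.
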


\begin{proof}
We already proved that $ (\mathfrak{g}^{*},{}^t\gamma) $ is a Lie algebra if and only if $ \gamma $ satisfies
conditions (\ref{eq3thomas}) and (\ref{eq4thomas}). The only thing left to prove the proposition is that
$ ^{t}\left[\cdot,\cdot\right] $ is a cocycle. The restriction to finite-dimensional Lie bialgebras is necessary
if we want $ ^{t}(^{t}\left[\cdot,\cdot\right]) $ to define the Lie bracket $ \left[\cdot,\cdot\right] $ on
$ (\mathfrak{g}^{*})^{*}=\mathfrak{g} $ again. Let us compute for $ x,y\in\mathfrak{g} $ and $ \xi,\eta\in
\mathfrak{g}^{*} $
\begin{align*}
\langle^{t}\left[\cdot,\cdot\right](^{t}\gamma(\xi\wedge\eta)),x\wedge y\rangle&=\langle^{t}\gamma(\xi\wedge\eta)
,\left[x,y\right]\rangle\\
&=\langle\xi\wedge\eta,\gamma(\left[x,y\right])\rangle\\
&=\langle\xi\wedge\eta,(\text{ad}_{x}^{(2)}(\gamma(y))-\text{ad}_{y}^{(2)}(\gamma(x)))\rangle\\
&=\langle\xi\wedge\eta,(\left[x,y_{\left[1\right]}\right]\wedge y_{\left[2\right]}+y_{\left[1\right]}\wedge
\left[x,y_{\left[2\right]}\right]\\
&-\left[y,x_{\left[1\right]}\right]\wedge x_{\left[2\right]}-x_{\left[1\right]}\wedge
\left[y,x_{\left[2\right]}\right])\rangle\\
&=\langle(^{t}\left[\cdot,\cdot\right](\xi)\wedge\eta-\xi\wedge(^{t}\left[\cdot,\cdot\right](\eta))),x\wedge
y_{\left[1\right]}\wedge y_{\left[2\right]}\rangle\\
&-\langle(^{t}\left[\cdot,\cdot\right](\xi)\wedge\eta-\xi\wedge(^{t}\left[\cdot,\cdot\right])(\eta)),
x_{\left[1\right]}\wedge x_{\left[2\right]}\wedge y\rangle\\
&=\langle(-\xi_{\left[1\right]}\wedge\eta\wedge\xi_{\left[2\right]}+\eta_{\left[1\right]}\wedge\xi\wedge\eta_
{\left[2\right]}),x\wedge\gamma(y)\rangle\\
&-\langle(\eta\wedge\xi_{\left[1\right]}\wedge\xi_{\left[2\right]}-\xi\wedge\eta_{\left[1\right]}\wedge
\eta_{\left[2\right]}),\gamma(x)\wedge y\rangle\\
&=\langle(\left[\xi,\eta_{\left[1\right]}\right]_{\mathfrak{g}^{*}}\wedge\eta_{\left[2\right]}+
\eta_{\left[1\right]}\wedge\left[\xi,\eta_{\left[2\right]}\right]_{\mathfrak{g}^{*}}\\
&-(\left[\eta,\xi_{\left[1\right]}\right]_{\mathfrak{g}^{*}}\wedge\xi_{\left[2\right]}+
\xi_{\left[1\right]}\wedge\left[\eta,\xi_{\left[2\right]}\right]_{\mathfrak{g}^{*}})),x\wedge y\rangle\\
&=\langle((\text{ad}_{\xi}^{*})^{(2)}(^{t}\left[\cdot,\cdot\right](\eta))-(\text{ad}_{\eta}^{*})^{(2)}(^{t}
\left[\cdot,\cdot\right](\xi))),x\wedge y\rangle.
\end{align*}
The claim follows because the dual pairing is non-degenerate. Then $ ^{t}\left[\cdot,\cdot\right] $ is a cocycle on 
$ \mathfrak{g}^{*} $ if $ \gamma $ is a cocylce on $ \mathfrak{g} $. Using the equations above in different order one
can check that if $ ^{t}\left[\cdot,\cdot\right] $ is a cocycle on $ \mathfrak{g}^{*} $ also $ \gamma $ is a cocylce on
$ \mathfrak{g} $ and in the finite-dimensional case these are indeed equivalent conditions. A nice pictorial proof of
this statement can be found in \cite[Proposition~2.2]{etingof2002lectures}.
\end{proof}
Now we can introduce a first example of a Lie bialgebra.

\begin{example}\label{ExampleXYBialg}
Consider a noncommutative Lie algebra $ (\mathfrak{g},\left[\cdot,\cdot\right]) $ of dimension $ 2 $ over a field
$ \mathbb{k} $ of characteristic zero and take a basis $ \left\lbrace X,Y\right\rbrace $ of
$ \mathfrak{g} $ such that
\begin{align}\label{bracketXY}
\left[X,Y\right]=X.
\end{align}
By the bilinearity and antisymmetry of $ \left[\cdot,\cdot\right] $ condition (\ref{bracketXY}) determines the
Lie bracket completely. Define a map $ \gamma\colon\mathfrak{g}\rightarrow\mathfrak{g}\wedge\mathfrak{g} $ by
$ \gamma(X)=0 $, $ \gamma(Y)=-X\wedge Y $ and linear expansion and prove that this structures
$ (\mathfrak{g},\left[\cdot,\cdot\right]) $ as a Lie bialgebra. Of course it is enough to prove this just for
the basis elements. Indeed, the cocycle condition holds:
\begin{align*}
\text{ad}_{X}^{(2)}(\gamma(Y))-\text{ad}_{Y}^{(2)}(\gamma(X))&=\text{ad}_{X}^{(2)}(-X\wedge Y)-0\\
&=\text{ad}_{X}^{(2)}(Y\tensor X)-\text{ad}_{X}^{(2)}(X\tensor Y)\\
&=\left[ X,Y\right]\tensor X+Y\tensor\left[ X,X\right]-\left[ X,X\right]\tensor Y-X\tensor\left[ X,Y\right]\\
&=X\tensor X-X\tensor X\\
&=0\\
&=\gamma(X)\\
&=\gamma(\left[ X,Y\right]).
\end{align*}
We also check the coJacobi identity for $ X $
\begin{align*}
\text{Alt}((\gamma\tensor 1)\gamma(X))=\text{Alt}(0)=0
\end{align*}
and for $ Y $
\begin{align*}
\text{Alt}((\gamma\tensor 1)\gamma(Y))&=\text{Alt}((\gamma\tensor 1)(Y\tensor X-X\tensor Y))\\
&=\text{Alt}((Y\tensor X-X\tensor Y)\tensor X)-0\\
&=Y\tensor X\tensor X+X\tensor Y\tensor X+X\tensor X\tensor Y\\
&-X\tensor Y\tensor X-X\tensor X\tensor Y-Y\tensor X\tensor X\\
&=0.
\end{align*}
Thus $ (\mathfrak{g},\left[\cdot,\cdot\right],\gamma) $ is a Lie bialgebra.
\end{example}

There are many more examples of Lie bialgebras. Some of them can be found in
\cite[Section~2.2.2]{etingof2002lectures}, \cite[Section~1]{kosmann2004integrability}
or \cite[Section~8]{majid2000foundations}.
A very important motivation for Lie bialgebras comes from the famous
theorem of Sophus Lie, stating that the categories of simply connected Lie groups and finite-dimensional
Lie algebras are equivalent. This gives the \textit{Lie functor}
\begin{align}\label{eq7thomas}
\text{Lie}(G)=\mathfrak{g},
\end{align}
that assigns to every simply connected Lie group $ G $ its infinitesimal counterpart, i.e. a finite-dimensional
Lie algebra $ \mathfrak{g} $. Our question is now: what is the global counterpart of a Lie bialgebra?
To answer this we give a very short repetition on Poisson manifolds. For a conceptual introduction consider for
example \cite{laurent2012poisson}, \cite{vaisman1994lectures} or \cite{waldmannbuch1}.
A smooth manifold $ M $ together with a \textit{Poisson bracket}
\begin{align}\label{PoissonBracket}
\left\lbrace\cdot,\cdot\right\rbrace\colon\Cinfty(M)\times\Cinfty(M)\rightarrow\Cinfty(M)
\end{align}
for its smooth real-valued functions $ \Cinfty(M) $ is said to be a \textit{Poisson manifold}. The axioms that the
$ \mathbb{k} $-bilinear map in Eq. (\ref{PoissonBracket}) has to satisfy to define a Poisson bracket are
skew-symmetry, Jacobi identity
\begin{align}
\left\lbrace f,\left\lbrace g,h\right\rbrace\right\rbrace=
\left\lbrace\left\lbrace f,g\right\rbrace,h\right\rbrace+\left\lbrace g,\left\lbrace f,h\right\rbrace\right\rbrace
\end{align}
and the Leibniz rule
\begin{align}
\left\lbrace f,gh\right\rbrace=\left\lbrace f,g\right\rbrace h+g\left\lbrace f,h\right\rbrace
\end{align}
for all $ f,g,h\in\Cinfty(M) $. The smooth functions on $ M $ are viewed as a \textit{Poisson algebra} via its
commutative associative algebra structure of pointwise multiplication together with a Poisson bracket. Such Poisson
brackets are in one-to-one correspondence with bivector fields $ \pi\in\Gamma^{\infty}(\Lambda^{2}TM) $ on $ M $ via
\begin{align}\label{PoissonTensor}
\left\lbrace f,g\right\rbrace=\pi(\mathrm{d}f,\mathrm{d}g)
\end{align}
for all $ f,g\in\Cinfty(M) $, where $ \mathrm{d} $ denotes the exterior derivative. For a Poisson manifold
$ (M,\left\lbrace\cdot,\cdot\right\rbrace) $ we call
$ \pi $ the corresponding \textit{Poisson bivector} or \textit{Poisson structure}.
A smooth map $ \Phi\colon(M_{1},
\left\lbrace\cdot,\cdot\right\rbrace_{1})\rightarrow(M_{2},\left\lbrace\cdot,\cdot\right\rbrace_{2}) $ between
two Poisson manifolds $ (M_{1},\left\lbrace\cdot,\cdot\right\rbrace_{1}) $ and
$ (M_{2},\left\lbrace\cdot,\cdot\right\rbrace_{2}) $ is said to be a \textit{Poisson map} if for all
$ f,g\in\Cinfty(M_{2}) $ one has
\begin{align}
\Phi^{*}\left\lbrace f,g\right\rbrace_{2}=\left\lbrace\Phi^{*}f,\Phi^{*}g\right\rbrace_{1}
\end{align}
where $ \Phi^{*} $ denotes the \textit{pull back} of $ \Phi $, i.e. $ \Phi^{*}f=f\circ\Phi $ for all
$ f\in\Cinfty(M_{2}) $.
A very important class of examples of Poisson manifolds is given by symplectic manifolds.
Remember that a \textit{symplectic manifold} is a smooth manifold $ M $ together with a
closed non-degenerated $ 2 $-form $ \omega\in\Gamma^{\infty}(\Anti^{2}T^{*}M) $, called the
\textit{symplectic structure}. In fact, for any symplectic
structure $ \omega\in\Gamma^{\infty}(\Anti^{2}T^{*}M) $ we can always define a Poisson bracket by
\begin{align}\label{SymplPoissonBracket}
\left\lbrace f,g\right\rbrace=\omega(X_{f},X_{g})
\end{align}
for all $ f,g\in\Cinfty(M) $. The objects $ X_{f} $ and $ X_{g} $
in Eq. (\ref{SymplPoissonBracket}) are said to be the \textit{Hamiltonian vector fields} of $ f $ and $ g $,
respectively and they are defined by
\begin{align}
X_{f}=(\mathrm{d}f)^{\sharp},
\end{align}
where $ \sharp\colon T^{*}M\rightarrow TM $ is the \textit{musical isomorphism}. Thus another way of writing Eq.
(\ref{SymplPoissonBracket}) is
\begin{align}\label{eq06thomas}
\left\lbrace f,g\right\rbrace=X_{g}f,
\end{align}
for $ f,g\in\Cinfty(M) $.
It is known that for any $ p\in M $ the map
\begin{align}\label{eq20t}
\tilde{\pi}\colon T_{p}^{*}M\ni\alpha_{p}\mapsto\pi_{p}(\cdot,\alpha_{p})\in T_{p}M
\end{align}
is surjective if $ (M,\omega) $ is a symplectic manifold and that
\begin{align}\label{SymplPoisBiVect}
\pi(\mathrm{d}f,\mathrm{d}g)=\omega(X_{f},X_{g})
\end{align}
holds for all $ f,g\in\Cinfty(M) $.

An interesting structure occurs if we combine the concept of Poisson manifolds with the
notion of Lie groups.

\begin{definition}[Poisson-Lie Group]
A Poisson manifold $ (G,\left\lbrace\cdot,\cdot\right\rbrace) $ that is also a Lie group is said to be a
Poisson-Lie group if the group multiplication $ \mathrm{m}\colon G\times G\rightarrow G $ is a Poisson map.
\end{definition}

\begin{remark}
Let $ (G,\left\lbrace\cdot,\cdot\right\rbrace) $ be a Poisson-Lie group and denote by
$ \lambda_{u}\colon G\ni v\mapsto uv\in G $ and $ \rho_{u}\colon G\ni v\mapsto vu\in G $ the left and
right multiplication on $ G $ with some element $ u\in G $, respectively. Then the group multiplication $ m $
is a Poisson map, i.e.
\begin{align}\label{eq6thomas}
\left\lbrace f,g\right\rbrace(uv)=\left\lbrace f,g\right\rbrace(\mathrm{m}(u\times v))=\left\lbrace f\circ\lambda_{u},
g\circ\lambda_{u}\right\rbrace(v)+\left\lbrace f\circ\rho_{v},g\circ\rho_{v}\right\rbrace(u),
\end{align}
where $ f,g\in\Cinfty(G) $ and $ u,v\in G $.
There is also an equivalent condition to (\ref{eq6thomas}) for the corresponding Poisson bivector
$ \pi\in\Gamma^{\infty}(\Anti^{2}TG) $, namely
\begin{align}
\pi(uv)=(T_{u}\rho_{v}\tensor T_{u}\rho_{v})\pi(u)+(T_{v}\lambda_{u}\tensor
T_{v}\lambda_{u})\pi(v),
\end{align}
for all $ u,v\in G $. In particular, we see that if $ G $ is a Poisson-Lie group with identity element $ e\in G $ we
get
\begin{align}\label{eq9thomas}
\pi(e)=0.
\end{align}
\end{remark}

After this recap we come back to our question:
a result by Drinfel\textquoteright d shows the connection between Poisson-Lie groups and Lie bialgebras in analogy to
the Lie functor given in Eq. (\ref{eq7thomas}).

\begin{theorem}[Drinfel\textquoteright d]\label{TheoremDrinFunctor}
There is a functor
\begin{align}
\operatorname{Drin}\colon G\mapsto\operatorname{Lie}(G)
\end{align}
between the categories of simply connected Poisson-Lie groups and
finite-dimensional Lie bialgebras and this functor gives an equivalence of these categories.
\end{theorem}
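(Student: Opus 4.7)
The plan is to prove the equivalence by constructing the functor $\operatorname{Drin}$ explicitly on objects and morphisms, then building a quasi-inverse by integrating Lie bialgebras, and finally exhibiting the natural isomorphisms. First, given a Poisson-Lie group $(G,\pi)$, since $\pi(e)=0$ by Eq.~(\ref{eq9thomas}), the right-trivialized bivector
\begin{align}
\tilde{\pi}\colon G\ni g\mapsto(T_{g}\rho_{g^{-1}}\tensor T_{g}\rho_{g^{-1}})\pi(g)\in\mathfrak{g}\wedge\mathfrak{g}
\end{align}
vanishes at the identity, so it admits a well-defined intrinsic derivative $\gamma=T_{e}\tilde{\pi}\colon\mathfrak{g}\rightarrow\mathfrak{g}\wedge\mathfrak{g}$, which will be the cobracket. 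I would verify the cocycle condition (\ref{eq5thomas}) by differentiating the multiplicative identity for $\pi$ twice at $(e,e)$: the terms involving $T\rho$ and $T\lambda$ differentiate to the adjoint action, producing exactly $\operatorname{ad}^{(2)}$ on the right-hand side. The coJacobi identity (\ref{eq4thomas}) for $\gamma$ is then extracted by differentiating the Jacobi identity of $\{\cdot,\cdot\}$ three times at $e$ after right-trivialization.

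For functoriality on morphisms, a Poisson-Lie group morphism $\phi\colon G\rightarrow H$ is in particular a Lie group homomorphism, so $T_{e}\phi$ is a Lie algebra map. Differentiating the Poisson-map identity $\phi^{*}\{\cdot,\cdot\}_{H}=\{\cdot,\cdot\}_{G}$ at $e$ yields $(T_{e}\phi\tensor T_{e}\phi)\circ\gamma_{G}=\gamma_{H}\circ T_{e}\phi$, so $T_{e}\phi$ intertwines the cobrackets and is a Lie bialgebra morphism. This defines the functor $\operatorname{Drin}$.

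To construct the quasi-inverse, I start with a finite-dimensional Lie bialgebra $(\mathfrak{g},[\cdot,\cdot],\gamma)$. By Lie's theorem there is a unique simply connected Lie group $G$ with $\operatorname{Lie}(G)=\mathfrak{g}$, so it remains to integrate $\gamma$ to a multiplicative Poisson bivector. The cocycle condition (\ref{eq5thomas}) says $\gamma$ is a Lie algebra $1$-cocycle on $\mathfrak{g}$ with values in $\mathfrak{g}\wedge\mathfrak{g}$ for the $\operatorname{ad}^{(2)}$-representation; simple connectedness of $G$ then allows one to integrate this uniquely to a Lie group $1$-cocycle $\tilde{\pi}\colon G\rightarrow\mathfrak{g}\wedge\mathfrak{g}$ for the $\operatorname{Ad}^{(2)}$-representation. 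Right-translating $\tilde{\pi}$ produces a bivector $\pi\in\Gamma^{\infty}(\Anti^{2}TG)$ that is multiplicative by construction. One finally checks $[\pi,\pi]_{\mathrm{SN}}=0$ in the Schouten-Nijenhuis bracket: the Schouten square of a multiplicative bivector is again multiplicative, so it vanishes identically as soon as its right-invariant differential at $e$ does, and the latter is precisely $\operatorname{Alt}((\gamma\tensor 1)\gamma)$, which is zero by the coJacobi identity (\ref{eq4thomas}).

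The main obstacle will be the integration of the Lie algebra $1$-cocycle $\gamma$ to a Lie group $1$-cocycle on the simply connected group $G$; this is a van-Est-type argument and is the single place where simple connectedness is essential. Once available, the quasi-inverse $\operatorname{Drin}^{-1}$ is functorial because Lie algebra morphisms between Lie algebras of simply connected Lie groups integrate uniquely to Lie group morphisms, and these are automatically Poisson since the multiplicative bivector $\pi$ is determined by its derivative $\gamma$ at the identity. The natural isomorphisms $G\cong\operatorname{Drin}^{-1}(\operatorname{Drin}(G))$ and $(\mathfrak{g},\gamma)\cong\operatorname{Drin}(\operatorname{Drin}^{-1}(\mathfrak{g},\gamma))$ then follow from the corresponding statement for the classical Lie functor, combined with the just-mentioned uniqueness of multiplicative Poisson bivectors with prescribed derivative at $e$.
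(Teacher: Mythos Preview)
The paper does not actually prove this theorem: immediately after the statement it says ``The proof is not necessary for our purpose and we just refer to \cite[Theorem~2.2]{etingof2002lectures}.'' Your outline is the standard proof of Drinfel\textquoteright d's theorem and is essentially what one finds in that reference: linearize the multiplicative Poisson bivector at $e$ to obtain $\gamma$, differentiate multiplicativity and the Jacobi identity to get the cocycle and coJacobi conditions, and in the reverse direction integrate the $1$-cocycle $\gamma$ to a group cocycle on the simply connected $G$, right-translate it, and use that $[\pi,\pi]$ is multiplicative and hence determined by its linearization at $e$. So there is nothing to compare against in the paper; your sketch is correct and is precisely the argument the cited source gives.
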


Thus, one can see Lie bialgebras in a bigger context. The proof is not necessary for our purpose and
we just refer to \cite[Theorem~2.2]{etingof2002lectures}. For instance, this theorem motivates the notion of the
\textit{dual} $ G^{*} $ of a Poisson-Lie group $ G $ , where $ \mathfrak{g}^{*}=\text{Drin}(G^{*}) $ is the dual of
$ \mathfrak{g}=\text{Drin}(G) $. In the next section we see some special kinds of Lie bialgebras. They have
global counterparts in conformity with the functor of the last theorem. Consider
\cite[Section~9.4]{etingof2002lectures} for the definitions of these so called \textit{coboundary Hopf algebras}
and the connection to their infinitesimal objects.

\section{Coboundary Lie Bialgebras and the Classical Yang-Baxter Equation}
\label{sec2thomas}

Here we give a nice connection of the last two sections. For convenient literature we refer to
\cite[Chapter~2]{chari1995guide}, \cite[Chapter~3]{etingof2002lectures} and \cite[Section~2]{kosmann2004integrability}.
Consider Lie bialgebras $ (\mathfrak{g},
\left[\cdot,\cdot\right],\gamma) $ with a cocycle $ \gamma $ on $ \mathfrak{g} $ with values in $ \mathfrak{g}\tensor
\mathfrak{g} $. How does the situation change when $ \gamma $ is also a coboundary on $ \mathfrak{g} $ with values in $
\mathfrak{g}\tensor\mathfrak{g} $, i.e. if there is an element $ r\in C^{0}(\mathfrak{g},\mathfrak{g}\tensor
\mathfrak{g})=\mathfrak{g}\tensor\mathfrak{g} $ such that $ \gamma=\delta r $? And what conditions has an arbitrary
element $ r\in\mathfrak{g}\tensor\mathfrak{g} $ to obey to define a cocycle $ \gamma=\delta r $ such that $
(\mathfrak{g},\left[\cdot,\cdot\right],\gamma) $ becomes a Lie bialgebra? Because $ \delta^{2}=0 $ the element $ \delta
r $ will automatically be a cocycle, but conditions (\ref{eq3thomas}) and (\ref{eq4thomas}) do not have to be valid
in the general case. A powerful instrument to measure if the coJacobi identity is also given
is the so called \textit{classical Yang-Baxter map} $ \text{CYB}\colon\mathfrak{g}\tensor\mathfrak{g}\rightarrow
\mathfrak{g}\tensor\mathfrak{g}\tensor\mathfrak{g}=\mathfrak{g}^{\tensor 3} $ defined by
\begin{align}\label{CYBmap}
\text{CYB}(r)=\left[ r_{12},r_{13}\right]+\left[ r_{12},r_{23}\right]+\left[ r_{13},r_{23}\right],
\end{align}
for all elements $ r=\sum r_{1}\tensor r_{2}=r_{1}\tensor r_{2}\in\mathfrak{g}\tensor\mathfrak{g} $ (we use a notation
very similar to (\ref{eq8thomas})), where we defined
\begin{align*}
&r_{12}=r_{1}\tensor r_{2}\tensor 1,\\
&r_{13}=r_{1}\tensor 1\tensor r_{2},\\
&r_{23}=1\tensor r_{1}\tensor r_{2}\in\mathcal{U}(\mathfrak{g})^{\tensor 3}
\end{align*}
on the universal enveloping algebra $ \mathcal{U}(\mathfrak{g}) $ of $ \mathfrak{g} $.
The brackets in (\ref{CYBmap}) are intuitive abbreviations defined by
\begin{align*}
\left[ r_{12},r_{13}\right]&=\left[r_{1},r_{1}'\right]\tensor r_{2}\tensor r_{2}',\\
\left[ r_{12},r_{23}\right]&=r_{1}\tensor\left[r_{2},r_{1}'\right]\tensor r_{2}',\\
\left[ r_{13},r_{23}\right]&=r_{1}\tensor r_{1}'\tensor\left[r_{2},r_{2}'\right],
\end{align*}
where we denoted the second summation by $ r=r_{1}'\tensor r_{2}' $.
With this map we are able to determine whenever $ r $ gives rise to a Lie bialgebra via the above construction.

\begin{theorem}\label{the1thomas}
Given a Lie algebra $ (\mathfrak{g},\left[\cdot,\cdot\right]) $ and an element $ r\in\mathfrak{g}\tensor\mathfrak{g} $
that is skew-symmetric. The triple $ (\mathfrak{g},\left[\cdot,\cdot\right],
\delta r) $ is a Lie bialgbra if and only if $ \operatorname{CYB}(r)\in\mathfrak{g}^{\tensor 3} $ is
a $ \mathfrak{g} $-invariant element. The condition for $ \operatorname{CYB}(r) $ to be $ \mathfrak{g} $-invariant
reads
\begin{align}
\operatorname{ad}_{x}^{(3)}(\operatorname{CYB}(r))=\left[x,\operatorname{CYB}(r)\right]=0,
\end{align}
for all $ x\in\mathfrak{g} $.
\end{theorem}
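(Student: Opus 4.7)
The plan is to reduce the statement to a single tensor-algebra identity and then read off both implications. By construction $\gamma=\delta r$ is already a $1$-coboundary, hence in particular a $1$-cocycle, so the cocycle condition (\ref{eq5thomas}) of a Lie bialgebra structure holds automatically. Moreover, because $r$ is skew-symmetric and the adjoint action $\operatorname{ad}^{(2)}$ commutes with the flip $\tau\colon\mathfrak{g}\tensor\mathfrak{g}\to\mathfrak{g}\tensor\mathfrak{g}$, the image $\gamma(x)=\operatorname{ad}^{(2)}_{x}(r)\in\mathfrak{g}\wedge\mathfrak{g}$ for every $x\in\mathfrak{g}$, so condition (\ref{eq3thomas}) is satisfied. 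Consequently $(\mathfrak{g},[\cdot,\cdot],\delta r)$ is a Lie bialgebra if and only if the coJacobi identity (\ref{eq4thomas}) holds, and the whole task is to translate (\ref{eq4thomas}) into a condition on $\operatorname{CYB}(r)$.

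The central computation is to establish the identity
\begin{align}\label{plankey}
\operatorname{Alt}\bigl((\delta r\tensor 1)(\delta r(x))\bigr)=\operatorname{ad}^{(3)}_{x}\bigl(\operatorname{CYB}(r)\bigr)
\end{align}
for every $x\in\mathfrak{g}$. To prove it I would expand $\delta r(x)=[x\tensor 1+1\tensor x,\,r]$ using the notation $r=r_{1}\tensor r_{2}=r_{1}'\tensor r_{2}'$, apply $\delta r\tensor 1$ to obtain four families of triple tensors of the form $[[x,r_{1}],r_{1}']\tensor r_{2}'\tensor r_{2}$, $[r_{1},r_{1}']\tensor r_{2}'\tensor[x,r_{2}]$, and their partners, and then apply $\operatorname{Alt}$. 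The skew-symmetry of $r$ allows rewriting every term whose inner bracket contains only $r_{1},r_{1}',r_{2},r_{2}'$ so that it matches one of the legs of $r_{12},r_{13},r_{23}$, while the Jacobi identity of $\mathfrak{g}$ is used to move the slot carrying $x$ outside the inner bracket. The resulting bookkeeping yields precisely $[x_{1}+x_{2}+x_{3},\,[r_{12},r_{13}]+[r_{12},r_{23}]+[r_{13},r_{23}]]=\operatorname{ad}^{(3)}_{x}(\operatorname{CYB}(r))$, where $x_{i}$ denotes $x$ placed in the $i$-th slot.

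Once (\ref{plankey}) is in hand the theorem follows immediately. The coJacobi identity holds for every $x\in\mathfrak{g}$ if and only if the right-hand side of (\ref{plankey}) vanishes for every $x$, which is exactly the $\mathfrak{g}$-invariance condition $\operatorname{ad}^{(3)}_{x}(\operatorname{CYB}(r))=0$ stated in the theorem. Together with the automatic validity of the cocycle and skew-symmetry conditions this proves both directions.

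The main obstacle is the tensorial bookkeeping in the derivation of (\ref{plankey}): there are twelve terms produced by $\operatorname{Alt}((\delta r\tensor 1)\delta r(x))$, each of which must be matched with a specific summand in $\operatorname{ad}^{(3)}_{x}\operatorname{CYB}(r)$ after applying the Jacobi identity of $\mathfrak{g}$ and exploiting $r=-\tau(r)$. No conceptual step is hidden here, but keeping the cyclic sums and the position indices $12,13,23$ aligned requires care; a pictorial argument as in \cite[Section~3]{etingof2002lectures} makes the combinatorics transparent and would be the cleanest way to present the verification.
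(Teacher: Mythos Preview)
Your approach is essentially identical to the paper's: reduce everything to a single identity relating $\operatorname{Alt}((\delta r\tensor 1)\delta r(x))$ to $\operatorname{ad}^{(3)}_{x}(\operatorname{CYB}(r))$, then observe that the cocycle and skew-symmetry conditions are automatic so that the coJacobi identity is the only nontrivial requirement. The paper carries out exactly the twelve-term bookkeeping you describe, using the Jacobi identity and the skew-symmetry of $r$, and likewise cites the pictorial argument in \cite{etingof2002lectures}.

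One small point: the paper's explicit computation yields the identity with a minus sign,
\[
\operatorname{Alt}\bigl((\delta r\tensor 1)\delta r(x)\bigr)=-\,\operatorname{ad}^{(3)}_{x}\bigl(\operatorname{CYB}(r)\bigr),
\]
rather than the plus sign you wrote in (\ref{plankey}). This does not affect the conclusion, since vanishing of either side is equivalent, but if you actually write out the twelve terms you should expect this sign to appear.
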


\begin{proof}
These lines are inspired by \cite[Proposition~on~page~17]{kosmann2004integrability}. A geometric proof can be found in
\cite[Theorem~3.1]{etingof2002lectures}. Our aim is to prove for any $ x\in\mathfrak{g} $ the identity
\begin{align}\label{FormulaOfHorror}
\text{Alt}((\gamma\tensor 1)\gamma(x))=-\text{ad}_{x}^{(3)}(\text{CYB}(r))
\end{align}
by only using the skew-symmetry and Jacobi identity of $ \left[\cdot,\cdot\right] $, the skew-symmetry of $ r $ and
\begin{align}
\gamma(x)=\delta r(x)=\text{ad}_{x}^{(2)}r.
\end{align}
We start writing out the left side of Eq. (\ref{FormulaOfHorror}):
\begin{align*}
\gamma(x)=\text{ad}_{x}^{(2)}r=\left[x,r_{1}\right]\tensor r_{2}+r_{1}\tensor\left[x,r_{2}\right]
\end{align*}
implies
\begin{align*}
(\gamma\tensor 1)\gamma(x)&=\gamma(\left[x,r_{1}\right])\tensor r_{2}+\gamma(r_{1})\tensor\left[x,r_{2}\right]\\
&=\text{ad}_{\left[x,r_{1}\right]}^{(2)}r\tensor r_{2}+\text{ad}_{r_{1}}^{(2)}r\tensor\left[x,r_{2}\right]\\
&=\left[\left[x,r_{1}\right],r_{1}'\right]\tensor r_{2}'\tensor r_{2}
+r_{1}'\tensor\left[\left[x,r_{1}\right],r_{2}'\right]\tensor r_{2}\\
&+\left[r_{1},r_{1}'\right]\tensor r_{2}'\tensor\left[x,r_{2}\right]
+r_{1}'\tensor\left[r_{1},r_{2}'\right]\tensor\left[x,r_{2}\right].
\end{align*}
To separate the two summations over the tensor components of $ r $ we denoted the second one by
$ r_{1}'\tensor r_{2}' $.
After applying $ \text{Alt} $ we obtain twelve terms. The skew-symmetry of
$ \left[\cdot,\cdot\right] $ and $ r $ imply
\begin{align*}
\text{Alt}((\gamma\tensor 1)\gamma(x))&=\left[\left[x,r_{1}\right],r_{1}'\right]\tensor r_{2}'\tensor r_{2}
-r_{1}\tensor\left[\left[x,r_{2}\right],r_{1}'\right]\tensor r_{2}'
+r_{1}'\tensor r_{1}\tensor\left[\left[x,r_{2}\right],r_{2}'\right]\\
&+r_{1}'\tensor\left[\left[x,r_{1}\right],r_{2}'\right]\tensor r_{2}
-r_{1}\tensor r_{1}'\tensor\left[\left[x,r_{2}\right],r_{2}'\right]
-\left[\left[x,r_{1}\right],r_{1}'\right]\tensor r_{2}\tensor r_{2}'\\
&-\left[r_{1}',r_{1}\right]\tensor r_{2}'\tensor\left[x,r_{2}\right]
+\left[x,r_{2}\right]\tensor\left[r_{1},r_{1}'\right]\tensor r_{2}'
+r_{1}'\tensor\left[x,r_{2}\right]\tensor\left[r_{2}',r_{1}\right]\\
&-r_{1}'\tensor\left[r_{2}',r_{1}\right]\tensor\left[x,r_{2}\right]
-\left[x,r_{1}\right]\tensor r_{1}'\tensor\left[r_{2},r_{2}'\right]
-\left[r_{1},r_{1}'\right]\tensor\left[x,r_{2}\right]\tensor r_{2}'.
\end{align*}
We know that
\begin{align*}
\text{CYB}(r)=\left[r_{1},r_{1}'\right]\tensor r_{2}\tensor r_{2}'+r_{1}\tensor\left[r_{2},r_{1}'\right]\tensor r_{2}'
+r_{1}\tensor r_{1}'\tensor\left[r_{2},r_{2}'\right],
\end{align*}
so the right hand side of Eq. (\ref{FormulaOfHorror}) reads
\begin{align*}
\text{ad}_{x}^{(3)}(\text{CYB}(r))&=\left[x,\left[r_{1},r_{1}'\right]\right]\tensor r_{2}\tensor r_{2}'
+\left[r_{1},r_{1}'\right]\tensor\left[x,r_{2}\right]\tensor r_{2}'
+\left[r_{1},r_{1}'\right]\tensor r_{2}\tensor\left[x,r_{2}'\right]\\
&+\left[x,r_{1}\right]\tensor\left[r_{2},r_{1}'\right]\tensor r_{2}'
+r_{1}\tensor\left[x,\left[r_{2},r_{1}'\right]\right]\tensor r_{2}'
+r_{1}\tensor\left[r_{2},r_{1}'\right]\tensor\left[x,r_{2}'\right]\\
&+\left[x,r_{1}\right]\tensor r_{1}'\tensor\left[r_{2},r_{2}'\right]
+r_{1}\tensor\left[x,r_{1}'\right]\tensor\left[r_{2},r_{2}'\right]
+r_{1}\tensor r_{1}'\tensor\left[x,\left[r_{2},r_{2}'\right]\right].
\end{align*}
We apply the Jacobi identity to the terms with double Lie bracket and get
\begin{align*}
\text{ad}_{x}^{(3)}(\text{CYB}(r))&=\left[\left[x,r_{1}\right],r_{1}'\right]\tensor r_{2}\tensor r_{2}'
-\left[\left[x,r_{1}'\right],r_{1}\right]\tensor r_{2}\tensor r_{2}'\\
&+r_{1}\tensor\left[\left[x,r_{2}\right],r_{1}'\right]\tensor r_{2}'
-r_{1}\tensor\left[\left[x,r_{2}'\right],r_{1}\right]\tensor r_{2}'\\
&+r_{1}\tensor r_{1}'\tensor\left[\left[x,r_{2}\right],r_{2}'\right]
-r_{1}\tensor r_{1}'\tensor\left[\left[x,r_{2}'\right],r_{2}\right]\\
&+\left[r_{1},r_{1}'\right]\tensor r_{2}\tensor\left[x,r_{2}'\right]
-\left[x,r_{2}\right]\tensor\left[r_{1},r_{1}'\right]\tensor r_{2}'
-r_{1}\tensor\left[x,r_{2}'\right]\tensor\left[r_{2},r_{1}'\right]\\
&+r_{1}\tensor\left[r_{2},r_{1}'\right]\tensor\left[x,r_{2}'\right]
+\left[x,r_{1}\right]\tensor r_{1}'\tensor\left[r_{2},r_{2}'\right]
+\left[r_{1},r_{1}'\right]\tensor\left[x,r_{2}\right]\tensor r_{2}'.
\end{align*}
These are exactly the twelve terms that we obtained on the left side times minus one. We only have to interchange the
summation over the tensor components of $ r $ in some terms, i.e. we have to switch $ r_{1}\tensor r_{2} $ and
$ r_{1}'\tensor r_{2}' $, which is allowed since these summations are all finite. Formula (\ref{FormulaOfHorror}) is
proved. Finally, the element $ \delta r $ is a coboundary, i.e.
it fulfils the cocycle condition $ \delta^{2}r=0 $. Thus
$ (\mathfrak{g},\left[\cdot,\cdot\right],\delta r) $ is a Lie bialgebra if and only if $ \delta r $ satisfies
the coJacobi identity $ \text{Alt}((\delta r\tensor 1)\delta r(x))=0 $ for all $ x\in\mathfrak{g} $. Now that
Eq. (\ref{FormulaOfHorror}) holds this is the case if and only if $ \text{ad}_{x}^{(3)}(\text{CYB}(r))=0 $ for all
$ x\in\mathfrak{g} $. This concludes the proof.
\end{proof}

\begin{definition}[Coboundary Lie Bialgebra]\label{def8thomas}
The triple $ (\mathfrak{g},\left[\cdot,\cdot\right],r) $, where $ r\in\mathfrak{g}\tensor\mathfrak{g} $, is said to be
a coboundary Lie bialgebra if $ (\mathfrak{g},\left[\cdot,\cdot\right],\delta r) $ is a Lie bialgebra.
In this case $ r $ is called a
coboundary structure of $ (\mathfrak{g},\left[\cdot,\cdot\right],\delta r) $.
\end{definition}

\begin{remark}
By
Theorem \ref{the1thomas} $ (\mathfrak{g},\left[\cdot,\cdot\right],r) $ is a coboundary Lie bialgebra if and only if $
r $ is skew-symmetric and $ \operatorname{ad}_{x}^{(3)}(\text{CYB}(r))=0 $.
It is clear from the definition that a Lie bialgebra $ (\mathfrak{g},\left[\cdot,\cdot\right],\gamma) $ can have more
than one coboundary structure. If $ r\in\mathfrak{g}\tensor\mathfrak{g} $ is a coboundary structure of this coboundary
Lie bialgebra and $ \alpha\in\mathfrak{g}\wedge\mathfrak{g} $ is $ \mathfrak{g} $-invariant, i.e. $ \delta\alpha=0 $,
also $ r'=r+\alpha $ is a coboundary structure of $ (\mathfrak{g},\left[\cdot,\cdot\right],\gamma) $ because $ \delta
r'=\delta r+0=\gamma $. If on the other hand $ r $ and $ r' $ are two coboundary structures of
$ (\mathfrak{g},\left[\cdot,\cdot\right],\gamma) $ there is $ 0=\gamma-\gamma=\delta r-\delta r'=\delta(r-r') $,
i.e. there is an $ \mathfrak{g} $-invariant element $ \alpha\in\mathfrak{g}\wedge\mathfrak{g} $ such that
$ r'=r+\alpha $.
This  gives a one-to-one correspondence between coboundary structures $ r' $ that are different
from $ r $ and $ \mathfrak{g} $-invariant elements $ \alpha\in\mathfrak{g}\wedge\mathfrak{g} $ denoted by
$ (\Anti^{2}\mathfrak{g})^{\mathfrak{g}} $. 
\end{remark}

Following the spirit of this thesis we define the corresponding morphisms and substructures.

\begin{definition}
Consider two coboundary Lie bialgebras $ (\mathfrak{g},\left[\cdot,\cdot\right],r) $ and $ (\mathfrak{g}',\left
[\cdot,\cdot\right]',r') $.
\begin{compactenum}
\item A morphism of Lie bialgebras $ \phi\colon(\mathfrak{g},\left[\cdot,\cdot\right],r)\rightarrow(\mathfrak{g}',
\left[\cdot,\cdot\right]',r') $ is said to be a \textbf{morphism of coboundary Lie bialgebras} if $ \phi $ respects
the coboundary structures, i.e.
\begin{align}
(\phi\tensor\phi)(r)=r'.
\end{align}
\item We call $ (\mathfrak{h},\left[\cdot,\cdot\right],r) $ a \textbf{coboundary Lie subbialgebra} of $ (\mathfrak{g},
\left[\cdot,\cdot\right],r) $ if $ \mathfrak{h}\subseteq\mathfrak{g} $ is a Lie subbialgebra such that $ r\in\bigwedge
^{2}\mathfrak{h} $.
\end{compactenum}
\end{definition}

What follows are several examples of types of coboundary  structures of a Lie bialgebra. We start with the most
general one.

\begin{definition}[Quasitriangular Lie Bialgebra]\label{def1thomas}
A triple $ (\mathfrak{g},\left[\cdot,\cdot\right],r) $, where $ r\in\mathfrak{g}\tensor\mathfrak{g} $, is called a
quasitriangular Lie bialgebra if the following three conditions are satisfied,
\begin{compactenum}
\item $ (\mathfrak{g},\left[\cdot,\cdot\right],\delta r) $ is a Lie bialgebra,
\item $ \operatorname{CYB}(r)=0 $,
\item $ r+\sigma(r) $ is $ \mathfrak{g} $-invariant, i.e. $ \operatorname{ad}_{x}^{(2)}(r+\sigma(r))=0 $ for all
$ x\in\mathfrak{g} $, where $ \sigma\colon\mathfrak{g}\tensor\mathfrak{g}\rightarrow\mathfrak{g}\tensor\mathfrak{g} $
is the map that flips the tensor components.
\end{compactenum}
We call $ r\in\mathfrak{g}\tensor\mathfrak{g} $ a quasitriangular structure for a Lie bialgebra $
(\mathfrak{g},\left[\cdot,\cdot\right],\gamma) $ if $ \gamma=\delta r $ and conditions ii) and iii) are satisfied.
\end{definition}

Quasitriangular Lie bialgebras are examples of coboundary Lie bialgebras. Indeed if $ (\mathfrak{g},
\left[\cdot,\cdot\right],r) $ is a quasitriangular Lie bialgebra define $ r'=\frac{1}{2}(r-\sigma(r)) $ to be the
skew-symmetric part of $ r $. Because of condition iii.) in Definition \ref{def1thomas} $ \delta r=\delta r-
\frac{1}{2}\delta(r+\sigma(r))=\delta r' $. The skew-symmetric element $ r' $ satisfies
\begin{align}
\text{ad}_{x}^{(3)}(\text{CYB}(r'))=0
\end{align}
and, for this reason it is a coboundary structure of $ (\mathfrak{g},\left[\cdot,\cdot\right],\delta r) $. Conversely,
one can check
that a coboundary Lie bialgebra $ (\mathfrak{g},\left[\cdot,\cdot\right],r') $ has a quasitriangular structure if and
only if there is an element $ T\in(\Sym^{2}\mathfrak{g})^{\mathfrak{g}} $ such that
\begin{align}
\text{CYB}(r')=\frac{1}{4}\left[ T_{12},T_{23}\right].
\end{align}
In this case the quasitriangular structure can be chosen by $ r=r'+\frac{1}{2}T $
(c.f. \cite[Remark~page~31]{etingof2002lectures}).
A more specific type of coboundary Lie bialgebras is defined below. In the following chapters
we are especially interested in this kind of Lie bialgebra.

\begin{definition}[Triangular Lie Bialgebra]\label{def4thomas}
A coboundary Lie bialgebra $ (\mathfrak{g},\left[\cdot,\cdot\right],r) $ is said to be a triangular Lie
bialgebra if $ \operatorname{CYB}(r)=0 $. Moreover an element $ r\in\mathfrak{g}\tensor\mathfrak{g} $ is said to be a
triangular structure on a Lie bialgebra $ (\mathfrak{g},\left[\cdot,\cdot\right],\gamma) $ if the following
three conditions are satisfied,
\begin{compactenum}
\item $ \delta r=\gamma $,
\item $ \operatorname{CYB}(r)=0 $,
\item $ r-\sigma(r)=0 $, i.e. $ r $ is skew-symmetric or $ r\in\Anti^{2}\mathfrak{g} $.
\end{compactenum}
In this case $ r $ is said to be a \textbf{$ r $-matrix} of $ \mathfrak{g} $. Condition ii.) is called the
\textbf{classical Yang-Baxter equation}.
\end{definition}

It follows from the definition above that triangular Lie bialgebras, and as a consequence also $ r $-matrices, are in
one-to-one correspondence to solutions of the classical Yang-Baxter equation in $ \Anti^{2}\mathfrak{g} $.
Furthermore, every triangular structure is quasitriangular and every triangular Lie bialgebra has also the
structure of a quasitriangular Lie bialgebra. As a first example of triangular Lie bialgebras and $ r $-matrices we
go back to Example \ref{ExampleXYBialg}. There we endowed any two-dimensional noncommutative Lie algebra with a Lie
bialgebra structure.

\begin{example}
Consider the Lie algebra $ (\mathfrak{g},\left[\cdot,\cdot\right]) $ over a field $ \mathbb{k} $ of characteristic
zero with basis elements $ X,Y\in\mathfrak{g} $ that satisfy $ \left[X,Y\right]=X $. Then define
\begin{align}
r=X\wedge Y=X\tensor Y-Y\tensor X.
\end{align}
We want to prove that this is an $ r $-matrix on $ (\mathfrak{g},\left[\cdot,\cdot\right]) $. The skew-symmetry is
evident. Moreover $ r $ is a solution of the classical Yang-Baxter equation on $ \mathfrak{g} $:
\begin{align*}
\text{CYB}(r)&=\left[r_{12},r_{13}\right]+\left[r_{12},r_{23}\right]+\left[r_{13},r_{23}\right]\\
&=\left[X,X\right]\tensor Y\tensor Y-\left[X,Y\right]\tensor Y\tensor X-\left[Y,X\right]\tensor X\tensor Y
+\left[Y,Y\right]\tensor X\tensor X\\
&+X\tensor\left[Y,X\right]\tensor Y-X\tensor\left[Y,Y\right]\tensor X-Y\tensor\left[X,X\right]\tensor Y
+Y\tensor\left[X,Y\right]\tensor X\\
&+X\tensor X\tensor\left[Y,Y\right]-X\tensor Y\tensor\left[Y,X\right]-Y\tensor X\tensor\left[X,Y\right]
+Y\tensor Y\tensor\left[X,X\right]\\
&=-X\tensor Y\tensor X+X\tensor X\tensor Y-X\tensor X\tensor Y+Y\tensor X\tensor X\\
&+X\tensor Y\tensor X-Y\tensor X\tensor X\\
&=0.
\end{align*}
We recover the Lie bialgebra structure $ \gamma $ given in Example \ref{ExampleXYBialg} because
\begin{align*}
\delta r(X)&=\text{ad}_{X}^{(2)}(r)=\text{ad}_{X}^{(2)}(X\tensor Y)-\text{ad}_{X}^{(2)}(Y\tensor X)\\
&=\left[X,X\right]\tensor Y+X\tensor\left[X,Y\right]-\left[X,Y\right]\tensor X-Y\tensor\left[X,X\right]\\
&=X\tensor X-X\tensor X\\
&=0
\end{align*}
and
\begin{align*}
\delta r(Y)&=\text{ad}_{Y}^{(2)}(r)=\text{ad}_{Y}^{(2)}(X\tensor Y)-\text{ad}_{Y}^{(2)}(Y\tensor X)\\
&=\left[Y,X\right]\tensor Y+X\tensor\left[Y,Y\right]-\left[Y,Y\right]\tensor X-Y\tensor\left[Y,X\right]\\
&=Y\tensor X-X\tensor Y\\
&=-X\wedge Y,
\end{align*}
i.e. $ \gamma=\delta r $. Thus $ (\mathfrak{g},\left[\cdot,\cdot\right],r) $ is a triangular Lie bialgebra,
$ r $ is a triangular structure on $ (\mathfrak{g},\left[\cdot,\cdot\right],\gamma) $ and a $ r $-matrix
on $ \mathfrak{g} $.
\end{example}

\begin{remark}
If $ \phi\colon\mathfrak{g}\rightarrow\mathfrak{a} $ is a
Lie algebra homomorphism to another Lie algebra $ \mathfrak{a} $ and $ r $ a triangular structures on $ \mathfrak{g} $,
then $ (\phi\tensor\phi)r $ is a triangular structure on $ \mathfrak{a} $.
We prove this in Lemma \ref{LemRMatrixHom}.
This is not true for an arbitrary
coboundary structure. For example the image of a quasitriangular structure under a Lie algebra homomorphism does not
need to be a quasitriangular structure (see e.g. \cite[page~29]{etingof2002lectures}).
\end{remark}

As promised before we pass from the infinitesimal world of Lie bialgebras to the global world of Poisson-Lie Groups,
inspired by Theorem \ref{TheoremDrinFunctor}.

\begin{proposition}
Let $ r\in\mathfrak{g}\wedge\mathfrak{g} $ be an $ r $-matrix on $ \mathfrak{g} $. The elements $ \pi_{r}^{\lambda} $
and $ \pi_{r}^{\rho} $ defined for $ x\in\mathfrak{g} $ by
\begin{align}
\pi_{r}^{\lambda}(x)=(T_{e}\lambda_{x}\tensor T_{e}\lambda_{x})r
\end{align}
and
\begin{align}
\pi_{r}^{\rho}(x)=(T_{e}\rho_{x}\tensor T_{e}\rho_{x})r
\end{align}
are Poisson bivectors.
Moreover,
\begin{align}
\pi=\pi_{r}^{\rho}-\pi_{r}^{\lambda}
\end{align}
is a Poisson-Lie structure on $ G $ and it coincides with the one obtained in Theorem \ref{TheoremDrinFunctor}.
\end{proposition}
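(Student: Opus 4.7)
The plan is to treat the two bivector fields $\pi_r^\lambda$ and $\pi_r^\rho$ separately, then check that the difference is both Poisson and satisfies the multiplicativity property of a Poisson-Lie structure, and finally linearize at $e$ to recover $\delta r$.

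First, I would write $r=\sum_i a_i\wedge b_i$ with $a_i,b_i\in\mathfrak{g}$ and introduce the left- and right-invariant vector fields $a_i^\lambda,b_i^\lambda$ and $a_i^\rho,b_i^\rho$, so that
\begin{align}
\pi_r^\lambda=\sum_i a_i^\lambda\wedge b_i^\lambda,\qquad \pi_r^\rho=\sum_i a_i^\rho\wedge b_i^\rho.
\end{align}
Using the derivation property of the Schouten bracket on multivector fields together with the fact that the Lie bracket of left-invariant vector fields is the left-invariant extension of the Lie bracket in $\mathfrak{g}$ (and analogously, with a sign, for the right-invariant ones), one obtains that $[\pi_r^\lambda,\pi_r^\lambda]_S$ is a left-invariant $3$-vector field whose value at $e$ is (up to the standard combinatorial factor) the image of $\mathrm{CYB}(r)\in\mathfrak{g}^{\otimes 3}$ in $\Anti^3\mathfrak{g}$; the same identity holds on the right with the opposite sign. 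Since $r$ is a triangular structure we have $\mathrm{CYB}(r)=0$, so both $\pi_r^\lambda$ and $\pi_r^\rho$ are Poisson. For $\pi=\pi_r^\rho-\pi_r^\lambda$ we expand
\begin{align}
[\pi,\pi]_S=[\pi_r^\rho,\pi_r^\rho]_S-2[\pi_r^\rho,\pi_r^\lambda]_S+[\pi_r^\lambda,\pi_r^\lambda]_S,
\end{align}
and the cross-term vanishes because left-invariant and right-invariant vector fields commute, $[a^\lambda,b^\rho]=0$, a standard consequence of the associativity of the group multiplication. Hence $[\pi,\pi]_S=0$.

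Next I would verify multiplicativity. Since $\rho_h\circ\rho_g=\rho_{gh}$ and $\lambda_g\circ\lambda_h=\lambda_{gh}$, the chain rule gives
\begin{align}
\pi_r^\rho(gh)=(T_g\rho_h\otimes T_g\rho_h)\,\pi_r^\rho(g),\qquad \pi_r^\lambda(gh)=(T_h\lambda_g\otimes T_h\lambda_g)\,\pi_r^\lambda(h).
\end{align}
The desired identity $\pi(gh)=(T_g\rho_h\otimes T_g\rho_h)\pi(g)+(T_h\lambda_g\otimes T_h\lambda_g)\pi(h)$ then reduces to
\begin{align}
(T_g\rho_h\otimes T_g\rho_h)\pi_r^\lambda(g)=(T_h\lambda_g\otimes T_h\lambda_g)\pi_r^\rho(h),
\end{align}
which follows immediately from the commutation relation $\rho_h\circ\lambda_g=\lambda_g\circ\rho_h$ applied at the identity in each tensor factor. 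Therefore $\pi$ is a Poisson-Lie structure on $G$; in particular $\pi(e)=0$ as it must be.

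For the final claim I would linearize $\pi$ at $e$. Because $\pi$ vanishes at $e$, the intrinsic derivative $T_e\pi\colon\mathfrak{g}\to\Anti^2\mathfrak{g}$ is well defined, and by Drinfel\textquoteright d's theorem it coincides with the cocycle of the Lie bialgebra $\operatorname{Drin}(G,\pi)$. Taking $\xi\in\mathfrak{g}$ and differentiating along $\exp(t\xi)$ at $t=0$, the chain rule applied to $\pi_r^\rho(\exp(t\xi))=(T_e\rho_{\exp(t\xi)}\otimes T_e\rho_{\exp(t\xi)})r$ and its left counterpart yields
\begin{align}
T_e\pi(\xi)=\operatorname{ad}_\xi^{(2)}(r)=\delta r(\xi),
\end{align}
where the sign arises from the fact that left-translating along $\exp(t\xi)$ produces $+\operatorname{ad}_\xi$ while right-translating produces $-\operatorname{ad}_\xi$ on each tensor slot, so that the difference $\pi_r^\rho-\pi_r^\lambda$ linearizes to $\operatorname{ad}_\xi^{(2)}$. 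This matches the coboundary cocycle of $(\mathfrak{g},[\cdot,\cdot],\delta r)$, so the Poisson-Lie structure produced here coincides with the one obtained from Theorem \ref{TheoremDrinFunctor}.

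The main obstacle is the Schouten bracket computation in step one: one has to be careful with the sign conventions distinguishing left- and right-invariant vector fields, and verify cleanly that evaluating at $e$ the resulting trivector reproduces $\mathrm{CYB}(r)$. Once this is in place, the multiplicativity and the linearization are essentially formal consequences of the group-theoretic identities $\rho_h\lambda_g=\lambda_g\rho_h$ and of the definitions.
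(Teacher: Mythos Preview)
Your argument is correct and follows the standard route: identify $[\pi_r^\lambda,\pi_r^\lambda]_S$ and $[\pi_r^\rho,\pi_r^\rho]_S$ with the left- and right-invariant extensions of $\mathrm{CYB}(r)$, kill the cross term via $[a^\lambda,b^\rho]=0$, verify multiplicativity from $\rho_h\lambda_g=\lambda_g\rho_h$, and linearize at $e$ to recover $\delta r$. This is exactly the approach taken in the references the paper cites.

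Note, however, that the paper does not give its own proof of this proposition at all; it simply refers the reader to \cite[Proposition~3.1]{etingof2002lectures} and \cite[page~48]{kosmann2004integrability}. Your write-up is therefore more detailed than what the thesis contains, and essentially reconstructs the argument found in those sources.
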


For a proof consider \cite[Proposition~3.1]{etingof2002lectures} and \cite[page~48]{kosmann2004integrability}.
We see that if $ r\neq 0 $ it follows that $ \pi_{r}^{\lambda}(e)=\pi_{r}^{\rho}(e)=r\neq 0 $. According to
Eq. (\ref{eq9thomas}) $ \pi_{r}^{\lambda} $ and $ \pi_{r}^{\rho} $ are not Poisson-Lie structures while
$ \pi=\pi_{r}^{\rho}-\pi_{r}^{\lambda} $ is, according to the last theorem.

\section{The Etingof-Schiffmann Subalgebra}\label{SectionEtingof}

In this section we discuss non-degeneracy of $ r $-matrices.
Remark that there is a one-to-one correspondence between non-degenerate $ r $-matrices on $ \mathfrak{g} $ and
non-degenerate $ 2 $-cocycles on $ \mathfrak{g} $ corresponding to the trivial representation of $ \mathfrak{g} $
on $ \mathbb{C} $ (c.f. \cite[Proposition~3.3]{etingof2002lectures}). In other words, there is a symplectic
structure on $ \mathfrak{g} $ if and only if there is a non-degenerate $ r $-matrix on $ \mathfrak{g} $. Thus
the classification of non-degenerate $ r $-matrices gets easier for small $ H^{2}(\mathfrak{g},\mathbb{C}) $.

\begin{proposition}\label{PropRmatrixSimpleDeg}
Let $ \mathfrak{g} $ be a complex semisimple Lie algebra. Then there are no non-degenerate triangular structures on
$ \mathfrak{g} $.
\end{proposition}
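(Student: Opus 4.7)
The plan is to translate non-degeneracy into a cohomological statement and apply Whitehead's Second Lemma to rule it out. Suppose for contradiction that $r\in\mathfrak{g}\wedge\mathfrak{g}$ is a non-degenerate triangular structure on $\mathfrak{g}$. By the Etingof-Schiffmann correspondence cited just before the proposition, non-degenerate $r$-matrices on $\mathfrak{g}$ are in bijection with non-degenerate alternating $2$-cocycles $\omega\in Z_{2}(\mathfrak{g},\mathbb{C})$ with respect to the trivial representation of $\mathfrak{g}$ on $\mathbb{C}$. So it suffices to show that no such non-degenerate $2$-cocycle can exist.

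Here is where Whitehead's Second Lemma (Theorem \ref{Whitehead}) does the heavy lifting: since $\mathfrak{g}$ is complex semisimple and $\mathbb{C}$ is a finite-dimensional $\mathfrak{g}$-module, $H^{2}(\mathfrak{g},\mathbb{C})=\{0\}$. Thus every $2$-cocycle is a coboundary, and in particular there exists $b\in C^{1}(\mathfrak{g},\mathbb{C})=\mathfrak{g}^{*}$ with $\omega=\delta b$. Since the action of $\mathfrak{g}$ on $\mathbb{C}$ is trivial, the Chevalley-Eilenberg formula collapses to
\begin{align*}
\omega(x,y)=-b([x,y]),\qquad x,y\in\mathfrak{g}.
\end{align*}

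Next I would use the Killing form $\kappa$, which is non-degenerate on the complex semisimple Lie algebra $\mathfrak{g}$, to write $b=\kappa(z,\cdot\,)$ for a unique $z\in\mathfrak{g}$. The $\mathrm{ad}$-invariance $\kappa(z,[x,y])=\kappa([z,x],y)$ then yields
\begin{align*}
\omega(x,y)=-\kappa([z,x],y).
\end{align*}
Setting $x=z$ and using $[z,z]=0$ gives $\omega(z,y)=0$ for every $y\in\mathfrak{g}$, so $z$ lies in the radical of $\omega$. If $z\neq 0$, this directly contradicts the non-degeneracy of $\omega$. If instead $z=0$, then $b=0$, hence $\omega=0$, which is degenerate since $\dim\mathfrak{g}>0$. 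Either alternative produces a contradiction, so no non-degenerate triangular structure on $\mathfrak{g}$ can exist.

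The main obstacle is really just linking the geometry of $r$-matrices to Lie algebra cohomology; once the correspondence with $2$-cocycles is invoked, Whitehead's Lemma together with the universal identity $[z,z]=0$ finishes the proof in two lines. No delicate case analysis (e.g.\ by rank or by simple factor) is needed.
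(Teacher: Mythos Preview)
Your proof is correct and follows essentially the same route as the paper's: both invoke the correspondence between non-degenerate $r$-matrices and non-degenerate $2$-cocycles, apply Whitehead's Lemma to write $\omega$ as a coboundary of some $f\in\mathfrak{g}^{*}$, realize $f$ via the non-degenerate Killing form as $\kappa(z,\cdot)$, and then use $\operatorname{ad}$-invariance together with $[z,z]=0$ to exhibit $z$ in the radical of $\omega$. The only difference is cosmetic (a sign convention and your explicit treatment of the $z=0$ case, which the paper leaves implicit).
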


\begin{proof}
We follow \cite[Proposition~5.2]{etingof2002lectures}. Assume there is a non-degenerate $ r $-matrix
$ r\in\Anti^{2}\mathfrak{g} $ and consider the corresponding symplectic form
$ \omega\in\Anti^{2}\mathfrak{g}^{*} $. Now $ \mathfrak{g} $ is semisimple, then by Whitehead's Lemma
(see Theorem \ref{Whitehead}) one has
$ H^{2}(\mathfrak{g},\mathbb{C})=\left\lbrace 0\right\rbrace $, i.e. we find a $ 1 $-cocylce $ f\in\mathfrak{g}^{*} $
such that
\begin{align}
\omega(x,y)=\delta f(x,y)=f(\left[x,y\right])
\end{align}
for any $ x,y\in\mathfrak{g} $. On the other hand, the Killing form $ \kappa\in\bigwedge^{2}\mathfrak{g}^{*} $
is non-degenerate on any semisimple Lie algebra. One can define a flat map
\begin{align}
\flat\colon\mathfrak{g}\ni x\mapsto\kappa(x,\cdot)\in\mathfrak{g}^{*}
\end{align}
with respect to $ \kappa $ with inverse map $ \sharp\colon\mathfrak{g}^{*}\rightarrow\mathfrak{g} $. Let
$ x,y\in\mathfrak{g} $ be arbitrary and define $ z=-f^{\sharp} $. Then
\begin{align*}
\omega(x,y)=f(\left[x,y\right])=\kappa(f^{\sharp},\left[x,y\right])=\kappa(\left[x,y\right],z)
\overset{(\ast)}{=}\kappa(x,\left[y,z\right]),
\end{align*}
where we use the associativity of $ \kappa $ (see Proposition \ref{PropKilling} ii.)) in $ (\ast) $. If we set
$ y=z $ one has $ \omega(x,z)=\kappa(x,\left[z,z\right])=0 $ for all $ x\in\mathfrak{g} $ and $ \omega $
is degenerate, which is a contradiction.
\end{proof}

However, one can always find a Lie subalgebra $ \mathfrak{h}\subseteq\mathfrak{g} $
such that a $ r $-matrix on $ \mathfrak{g} $ is non-degenerate viewed as an element of $ \mathfrak{h}\wedge
\mathfrak{h} $. For this we need the classical Yang-Baxter equation in coordinate form: let $ \mathfrak{g} $ be a
finite-dimensional Lie algebra with basis $ e_{1},\ldots,e_{n}\in\mathfrak{g} $. Since
$ \left[ e_{i},e_{j}\right]\in\mathfrak{g} $ there are unique numbers $ C_{ij}^{k}\in\mathbb{k} $, called the
\textit{structure constants}, such that
\begin{align}
\left[ e_{i},e_{j}\right]=\sum\limits_{k=1}^{n}C_{ij}e_{k},
\end{align}
for all $ i,j\in\left\lbrace 1,\ldots,n\right\rbrace $. Now let $ r=\frac{1}{2}\sum\limits_{i,j=1}^{n}r^{ij}e_{i}\wedge
e_{j}\in\mathfrak{g}\wedge\mathfrak{g} $ be a $ r $-matrix.

\begin{lemma}
The classical Yang-Baxter equation $ \operatorname{CYB}(r)=0 $ is equivalent to the equations
\begin{align}\label{eq16t}
\sum\limits_{j,k=1}^{n}(r^{jm}r^{kl}C_{jk}^{i}+r^{ij}r^{mk}C_{jk}^{l}+r^{ij}r^{kl}C_{jk}^{m})=0,
\end{align}
where $ i,m,l\in\left\lbrace 1,\ldots,n\right\rbrace $.
\end{lemma}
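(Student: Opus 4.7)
The plan is to expand both sides in the chosen basis $e_{1},\ldots,e_{n}$ of $\mathfrak{g}$ and match coefficients, using that $\{e_{i}\tensor e_{m}\tensor e_{l}\}_{i,m,l}$ is a basis of $\mathfrak{g}^{\tensor 3}$ so the vanishing of $\operatorname{CYB}(r)\in\mathfrak{g}^{\tensor 3}$ is equivalent to the vanishing of every coefficient in this basis.

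First I would rewrite $r$ without the wedge. Since $e_{i}\wedge e_{j}=e_{i}\tensor e_{j}-e_{j}\tensor e_{i}$, antisymmetrising the coefficients (set $r^{ij}:=-r^{ji}$, which costs nothing as $r$ only pairs with antisymmetric tensors) gives
\begin{align}
r=\sum_{i,j=1}^{n}r^{ij}\,e_{i}\tensor e_{j}.
\end{align}
Accordingly one has
\begin{align*}
r_{12}&=\sum_{i,j}r^{ij}\,e_{i}\tensor e_{j}\tensor 1,\\
r_{13}&=\sum_{i,j}r^{ij}\,e_{i}\tensor 1\tensor e_{j},\\
r_{23}&=\sum_{i,j}r^{ij}\,1\tensor e_{i}\tensor e_{j},
\end{align*}
inside $\mathcal{U}(\mathfrak{g})^{\tensor 3}$, and the three brackets appearing in $\operatorname{CYB}(r)$ act only in the tensor slots indicated by the subscripts.

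Next I would compute each of the three commutators by plugging in the Lie bracket of the basis, $[e_{a},e_{b}]=\sum_{p}C_{ab}^{p}e_{p}$. For instance
\begin{align*}
\left[r_{12},r_{13}\right]
&=\sum_{i,j,k,l}r^{ij}r^{kl}\,[e_{i},e_{k}]\tensor e_{j}\tensor e_{l}
=\sum_{i,j,k,l,p}r^{ij}r^{kl}C_{ik}^{p}\,e_{p}\tensor e_{j}\tensor e_{l},
\end{align*}
and analogously $[r_{12},r_{23}]$ produces a factor $C_{jk}^{p}$ in the middle slot while $[r_{13},r_{23}]$ produces $C_{jl}^{p}$ in the third slot. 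Reading off the coefficient of $e_{i}\tensor e_{m}\tensor e_{l}$ (after harmlessly relabelling the dummy summation indices $j,k$) in each of the three contributions yields respectively $\sum_{j,k}r^{jm}r^{kl}C_{jk}^{i}$, $\sum_{j,k}r^{ij}r^{kl}C_{jk}^{m}$ and $\sum_{j,k}r^{ij}r^{mk}C_{jk}^{l}$.

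Finally, since the monomials $e_{i}\tensor e_{m}\tensor e_{l}$ form a basis of $\mathfrak{g}^{\tensor 3}$, the statement $\operatorname{CYB}(r)=0$ is equivalent to the vanishing of the sum of the three coefficients for every triple $(i,m,l)$, which is exactly Eq.~(\ref{eq16t}). The only mild subtlety is the bookkeeping of dummy indices and the initial antisymmetrisation of $r^{ij}$; everything else is routine linear algebra, so no real obstacle is expected.
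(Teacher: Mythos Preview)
Your proposal is correct and follows essentially the same approach as the paper: rewrite $r=\sum_{i,j}r^{ij}e_{i}\tensor e_{j}$ using the antisymmetry $r^{ij}=-r^{ji}$, expand each of the three commutators $[r_{12},r_{13}]$, $[r_{12},r_{23}]$, $[r_{13},r_{23}]$ in the basis $e_{i}\tensor e_{m}\tensor e_{l}$ of $\mathfrak{g}^{\tensor 3}$, and read off the coefficients. The paper carries out the computation of $[r_{12},r_{23}]$ in detail and states the other two analogously, exactly as you do for $[r_{12},r_{13}]$.
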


\begin{proof}
Since $ r^{ij}=-r^{ji}\in\mathbb{k} $ one has $ r=\sum\limits_{i,j=1}^{n}r^{ij}e_{i}\tensor e_{j} $. Then
\begin{align*}
\left[ r_{12},r_{23}\right]&=\left[\sum\limits_{i,j=1}^{n}r^{ij}e_{i}\tensor e_{j}\tensor 1,
\sum\limits_{k,l=1}^{n}r^{kl} 1\tensor e_{k}\tensor e_{l} \right]\\
&=\sum\limits_{i,j,k,l=1}^{n}r^{ij}r^{kl}\left[ e_{i}\tensor e_{j}\tensor 1,
1\tensor e_{k}\tensor e_{l} \right]\\
&=\sum\limits_{i,j,k,l=1}^{n}r^{ij}r^{kl}e_{i}\tensor\left[ e_{j},e_{k}\right]\tensor e_{l}\\
&=\sum\limits_{i,j,k,l,m=1}^{n}r^{ij}r^{kl}C_{jk}^{m}e_{i}\tensor e_{m}\tensor e_{l}.
\end{align*}
Similarly one gets
\begin{align*}
\left[ r_{13},r_{23}\right]=\sum\limits_{i,j,k,l,m=1}^{n}r^{ij}r^{mk}C_{jk}^{l}e_{i}\tensor e_{m}\tensor e_{l},\\
\left[ r_{12},r_{13}\right]=\sum\limits_{i,j,k,l,m=1}^{n}r^{jm}r^{kl}C_{jk}^{i}e_{i}\tensor e_{m}\tensor e_{l}.
\end{align*}
Thus
\begin{align*}
\text{CYB}(r)&=\left[ r_{12},r_{13}\right]+\left[ r_{13},r_{23}\right]+\left[ r_{12},r_{23}\right]\\
&=\sum\limits_{i,j,k,l,m=1}^{n}(r^{jm}r^{kl}C_{jk}^{i}+r^{ij}r^{mk}C_{jk}^{l}
+r^{ij}r^{kl}C_{jk}^{m})e_{i}\tensor e_{m}\tensor e_{l}
\end{align*}
and the classical Yang-Baxter equation $ \operatorname{CYB}(r)=0 $ is equivalent to
\begin{align*}
\sum\limits_{j,k=1}^{n}(r^{jm}r^{kl}C_{jk}^{i}+r^{ij}r^{mk}C_{jk}^{l}+r^{ij}r^{kl}C_{jk}^{m})=0
\text{ for all }i,m,l\in\left\lbrace 1,\ldots,n\right\rbrace.
\end{align*}
\end{proof}
This result allows the following

\begin{proposition}\label{PropEtingofSub}
Let $ r\in\mathfrak{g}\wedge\mathfrak{g} $ be a $ r $-matrix. Then the set
\begin{align}\label{eq15t}
\mathfrak{h}_{r}=\left\lbrace (f\tensor 1)r\in\mathfrak{g}~|~f\in\mathfrak{g}^{*}\right\rbrace
\end{align}
is a Lie subalgebra of $ \mathfrak{g} $ and $ r\in\mathfrak{h}_{r}\wedge\mathfrak{h}_{r}\subseteq\mathfrak{g}\wedge
\mathfrak{g} $ is non-degenerate in $ \mathfrak{h}_{r}\wedge\mathfrak{h}_{r} $.
\end{proposition}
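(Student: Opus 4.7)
The plan is as follows. First, $\mathfrak{h}_{r}$ is automatically a linear subspace of $\mathfrak{g}$, being the image of the linear map $\mathfrak{g}^{*}\ni f\mapsto(f\tensor 1)r\in\mathfrak{g}$. What must be verified is closure under the Lie bracket, the inclusion $r\in\mathfrak{h}_{r}\wedge\mathfrak{h}_{r}$, and non-degeneracy of $r$ viewed as an element of $\mathfrak{h}_{r}\wedge\mathfrak{h}_{r}$.

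The only step that really invokes the classical Yang-Baxter equation is closure under the bracket; this is where I expect the main work to sit. Write $r=\sum_{i}a_{i}\tensor b_{i}$ and pick two elements $x=(f\tensor 1)r=\sum f(a_{i})b_{i}$ and $y=(g\tensor 1)r=\sum g(a_{j})b_{j}$ with $f,g\in\mathfrak{g}^{*}$. Then
\begin{align*}
[x,y]=\sum_{i,j}f(a_{i})g(a_{j})[b_{i},b_{j}].
\end{align*}
Applying $f\tensor g\tensor 1$ to the identity $\operatorname{CYB}(r)=0$ rewritten as $[r_{13},r_{23}]=-[r_{12},r_{13}]-[r_{12},r_{23}]$ yields
\begin{align*}
[x,y]=-\sum f([a_{i},a_{j}])g(b_{i})b_{j}-\sum f(a_{i})g([b_{i},a_{j}])b_{j},
\end{align*}
and each of the right-hand summands is of the form $(\alpha\tensor 1)r$ respectively $(\beta\tensor 1)r$ for the linear functionals $\alpha(z)=-\sum f([a_{i},z])g(b_{i})$ and $\beta(z)=-\sum f(a_{i})g([b_{i},z])$ on $\mathfrak{g}$. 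Hence $[x,y]\in\mathfrak{h}_{r}$. Equivalently, one can grind through the coordinate form~(\ref{eq16t}) and use $r^{ij}=-r^{ji}$ to identify the resulting vector in $e_{i}$-coordinates with a linear combination of rows of the matrix $(r^{ij})$.

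The remaining two assertions follow from pure linear algebra using the skew-symmetry of $r$. Because $r\in\Anti^{2}\mathfrak{g}$, the ``right'' contraction $f\mapsto(1\tensor f)r$ equals the negative of $f\mapsto(f\tensor 1)r$, so its image coincides with $\mathfrak{h}_{r}$. Writing $r=\sum_{k=1}^{N}a_{k}\tensor b_{k}$ with the $a_{k}$ chosen linearly independent, the span of the $a_{k}$ equals the image of the right contraction and hence coincides with $\mathfrak{h}_{r}$, while each $b_{k}$ lies in $\mathfrak{h}_{r}$ by definition; so $r\in\mathfrak{h}_{r}\tensor\mathfrak{h}_{r}$, and skew-symmetry upgrades this to $r\in\mathfrak{h}_{r}\wedge\mathfrak{h}_{r}$. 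For non-degeneracy I observe that the induced map $\mathfrak{h}_{r}^{*}\to\mathfrak{h}_{r}$, $\phi\mapsto(\phi\tensor 1)r$, is surjective: any $y=(f\tensor 1)r\in\mathfrak{h}_{r}$ is hit by the restriction $\phi=f|_{\mathfrak{h}_{r}}$, since $r\in\mathfrak{h}_{r}\tensor\mathfrak{h}_{r}$ implies $(\phi\tensor 1)r=(f\tensor 1)r=y$. The equality $\dim\mathfrak{h}_{r}^{*}=\dim\mathfrak{h}_{r}$ then forces bijectivity, which is exactly non-degeneracy of $r$ in $\mathfrak{h}_{r}\wedge\mathfrak{h}_{r}$.
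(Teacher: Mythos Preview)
Your proof is correct and follows essentially the same route as the paper: the paper contracts the coordinate form~(\ref{eq16t}) of the CYBE against $f(e_i)g(e_k)$ to obtain $[(f\tensor 1)r,(g\tensor 1)r]=(h\tensor 1)r$ with $h=f\circ\operatorname{ad}_{(g\tensor 1)r}-g\circ\operatorname{ad}_{(f\tensor 1)r}$, which is exactly your $\alpha+\beta$ once the skew-symmetry $\sum g(b_i)a_i=-(g\tensor 1)r$ is used. The paper then argues $r\in\mathfrak{h}_r\wedge\mathfrak{h}_r$ and non-degeneracy via an adapted basis and the rank of $(r^{ij})$, while you do the same two steps coordinate-free; the only cosmetic gap is that ``span of the $a_k$ equals the image of the right contraction'' needs the $b_k$ linearly independent as well (take a minimal tensor representation), after which everything goes through.
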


\begin{proof}
We follow \cite[Section~3.5]{etingof2002lectures}.
The isomorphism $ \mathbb{k}\tensor\mathfrak{g}\cong\mathfrak{g} $ implies $ \mathfrak{h}_{r}
\subseteq\mathfrak{g} $. Moreover, $ \mathfrak{h}_{r} $ is a Lie subalgebra of $ \mathfrak{g} $ since, if we take
$ f,g\in\mathfrak{g}^{*} $, then
\begin{align*}
\sum\limits_{k=1}^{n}\text{ad}_{(f\tensor 1)r}e_{k}&=\sum\limits_{k=1}^{n}\left[ (f\tensor 1)r,
e_{k}\right]
=\sum\limits_{k=1}^{n}\left[ \sum\limits_{i,j=1}^{n}r^{ij}f(e_{i})e_{j},e_{k}\right]\\
&=\sum\limits_{i,j,k=1}^{n}r^{ij}f(e_{i})\left[ e_{j},e_{k}\right]
=\sum\limits_{i,j,k,l=1}^{n}r^{ij}f(e_{i})C_{jk}^{l}e_{l}.
\end{align*}
Similarly for $ g $. Then we have
\begin{align*}
\left[ (f\tensor 1)r,(g\tensor 1)r\right]&=\left[\sum\limits_{i,j=1}^{n}r^{ij}f(e_{i})e_{j},
\sum\limits_{k,l=1}^{n}r^{kl}g(e_{k})e_{l}\right]\\
&=\sum\limits_{i,j,k,l=1}^{n}r^{ij}r^{kl}f(e_{i})g(e_{k})\left[e_{j},e_{l}\right]\\
&=\sum\limits_{i,j,k,l,m=1}^{n}r^{ij}r^{kl}C_{jl}^{m}f(e_{i})g(e_{k})e_{m}\\
&\overset{(\ast)}{=}\sum\limits_{i,j,k,l,m=1}^{n}f(e_{i})g(e_{k})(-r^{jk}C_{jl}^{i}-r^{ij}C_{jl}^{k})r^{lm}e_{m}\\
&=\sum\limits_{i,j,k,l,m=1}^{n}(f(r^{kj}g(e_{k})C_{jl}^{i}e_{i})-g(r^{ij}f(e_{i})C_{jl}^{k}e_{k}))r^{lm}e_{m}\\
&=\sum\limits_{l,m=1}^{n}(f(\text{ad}_{(g\tensor 1)r}e_{l})-g(\text{ad}_{(f\tensor 1)r}e_{l}))
r^{lm}e_{m}\\
&=\sum\limits_{l,m=1}^{n}((f\circ\text{ad}_{(g\tensor 1)r})\tensor 1-
(g\circ\text{ad}_{(f\tensor 1)r})\tensor 1)r^{lm}e_{l}\tensor e_{m}\\
&=(h\tensor 1)r,
\end{align*}
where we used (\ref{eq16t}) (look at the second term and change $ m\rightarrow k, k\rightarrow l, l\rightarrow m $)
in $ (\ast) $ and defined
\begin{align*}
h=(f\circ\text{ad}_{(g\tensor 1)r}-g\circ\text{ad}_{(f\tensor 1)r})\in\mathfrak{g}^{*}.
\end{align*}
The next step is to show that $ r $ is an element of $ \mathfrak{h}_{r}\wedge\mathfrak{h}_{r} $. Since we already
proved that
$ \mathfrak{h}_{r}\subseteq\mathfrak{g} $ is a Lie subalgebra we can choose a basis
$ e_{1},\ldots,e_{k}\in\mathfrak{g} $ of $ \mathfrak{h}_{r} $, where $ k\in\left\lbrace 1,\ldots,n\right\rbrace $,
and complete this to a basis $ e_{1},\ldots,e_{n}\in\mathfrak{g} $ of $ \mathfrak{g} $.
Thus $ \mathfrak{h}_{r}=\text{span}_{\mathbb{k}}\left\lbrace
e_{1},\ldots,e_{k}\right\rbrace $. For an arbitrary $ f\in\mathfrak{g}^{*} $ this implies
\begin{align}\label{eq17t}
(f\tensor 1)r=\sum\limits_{i,j=1}^{n}r^{ij}f(e_{i})e_{j}
\in\text{span}_{\mathbb{k}}\left\lbrace e_{1},\ldots,e_{k}\right\rbrace.
\end{align}
Since $ f\in\mathfrak{g}^{*} $ was arbitrary there must be $ r^{ij}=0 $ for $ j\notin\left\lbrace 1,\ldots,k
\right\rbrace $. By the skew-symmetry of $ r^{ij} $ this implies $ r^{ij}=0 $ for $ i\notin\left\lbrace 1,\ldots,k
\right\rbrace $ or $ j\notin\left\lbrace 1,\ldots,k\right\rbrace $. Then indeed
\begin{align}
r=\frac{1}{2}\sum\limits_{i,j=1}^{k}r^{ij}e_{i}\wedge e_{j}\in\mathfrak{h}_{r}\wedge\mathfrak{h}_{r}.
\end{align}
We see that Eq. (\ref{eq17t}) actually reads $ \sum\limits_{i,j=1}^{k}r^{ij}f(e_{i})e_{j}
\in\text{span}_{\mathbb{k}}\left\lbrace e_{1},\ldots,e_{k}\right\rbrace $, and we have
\begin{align}
\text{span}_{\mathbb{k}}\left\lbrace e_{1},\ldots,e_{k}\right\rbrace=\mathfrak{h}_{r}
\subseteq\text{span}_{\mathbb{k}}\left\lbrace
\sum\limits_{j=1}^{k}r^{1j}e_{j},\ldots,\sum\limits_{j=1}^{k}r^{kj}e_{j}\right\rbrace\subseteq
\text{span}_{\mathbb{k}}\left\lbrace e_{1},\ldots,e_{k}\right\rbrace.
\end{align}
In other words, the matrix $ (r^{ij})_{ij}\in M_{k\times k}(\mathbb{k}) $ has full rank. For this $ r $ is
non-degenerate viewed as an element of $ \mathfrak{h}_{r}\wedge\mathfrak{h}_{r} $.
\end{proof}

Following a discussion \cite{JonasDiscussion} with J. Schnitzer, we do the following

\begin{definition}[Etingof-Schiffmann Subalgebra]
The Lie subalgebra $ \mathfrak{h}_{r} $ defined in Eq. (\ref{eq15t}) is called
\textit{Etingof-Schiffmann subalgebra} for a $ r
$-matrix $ r\in\mathfrak{g}\wedge\mathfrak{g} $. The corresponding connected Lie group $ H_{r} $ is said to be the
\textit{Etingof-Schiffmann subgroup} for $ r $.
\end{definition}

In Chapter
\ref{ChapObstruction} this is our tool to produce obstructions to twist star products on the sphere
$ \mathbb{S}^{2} $.

%

\chapter{Twist Deformation}\label{ChapTwistDef}

While the last chapter was dedicated to the understanding of Lie algebras with additional structure, also
with some geometric aspects, we start by discussing pure algebraic features of general algebras here.
The ideas and notions of Hopf algebras are developed in Appendix \ref{appendixHopf} and appear here the
first time. A introduction to this topic is given in \cite[Chapter~1]{majid2000foundations}. Also consider
\cite{chari1995guide}, \cite{milnor1961structure} and \cite[Chapter~4]{underwood2011introduction}.
In a way, Hopf algebras take over the role of the Lie group, like the algebra does for the manifold. By this
we mean the following: in Chapter \ref{chapTransAct} we talked about Lie groups acting on manifolds and some of the
involved properties. Now we define a process such that a Hopf algebra can act on an algebra. Of course there is no
smooth structure. Instead, the algebraic relations of a Lie group action are taken as the new axioms. In
this case the algebra is said to be a (left) Hopf algebra module. For our purpose this is still not enough, thus we
force the module structure to be compatible with the algebra multiplication in addition. This leads to (left)
Hopf algebra module algebras. But here comes the interesting question: what happens if we deform the Hopf algebra?
Is this possible in a way such that the deformed Hopf algebra is still a Hopf algebra? And is there even a
natural way to deform any (left) module algebra of this Hopf algebra such that the deformed algebra is a (left)
module algebra for the deformed Hopf algebra? For a general deformation there is no reason for this to be the case,
but for twist deformations there are positive answers to all of these questions and we give them in Section
\ref{secHopfModuleAlg}. We start by defining twists on a Hopf algebra and prove that they deform the Hopf algebra
and any (left) module algebra of this Hopf algebra in the way we asked for. In a parallel move we evolve
this theory for the inverse of a twist. This might sound exaggerated and quite trivial, but there are two reasons
why we insist on this: first, the defining properties and deformations change slightly, thus, one has to be careful.
And second, in literature both notions occur and it is reasonable to understand
them both. As a last word to this topic we also prove that both definitions and the resulting deformations
are indeed equivalent.

Since we are interested in star products we have to modify our definition of twist, i.e. we have
to define twists on formal power series of universal enveloping algebras. Section \ref{secStarProdTwist}
starts with a short repetition on universal enveloping algebras, their Hopf algebra structure and formal
power series. We also discuss the problem that the resulting object is no Hopf algebra anymore in general.
In this special situation we want to apply the results of Section \ref{secHopfModuleAlg}. Instead of Hopf algebra
twists we take those of the formal power series of universal enveloping algebras and instead of arbitrary algebras
we take the algebra of smooth real-valued functions on a Poisson manifold with pointwise multiplication. If the
deformed multiplication on the smooth functions is a star product on the manifold, this star product is said to be
a twist star product. Moreover, one says that an arbitrary star product on a Poisson manifold can be induced by
a twist if there is a twist on the formal power series of any universal enveloping algebra such that the
deformation of the pointwise multiplication via this twist gives the star product. Thus at the end of this chapter we
arrive at our objects of interest: star products and in particular the question whether they can or can not be induced
by a twist. As a last observation we prove that a twist on the formal power series of a universal enveloping algebra
always induces a $ r $-matrix on the underlying Lie algebra. We stress that this is no one-to-one correspondence.
This is the most important connection to Chapter \ref{chapBialgrMatrix}. In the next chapter we finally see the
connection to homogeneous spaces.

\section{Twists and left Hopf Algebra Module Algebras}\label{secHopfModuleAlg}

We need the notion of Hopf algebras and refer to Appendix \ref{appendixHopf} for a conceptional motivation of the
involved axioms (see Definition \ref{hopfalgdef}). For instance, we use Sweedler's notation
\begin{align}
\Delta(\xi)=\xi_{(1)}\tensor\xi_{(2)}
\end{align}
for the coproduct of a Hopf algebra element $ \xi $, which is also discussed in detail in this appendix
(consider Remark \ref{RemarkSweedler}). Do not mix this
up with the short notation $ \mathcal{F}=\mathcal{F}_{1}\tensor\mathcal{F}_{2} $ of an arbitrary element
$ \mathcal{F} $ of the tensor product of a Hopf algebra. Let
$ (H,+,\cdot,\eta,\Delta,\epsilon,S,\mathbb{k}) $ be a Hopf algebra and $ \mathcal{F}\in H\tensor H $. We introduce
some very useful abbreviations
\begin{align*}
\mathcal{F}_{12}&=\mathcal{F}\tensor 1_{H},\\
\mathcal{F}_{23}&=1_{H}\tensor\mathcal{F},\\
\mathcal{F}_{13}&=\mathcal{F}_{1}\tensor 1_{H}\tensor\mathcal{F}_{2},\\
\mathcal{F}_{21}&=\mathcal{F}_{_{2}}\tensor\mathcal{F}_{1}\tensor 1_{H},\\
\mathcal{F}_{32}&=1_{H}\tensor\mathcal{F}_{2}\tensor\mathcal{F}_{1},\\
\mathcal{F}_{31}&=\mathcal{F}_{2}\tensor 1_{H}\tensor\mathcal{F}_{1}.
\end{align*}
All of them are elements in $ H\tensor H\tensor H=H^{\tensor 3} $.
Further, we denote by $ 1 $ the identity map $ H\rightarrow H $ and by $ 1_{H} $ the unit of $ H $.
We define the central element in twist deformation, namely a twisting element, following
\cite[Section~3.1]{schenkeldrintwistdef} and also adopt the notation that is used there. Similar references are
\cite[Section~7.8]{chari1995guide}, \cite[Section~9.5]{etingof2002lectures}, and
\cite[Section~1]{Giaquintobialgebraactions}.

\begin{definition}[Drinfel\textquoteright d Twist]\label{DefTwist}
Let $ (H,+,\cdot,\eta,\Delta,\epsilon,S,\mathbb{k}) $ be a Hopf algebra. An invertible element $ \mathcal{F}\in
H\tensor H $ is said to be a Drinfel\textquoteright d twist or twist if the following two conditions are satisfied,
\begin{compactenum}
\item $ \mathcal{F}_{12}(\Delta\tensor 1)\mathcal{F}=\mathcal{F}_{23}(1\tensor\Delta)\mathcal{F} $,
\item $ (\epsilon\tensor 1)\mathcal{F}=(1\tensor\epsilon)\mathcal{F}=1_{H} $.
\end{compactenum}
The first condition is said to be the 2-cocycle condition and the second one the normalization
property.
\end{definition}

\begin{remark}\label{RemarkJTwist1}
There is also a kind of $ 2 $-cocylce condition
\begin{align}\label{JConvention1}
((\Delta\tensor 1)\mathcal{F}^{-1})\cdot\mathcal{F}_{12}^{-1}=((1\tensor\Delta)\mathcal{F}^{-1})
\cdot\mathcal{F}_{23}^{-1}
\end{align}
for $ \mathcal{F}^{-1} $, since $ \mathcal{F} $ is invertible. It is equivalent to Definition \ref{DefTwist}
i.), since we know that inverting reverses the order of elements and that the inverse element of $ (\Delta\tensor
1)\mathcal{F} $ in $ H^{\tensor 3} $ is $ (\Delta\tensor 1)\mathcal{F}^{-1} $ and so on.
Furthermore, $ \mathcal{F}^{-1} $ fulfils the normalization property
\begin{align}\label{JConvention2}
(\epsilon\tensor 1)\mathcal{F}^{-1}=(1\tensor\epsilon)\mathcal{F}^{-1}=1_{H},
\end{align}
since $ (\epsilon\tensor 1)\mathcal{F}^{-1}=(\epsilon\tensor 1)((\epsilon\tensor 1)\mathcal{F}\mathcal{F}^{-1})
=1_{H} $ according to Definition \ref{DefTwist} ii.). Later on we see that there is a Hopf algebra structure on $ H $
such that $ \mathcal{F}^{-1} $ is a twist in the sense of Definition \ref{DefTwist}.
\end{remark}

The primary use of a twist $ \mathcal{F} $ is to deform the Hopf algebra structure. If we consider
$ \Delta^{\mathcal{F}}\colon H\rightarrow H\tensor H $ defined by
\begin{align}\label{deformedcoproduct}
\Delta^{\mathcal{F}}(\xi)=\mathcal{F}\cdot\Delta(\xi)\cdot\mathcal{F}^{-1}
\end{align}
and $ S^{\mathcal{F}}\colon H\rightarrow H $ defined by
\begin{align}\label{deformedantipode}
S^{\mathcal{F}}(\xi)=U\cdot S(\xi)\cdot U^{-1},
\end{align}
where $ \xi\in H $, $ \mathrm{m}\colon H\tensor H\ni(\xi\tensor\zeta)\mapsto\xi\cdot\zeta\in H $ and
$ U=\mathrm{m}((1\tensor S)\mathcal{F}) $. Now we can state the following

\begin{theorem}\label{theoremdeformedhopf}
Let $ (H,+,\cdot,\eta,\Delta,\epsilon,S,\mathbb{k}) $ be a Hopf algebra and $ \mathcal{F} $ a twist on $ H $.
Then one gets another Hopf algebra $ (H,+,\cdot,\eta,\Delta^{\mathcal{F}},\epsilon,S^{\mathcal{F}},\mathbb{k}) $,
with the deformed coproduct and antipode defined in Eq. (\ref{deformedcoproduct}) and Eq. (\ref{deformedantipode}),
respectively.
\end{theorem}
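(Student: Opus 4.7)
The plan is to verify the Hopf algebra axioms of Definition \ref{hopfalgdef} directly for the deformed data, keeping the algebra, unit, and counit $ \epsilon $ unchanged. Concretely, I must check (a) that $ \Delta^{\mathcal{F}} $ is a coassociative algebra homomorphism, (b) that $ \epsilon $ remains a counit for $ \Delta^{\mathcal{F}} $, (c) that the element $ U=\mathrm{m}((1\tensor S)\mathcal{F}) $ defining $ S^{\mathcal{F}} $ is invertible in $ H $, and (d) that $ S^{\mathcal{F}} $ satisfies the antipode axiom relative to $ \Delta^{\mathcal{F}} $.

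Parts (a) and (b) are essentially formal. Multiplicativity $ \Delta^{\mathcal{F}}(\xi\zeta)=\Delta^{\mathcal{F}}(\xi)\Delta^{\mathcal{F}}(\zeta) $ follows by inserting $ \mathcal{F}^{-1}\mathcal{F}=1_{H\tensor H} $ between $ \Delta(\xi) $ and $ \Delta(\zeta) $ and using that $ \Delta $ is itself an algebra homomorphism. For coassociativity, a direct computation gives
\begin{align*}
(\Delta^{\mathcal{F}}\tensor 1)\Delta^{\mathcal{F}}(\xi)&=\mathcal{F}_{12}\bigl((\Delta\tensor 1)\mathcal{F}\bigr)\bigl((\Delta\tensor 1)\Delta(\xi)\bigr)\bigl((\Delta\tensor 1)\mathcal{F}^{-1}\bigr)\mathcal{F}_{12}^{-1},\\
(1\tensor\Delta^{\mathcal{F}})\Delta^{\mathcal{F}}(\xi)&=\mathcal{F}_{23}\bigl((1\tensor\Delta)\mathcal{F}\bigr)\bigl((1\tensor\Delta)\Delta(\xi)\bigr)\bigl((1\tensor\Delta)\mathcal{F}^{-1}\bigr)\mathcal{F}_{23}^{-1},
\end{align*}
and these two expressions coincide by coassociativity of $ \Delta $ combined with the 2-cocycle identity of Definition \ref{DefTwist} i.) applied both to $ \mathcal{F} $ and, in the form of Remark \ref{RemarkJTwist1}, to $ \mathcal{F}^{-1} $. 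Property (b) is then a one-line consequence of the normalization $ (\epsilon\tensor 1)\mathcal{F}=(\epsilon\tensor 1)\mathcal{F}^{-1}=1_H $ together with the fact that $ \epsilon $ is an algebra homomorphism; the same argument applies on the right-hand tensor factor.

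The real work lies in (c) and (d). Writing $ \mathcal{F}=F^{\alpha}\tensor F_{\alpha} $ and $ \mathcal{F}^{-1}=\overline{F}^{\alpha}\tensor\overline{F}_{\alpha} $ with implicit summation, so that $ U=F^{\alpha}\cdot S(F_{\alpha}) $, I would propose the candidate inverse $ U^{-1}=S(\overline{F}^{\alpha})\cdot\overline{F}_{\alpha} $ and verify $ UU^{-1}=1_{H}=U^{-1}U $ through a Sweedler-style calculation combining the 2-cocycle identity, the normalization, the defining relation $ \mathrm{m}(S\tensor 1)\Delta=\eta\circ\epsilon=\mathrm{m}(1\tensor S)\Delta $ of the original antipode, and the antihomomorphism property of $ S $. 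Once $ U^{-1} $ is in hand, applying $ \mathrm{m}(S^{\mathcal{F}}\tensor 1) $ to $ \Delta^{\mathcal{F}}(\xi)=\mathcal{F}\Delta(\xi)\mathcal{F}^{-1} $ reduces, after pushing the factors $ U $ and $ U^{-1} $ past the relevant Sweedler components by the same cocycle manipulation, to the undeformed identity $ \mathrm{m}(S\tensor 1)\Delta(\xi)=\epsilon(\xi)1_{H} $; the convolution identity on the opposite side follows analogously, or alternatively by invoking the uniqueness of an antipode once a convolution inverse of $ \operatorname{id}_{H} $ has been produced on one side. The main obstacle is thus pure bookkeeping: controlling the many occurrences of $ \mathcal{F} $, $ \mathcal{F}^{-1} $, and $ U^{\pm 1} $ so that they telescope through the cocycle condition at the correct positions. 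I expect this to be the only genuinely nontrivial step in the proof.
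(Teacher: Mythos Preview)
Your proposal is correct and follows essentially the same route as the paper: verify coassociativity of $\Delta^{\mathcal{F}}$ via the 2-cocycle condition, the counit axiom via normalization, multiplicativity of $\Delta^{\mathcal{F}}$ by inserting $\mathcal{F}^{-1}\mathcal{F}$, exhibit $U^{-1}=S(\overline{F}^{\alpha})\overline{F}_{\alpha}=\mathrm{m}((S\tensor 1)\mathcal{F}^{-1})$ explicitly, and then check the antipode identity for $S^{\mathcal{F}}$ by a Sweedler computation that collapses to the undeformed antipode axiom. The paper carries out both antipode identities by direct calculation rather than appealing to uniqueness on one side, but this is a cosmetic difference.
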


\begin{proof}
The proof we give is a mixture of the ones in \cite[Proposition~9.6]{etingof2002lectures} and
\cite[Theorem~2.3.4]{majid2000foundations}.
We split the proof into three parts. First we prove that $ (H,+,\Delta^{\mathcal{F}},\epsilon,\mathbb{k}) $ is a
coalgebra (see Definition \ref{DefCoAlgebra}). Since the algebra structure is not changed the second step is to
check if the algebra and the coalgebra structure are compatible in the sense of a bialgebra (consider Definition
\ref{DefBiAlgebra}).
Finally,
$ S^{\mathcal{F}} $ has to be an antipode to finish the proof. Thus the first
question is whether the diagrams (\ref{eq48thomas}) commute. To prove that $ \Delta^{\mathcal{F}} $ satisfies the
coassociativity condition we need the coassociativity of $ \Delta $, the $ 2 $-cocycle condition of $ \mathcal{F} $
and the $ 2 $-cocycle condition of $ \mathcal{F}^{-1} $. Furthermore, we use that $ \Delta $ is an algebra map.
Let $ \xi\in H $. Then
\begin{align*}
(\Delta^{\mathcal{F}}\tensor 1)(\Delta^{\mathcal{F}}(\xi))&=\mathcal{F}_{12}\cdot((\Delta\tensor 1)
(\mathcal{F}\cdot\Delta(\xi)\cdot\mathcal{F}^{-1}))\cdot\mathcal{F}_{12}^{-1}\\
&=\mathcal{F}_{12}\cdot((\Delta\tensor 1)\mathcal{F})\cdot((\Delta\tensor 1)\Delta(\xi))\cdot((\Delta\tensor
1)\mathcal{F}^{-1})\cdot\mathcal{F}_{12}^{-1}\\
&=\mathcal{F}_{23}\cdot((1\tensor\Delta)\mathcal{F})\cdot((1\tensor\Delta)\Delta(\xi))\cdot((1\tensor\Delta)
\mathcal{F}^{-1})\cdot\mathcal{F}_{23}^{-1}\\
&=\mathcal{F}_{23}\cdot((1\tensor\Delta)(\mathcal{F}\cdot\Delta(\xi)\cdot\mathcal{F}^{-1}))\cdot\mathcal{F}_{23}^{-1}\\
&=(1\tensor\Delta^{\mathcal{F}})\Delta^{\mathcal{F}}(\xi).
\end{align*}
Then the normalization property of $ \mathcal{F} $ and $ \mathcal{F}^{-1} $ together with the fact that
$ \epsilon $ is an algebra map implies the second (and third) coalgebra axiom: for $ \xi\in H $ we calculate
\begin{align*}
(\epsilon\tensor 1)(\Delta^{\mathcal{F}}(\xi))&=((\epsilon\tensor 1)(\mathcal{F}))\cdot((\epsilon\tensor 1)
(\Delta(\xi)))\cdot((\epsilon\tensor 1)(\mathcal{F}^{-1}))\\
&=1_{H}\\
&=(1\tensor\epsilon)(\Delta^{\mathcal{F}}(\xi)).
\end{align*}
This means that $ (H,+,\Delta^{\mathcal{F}},\epsilon,\mathbb{k}) $ is a coalgebra and the first step is done.
According to Definition \ref{DefBiAlgebra} $ (H,+,\cdot,\eta,\Delta^{\mathcal{F}},\epsilon,\mathbb{k}) $ is a
bialgebra if $ \Delta^{\mathcal{F}} $ and $ \epsilon $ are algebra maps. Thus it remains to check this property
for $ \Delta^{\mathcal{F}} $. Let $ \xi,\zeta\in H $. Then
\begin{align*}
\Delta^{\mathcal{F}}(\xi\cdot\zeta)&=\mathcal{F}\cdot\Delta(\xi\cdot\zeta)\cdot\mathcal{F}^{-1}\\
&=\mathcal{F}\cdot\Delta(\xi)\cdot\mathcal{F}^{-1}\cdot\mathcal{F}\cdot\Delta(\zeta)\cdot\mathcal{F}^{-1}\\
&=\Delta^{\mathcal{F}}(\xi)\cdot\Delta^{\mathcal{F}}(\zeta),
\end{align*}
since $ \Delta $ is an algebra map. Also
$ \Delta^{\mathcal{F}}(1_{H})=\mathcal{F}\cdot\Delta(1_{H})\cdot\mathcal{F}^{-1}=1_{H}\tensor 1_{H} $ holds.
The third and last step is to check that $ S^{\mathcal{F}} $ is an
antipode of the bialgebra $ (H,+,\cdot,\eta,\Delta^{\mathcal{F}},\epsilon,\mathbb{k}) $. First of all, remark that
$ U=\mathrm{m}((1\tensor S)\mathcal{F})=\mathcal{F}_{1}S(\mathcal{F}_{2}) $ is invertible since $ \mathcal{F} $ is
invertible and $ U^{-1}:=\mathrm{m}((S\tensor 1)\mathcal{F}^{-1})=S(\mathcal{F}_{1}^{-1})\mathcal{F}_{2}^{-1} $
satisfies
\begin{align*}
UU^{-1}&=\mathcal{F}_{1}S(\mathcal{F}_{2})S(\mathcal{F}_{1}^{-1})\mathcal{F}_{2}^{-1}\\
&=\underbrace{\mathcal{F}_{1}^{-1}\epsilon(\mathcal{F}_{2}^{-1})}_{=1_{H}}\mathcal{F}_{1}S(\mathcal{F}_{2})
S(\mathcal{F}_{1}^{-1})\mathcal{F}_{2}^{-1}\\
&=\mathcal{F}_{1}^{-1}\mathcal{F}_{1}S(\mathcal{F}_{2})S(\mathcal{F}_{1}^{-1})S((\mathcal{F}_{2}^{-1})_{(1)})
(\mathcal{F}_{2}^{-1})_{(2)}\mathcal{F}_{2}^{-1}\\
&=S((\mathcal{F}_{2}^{-1})_{(1)}\mathcal{F}_{1}^{-1}\mathcal{F}_{2})
(\mathcal{F}_{2}^{-1})_{(2)}\mathcal{F}_{2}^{-1}\\
&=\mathrm{m}((\mathrm{m}\tensor 1)\circ(1\tensor S\tensor 1)
(((1\tensor\Delta)\mathcal{F}^{-1})\cdot\mathcal{F}_{23}^{-1}\cdot\mathcal{F}_{12}))\\
&=\mathrm{m}((\mathrm{m}\tensor 1)\circ(1\tensor S\tensor 1)((\Delta\tensor 1)\mathcal{F}^{-1}))\\
&=(\mathcal{F}_{1}^{-1})_{(1)}S((\mathcal{F}_{1}^{-1})_{(2)})\mathcal{F}_{2}^{-1}\\
&=\epsilon(\mathcal{F}_{1}^{-1})\mathcal{F}_{2}^{-1}\\
&=1_{H},
\end{align*}
where we used several times the normalization property of $ \mathcal{F}^{-1} $ and the antipode axiom for $ S $ and
$ \Delta $ as well as the $ 2 $-cocylce condition for $ \mathcal{F}^{-1} $. Similarly
\begin{align*}
U^{-1}U&=S(\mathcal{F}_{1}^{-1})\mathcal{F}_{2}^{-1}\mathcal{F}_{1}S(\mathcal{F}_{2})\\
&=S(\underbrace{\mathcal{F}_{1}\epsilon(\mathcal{F}_{2})}_{=1_{H}})S(\mathcal{F}_{1}^{-1})\mathcal{F}_{2}^{-1}
\mathcal{F}_{1}S(\mathcal{F}_{2})\\
&=S(\mathcal{F}_{1})S(\mathcal{F}_{1}^{-1})\mathcal{F}_{2}^{-1}\mathcal{F}_{1}\epsilon(\mathcal{F}_{2})
S(\mathcal{F}_{2})\\
&=S(\mathcal{F}_{1}^{-1}\mathcal{F}_{1})\mathcal{F}_{2}^{-1}\mathcal{F}_{1}(\mathcal{F}_{2})_{(1)}
S((\mathcal{F}_{2})_{(2)})S(\mathcal{F}_{2})\\
&=\mathrm{m}((\mathrm{m}\tensor 1)\circ(S\tensor 1\tensor S)
(1\tensor\mathcal{F}_{2}^{-1}\mathcal{F}_{1}(\mathcal{F}_{2})_{(1)}
\tensor\mathcal{F}_{2}(\mathcal{F}_{2})_{(2)}))\\
&=\mathrm{m}((\mathrm{m}\tensor 1)\circ(S\tensor 1\tensor S)
(\mathcal{F}_{12}^{-1}\mathcal{F}_{23}(1\tensor\Delta)\mathcal{F}))\\
&=\mathrm{m}((\mathrm{m}\tensor 1)\circ(S\tensor 1\tensor S)((\Delta\tensor 1)\mathcal{F}))\\
&=S((\mathcal{F}_{1})_{(1)})(\mathcal{F}_{1})_{(2)}S(\mathcal{F}_{2})\\
&=\epsilon(\mathcal{F}_{1})S(\mathcal{F}_{2})\\
&=S(\epsilon(\mathcal{F}_{1})\mathcal{F}_{2})\\
&=S(1_{H})\\
&=1_{H}.
\end{align*}
Now the antipode axiom for $ S^{\mathcal{F}} $ and $ \Delta^{\mathcal{F}} $ is not hard to verify. Let $ \xi\in H $.
Then
\begin{align*}
\mathrm{m}((S^{\mathcal{F}}\tensor 1)\Delta^{\mathcal{F}}(\xi))&=\mathrm{m}(US(\mathcal{F}_{1}
\xi_{(1)}\mathcal{F}_{1}^{-1})U^{-1}
\tensor\mathcal{F}_{2}\xi_{(2)}\mathcal{F}_{2}^{-1})\\
&=US(\mathcal{F}_{1}^{-1})S(\xi_{(1)})S(\mathcal{F}_{1})S(\mathcal{F}_{1}^{-1})\mathcal{F}_{2}^{-1}
\mathcal{F}_{2}\xi_{(2)}\mathcal{F}_{2}^{-1}\\
&=US(\mathcal{F}_{1}^{-1})S(\xi_{(1)})\xi_{(2)}\mathcal{F}_{2}^{-1}\\
&=US(\mathcal{F}_{1}^{-1})\epsilon(\xi)\mathcal{F}_{2}^{-1}\\
&=\epsilon(\xi)UU^{-1}\\
&=\eta(\epsilon(\xi)),
\end{align*}
again by the antialgebra map property and antipode property of $ S $ and similarly
\begin{align*}
\mathrm{m}((1\tensor S^{\mathcal{F}})\Delta^{\mathcal{F}}(\xi))&=\mathrm{m}(\mathcal{F}_{1}\xi_{(1)}
\mathcal{F}_{1}^{-1}\tensor
US(\mathcal{F}_{2}\xi_{(2)}\mathcal{F}_{2}^{-1})U^{-1})\\
&=\mathcal{F}_{1}\xi_{(1)}\mathcal{F}_{1}^{-1}\mathcal{F}_{1}S(\mathcal{F}_{2})S(\mathcal{F}_{2}^{-1})
S(\xi_{(2)})S(\mathcal{F}_{2})U^{-1}\\
&=\mathcal{F}_{1}\xi_{(1)}\underbrace{S(\mathcal{F}_{2}^{-1}\mathcal{F}_{2})}_{=1_{H}}
S(\xi_{(2)})S(\mathcal{F}_{2})U^{-1}\\
&=\mathcal{F}_{1}\xi_{(1)}S(\xi_{(2)})S(\mathcal{F}_{2})U^{-1}\\
&=\epsilon(\xi)\underbrace{\mathcal{F}_{1}S(\mathcal{F}_{2})}_{=U}U^{-1}\\
&=\eta(\epsilon(\xi)).
\end{align*}
Then $ S^{\mathcal{F}} $ is an antipode compatible to the bialgebra structure and
$ (H,+,\cdot,\eta,\Delta^{\mathcal{F}},\epsilon,S^{\mathcal{F}},\mathbb{k}) $ is a Hopf algebra.
\end{proof}

Assume that $ (H,+,\cdot,\eta,\Delta,\epsilon,S,\mathbb{k}) $ is a cocommutative Hopf algebra, i.e. for $ \xi\in H $
one has
\begin{align}
\xi_{(2)}\tensor\xi_{(1)}=\xi_{(1)}\tensor\xi_{(2)}
\end{align}
or, equivalently, the commutativity of the following diagram
\begin{equation}
\begin{tikzpicture}
  \matrix (m) [matrix of math nodes,row sep=3em,column sep=0.5em,minimum width=2em]
  {
      H & & & H\tensor H \\
       & & & H\tensor H, \\};
  \path[-stealth]
    (m-1-1) edge node [above] {$\Delta$} (m-1-4)
    (m-1-4) edge node [right] {$\sigma$} (m-2-4)
    (m-1-1) edge node [below] {$\Delta$} (m-2-4);
\end{tikzpicture}
\end{equation}
where $ \sigma\colon H\tensor H\rightarrow H\tensor H $ denotes the isomorphism flipping the tensor components called 
the \textit{braiding isomorphism}. The deformed coproduct is not cocommutative in general, because for
$ \xi\in H $ one has
\begin{align*}
\sigma(\Delta^{\mathcal{F}}(\xi))=\sigma(\mathcal{F}\cdot(\xi_{(1)}\tensor\xi_{(2)})\cdot\mathcal{F}^{-1})
=\mathcal{F}_{2}\xi_{(2)}\mathcal{F}_{2}^{-1}\tensor\mathcal{F}_{1}\xi_{(1)}\mathcal{F}_{1}^{-1},
\end{align*}
which in general does not coincide with
\begin{align*}
\Delta^{\mathcal{F}}(\xi)=\mathcal{F}\cdot\Delta(\xi)\cdot\mathcal{F}^{-1}=\mathcal{F}_{1}\xi_{(1)}\mathcal{F}_{1}^{-1}
\tensor\mathcal{F}_{2}\xi_{(2)}\mathcal{F}_{2}^{-1}.
\end{align*}
In this sense we can say that the twist $ \mathcal{F} $ sometimes also induces a quantization of
the Hopf algebra.

\begin{remark}\label{RemarkJTwist2}
We discovered in Remark \ref{RemarkJTwist1} that conditions (\ref{JConvention1}) and (\ref{JConvention2}) for
an invertible element $ \mathcal{F}^{-1}\in H\tensor H $ are equivalent to the twist conditions for
$ (\mathcal{F}^{-1})^{-1}=\mathcal{F} $ stated in Definition \ref{DefTwist}. Let us look a bit closer at these
conditions for $ \mathcal{F}^{-1} $. Assume that $ \mathcal{F} $ is a twist. Then according to Theorem
\ref{theoremdeformedhopf} there is the Hopf algebra structure
$ (H,+,\cdot,\eta,\Delta^{\mathcal{F}},\epsilon,S^{\mathcal{F}},\mathbb{k}) $. In terms of this structure the twist
conditions for $ \mathcal{F}^{-1} $ have the form
\begin{align*}
\mathcal{F}_{12}^{-1}\cdot((\Delta^{\mathcal{F}}\tensor 1)\mathcal{F}^{-1})&=\mathcal{F}_{12}^{-1}\cdot(
\mathcal{F}\Delta(\mathcal{F}_{1}^{-1})\mathcal{F}^{-1}\tensor\mathcal{F}_{2}^{-1})\\
&=\Delta(\mathcal{F}_{1}^{-1})\mathcal{F}^{-1}\tensor\mathcal{F}_{2}^{-1}\\
&=((\Delta\tensor 1)\mathcal{F}^{-1})\cdot\mathcal{F}_{12}^{-1}\\
&=((1\tensor\Delta)\mathcal{F}^{-1})\cdot\mathcal{F}_{23}^{-1}\\
&=\mathcal{F}_{1}^{-1}\tensor\Delta(\mathcal{F}_{2}^{-1})\cdot\mathcal{F}^{-1}\\
&=\mathcal{F}_{1}^{-1}\tensor\mathcal{F}^{-1}\Delta^{\mathcal{F}}(\mathcal{F}_{2}^{-1})\\
&=\mathcal{F}_{23}^{-1}\cdot((1\tensor\Delta^{\mathcal{F}})\mathcal{F}^{-1})
\end{align*}
and $ (\epsilon\tensor 1)\mathcal{F}^{-1}=(1\tensor\epsilon)\mathcal{F}^{-1}=1_{H} $. Thus $ \mathcal{F}^{-1} $ is
a twist on $ (H,+,\cdot,\eta,\Delta^{\mathcal{F}},\epsilon,S^{\mathcal{F}},\mathbb{k}) $ if $ \mathcal{F} $ is a
twist on $ (H,+,\cdot,\eta,\Delta,\epsilon,S,\mathbb{k}) $. Arranging the above equations in another order we see
that also the converse is true. Then $ \mathcal{F} $ is a twist on
$ (H,+,\cdot,\eta,\Delta,\epsilon,S,\mathbb{k}) $ if and only if $ \mathcal{F}^{-1} $ is a twist on
$ (H,+,\cdot,\eta,\Delta^{\mathcal{F}},\epsilon,S^{\mathcal{F}},\mathbb{k}) $ and the convention chosen in
Definition \ref{DefTwist} is equivalent to the axiom system
\begin{align}
((\Delta\tensor 1)J\cdot J_{12}&=((1\tensor\Delta)J)\cdot J_{23},\\
(\epsilon\tensor 1)J&=(1\tensor\epsilon)J=1_{H},
\end{align}
where we defined $ J=\mathcal{F}^{-1} $. The procedure of twisting is even involutive, i.e. twisting the
twisted structure gives back the original one. This can be seen quite easyly, since for $ \xi\in H $ one has
\begin{align*}
(\Delta^{\mathcal{F}})^{\mathcal{F}^{-1}}(\xi)=\mathcal{F}^{-1}\Delta^{\mathcal{F}}(\xi)(\mathcal{F}^{-1})^{-1}
=\mathcal{F}^{-1}\mathcal{F}\Delta(\xi)\mathcal{F}^{-1}\mathcal{F}
=\Delta(\xi)
=(\Delta^{\mathcal{F}^{-1}})^{\mathcal{F}}(\xi)
\end{align*}
and
\begin{align*}
(S^{\mathcal{F}})^{\mathcal{F}^{-1}}(\xi)&=\mathcal{F}_{1}^{-1}S(\mathcal{F}_{2}^{-1})S^{\mathcal{F}}(\xi)
S(\mathcal{F}_{1})\mathcal{F}_{2}\\
&=\mathcal{F}_{1}^{-1}S(\mathcal{F}_{2}^{-1})\mathcal{F}_{1}\underbrace{\mathcal{F}_{1}^{-1}\epsilon
(\mathcal{F}_{1}^{-1})}_{=1_{H}}S(\mathcal{F}_{2})S(\xi)
S(\mathcal{F}_{1}^{-1})\mathcal{F}_{2}^{-1}\underbrace{\mathcal{F}_{2}\epsilon(\mathcal{F}_{2})}_{=1_{H}}
S(\mathcal{F}_{1})\mathcal{F}_{2}\\
&=\mathcal{F}_{1}^{-1}\epsilon(\mathcal{F}_{1}^{-1})S(\mathcal{F}_{2}\mathcal{F}_{2}^{-1})S(\xi)
S(\mathcal{F}_{1}\mathcal{F}_{1}^{-1})\mathcal{F}_{2}\epsilon(\mathcal{F}_{2})\\
&=S(\xi)\\
&=\mathcal{F}_{1}S(\mathcal{F}_{2})\mathcal{F}_{1}^{-1}\mathcal{F}_{1}\epsilon(\mathcal{F}_{1})
S(\mathcal{F}_{2}^{-1})S(\xi)
S(\mathcal{F}_{1})\mathcal{F}_{2}\mathcal{F}_{2}^{-1}\epsilon(\mathcal{F}_{2}^{-1})
S(\mathcal{F}_{1}^{-1})\mathcal{F}_{2}^{-1}\\
&=\mathcal{F}_{1}S(\mathcal{F}_{2})\mathcal{F}_{1}^{-1}S(\mathcal{F}_{2}^{-1})S(\xi)
S(\mathcal{F}_{1})\mathcal{F}_{2}S(\mathcal{F}_{1}^{-1})\mathcal{F}_{2}^{-1}\\
&=\mathcal{F}_{1}S(\mathcal{F}_{2})S^{\mathcal{F}^{-1}}(\xi)S(\mathcal{F}_{1}^{-1})\mathcal{F}_{2}^{-1}\\
&=(S^{\mathcal{F}^{-1}})^{\mathcal{F}}(\xi).
\end{align*}
This is called \textit{twisting back}.
Thus it is not surprising that both conventions appear in the literature.
We want to follow the $ \mathcal{F} $-convention as it is done in \cite{schenkeldrintwistdef}, but remark that the
$ J $-convention is also very common. Consider for example the inspiring book \cite{etingof2002lectures} by
P. Etingof and O. Schiffman.
\end{remark}

In a next step we show that a twist $ \mathcal{F} $ on $ (H,+,\cdot,\eta,\Delta,\epsilon,S,\mathbb{k}) $ also induces
a deformation (quantization) on any left $ H $-module algebra. Here and in the following we use the convention to
denote $ (H,+,\cdot,\eta,\Delta,\epsilon,S,\mathbb{k}) $ shortly by $ H $ and the deformed Hopf algebra
$ (H,+,\cdot,\eta,\Delta^{\mathcal{F}},\epsilon,S^{\mathcal{F}},\mathbb{k}) $ by $ H^{\mathcal{F}} $.

\begin{definition}[Left Hopf Algebra Module Algebra]\label{DefModAlg}
Let $ (H,+,\cdot,\eta,\Delta,\epsilon,S,\mathbb{k}) $ be a Hopf algebra and $ (\algebra{A},\cdot,+,\mathbb{k}) $ an
algebra (see Definition \ref{defalgebra}). Then $ (\algebra{A},\cdot,+,\mathbb{k}) $ is said to be a
left $ H $-module algebra if the following two conditions are satisfied.
\begin{compactenum}
\item $ (\algebra{A},\cdot,+,\mathbb{k}) $ is a left $ H $-module, i.e. there is a $ \mathbb{k} $-linear map
$ \rhd\colon H\tensor\algebra{A}\rightarrow\algebra{A} $ such that for all $ \xi,\zeta\in H $ and $ a\in\algebra{A} $
one has
\begin{align}\label{HModuleProp1}
(\xi\zeta)\rhd a=\xi\rhd(\zeta\rhd a)
\end{align}
and
\begin{align}\label{HModuleProp2}
1_{H}\rhd a=a.
\end{align}
\item The $ H $-module structure of $ (\algebra{A},\cdot,+,\mathbb{k}) $ respects the algebra multiplication, i.e.
for all $ \xi\in H $ and $ a,b\in\algebra{A} $ one has
\begin{align}\label{HModuleProp3}
\xi\rhd(ab)=(\xi_{(1)}\rhd a)(\xi_{(2)}\rhd b)
\end{align}
and
\begin{align}\label{HModuleProp4}
\xi\rhd 1_{\algebra{A}}=\epsilon(\xi)1_{\algebra{A}}.
\end{align}
\end{compactenum}
\end{definition}

If $ \mathcal{F} $ is a twist on $ H $ we get from Theorem \ref{theoremdeformedhopf} the deformed Hopf algebra
$ H^{\mathcal{F}} $.
The question we are interested in is the following: is there a way to structure any left
$ H $-module algebra $ (\algebra{A},\cdot,+,\mathbb{k}) $ such that it becomes a left $ H^{\mathcal{F}} $-module?
Is this possible in a way such that $ \algebra{A} $ stays an algebra? And finally, is the new module structure
compatible with the new multiplication? The next theorem gives a positive answer to all those questions.

\begin{theorem}\label{TheoremTwistedStar}
Let $ (H,+,\cdot,\eta,\Delta,\epsilon,S,\mathbb{k}) $ be a Hopf algebra and $ \mathcal{F} $ a twist on $ H $.
Moreover, consider a left $ H $-module algebra $ (\algebra{A},\cdot,+,\mathbb{k}) $ and define a map
$ \star\colon\algebra{A}\times\algebra{A}\rightarrow\algebra{A} $ for any $ a,b\in\algebra{A} $ by
\begin{align}\label{twistedstar}
a\star b=m(\mathcal{F}^{-1}\rhd(a\tensor b))=(\mathcal{F}_{1}^{-1}\rhd a)(\mathcal{F}_{2}^{-1}\rhd b),
\end{align}
where $ \mathrm{m}\colon\algebra{A}\tensor\algebra{A}\rightarrow\algebra{A} $ is the map
$ \mathrm{m}(a\tensor b)=a\cdot b=ab $ and
$ \rhd $ the left $ H $-module action of $ H $ on $ (\algebra{A},\cdot,+,\mathbb{k}) $. Remark that we
extended the action to an action of $ H\tensor H $ on $ \algebra{A}\tensor\algebra{A} $ in the obvious way without
changing the notation. Then $ (\algebra{A},\star,+,\mathbb{k}) $ is a left $ H^{\mathcal{F}} $-module algebra.
\end{theorem}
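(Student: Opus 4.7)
The plan is to verify one by one the two parts of Definition \ref{DefModAlg} for $(\algebra{A},\star,+,\mathbb{k})$, taking the original action $\rhd$ of $H$ on $\algebra{A}$ as the $H^{\mathcal{F}}$-action on the deformation. Conditions (\ref{HModuleProp1}), (\ref{HModuleProp2}) and (\ref{HModuleProp4}) carry over for free because the multiplication, unit and counit of $H^{\mathcal{F}}$ coincide with those of $H$. What remains is (a) to show that $\star$ is an associative product on $\algebra{A}$ with unit $1_{\algebra{A}}$, and (b) that $\rhd$ satisfies the analogue of (\ref{HModuleProp3}) with $\Delta$ replaced by $\Delta^{\mathcal{F}}$.

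For associativity, I would iterate the definition of $\star$ and repeatedly push the action through the multiplication via the unchanged compatibility (\ref{HModuleProp3}), keeping the notation $\mathcal{F}^{-1}_{12}=\mathcal{F}^{-1}\tensor 1_{H}$ and $\mathcal{F}^{-1}_{23}=1_{H}\tensor\mathcal{F}^{-1}$ introduced just before Definition \ref{DefTwist}. Writing $m_{3}=m\circ(m\tensor 1)=m\circ(1\tensor m)$ this yields
\begin{align*}
(a\star b)\star c &= m_{3}\bigl(\bigl(((\Delta\tensor 1)\mathcal{F}^{-1})\cdot\mathcal{F}^{-1}_{12}\bigr)\rhd(a\tensor b\tensor c)\bigr),\\
a\star(b\star c) &= m_{3}\bigl(\bigl(((1\tensor\Delta)\mathcal{F}^{-1})\cdot\mathcal{F}^{-1}_{23}\bigr)\rhd(a\tensor b\tensor c)\bigr),
\end{align*}
so the two agree precisely by the $2$-cocycle identity (\ref{JConvention1}) for $\mathcal{F}^{-1}$ recorded in Remark \ref{RemarkJTwist1}. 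Unitality follows from the one-line computation $1_{\algebra{A}}\star a=(\mathcal{F}^{-1}_{1}\rhd 1_{\algebra{A}})(\mathcal{F}^{-1}_{2}\rhd a)=((\epsilon\tensor 1)\mathcal{F}^{-1})\rhd a=1_{H}\rhd a=a$, which uses (\ref{HModuleProp4}) and the normalization (\ref{JConvention2}); the right-unit statement is symmetric.

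The module-algebra compatibility with $\Delta^{\mathcal{F}}$ is then a direct calculation. Expanding both sides with (\ref{HModuleProp1}) and (\ref{HModuleProp3}) gives
\begin{align*}
\xi\rhd(a\star b) &= m\bigl((\Delta(\xi)\cdot\mathcal{F}^{-1})\rhd(a\tensor b)\bigr),\\
m\bigl(\mathcal{F}^{-1}\rhd(\Delta^{\mathcal{F}}(\xi)\rhd(a\tensor b))\bigr) &= m\bigl((\mathcal{F}^{-1}\cdot\Delta^{\mathcal{F}}(\xi))\rhd(a\tensor b)\bigr),
\end{align*}
where the second line rewrites $(\xi_{(1)}\rhd a)\star(\xi_{(2)}\rhd b)$ read with Sweedler notation for $\Delta^{\mathcal{F}}$. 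The two right-hand sides coincide because (\ref{deformedcoproduct}) yields $\mathcal{F}^{-1}\cdot\Delta^{\mathcal{F}}(\xi)=\Delta(\xi)\cdot\mathcal{F}^{-1}$. No real obstacle arises here — the formula for $\Delta^{\mathcal{F}}$ was engineered precisely so that this step is automatic. The only delicate aspect throughout is bookkeeping of the several independent summations hidden in $\mathcal{F}^{-1}=\mathcal{F}^{-1}_{1}\tensor\mathcal{F}^{-1}_{2}$ and in the Sweedler legs $\xi_{(1)}\tensor\xi_{(2)}$, which I would handle by introducing primed copies when iterating, exactly as was done in the proof of Theorem \ref{theoremdeformedhopf}.
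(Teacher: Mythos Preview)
Your proof is correct and follows essentially the same route as the paper's: reduce to checking associativity, unitality of $1_{\algebra{A}}$, and the compatibility (\ref{HModuleProp3}) for $\Delta^{\mathcal{F}}$, then verify these via the $2$-cocycle identity (\ref{JConvention1}), the normalization (\ref{JConvention2}), and the defining relation $\mathcal{F}^{-1}\Delta^{\mathcal{F}}(\xi)=\Delta(\xi)\mathcal{F}^{-1}$, respectively. The paper writes out the associativity and compatibility computations in full Sweedler legs rather than your compressed $m_{3}$ form, but the argument is identical.
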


\begin{proof}
We have to show that
$ (\algebra{A},\star,+,\mathbb{k}) $ (which we will shortly denote by $ \algebra{A}^{\mathcal{F}} $ in the following)
is an algebra. Moreover, we have to verify axioms i.) and ii.) of Definition \ref{DefModAlg},
having in mind that $ \algebra{A} $ is a left $ H $-module algebra.
First of all, $ H^{\mathcal{F}} $ is a Hopf algebra according to Theorem \ref{theoremdeformedhopf}. Next we prove that
$ \algebra{A}^{\mathcal{F}} $ is an algebra. Let $ a,b,c\in\algebra{A} $ and calculate
\begin{align*}
(a\star b)\star c&=((\mathcal{F}_{1}^{-1}\rhd a)\cdot(\mathcal{F}_{2}^{-1}\rhd b))\star c\\
&=\mathcal{F}_{1}^{-1}\rhd((\mathcal{F}_{1}^{-1}\rhd a)\cdot(\mathcal{F}_{2}^{-1}\rhd b))(\mathcal{F}_{2}^{-1}\rhd c)\\
&=((\mathcal{F}_{1}^{-1})_{(1)}\rhd(\mathcal{F}_{1}^{-1}\rhd a))\cdot
((\mathcal{F}_{1}^{-1})_{(2)}\rhd(\mathcal{F}_{2}^{-1}\rhd b))\cdot(\mathcal{F}_{2}^{-1}\rhd c)\\
&=(((\mathcal{F}_{1}^{-1})_{(1)}\mathcal{F}_{1}^{-1})\rhd a)\cdot
(((\mathcal{F}_{1}^{-1})_{(2)}\mathcal{F}_{2}^{-1})\rhd b)\cdot(\mathcal{F}_{2}^{-1}\rhd c)\\
&=\mathrm{m}((\mathrm{m}\tensor 1)\circ((\mathcal{F}_{1}^{-1})_{(1)}\mathcal{F}_{1}^{-1}\tensor
(\mathcal{F}_{1}^{-1})_{(2)}\mathcal{F}_{2}^{-1}\tensor\mathcal{F}_{2}^{-1})\rhd(a\tensor b\tensor c))\\
&=\mathrm{m}((\mathrm{m}\tensor 1)\circ
(((\Delta\tensor 1)\mathcal{F}^{-1})\cdot\mathcal{F}_{12}^{-1})\rhd(a\tensor b\tensor c))\\
&=\mathrm{m}((\mathrm{m}\tensor 1)\circ
(((1\tensor\Delta)\mathcal{F}^{-1})\cdot\mathcal{F}_{23}^{-1})\rhd(a\tensor b\tensor c))\\
&=\mathrm{m}((\mathrm{m}\tensor 1)\circ
(\mathcal{F}_{1}^{-1}\tensor(\mathcal{F}_{2}^{-1})_{(1)}\mathcal{F}_{1}^{-1}\tensor
(\mathcal{F}_{2}^{-1})_{(2)}\mathcal{F}_{2}^{-1})\rhd(a\tensor b\tensor c))\\
&=(\mathcal{F}_{1}^{-1}\rhd a)\cdot(((\mathcal{F}_{2}^{-1})_{(1)}\mathcal{F}_{1}^{-1})\rhd b)\cdot
(((\mathcal{F}_{2}^{-1})_{(2)}\mathcal{F}_{2}^{-1})\rhd c)\\
&=(\mathcal{F}_{1}^{-1}\rhd a)\cdot((\mathcal{F}_{2}^{-1})_{(1)}\rhd(\mathcal{F}_{1}^{-1}\rhd b))\cdot
((\mathcal{F}_{2}^{-1})_{(2)}\rhd(\mathcal{F}_{2}^{-1}\rhd c))\\
&=(\mathcal{F}_{1}^{-1}\rhd a)\cdot(\mathcal{F}_{2}^{-1}\rhd((\mathcal{F}_{1}^{-1}\rhd b)\cdot
(\mathcal{F}_{2}^{-1}\rhd c)))\\
&=(\mathcal{F}_{1}^{-1}\rhd a)\cdot(\mathcal{F}_{2}^{-1}\rhd(b\star c))\\
&=a\star(b\star c),
\end{align*}
where we used the $ 2 $-cocycle property (\ref{JConvention1}) of $ \mathcal{F}^{-1} $ and the left $ H $-module
properties (\ref{HModuleProp1}) and (\ref{HModuleProp3}) of $ \algebra{A} $ several times. This shows that the
product $ \star $ is associative. There is even the same unit element $ 1_{\algebra{A}} $ for $ \star $. To see this,
take $ a\in\algebra{A} $ and use properties (\ref{HModuleProp2}), (\ref{HModuleProp4}) and the normalization
property (\ref{JConvention2}) of $ \mathcal{F}^{-1} $ to get
\begin{align*}
a\star 1_{\algebra{A}}&=(\mathcal{F}_{1}^{-1}\rhd a)\cdot(\mathcal{F}_{2}^{-1}\rhd 1_{\algebra{A}})\\
&=(\mathcal{F}_{1}^{-1}\rhd a)\cdot\epsilon(\mathcal{F}_{2}^{-1})1_{\algebra{A}}\\
&=(\mathcal{F}_{1}^{-1}\epsilon(\mathcal{F}_{2}^{-1}))\rhd a\\
&=1_{H}\rhd a\\
&=a\\
&=(\epsilon(\mathcal{F}_{1}^{-1})\mathcal{F}_{2}^{-1})\rhd a\\
&=(\epsilon(\mathcal{F}_{1}^{-1})1_{\algebra{A}})\cdot(\mathcal{F}_{2}^{-1}\rhd a)\\
&=(\mathcal{F}_{1}^{-1}\rhd 1_{\algebra{A}})\cdot(\mathcal{F}_{2}^{-1}\rhd a)\\
&=1_{\algebra{A}}\star a.
\end{align*}
Remark that also the $ \mathbb{k} $-linearity of $ \rhd $ was needed to prove the above formula.
This is enough to show that $ \algebra{A}^{\mathcal{F}} $ is an algebra, since $ \rhd $ is linear.
The only problematic property of Definition \ref{DefModAlg} we have to check is (\ref{HModuleProp3}).
Let $ a,b\in\algebra{A} $ and $ \xi\in H $. Then
\begin{align*}
\xi\rhd(a\star b)&=\xi\rhd((\mathcal{F}_{1}^{-1}\rhd a)\cdot(\mathcal{F}_{2}^{-1}\rhd b))\\
&=(\xi_{(1)}\rhd(\mathcal{F}_{1}^{-1}\rhd a))\cdot(\xi_{(2)}\rhd(\mathcal{F}_{2}^{-1}\rhd b))\\
&=\mathrm{m}((\xi_{(1)}\mathcal{F}_{1}^{-1}\tensor\xi_{(2)}\mathcal{F}_{2}^{-1})\rhd(a\tensor b))\\
&=\mathrm{m}((\Delta(\xi)\tensor\mathcal{F}^{-1})\rhd(a\tensor b))\\
&=\mathrm{m}((\mathcal{F}^{-1}\cdot\Delta^{\mathcal{F}}(\xi))\rhd(a\tensor b))\\
&=\mathrm{m}(\mathcal{F}^{-1}\rhd(\Delta^{\mathcal{F}}(\xi)\rhd(a\tensor b)))\\
&=(\xi_{(1)}^{\mathcal{F}}\rhd a)\star(\xi_{(2)}^{\mathcal{F}}\rhd b),
\end{align*}
where we use the short notation $ \Delta^{\mathcal{F}}(\xi)=\xi_{(1)}^{\mathcal{F}}\tensor\xi_{(2)}^{\mathcal{F}} $.
We applied Eq. (\ref{HModuleProp1}) and Eq. (\ref{HModuleProp3}) several times as well as the definition
(\ref{deformedcoproduct}) of the deformed coproduct $ \Delta^{\mathcal{F}} $. This proves the result.
\end{proof}

This theorem tells us that we can deform the product of a left $ H $-module algebra to obtain a left module algebra
of the deformed Hopf algebra. Assume that $ (\algebra{A},\cdot,+,\mathbb{k}) $ is a commutative left $ H $-module
algebra, i.e. $ ab=ba $ for all $ a,b\in\algebra{A} $. Then the deformed algebra $ (\algebra{A},\star,+,\mathbb{k}) $
(which we may shortly denote by $ \algebra{A}^{\mathcal{F}} $ as in the last proof) is not commutative
again in general, since for $ a,b\in\algebra{A} $ the term
\begin{align*}
a\star b=(\mathcal{F}_{1}^{-1}\rhd a)(\mathcal{F}_{2}^{-1}\rhd b)
\end{align*}
may not coincide with
\begin{align*}
b\star a=(\mathcal{F}_{1}^{-1}\rhd b)(\mathcal{F}_{2}^{-1}\rhd a).
\end{align*}
Thus $ \mathcal{F} $ will not only deform but also quantize $ \algebra{A} $ in a way which is compatible to
the deformation (quantization) of $ H $.

\section{Twist Star Products}\label{secStarProdTwist}

We are only interested in the case when $ H $ is a universal enveloping algebra $ \mathcal{U}(\mathfrak{g}) $ of a
(finite-dimensional) Lie algebra $ \mathfrak{g} $ over $ \mathbb{k} $. As a short repetition, the \textit{universal
enveloping algebra} of $ \mathfrak{g} $ is the tensor algebra of $ \mathfrak{g} $ modulo the relation
\begin{align}
\left[\xi,\zeta\right]=\xi\zeta-\zeta\xi,
\end{align}
for all $ \xi,\zeta\in\mathfrak{g} $. One structures $ \mathcal{U}(\mathfrak{g}) $ as a Hopf algebra by setting
\begin{align*}
\Delta_{\mathcal{U}(\mathfrak{g})}(\xi)&=\xi\tensor 1_{\mathbb{k}}+1_{\mathbb{k}}\tensor\xi,\\
\epsilon_{\mathcal{U}(\mathfrak{g})}(\xi)&=0,\\
S_{\mathcal{U}(\mathfrak{g})}(\xi)&=-\xi,
\end{align*}
where $ \xi\in\mathfrak{g} $. This extends to algebra (anti) homomorphisms and yields a cocommutative Hopf algebra.
In a general Hopf algebra $ H $ one calls an element $ \xi\in H $ that
satisfies $ \Delta(\xi)=\xi\tensor 1_{H}+1_{H}\tensor\xi $ \textit{primitive}.
We want to pass to formal power series thus we have to recap them first.

Let $ \mathbb{k} $ be a field of characteristic zero and define $ K=\mathbb{k}\left[\left[\hbar\right]\right] $ to be
the space of all sequences $ a=(a_{0},a_{1},a_{2},...) $ in $ \mathbb{k} $ that can be written as a formal power series
\begin{align}
a=\sum\limits_{r=0}^{\infty}\hbar^{r}a_{r}
\end{align}
with $ a_{r}\in\mathbb{k} $ and \textit{Planck\textquoteright s constant} $ \hbar $. This space has the structure
of an associative commutative unital ring if we declare a product on $ K $ by
\begin{align}\label{eq18thomas}
ab
=\Bigg(\sum\limits_{r=0}^{\infty}\hbar^{r}a_{r}\Bigg)\Bigg(\sum\limits_{r=0}^{\infty}\hbar^{r}b_{r}\Bigg)
=\sum\limits_{r=0}^{\infty}\hbar^{r}\Bigg(\sum\limits_{s=0}^{r}a_{s}b_{r-s}\Bigg)
\end{align}
for all $ a,b\in K $. If $ V $ is a $ \mathbb{k} $-vector space, the formal power series
$ V[[\hbar]] $ form a $ K $-module by defining for $ a\in K $ and $ 
v\in V[[\hbar]] $
\begin{align}
av
=\Bigg(\sum\limits_{r=0}^{\infty}\hbar^{r}a_{r}\Bigg)\Bigg(\sum\limits_{r=0}^{\infty}\hbar^{r}v_{r}\Bigg)
=\sum\limits_{r=0}^{\infty}\hbar^{r}\Bigg(\sum\limits_{s=0}^{r}a_{s}v_{r-s}\Bigg)
\end{align}
with the same multiplication (\ref{eq18thomas}) for elements in $ V[[\hbar]] $. The next step is to define a
topology such that $ V[[\hbar]] $ is complete. So choose a $ C>1 $ and define a valuation on $ K $ by
\begin{align}\label{eq19thomas}
\parallel a_{n}\hbar^{n}+a_{n+1}\hbar^{n+1}+...\parallel=C^{-n},
\end{align}
where $ (a_{1},a_{2},...) $ is an arbitrary element of $ K $ such that $ a_{n} $ is the smallest coefficient that
is not equal to zero. For the sequence $ (0,0,...) $ that is constant zero one defines the valuation to be $ 0 $.
One can check that this is indeed a well-defined valuation on $ K $ called the \textit{$ \hbar $-adic valuation}.
The topology induced by the $ \hbar $-adic valuation is said to be the \textit{$ \hbar $-adic topology} and one can
check that $ K $ is complete with respect to this topology. Moreover, if we extend the valuation to $ V[[\hbar]] $, also
$ V[[\hbar]] $ is complete with respect to the $ \hbar $-adic topology (c.f.
\cite[Section~1.1.1]{etingof2002lectures} and \cite[Section~6.2.1]{waldmannbuch1}).

Coming back to the universal enveloping algebra $ \mathcal{U}(\mathfrak{g}) $ we can now pass to
$ \mathcal{U}(\mathfrak{g})[[\hbar]] $. By extending
$ \Delta_{\mathcal{U}(\mathfrak{g})}, \epsilon_{\mathcal{U}(\mathfrak{g})} $ and $ S_{\mathcal{U}(\mathfrak{g})} $
$ \mathbb{k}[[\hbar]] $-linearly one gets a Hopf algebra structure
\begin{align*}
(\mathcal{U}(\mathfrak{g})[[\hbar]],+,\cdot,\Delta,\epsilon,S,\mathbb{k})
\end{align*}
on $ \mathcal{U}(\mathfrak{g})[[\hbar]] $, too. Only the comultiplication $ \Delta\colon
\mathcal{U}(\mathfrak{g})[[\hbar]]\rightarrow\mathcal{U}(\mathfrak{g})[[\hbar]]\tensor\mathcal{U}(\mathfrak{g})
[[\hbar]] $ has to be modified. To assure it is well-defined we have to take the completion
$ (\mathcal{U}(\mathfrak{g})\tensor\mathcal{U}(\mathfrak{g}))[[\hbar]] $ of $ \mathcal{U}(\mathfrak{g})[[\hbar]]
\tensor\mathcal{U}(\mathfrak{g})[[\hbar]] $, i.e. we have to assume
\begin{align}
\Delta\colon\mathcal{U}(\mathfrak{g})[[\hbar]]\rightarrow(\mathcal{U}(\mathfrak{g})\tensor\mathcal{U}(\mathfrak{g}))
[[\hbar]].
\end{align}
By the same argumentation twists on $ \mathcal{U}(\mathfrak{g})[[\hbar]] $ are defined as elements of
$ (\mathcal{U}(\mathfrak{g})\tensor\mathcal{U}(\mathfrak{g}))[[\hbar]] $. In complete analogy to the last section
we define a twist for our special ``Hopf algebra''.

\begin{definition}\label{DefTwistUEA}
Let $ \mathfrak{g} $ be a Lie algebra over $ \mathbb{k} $. We call an element $ \mathcal{F}\in(\mathcal{U}
(\mathfrak{g})\tensor\mathcal{U}(\mathfrak{g}))[[\hbar]] $ a \textbf{twist} on $ \mathcal{U}(\mathfrak{g})[[\hbar]] $
if the following three conditions are satisfied,
\begin{compactenum}
\item $ \mathcal{F}_{12}(\Delta\tensor 1)\mathcal{F}=\mathcal{F}_{23}(1\tensor\Delta)\mathcal{F} $,
\item $ (\epsilon\tensor 1)\mathcal{F}=(1\tensor\epsilon)\mathcal{F}=1_{\mathcal{U}(\mathfrak{g}))[[\hbar]]} $,
\item $ \mathcal{F}=1_{\mathcal{U}(\mathfrak{g})[[\hbar]]}\tensor 1_{\mathcal{U}(\mathfrak{g})[[\hbar]]}
\mod{\hbar} $.
\end{compactenum}
\end{definition}

\begin{remark}
One motivation for the additional axiom iii.) in Definition \ref{DefTwistUEA} is the following: if we consider a left
$ \mathcal{U}(\mathfrak{g}))[[\hbar]] $-module algebra $ \algebra{A} $, one gets a deformed product on
$ \algebra{A} $ via Eq. (\ref{twistedstar}) by Theorem \ref{TheoremTwistedStar}. Axiom iii.) assures that $ \star $
coincides with the original multiplication $ \cdot $ on $ \algebra{A} $ in zero order of $ \hbar $. If we think of
quantization this is the classical limit that gives back the classical structure.
\end{remark}

The main motivation to consider \textit{twist products} $ \star $, defined in Eq. (\ref{twistedstar}), is that
in many cases $ \star $ denotes a star product. We do not give a motivated introduction into the rich field
of star products and formal deformation quantization, but refer to \cite{delorme2012noncommutative},
\cite[Chapter~3]{esposito2014formality} and \cite[Chapter~6]{waldmannbuch1}. For this thesis it suffices to see
star products as products on formal power series of smooth functions of a manifold. We are mainly interested in
their existence but not in their properties. Some nice examples of star products can be found in
\cite{Minkowskistar}. Remark that Kontsevich proved that there is a star product on
any Poisson manifold (c.f. \cite{kontsevich1997deformation}).

	\begin{definition}[Star Product]\label{def3thomas}
	A star product on a Poisson manifold $ (M,\pi) $ is a $ \mathbb{R}[[\hbar]] $-bilinear map
		\begin{align}
		\star\colon\Cinfty(M)[[\hbar]]\times\Cinfty(M)[[\hbar]]\rightarrow\Cinfty(M)[[\hbar]]
		\end{align}
	of the form
		\begin{align}
		f\star g=\sum\limits_{r=0}^{\infty}\hbar^{r}C_{r}(f,g),
		\end{align}
	for $ f,g\in\Cinfty(M)[[\hbar]] $, where $ C_{r}\colon\Cinfty(M)[[\hbar]]\times\Cinfty(M)[[\hbar]]
	\rightarrow\Cinfty(M)[[\hbar]] $ are $
	\mathbb{R}[[\hbar]] $-bilinear bidifferential operators such that the following four conditions are satisfied,
		\begin{compactenum}
		\item $ \star $ is associative,
		\item it holds for $ f,g\in\Cinfty(M)[[\hbar]] $ that $ C_{0}(f,g)=fg $,
		\item it holds for $ f,g\in\Cinfty(M)[[\hbar]] $ that $ C_{1}(f,g)-C_{1}(g,f)=\left\lbrace f,g\right\rbrace $,
		where the Poisson bracket $ \left\lbrace\cdot,\cdot\right\rbrace $ is defined in Eq. (\ref{PoissonTensor}),
		\item it holds for $ f\in\Cinfty(M)[[\hbar]] $ that $ 1\star f=f=f\star 1 $, where $ 1 $ denotes the
		function that is constant $ 1 $.
		\end{compactenum}
	\end{definition}
Condition ii.) of Definition \ref{def3thomas} guarantees, that the classical limit $ \hbar\rightarrow 0 $ gives back the
commutative pointwise multiplication of functions, while
condition iii.) can be interpreted as follows: the star product $ \star $
\textit{deforms} the Poisson bracket $ \left\lbrace\cdot,\cdot\right\rbrace $ (or equivalently the Poisson
bivector $ \pi $ corresponding to $ \left\lbrace\cdot,\cdot\right\rbrace $).
This is the correspondence principle
we mentioned in the introduction. We only omitted a prefactor. If $ \star $ is a star product on a symplectic
manifold $ (M,\omega) $ one says that $ \star $ deforms the symplectic structure $ \omega $ of $ M $. This is
nearby thinking of Eq. (\ref{SymplPoissonBracket}).

Inspired by Theorem \ref{TheoremTwistedStar} we have the
following

\begin{definition}[Twist Star Product]
Let $ (M,\pi) $ be a Poisson manifold and $ \star $ a star product on it. Then $ \star $ is said to be a
star product induced by a twist or a twist star product if there is a Lie algebra $ \mathfrak{g} $
over $ \mathbb{R} $ together with a twist $ \mathcal{F} $ on $ \mathcal{U}(\mathfrak{g})[[\hbar]] $ such that
$ \Cinfty(M)[[\hbar]] $ is a left $ \mathcal{U}(\mathfrak{g})[[\hbar]] $-module algebra via $ \rhd $ and for all
$ f,g\in\Cinfty(M)[[\hbar]] $ one has
\begin{align}\label{TwistInducedStar}
f\star g=\mathrm{m}(\mathcal{F}^{-1}\rhd(f\tensor g))=(\mathcal{F}_{1}^{-1}\rhd f)(\mathcal{F}_{2}^{-1}\rhd g),
\end{align}
where $ \mathrm{m} $ denotes the multiplication $ \Cinfty(M)[[\hbar]]\tensor\Cinfty(M)[[\hbar]]
\rightarrow\Cinfty(M)[[\hbar]] $.
\end{definition}

There is a connection between twists on $ \mathcal{U}(\mathfrak{g})[[\hbar]] $ and $ r $-matrices on $ \mathfrak{g} $.
The correspondence principle states that the first order of the commutator of the star product gives the Poisson
bracket of the manifold. In a very similar way a twist can be seen as a deformation of a $ r $-matrix.
Conversely, the skew-symmetrization of the first order of a twist gives a $ r $-matrix. This is part of the following

\begin{proposition}\label{Proprmatrixfromtwist}
Let $ \mathfrak{g} $ be a Lie algebra over $ \mathbb{R} $ and
\begin{align}
\mathcal{F}=\sum\limits_{i=0}^{\infty}\hbar^{i}F_{i}
\end{align}
a twist on $ \mathcal{U}(\mathfrak{g}))[[\hbar]] $. Then
\begin{align}\label{rmatrixfromtwist}
r:=F_{1}^{-1}-\sigma(F_{1}^{-1})\in\mathfrak{g}\wedge\mathfrak{g}
\end{align}
is a $ r $-matrix on $ \mathfrak{g} $, where $ \sigma $ denotes the braiding isomorphism.
\end{proposition}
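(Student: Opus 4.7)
The plan is to expand every axiom of Definition \ref{DefTwistUEA} order by order in $\hbar$ and read off conditions on the coefficients of $\mathcal{F}$. First I fix notation: by axiom iii) we may write $\mathcal{F}=1\otimes 1+\hbar F_{1}+\hbar^{2}F_{2}+O(\hbar^{3})$. Writing $\mathcal{F}^{-1}=1\otimes 1+\hbar F^{-1}_{1}+\hbar^{2}F^{-1}_{2}+O(\hbar^{3})$ (the superscript denoting the coefficients of the inverse, not inverse elements), the relation $\mathcal{F}\cdot\mathcal{F}^{-1}=1\otimes 1$ forces $F^{-1}_{1}=-F_{1}$ and $F^{-1}_{2}=F_{1}^{2}-F_{2}$. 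In particular $r=F^{-1}_{1}-\sigma(F^{-1}_{1})=\sigma(F_{1})-F_{1}$, which is manifestly skew-symmetric, so condition iii) of Definition \ref{def4thomas} is immediate.

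Next I would argue that $r$ actually lies in the subspace $\mathfrak{g}\wedge\mathfrak{g}\subseteq\mathcal{U}(\mathfrak{g})\otimes\mathcal{U}(\mathfrak{g})$. The normalization axiom at order $\hbar$ gives $(\epsilon\otimes 1)F_{1}=(1\otimes\epsilon)F_{1}=0$, while the $2$-cocycle condition at order $\hbar$ reads
\begin{equation*}
F_{1}\otimes 1+(\Delta\otimes 1)F_{1}=1\otimes F_{1}+(1\otimes\Delta)F_{1}.
\end{equation*}
Subtracting from this the same identity with the first two tensor slots swapped produces an equation exhibiting the antisymmetric part $\alpha:=F_{1}-\sigma(F_{1})$ as primitive in each tensor leg; by the Cartier--Milnor--Moore characterization of primitive elements in the cocommutative Hopf algebra $\mathcal{U}(\mathfrak{g})$, every primitive element already lies in $\mathfrak{g}$, so $\alpha\in\mathfrak{g}\otimes\mathfrak{g}$, and thus $r=-\alpha\in\mathfrak{g}\wedge\mathfrak{g}$. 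This reduction to $\mathfrak{g}$ is essential, because only then do the products appearing in the definition of $\operatorname{CYB}(r)$ produce Lie brackets in $\mathfrak{g}$ rather than associative products in $\mathcal{U}(\mathfrak{g})$.

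For the classical Yang--Baxter equation I would expand the $2$-cocycle condition to order $\hbar^{2}$,
\begin{equation*}
(F_{1})_{12}\cdot(\Delta\otimes 1)F_{1}+(\Delta\otimes 1)F_{2}+(F_{2})_{12}=(F_{1})_{23}\cdot(1\otimes\Delta)F_{1}+(1\otimes\Delta)F_{2}+(F_{2})_{23},
\end{equation*}
and take its total antisymmetrization in the three tensor slots. On the right-hand side, $(F_{2})_{23}$ is symmetric in the last two slots and $(\Delta\otimes 1)F_{2}$, $(1\otimes\Delta)F_{2}$ and $(F_{2})_{12}$ produce terms involving the coproduct applied to a single tensor factor, all of which vanish under full skew-symmetrization once one uses the first-order cocycle identity to match the $F_{2}$-contributions on both sides. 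What remains is a quadratic expression in $F_{1}$ only, and a direct bookkeeping---using that on the antisymmetric part $\alpha$ the coproduct acts as $(\Delta\otimes 1)\alpha=\alpha_{13}+\alpha_{23}$ and $(1\otimes\Delta)\alpha=\alpha_{12}+\alpha_{13}$---rewrites it as $[r_{12},r_{13}]+[r_{12},r_{23}]+[r_{13},r_{23}]=\operatorname{CYB}(r)$. The hard part is precisely this last bookkeeping: one must verify that the six commutator terms of $\operatorname{CYB}(r)$ are produced exactly by the six pairings coming from the antisymmetrized quadratic identity, with no residual contributions from $F_{2}$ or from the symmetric part of $F_{1}$. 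Once this is set up carefully, all three conditions in Definition \ref{def4thomas} hold, so $r$ is a $r$-matrix on $\mathfrak{g}$.
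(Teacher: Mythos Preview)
Your treatment of skew-symmetry and of the primitivity of $r$ is essentially the paper's argument. For the classical Yang--Baxter equation, however, your route differs from the paper's and the justification you give has genuine gaps.

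First, the claim that ``$(F_{2})_{23}$ is symmetric in the last two slots'' is false: $(F_{2})_{23}=1\otimes F_{2}$ is symmetric in slots $2,3$ only if $F_{2}$ itself is symmetric, and nothing guarantees that. The $F_{2}$-contribution to the full $S_{3}$-skew-symmetrization \emph{does} vanish, but for a different reason: for $X=\sum a\otimes b\in\mathcal{U}(\mathfrak{g})^{\otimes 2}$ one computes $(\Delta\otimes 1)X+X_{12}-(1\otimes\Delta)X-X_{23}=\sum a'\otimes a''\otimes b-\sum a\otimes b'\otimes b''$ in terms of reduced coproducts, and each of these summands is symmetric in a pair of adjacent slots by the \emph{cocommutativity} of $\Delta$ on $\mathcal{U}(\mathfrak{g})$, hence killed by full skew-symmetrization. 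Your appeal to the ``first-order cocycle identity to match the $F_{2}$-contributions'' is a non sequitur---that identity involves only $F_{1}$. Second, you never explain why the symmetric part of $F_{1}$ drops out of the quadratic terms after skew-symmetrization; this is the real content hidden behind your phrase ``direct bookkeeping'', and it again relies on cocommutativity in a way you do not make explicit.

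The paper sidesteps both difficulties by a cleaner device. Rather than antisymmetrizing the raw order-$\hbar^{2}$ equation, it forms $((\mathcal{F}-\sigma(\mathcal{F}))\otimes 1)\cdot(\Delta\otimes 1)(\mathcal{F}-\sigma(\mathcal{F}))$, rewrites it via the $2$-cocycle condition as $\mathcal{A}-\mathrm{Alt}_{1}(\mathcal{A})+\mathcal{B}-\mathrm{Alt}_{1}(\mathcal{B})$ for suitable $\mathcal{A},\mathcal{B}$, and observes that the cyclic sum $\mathrm{Alt}=1+\mathrm{Alt}_{1}+\mathrm{Alt}_{1}^{2}$ annihilates this identically, in \emph{every} order of $\hbar$. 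Since $\mathcal{F}-\sigma(\mathcal{F})=-\hbar r+O(\hbar^{2})$, the order-$\hbar^{2}$ coefficient is precisely $\mathrm{Alt}((r\otimes 1)(\Delta\otimes 1)r)$, which the paper separately computes (using $(\Delta\otimes 1)r=r_{13}+r_{23}$) to equal $\mathrm{CYB}(r)$. By building the antisymmetry into $\mathcal{F}-\sigma(\mathcal{F})$ from the start, the paper never has to touch $F_{2}$ or the symmetric part of $F_{1}$; your approach can be made to work, but only after supplying the cocommutativity arguments you omitted.
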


\begin{proof}
According to Definition \ref{DefTwistUEA} iii.) $ F_{0}=1_{\mathcal{U}(\mathfrak{g})[[\hbar]]}
\tensor 1_{\mathcal{U}(\mathfrak{g})[[\hbar]]}=:\mathbb{1} $ is an invertible
element. It is known that in this case $ \mathcal{F} $ is invertible with inverse
\begin{align}
\mathcal{F}^{-1}=\sum\limits_{i=0}^{\infty}\hbar^{i}J_{i},
\end{align}
where $ J_{0}=\mathbb{1} $ and
$ J_{i}=-\sum\limits_{j=1}^{i}F_{j}J_{i-j} $ for $ i>0 $. To choose $ J $ for the coefficients is natural
having in mind Remark \ref{RemarkJTwist2}. Then $ r $ given in Eq. (\ref{rmatrixfromtwist}) is a well-defined object
$ r=J_{1}-\sigma(J_{1}) $ in $ \mathcal{U}(\mathfrak{g})\tensor\mathcal{U}(\mathfrak{g}) $.
Let us write down the $ 2 $-cocycle condition (\ref{JConvention1}) for $ \mathcal{F}^{-1} $ by using the
multiplication rules of formal power series:
\begin{align}
\sum\limits_{i=0}^{\infty}\hbar^{i}\sum\limits_{j=0}^{i}((\Delta\tensor 1)J_{j})\cdot(J_{i-j}\tensor\mathbb{1})
=\sum\limits_{i=0}^{\infty}\hbar^{i}\sum\limits_{j=0}^{i}((1\tensor\Delta)J_{j})\cdot(\mathbb{1}\tensor J_{i-j}).
\end{align}
The first order in $ \hbar $ reads
\begin{align*}
J_{1}\tensor\mathbb{1}+(\Delta\tensor 1)J_{1}&=((\Delta\tensor 1)J_{0})\cdot(J_{1}\tensor\mathbb{1})
+((\Delta\tensor 1)J_{1})\cdot(J_{0}\tensor\mathbb{1})\\
&=((1\tensor\Delta)J_{0})\cdot(\mathbb{1}\tensor J_{1})+((1\tensor\Delta)J_{1})\cdot(\mathbb{1}\tensor J_{0})\\
&=\mathbb{1}\tensor J_{1}+(1\tensor\Delta)J_{1}.
\end{align*}
Since $ \sigma $ is an isomorphism also $ \sigma(\mathcal{F}^{-1}) $ fulfils Eq. (\ref{JConvention1}) and the same
calculation implies
\begin{align}\label{subtract}
\sigma(J_{1})\tensor\mathbb{1}+(\Delta\tensor 1)\sigma(J_{1})=\mathbb{1}\tensor\sigma(J_{1})+(1+\Delta)\sigma(J_{1}).
\end{align}
Subtracting Eq. (\ref{subtract}) from $ J_{1}\tensor\mathbb{1}+(\Delta\tensor 1)J_{1}
=\mathbb{1}\tensor J_{1}+(1\tensor\Delta)J_{1} $ we get a condition on $ r $, namely
\begin{align}\label{rmatrixhelp}
r\tensor\mathbb{1}+(\Delta\tensor 1)r=\mathbb{1}\tensor r+(1\tensor\Delta)r.
\end{align}
We first show that $ r $ is an element of $ \mathfrak{g}\wedge\mathfrak{g} $. The skew-symmetry is clear, since
$ \sigma(r)=\sigma(J_{1})-\sigma^{2}(J_{1})=-r $. The question is whether $ r $ lies in the wedge product of
$ \mathfrak{g} $, i.e. if the first and second tensor component of $ r $ are primitive elements. With the help of Eq.
(\ref{rmatrixhelp}) and the skew-symmetry of $ r $ we obtain
\begin{align*}
(\Delta\tensor 1)r-r_{23}-r_{13}&=r_{23}-r_{12}+(1+\Delta)r-r_{23}-r_{13}\\
&=(1+\Delta)r-r_{12}-r_{13}\\
&=-(1+\Delta)\sigma(r)-r_{12}-r_{13}\\
&=-r_{2}\tensor(r_{1})_{(1)}\tensor(r_{1})_{(2)}-r_{12}-r_{13}.
\end{align*}
Permuting the tensor factors as $ (1,2,3)\rightarrow(2,3,1) $ gives no minus sign, since $ r $ is skew-symmetric.
Thus
\begin{align*}
(\Delta\tensor 1)r-r_{23}-r_{13}&=-(r_{1})_{(1)}\tensor(r_{1})_{(2)}\tensor r_{2}-r_{32}-r_{31}\\
&=-((\Delta+1)r-r_{23}-r_{13}).
\end{align*}
But this implies $ \Delta(r_{1})\tensor r_{2}=(\Delta+1)r=r_{23}+r_{13}=\mathbb{1}\tensor r_{1}\tensor r_{2}
+r_{1}\tensor\mathbb{1}\tensor r_{2} $, i.e.
\begin{align}
\Delta(r_{1})=\mathbb{1}\tensor r_{1}+r_{1}\tensor\mathbb{1}.
\end{align}
Thus $ r_{1} $ is a primitive element and for this $ r_{1}\in\mathfrak{g} $. Since $ r $ is skew-symmetric there
must be $ r_{2}\in\mathfrak{g} $. This can also be seen by considering again Eq. (\ref{rmatrixhelp}) and using
\begin{align}\label{rmatrixhelp2}
(\Delta\tensor 1)r=r_{23}+r_{13}.
\end{align}
Then the condition for $ r_{2} $ reads analogous to Eq. (\ref{rmatrixhelp2})
\begin{align}\label{rmatrixhelp3}
(1\tensor\Delta)r=r_{13}+r_{12}.
\end{align}
Finally $ r\in\mathfrak{g}\wedge\mathfrak{g} $. The only thing left to prove is
that $ \text{CYB}(r)=0 $. First remember that the Lie bracket on $ \mathcal{U}(\mathfrak{g}) $ is the
commutator. Then we obtain with Eq. (\ref{rmatrixhelp2}) and the skew-symmetry of $ r $
\begin{align*}
\text{Alt}((r\tensor\mathbb{1})(\Delta\tensor 1)r)&=\text{Alt}(r_{12}(r_{23}+r_{13}))\\
&=\text{Alt}(r_{1}\tensor r_{2}r_{1}'\tensor r_{2}'+r_{1}r_{1}'\tensor r_{2}\tensor r_{2}')\\
&=r_{1}\tensor r_{2}r_{1}'\tensor r_{2}'+r_{1}r_{1}'\tensor r_{2}\tensor r_{2}'
+r_{2}'\tensor r_{1}\tensor r_{2}r_{1}'+r_{2}'\tensor r_{1}r_{1}'\tensor r_{2}\\
&+r_{2}r_{1}'\tensor r_{2}'\tensor r_{1}+r_{2}\tensor r_{2}'\tensor r_{1}r_{1}'\\
&=r_{1}r_{1}'\tensor r_{2}\tensor r_{2}'-r_{1}'r_{1}\tensor r_{2}\tensor r_{2}'
+r_{1}\tensor r_{2}r_{1}'\tensor r_{2}'-r_{1}\tensor r_{1}'r_{2}\tensor r_{2}'\\
&+r_{2}'\tensor r_{1}\tensor r_{2}r_{1}'-r_{2}'\tensor r_{1}\tensor r_{1}'r_{2}\\
&=\left[ r_{1},r_{1}'\right]\tensor r_{2}\tensor r_{2}'+r_{1}\tensor\left[ r_{2},r_{1}'\right]\tensor r_{2}'
+r_{1}\tensor r_{1}'\tensor\left[ r_{2},r_{2}'\right]\\
&=\left[ r_{12},r_{13}\right]+\left[ r_{12},r_{23}\right]+\left[ r_{13},r_{23}\right]\\
&=\text{CYB}(r),
\end{align*}
where we changed to summation several times. Now let us define the $ \mathbb{k} $-linear maps
\begin{align}
\text{Alt}_{1}\colon\mathcal{U}(\mathfrak{g})^{\tensor 3}\rightarrow\mathcal{U}(\mathfrak{g})^{\tensor 3}
\end{align}
and
\begin{align}
\text{Alt}_{2}\colon\mathcal{U}(\mathfrak{g})^{\tensor 3}\rightarrow\mathcal{U}(\mathfrak{g})^{\tensor 3}
\end{align}
on factorizing tensors $ a\tensor b\tensor c $ by $ \text{Alt}_{1}(a\tensor b\tensor c)=c\tensor a\tensor b $
and $ \text{Alt}_{2}=\text{Alt}_{1}^{2} $. Clearly one has $ \text{Alt}_{1}^{3}=1 $.
With this we can calculate
\begin{align*}
((\mathcal{F}-\sigma(\mathcal{F}))\tensor\mathbb{1})&\cdot(\Delta\tensor 1)(\mathcal{F}-\sigma(\mathcal{F}))\\
&=(\mathcal{F}\tensor\mathbb{1})\cdot(\Delta\tensor 1)\mathcal{F}
-(\sigma(\mathcal{F})\tensor\mathbb{1})\cdot(\Delta\tensor 1)\mathcal{F}\\
&-(\mathcal{F}\tensor\mathbb{1})\cdot(\Delta\tensor 1)\sigma(\mathcal{F})
+(\sigma(\mathcal{F})\tensor\mathbb{1})\cdot(\Delta\tensor 1)\sigma(\mathcal{F})\\
&=(\mathcal{F}\tensor\mathbb{1})\cdot(\Delta\tensor 1)\mathcal{F}
-\text{Alt}_{1}((\mathbb{1}\tensor\sigma(\mathcal{F}))\cdot(1\tensor\Delta)\sigma(\mathcal{F}))\\
&-\text{Alt}_{1}((\mathbb{1}\tensor\mathcal{F})\cdot(1\tensor\Delta)\mathcal{F})
+(\sigma(\mathcal{F})\tensor\mathbb{1})\cdot(\Delta\tensor 1)\sigma(\mathcal{F})\\
&=(\mathcal{F}\tensor\mathbb{1})\cdot(\Delta\tensor 1)\mathcal{F}
-\text{Alt}_{1}((\sigma(\mathcal{F})\tensor\mathbb{1})\cdot(\Delta\tensor 1)\sigma(\mathcal{F}))\\
&-\text{Alt}_{1}((\mathcal{F}\tensor\mathbb{1})\cdot(\Delta\tensor 1)\mathcal{F})
+(\sigma(\mathcal{F})\tensor\mathbb{1})\cdot(\Delta\tensor 1)\sigma(\mathcal{F})\\
&=\mathcal{A}-\text{Alt}_{1}(\mathcal{A})+\mathcal{B}-\text{Alt}_{1}(\mathcal{B}),
\end{align*}
where we used the $ 2 $-cocycle condition for $ \mathcal{F} $ and $ \sigma(\mathcal{F}) $ and defined
$ \mathcal{A}=(\mathcal{F}\tensor\mathbb{1})\cdot(\Delta\tensor 1)\mathcal{F} $ and
$ \mathcal{B}=(\sigma(\mathcal{F})\tensor\mathbb{1})\cdot(\Delta\tensor 1)\sigma(\mathcal{F}) $.
In this form the next step is quite easy:
\begin{align*}
\text{Alt}(((\mathcal{F}-\sigma(\mathcal{F}))\tensor\mathbb{1})&\cdot(\Delta\tensor 1)
(\mathcal{F}-\sigma(\mathcal{F})))\\
&=(1+\text{Alt}_{1}+\text{Alt}_{2})
(((\mathcal{F}-\sigma(\mathcal{F}))\tensor\mathbb{1})\cdot
(\Delta\tensor 1)(\mathcal{F}-\sigma(\mathcal{F})))\\
&=(1+\text{Alt}_{1}+\text{Alt}_{2})(\mathcal{A}-\text{Alt}_{1}(\mathcal{A})+\mathcal{B}
-\text{Alt}_{1}(\mathcal{B}))\\
&=\mathcal{A}-\text{Alt}_{1}(\mathcal{A})+\mathcal{B}-\text{Alt}_{1}(\mathcal{B})\\
&+\text{Alt}_{1}(\mathcal{A})-\text{Alt}_{2}(\mathcal{A})+\text{Alt}_{1}(\mathcal{B})-\text{Alt}_{2}(\mathcal{B})\\
&+\text{Alt}_{2}(\mathcal{A})-\mathcal{A}+\text{Alt}_{2}(\mathcal{B})-\mathcal{B}\\
&=0.
\end{align*}
This equation has to be satisfied in every order of $ \hbar $. In particular, for order $ \hbar^{2} $ we get
\begin{align}
0=\text{Alt}((r\tensor\mathbb{1})(\Delta\tensor 1)r)=\text{CYB}(r)
\end{align}
by the above calculation. This concludes the proof.
\end{proof}

\begin{corollary}\label{CorStarRmatrix}
Let $ \star $ be a twist star product on a Poisson manifold $ (M,\pi) $. Then there is a $ r $-matrix on a
Lie algebra such that the formal power series of the corresponding universal enveloping algebra structure
$ \Cinfty(M)[[\hbar]] $ as a left module algebra.
\end{corollary}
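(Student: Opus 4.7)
The plan is to observe that this corollary is essentially an immediate consequence of Proposition \ref{Proprmatrixfromtwist} combined with the very definition of a twist star product, so no new work is needed beyond packaging the two facts.

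First I would unpack the hypothesis: since $\star$ is a twist star product on $(M,\pi)$, by definition there is some real Lie algebra $\mathfrak{g}$, a twist $\mathcal{F}\in(\mathcal{U}(\mathfrak{g})\tensor\mathcal{U}(\mathfrak{g}))[[\hbar]]$ in the sense of Definition \ref{DefTwistUEA}, and a left $\mathcal{U}(\mathfrak{g})[[\hbar]]$-module algebra structure $\rhd$ on $\Cinfty(M)[[\hbar]]$ satisfying the formula (\ref{TwistInducedStar}). This already delivers the Lie algebra $\mathfrak{g}$ and the module algebra structure that the statement asks for.

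Next I would apply Proposition \ref{Proprmatrixfromtwist} to this twist $\mathcal{F}=\sum_{i\ge 0}\hbar^{i}F_{i}$. The proposition produces the element
\begin{equation*}
r=F_{1}^{-1}-\sigma(F_{1}^{-1})\in\mathfrak{g}\wedge\mathfrak{g},
\end{equation*}
where $\mathcal{F}^{-1}=\sum_{i\ge 0}\hbar^{i}F_{i}^{-1}$ is the (formal) inverse of $\mathcal{F}$, and guarantees both that $r$ actually lies in $\mathfrak{g}\wedge\mathfrak{g}$ (via the primitivity argument exploiting the $2$-cocycle condition in first order of $\hbar$) and that $r$ satisfies the classical Yang-Baxter equation $\operatorname{CYB}(r)=0$. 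By Definition \ref{def4thomas}, $r$ is therefore an $r$-matrix on $\mathfrak{g}$.

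To finish, I would simply combine the two ingredients: $\mathfrak{g}$ is the Lie algebra, $r$ is the $r$-matrix on $\mathfrak{g}$, and $\Cinfty(M)[[\hbar]]$ is a left $\mathcal{U}(\mathfrak{g})[[\hbar]]$-module algebra via the same action $\rhd$ that was used to build $\star$. There is essentially no obstacle here, as everything is already established; the only pedantic point worth double-checking is that the formal inverse $\mathcal{F}^{-1}$ and its zeroth and first orders are well-defined in $(\mathcal{U}(\mathfrak{g})\tensor\mathcal{U}(\mathfrak{g}))[[\hbar]]$, which follows from condition iii.) of Definition \ref{DefTwistUEA} (the leading term is $1\tensor 1$, so the geometric series converges in the $\hbar$-adic topology).
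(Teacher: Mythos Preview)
Your proposal is correct and matches the paper's approach: the corollary is stated immediately after Proposition~\ref{Proprmatrixfromtwist} without an explicit proof, since it follows directly from that proposition together with the definition of a twist star product. Your unpacking of the hypothesis and application of Proposition~\ref{Proprmatrixfromtwist} is exactly the intended reasoning.
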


We connected the existence of a twist star product to solutions of the classical Yang-Baxter equation in the
Lie algebra that bases the left module algebra structure of $ \Cinfty(M)[[\hbar]] $.

%

\chapter{$ r $-Matrices that lead to Homogeneous Structures}\label{rmatrixhomogeneous}

In this chapter we prove one of the main results of the thesis. Here we connect all
previous chapters and justify their study. In a nutshell we show the following: if the Poisson bivector of a
symplectic manifold is the image of a $ r $-matrix under a Lie algebra action, the manifold has to be a homogeneous
space. It is necessary to demand that the involved Lie algebra action integrates to a Lie group action. The
manifolds we later consider are compact thus the integration condition is fulfilled naturally.
Then an important step is to argue that this Lie group action is already locally transitive in this situation. Here
the symplectic character of the manifold is essential. Remark that we develop parallel to the proceeding analogous
results for the Etingof-Schiffmann subgroup
of the $ r $-matrix. It is surprising as well as of enormous use that one can restrict the Lie
algebra to the Lie subalgebra in which the $ r $-matrix is non-degenerate and still gets the desired result.
Thinking of Section \ref{SecEulerPos} and Section \ref{sectiononish} there are many
candidates for obstructions now: the higher pretzel surfaces are not homogeneous spaces and the $ 2 $-sphere can only be
structured as a homogeneous space in very few ways. The question is whether all the requested assumptions are
given for these examples. And moreover: why should it be desirable to find obstructions to Poisson bivectors that
are images of $ r $-matrices through Lie algebra homomorphisms? While Section \ref{SecSpecialBiVector} is fully
dedicated to the proof
of the essential theorem, Section \ref{SecTwistedStar} answers the latest question. The main motivation to consider such
Poisson bivectors is the following: they occur naturally in the presence of twist star products. We have
already shown in Proposition \ref{Proprmatrixfromtwist} that every twist on the formal power series of a universal
enveloping algebra induces a $ r $-matrix on the corresponding Lie algebra. If the twist star
product deforms the symplectic structure of the manifold, the corresponding Poisson bivector is
of the requested form. Thus we do not only give obstructions to special Poisson bivectors but to twist star
products. Then this chapter ends with the statement that any connected compact symplectic manifold endowed with
a twist star product can be structured as a homogeneous space. Following the way of the Etingof-Schiffmann subgroup
we even know that there is a Lie group that acts transitively on the symplectic manifold such that
the $ r $-matrix is non-degenerate in the corresponding Lie algebra.

\section{Symplectic Manifolds with a special Poisson Bivector}\label{SecSpecialBiVector}

Suppose that $ (M,\omega) $ is a symplectic manifold and assume that there is a $ r $-matrix
\begin{align}
r=\frac{1}{2}\sum\limits_{i,j=1}^{n}r^{ij}e_{i}\wedge e_{j}\in\mathfrak{g}\wedge\mathfrak{g}
\end{align}
on a Lie algebra $ \mathfrak{g} $ with basis $ \left\lbrace e_{1},\ldots,e_{n}\right\rbrace $ and a Lie algebra action
\begin{align}
\phi\colon\mathfrak{g}\rightarrow\Gamma^{\infty}(TM)
\end{align}
of $ \mathfrak{g} $ on $ M $ such that for any $ p\in M $ one has
\begin{align}\label{eq18t}
\pi_{p}=\frac{1}{2}\sum\limits_{i,j=1}^{n}r^{ij}\phi(e_{i})_{p}\wedge\phi(e_{j})_{p}.
\end{align}
If $ \phi $ integrates to a Lie group action $ \Phi $, this group action is locally transitive. More precisely,

\begin{lemma}\label{hilfslemma1}
Let $ (M,\omega) $ be a symplectic manifold with corresponding Poisson bivector $ \pi $ defined in Eq.
(\ref{SymplPoisBiVect}). Assume there is a $ r $-matrix $ r\in\mathfrak{g}\wedge\mathfrak{g} $ and a Lie algebra action
$ \phi\colon\mathfrak{g}\rightarrow\Gamma^{\infty}(TM) $ of $ \mathfrak{g} $ on $ M $ which
integrates to a Lie group action $ \Phi\colon G\times M\rightarrow M $ of $ G $ on $ M $ such
that Eq. (\ref{eq18t}) holds. Then $ \Phi $ is locally transitive.
Furthermore, the restriction $ \Phi|_{H_{r}} $ of $ \Phi $ to $ H_{r}\times M $, where $ H_{r} $ denotes the
Etingof-Schiffmann subgroup corresponding to $ r $, is locally transitive.
\end{lemma}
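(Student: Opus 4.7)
The plan is to use the symplectic (hence non-degenerate) character of $\pi$ together with the explicit formula \eqref{eq18t} to show that for every $p\in M$, the tangent space $T_pM$ is spanned by the fundamental vector fields of $\mathfrak{g}$ at $p$. Recall from the end of Section \ref{SecGroupAction} that $\Phi$ is locally transitive precisely when the induced Lie algebra action $\phi|_p = T_{e}\Phi_p\colon\mathfrak{g}\to T_pM$ is surjective for every $p$, i.e. when $\operatorname{span}_{\mathbb{R}}\{\phi(e_1)_p,\ldots,\phi(e_n)_p\} = T_pM$.

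First, since $r^{ij}=-r^{ji}$, I would rewrite the hypothesis \eqref{eq18t} without the wedge as
\begin{align*}
\pi_p = \sum_{i,j=1}^n r^{ij}\,\phi(e_i)_p \otimes \phi(e_j)_p.
\end{align*}
For any $\alpha_p\in T_p^*M$, this yields
\begin{align*}
\tilde\pi(\alpha_p) = \pi_p(\,\cdot\,,\alpha_p) = \sum_{i=1}^n \Bigl(\sum_{j=1}^n r^{ij}\alpha_p(\phi(e_j)_p)\Bigr)\phi(e_i)_p,
\end{align*}
so $\operatorname{im}\tilde\pi \subseteq \operatorname{span}\{\phi(e_1)_p,\ldots,\phi(e_n)_p\}$. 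Because $(M,\omega)$ is symplectic, the musical map $\tilde\pi\colon T_p^*M\to T_pM$ of Eq.~\eqref{eq20t} is surjective (in fact a linear isomorphism), hence $T_pM = \operatorname{im}\tilde\pi \subseteq \operatorname{span}\{\phi(e_i)_p\}$. This forces equality, giving surjectivity of $\phi|_p$ and thus local transitivity of $\Phi$.

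For the second statement, I would invoke Proposition \ref{PropEtingofSub}: choosing a basis $e_1,\ldots,e_n$ of $\mathfrak{g}$ adapted to $\mathfrak{h}_r$, so that $e_1,\ldots,e_k$ span $\mathfrak{h}_r$, the proof of that proposition shows $r^{ij}=0$ whenever $i>k$ or $j>k$. Consequently $r\in\mathfrak{h}_r\wedge\mathfrak{h}_r$, and the formula for $\pi_p$ in \eqref{eq18t} actually involves only $\phi(e_1)_p,\ldots,\phi(e_k)_p$, which are fundamental vector fields for the $H_r$-action obtained by restricting $\Phi$. Repeating the surjectivity argument with this smaller set of vectors shows $T_pM = \operatorname{span}\{\phi(e_i)_p : 1\le i\le k\}$, so $\Phi|_{H_r}$ is locally transitive as well.

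The step that requires the most care is making the restriction to $H_r$ rigorous: one must verify that $\Phi|_{H_r\times M}$ really is a well-defined Lie group action whose corresponding Lie algebra action is the restriction $\phi|_{\mathfrak{h}_r}$, so that the fundamental vector fields coincide with $\phi(e_1),\ldots,\phi(e_k)$. This is standard once $H_r$ is identified with the connected Lie subgroup of $G$ integrating $\mathfrak{h}_r$, but it should be spelled out. The rest of the argument is then the purely linear-algebraic observation above, which is the conceptual core of the lemma and which crucially exploits that the symplectic form is non-degenerate.
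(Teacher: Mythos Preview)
Your proof is correct and follows essentially the same route as the paper: both use the surjectivity of $\tilde\pi$ coming from non-degeneracy of $\omega$ together with the explicit formula for $\pi_p$ to conclude that $T_pM$ lies in the span of the fundamental vector fields, and then restrict to an adapted basis of $\mathfrak{h}_r$ via Proposition~\ref{PropEtingofSub} for the second statement. Your explicit remark on the integrability of the restricted action mirrors the paper's one-line justification that the restriction to a Lie subgroup of an integrable action is again integrable.
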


\begin{proof}
Fix $ p\in M $ and choose an arbitrary $ v_{p}\in T_{p}M $. Since $ \tilde{\pi}\colon T_{p}^{*}M\rightarrow T_{p}M $
defined in Eq. (\ref{eq20t}) is surjective there is a $ \alpha_{p}
\in T_{p}^{*}M $ such that
\begin{align}
v_{p}=-\pi_{p}(\cdot,\alpha_{p})=\sum\limits_{i,j=1}^{n}r^{ij}\alpha_{p}(\phi(e_{i})_{p})\phi(e_{j})_{p}.
\end{align}
But this means that
\begin{align}
v_{p}\in\text{span}_{\mathbb{k}}\left\lbrace \sum\limits_{j=1}^{n}r^{1j}\phi(e_{j})_{p},\ldots,
\sum\limits_{j=1}^{n}r^{nj}\phi(e_{j})_{p}\right\rbrace\subseteq\text{span}_{\mathbb{k}}\left\lbrace
\phi(e_{1})_{p},\ldots,\phi(e_{n})_{p}\right\rbrace.
\end{align}
Since $ v_{p}\in T_{p}M $ was arbitrary the map
\begin{align*}
\phi|_{p}\colon\mathfrak{g}\ni\xi\mapsto\phi(\xi)_{p}\in T_{p}M
\end{align*}
has to be surjective. For the last statement consider the
Etingof-Schiffmann subalgebra $ \mathfrak{h}_{r} $ corresponding to $ r $. Like in the proof of
Proposition \ref{PropEtingofSub} we can choose a basis $ \left\lbrace e_{1},\ldots,e_{k}\right\rbrace $ of
$ \mathfrak{h}_{r}\subseteq\mathfrak{g} $ and complete this to a basis
$ \left\lbrace e_{1},\ldots,e_{n}\right\rbrace $ of $ \mathfrak{g} $, such that
\begin{align}
r=\frac{1}{2}\sum\limits_{i,j=1}^{n}r^{ij}e_{i}\wedge e_{j}=\frac{1}{2}\sum\limits_{i,j=1}^{k}r^{ij}e_{i}\wedge e_{j}.
\end{align}
But this implies
\begin{align}
\pi_{p}=\frac{1}{2}\sum\limits_{i,j=1}^{n}r^{ij}\phi(e_{i})_{p}\wedge\phi(e_{j})_{p}
=\frac{1}{2}\sum\limits_{i,j=1}^{k}r^{ij}\phi(e_{i})_{p}\wedge\phi(e_{j})_{p},
\end{align}
i.e. the restriction $ \phi_{\mathfrak{h}_{r}}|_{p} $ of $ \phi|_{p} $ to $ \mathfrak{h}_{r} $ still satisfies the
transitivity condition on $ T_{p}M $ for every $ p\in M $. Since $ \phi $ was integrable this is also true for the
restriction to a Lie subgroup. This gives $ \Phi|_{H_{r}}\colon H_{r}\times M\rightarrow M $, the restriction of
$ \Phi $ to $ H_{r}\times M $, which
is locally transitive and we can restrict ourselves without loss of generality to the Etingof-Schiffmann subalgebra in
which $ r $ is non-degenerate.
\end{proof}

Our aim is to show that the setting of the last lemma already implies that $ M $ is a homogeneous space. For this
we need another

\begin{lemma}\label{hilfslemma2}
Let $ \Phi\colon G\times M\rightarrow M $ be a locally transitive Lie group action on a connected manifold $ M $.
Then there is only one orbit of $ \Phi $ and it coincides with $ M $.
\end{lemma}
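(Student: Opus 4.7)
The plan is to exploit the fact that local transitivity forces every orbit to be open in $M$; once we have this, the orbits (being a partition of $M$ into nonempty open sets in a connected space) must reduce to a single orbit equal to $M$.

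First I would fix $x\in M$ and apply Theorem \ref{thmsubmfd} iii.) to the map $\Phi_x\colon G\to M$. That statement gives a neighborhood $U\subseteq G$ of the identity such that $U\cdot x$ is a submanifold of $M$ of dimension $k=\operatorname{rank}(\Phi_x)$ with $T_x(U\cdot x)=T_e\Phi_x(\mathfrak{g})$. Local transitivity precisely says that $T_e\Phi_x=\phi|_x\colon\mathfrak{g}\to T_xM$ is surjective, so $k=\dim M$; consequently $U\cdot x$ is an open submanifold of $M$ containing $x$, and in particular it is an open neighborhood of $x$ inside the orbit $G\cdot x$.

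Next I would promote this to openness of the whole orbit $G\cdot x$ in $M$. For any $y\in G\cdot x$ there is $g\in G$ with $y=\Phi_g(x)$. Since $\Phi_g$ is a diffeomorphism of $M$, the set $\Phi_g(U\cdot x)=(gU)\cdot x$ is an open neighborhood of $y$ in $M$, and it is still contained in $G\cdot x$ by the action property. Hence $G\cdot x\subseteq M$ is open, and this argument applies to every orbit.

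Finally I would invoke the partition of $M$ into orbits established in Section \ref{SecGroupAction}. The complement $M\setminus(G\cdot x)$ is the disjoint union of the other (open) orbits and is therefore open as well. Thus $G\cdot x$ is a nonempty open and closed subset of the connected manifold $M$, which forces $G\cdot x=M$. I expect no real obstacle here; the only subtle point worth emphasizing is that local transitivity is used via Theorem \ref{thmsubmfd} to upgrade the a priori countable cover $U\cdot x$ of $G\cdot x$ to an actual open neighborhood in $M$, and this is where the hypothesis does all the work.
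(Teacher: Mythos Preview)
Your argument is correct and reaches the same endpoint as the paper (orbits are open, hence by connectedness there is a single orbit), but the route is genuinely different. The paper first passes to the quotient and works with the map $\beta_p\colon G/G_p\to M$: it shows directly that $T_{e\cdot G_p}\beta_p\colon\mathfrak{g}/\mathfrak{g}_p\to T_pM$ is a bijection (surjectivity from local transitivity, injectivity via a flow argument that forces $\exp(t\xi)\in G_p$), then transports this isomorphism to all cosets via $\Phi_g$ and the left translation $\ell_{g^{-1}}$, concluding that $\beta_p$ is a local diffeomorphism and hence that each orbit is open. You instead stay at the level of $\Phi_x\colon G\to M$ and invoke Theorem~\ref{thmsubmfd}~iii.) together with local transitivity to get $\operatorname{rank}\Phi_x=\dim M$, so $U\cdot x$ is an open neighbourhood of $x$ in $M$; translating by $\Phi_g$ opens up the whole orbit. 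Your approach is shorter and reuses the constant-rank machinery already packaged in Theorem~\ref{thmsubmfd}, while the paper's argument has the mild advantage of producing the local diffeomorphism $G/G_p\to M$ explicitly (essentially redoing part of Theorem~\ref{homspacetransact} without the transitivity hypothesis), which is conceptually a bit closer to the homogeneous-space picture used elsewhere in the chapter.
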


\begin{proof}
Let $ \phi $ be the Lie algebra action corresponding to the locally transitive Lie group action $ \Phi $.
Fix $ p\in M $ and denote by $ \mathfrak{g}_{p} $ the Lie algebra corresponding to $ G_{p} $.
Remark that this is possible since $ G_{p}\subseteq G $ is a Lie group according to Theorem \ref{thmsubmfd} ii.).
The map $ \beta_{p}\colon G/G_{p}\ni g\cdot G_{p}\mapsto\Phi(g,p)\in M $ is a diffeomorphism
if $ \Phi $ is transitive according to Theorem \ref{homspacetransact}. Our aim is to show that $ \beta_{p} $ is also
a diffeomorphism if $ \Phi $ is just locally transitive. We first prove that $ \beta_{p} $ is a local diffeomorphism.
We have already shown in Theorem \ref{homspacetransact} that $ \beta_{p} $ is well-defined and smooth.
Thus we can differentiate $ \beta_{p} $ at $ e\cdot G_{p}\in G/G_{p} $ and obtain
\begin{align}
T_{e\cdot G_{p}}\beta_{p}\colon\mathfrak{g}/\mathfrak{g}_{p}\rightarrow T_{\beta_{p}(e\cdot G_{p})}M=T_{p}M.
\end{align}
Then the map
\begin{align}\label{eq23t}
\tilde{\phi}\colon\mathfrak{g}/\mathfrak{g}_{p}\ni\left[\xi\right]\mapsto\tilde{\phi}(\left[\xi\right])=
T_{e\cdot G_{p}}\beta_{p}\left[\xi\right]=\phi(\xi)_{p}\in T_{p}M
\end{align}
is well-defined. Let $ v_{p}\in T_{p}M $ be arbitrary. Since $ \phi|_{p}\colon\mathfrak{g}\rightarrow T_{p}M $
is surjective there is a $ \xi\in
\mathfrak{g} $ such that $ \phi(\xi)_{p}=v_{p} $. Then $ \tilde{\phi}(\left[\xi\right])=\phi(\xi)_{p}=v_{p} $ and
$ \tilde{\phi} $ is surjective. To show that $ \tilde{\phi} $ is also injective take
$ \left[\xi\right]\in\ker\tilde{\phi} $. For all $ t\in\mathbb{R} $ one gets
\begin{align*}
\frac{\mathrm{d}}{\mathrm{d}t}\beta_{p}(\exp(t\xi)\cdot G_{p})&=\frac{\mathrm{d}}{\mathrm{d}t}\Phi_{\exp(t\xi)}(p)\\
&=\left.\frac{\mathrm{d}}{\mathrm{d}s}\right|_{s=0}\Phi_{\exp(t\xi)}(\Phi_{\exp(s\xi)}(p))\\
&=T_{p}\Phi_{\exp(t\xi)}\left.\frac{\mathrm{d}}{\mathrm{d}s}\right|_{s=0}\Phi_{\exp(s\xi)}(p)\\
&=T_{p}\Phi_{\exp(t\xi)}\left.\frac{\mathrm{d}}{\mathrm{d}s}\right|_{s=0}\beta_{p}(\exp(s\xi)\cdot G_{p})\\
&=T_{p}\Phi_{\exp(t\xi)}\tilde{\phi}(\left[\xi\right])\\
&=0.
\end{align*}
Since $ \beta_{p}(\exp(0\cdot\xi)\cdot G_{p})=\Phi(e,p)=p $ the last calculation implies that $ \beta_{p}(\exp(t\xi)
\cdot G_{p})=p $ for all $ t\in\mathbb{R} $. This shows $ \exp(t\xi)\in G_{p} $ for all $ t\in\mathbb{R} $ and for
this $ \xi\in\mathfrak{g}_{p} $, thus $ \left[\xi\right]=0 $ and $ \tilde{\phi} $ is injective.
Then for any $ g\in G $
\begin{align}
\beta_{x}(g\cdot G_{p})=\Phi(g,p)=\Phi(g,\Phi(e,p))=\Phi_{g}\circ\beta_{p}(e\cdot G_{p})
=\Phi_{g}\circ\beta_{p}\circ\ell_{g^{-1}}(g\cdot G_{p})
\end{align}
implies
\begin{align}
T_{g\cdot G_{p}}\beta_{p}=T_{p}\Phi_{g}\circ T_{e\cdot G_{p}}\beta_{p}\circ T_{g}\ell_{g^{-1}},
\end{align}
where $ \ell_{g^{-1}}\colon G/G_{p}\ni g'\cdot G_{p}\mapsto(g^{-1}g')\cdot G_{p}\in G/G_{p} $. Since $ \Phi_{g} $
and $ \ell_{g}^{-1} $ are diffeomorphisms also $ T_{g\cdot G_{p}}\beta_{p} $ is bijective.
Thus $ \beta_{p} $ is a local
diffeomorphism for all $ p\in M $, i.e. for any $ g\cdot G_{p}\in G/G_{p} $ there is an open neighbourhood $ U
\subseteq G/G_{p} $ of $ g\cdot G_{p} $ such that $ \beta_{p}(U)\subseteq M $ is open. Choose any point
$ q\in G\cdot p $ of the orbit of $ p $ under $ \Phi $. Then for every $ g\in\Phi_{p}^{-1}(\left\lbrace q
\right\rbrace)\subseteq G $ there is an open neighbourhood $ U\subseteq G/G_{p} $ such that $ \beta_{p}(U)\subseteq
G\cdot p\subseteq M $ is open. Since $ q\in\beta_{p}(U) $ we have found an open neighbourhood of $ q $ that
is contained in $ G\cdot p $ and the orbit $ G\cdot p $ is open. Let again $ p\in M $ be arbitrary and assume that there
is a point $ q\in M\setminus
\left\lbrace G\cdot p\right\rbrace $. Since $ M $ is always partitioned by the orbits of any group action one
has $ G\cdot p\cap G\cdot q=\emptyset $, since otherwise $ G\cdot p=G\cdot q $ what is not possible because
$ q\notin G\cdot p $. By this we get a partition of $ M $ by open sets what gives a contradiction if $ M $ is
connected. This concludes the proof.
\end{proof}

Now we just collect our results and conclude one of the main theorems of this thesis.

\begin{theorem}\textup{\textbf{\cite{NoTwistPaper}.}}\label{mastertheorem}
Let $ (M,\omega) $ be a connected symplectic manifold with corresponding Poisson bivector $ \pi $,
$ r\in\mathfrak{g}\wedge\mathfrak{g} $ a $ r $-matrix and $
\phi\colon\mathfrak{g}\rightarrow\Gamma^{\infty}(TM) $ a Lie algebra action of $ \mathfrak{g} $ on $ M $ that
integrates to a Lie group action $ \Phi\colon G\times M\rightarrow M $ of $ G $ on $ M $ such that for any $ p\in M $
one has
\begin{align}\label{PoissonBiVectorRMatrix}
\pi_{p}=\frac{1}{2}\sum\limits_{i,j=1}^{n}r^{ij}\phi(e_{i})_{p}\wedge\phi(e_{j})_{p}.
\end{align}
Then $ M $ can be structured as a homogeneous space via $ \Phi $.
Moreover, the Etingof-Schiffmann subgroup $ H_{r} $ acts transitively on $ M $.
\end{theorem}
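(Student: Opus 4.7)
The plan is to simply assemble the two preparatory lemmas together with the characterization of homogeneous spaces via transitive actions from Chapter~\ref{chapTransAct}. All the real work has been done in Lemma~\ref{hilfslemma1} and Lemma~\ref{hilfslemma2}; the theorem itself is the payoff.

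First, I would observe that the hypotheses of Lemma~\ref{hilfslemma1} are exactly the hypotheses of the theorem, so it applies directly. This yields that the Lie group action $\Phi\colon G\times M\rightarrow M$ is locally transitive, and moreover that the restriction $\Phi|_{H_{r}\times M}\colon H_{r}\times M\rightarrow M$ of $\Phi$ to the Etingof-Schiffmann subgroup $H_{r}$ is again locally transitive. The key input here is that the surjectivity of the musical map $\tilde{\pi}\colon T_{p}^{*}M\rightarrow T_{p}M$ (which uses the symplectic nature of $(M,\omega)$) combined with Eq.~(\ref{PoissonBiVectorRMatrix}) forces the image of $\phi|_{p}$ to span the whole tangent space $T_{p}M$, and that the non-degeneracy of $r$ on the Etingof-Schiffmann subalgebra allows the restriction to $\mathfrak{h}_{r}$ without losing this property.

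Next, since $M$ is connected by assumption, Lemma~\ref{hilfslemma2} applies to the locally transitive action $\Phi$, upgrading it to a genuinely transitive action: there is only one orbit and it coincides with $M$. The identical argument applied to the restricted action $\Phi|_{H_{r}\times M}$, which is locally transitive by the previous step, produces a transitive action of the Etingof-Schiffmann subgroup $H_{r}$ on $M$.

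Finally, I would invoke Theorem~\ref{homspacetransact}: for the transitive action $\Phi$ on $M$, picking any point $p\in M$ with stabilizer $G_{p}$, the map $\beta_{p}\colon G/G_{p}\rightarrow M$, $g\cdot G_{p}\mapsto\Phi(g,p)$, is a diffeomorphism that intertwines the natural $G$-action on $G/G_{p}$ with $\Phi$. Thus $M\cong G/G_{p}$ as a homogeneous space. In exactly the same way, $M\cong H_{r}/(H_{r})_{p}$ via the restricted action. There is no genuine obstacle here; the only conceptual subtlety—already dealt with inside Lemma~\ref{hilfslemma2}—is that for a merely locally transitive action one must re-derive that $\beta_{p}$ is a local diffeomorphism (without having transitivity a priori) and then use the connectedness of $M$ to promote open orbits to the whole manifold.
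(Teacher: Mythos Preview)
Your proposal is correct and matches the paper's approach exactly: the paper's proof consists of the single sentence ``This is a direct consequence of Lemma~\ref{hilfslemma1} and Lemma~\ref{hilfslemma2},'' and your write-up merely unpacks this by applying Lemma~\ref{hilfslemma1} to obtain local transitivity (of both $\Phi$ and $\Phi|_{H_{r}}$), then Lemma~\ref{hilfslemma2} together with connectedness of $M$ to promote to transitivity, and finally Theorem~\ref{homspacetransact} to identify $M$ with $G/G_{p}$ (resp.\ $H_{r}/(H_{r})_{p}$).
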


\begin{proof}
This is a direct consequence of Lemma \ref{hilfslemma1} and Lemma \ref{hilfslemma2}.
\end{proof}

For this, a geometric information provides topological impact on the manifold: if the Poisson bivector is the
image of a $ r $-matrix under an integrable Lie algebra action, the manifold has to be a homogeneous space. For
instance, the Euler characteristic of such a compact space has to be non-negative according to Theorem
\ref{EulerCompact}. We make use of this in the next chapter to get obstructions on the higher pretzel surfaces.
The great benefit of the second statement of Theorem \ref{mastertheorem} is that the $ r $-matrix is non-degenerate
in the Lie algebra of a Lie
group that acts transitively on $ M $. This means we have more information on an infinitesimal level. Later on this
helps us to find obstructions in the case of $ M=\mathbb{S}^{2} $.

It is clear that one can apply Theorem \ref{mastertheorem} to any leaf of a
symplectic foliation of a Poisson manifold. Remember that for any Poisson manifold $ (M,\pi) $ the
image of the sharp map
\begin{align}
\sharp\colon T^{*}M\rightarrow TM
\end{align}
is a smooth distribution. By this we mean that $ \text{im}(\sharp) $ is a set of linear subspaces
$ \text{im}(\sharp_{x}) $ of $ T_{x}M $, where $ x\in M $, such that each is spanned by a finite number of smooth
vector fields $ X_{1},...,X_{k}\in\text{im}(\sharp) $, i.e. $ \text{im}(\sharp_{x})=\text{span}\left\lbrace
X_{1}(x),...,X_{k}(x)\right\rbrace $. One can also imagine $ \text{im}(\sharp_{x}) $ as the set
\begin{align}
\left\lbrace v\in T_{x}M~|~\exists f\in\Cinfty(M)\text{ such that }X_{f}(x)=v\right\rbrace,
\end{align}
i.e. the set of tangent vectors which are images of Hamiltonian vector fields at $ x $. Now $ \text{im}(\sharp) $
integrates to a foliation of $ M $, that is a cover of $ M $ by connected subsets which are said to be the leaves of
this foliation. Moreover, the Poisson bivector $ \pi $ induces symplectic Poisson bivectors on these leaves and for this
we obtain a foliation of $ M $ called a \textit{symplectic foliation}. The leaves are maximal symplectic submanifolds of
$ M $ which are said to be the \textit{symplectic leaves} of $ M $. All these facts can be found in
\cite[Chapter~2]{vaisman1994lectures}.
Thus we can modify our last result in the following way:

\begin{corollary}\label{CorSymplLeaves}
Let $ (M,\pi) $ be a Poisson manifold. Any symplectic leaf $ (S,\pi^{S}) $ of $ M $ whose Poisson bivector
is the image of a $ r $-matrix under an integrable Lie algebra action
$ \phi\colon\mathfrak{g}\rightarrow\Gamma^{\infty}(TS) $ is a homogeneous space.
Moreover, the Etingof-Schiffmann subgroup of this $ r $-matrix acts transitively on $ S $.
\end{corollary}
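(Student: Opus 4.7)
The plan is to reduce the corollary directly to Theorem \ref{mastertheorem} by verifying that its hypotheses are met on each symplectic leaf. First I would recall from the preceding discussion that every symplectic leaf $(S,\pi^{S})$ of a Poisson manifold is by construction a connected symplectic submanifold of $M$, with symplectic form $\omega^{S}$ corresponding to the Poisson bivector $\pi^{S}$ via Eq.~(\ref{SymplPoisBiVect}). Thus the only structural input missing to apply the main theorem is the Lie-group action; but the statement of the corollary already postulates that the Lie algebra action $\phi\colon\mathfrak{g}\to\Gamma^{\infty}(TS)$ is integrable, so by Palais' theorem (cited after Eq.~(\ref{LieAlgebraAction})) it integrates to a Lie group action $\Phi\colon G\times S\to S$ of the connected Lie group $G$ corresponding to $\mathfrak{g}$.

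Next I would observe that the identity (\ref{PoissonBiVectorRMatrix}) is precisely the hypothesis of the corollary, spelled out pointwise on $S$: for every $p\in S$ one has
\begin{align}
\pi^{S}_{p}=\frac{1}{2}\sum_{i,j=1}^{n}r^{ij}\phi(e_{i})_{p}\wedge\phi(e_{j})_{p},
\end{align}
where $\{e_{1},\ldots,e_{n}\}$ is a basis of $\mathfrak{g}$ and $r=\frac{1}{2}\sum r^{ij}e_{i}\wedge e_{j}$. This is exactly the setting of Theorem \ref{mastertheorem} applied to the connected symplectic manifold $(S,\omega^{S})$. Invoking that theorem yields both conclusions at once: $S$ inherits the structure of a homogeneous space via $\Phi$, and moreover the Etingof-Schiffmann subgroup $H_{r}\subseteq G$ acts transitively on $S$ by Lemma \ref{hilfslemma1} combined with Lemma \ref{hilfslemma2}.

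The conceptual content is therefore entirely packaged inside Theorem \ref{mastertheorem}; no new geometric or algebraic argument is needed. The only mild subtlety worth flagging in the write-up is that connectedness of $S$ (required by Lemma \ref{hilfslemma2}) is automatic from the definition of a symplectic leaf as a maximal connected integral submanifold of the characteristic distribution $\mathrm{im}(\sharp)$, so no additional hypothesis on $M$ is required. Consequently the proof reduces to a single sentence: apply Theorem \ref{mastertheorem} to $(S,\omega^{S})$ with the given $r$-matrix and Lie algebra action.
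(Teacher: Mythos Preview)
Your proposal is correct and matches the paper's approach: the corollary is stated without a separate proof, as an immediate application of Theorem \ref{mastertheorem} to a symplectic leaf, which is by definition a connected symplectic submanifold. Your write-up even makes explicit the one point the paper leaves implicit, namely that connectedness of $S$ is automatic from the definition of a symplectic leaf.
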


\section{Structures induced by Twist Star Products}\label{SecTwistedStar}

As an a posteriori motivation to consider Poisson bivectors of the form (\ref{eq18t}) we examine star products that are
induced by a twist. In Corollary \ref{CorStarRmatrix} we have seen the connection between twist star products and
$ r $-matrices. It is also interesting to consider the
Poisson bracket and the Poisson bivector which are deformed by a twist star product.

\begin{proposition}\label{PropStructuresForTwStarProd}
Let $ (M,\left\lbrace\cdot,\cdot\right\rbrace) $ be a Poisson manifold and $ \star $ a star product on $ M $ deforming
the Poisson bracket $ \left\lbrace\cdot,\cdot\right\rbrace $. Assume that $ \star $ is induced by a twist
$ \mathcal{F}
=\sum\limits_{i=0}^{\infty}\hbar^{i}F_{i}\in(\mathcal{U}(\mathfrak{g})\tensor\mathcal{U}(\mathfrak{g}))[[\hbar]] $
and denote the left module action of $ \mathcal{U}(\mathfrak{g})[[\hbar]] $ on $ \Cinfty(M)[[\hbar]] $ by $ \rhd $. Then
\begin{compactenum}
\item there is a $ r $-matrix
\begin{align}\label{RMATRIX}
r=\sigma(F_{1})-F_{1}
\end{align}
on $ \mathfrak{g} $,
\item the Poisson bracket reads
\begin{align}
\left\lbrace f,g\right\rbrace=m(r\rhd(f\tensor g))
\end{align}
for all $ f,g\in\Cinfty(M)[[\hbar]] $, where
$ m\colon\Cinfty(M)[[\hbar]]\tensor\Cinfty(M)[[\hbar]]\ni(f\tensor g)\mapsto fg\in\Cinfty(M)[[\hbar]] $,
\item there is a Lie algebra action $ \phi $ of $ \mathfrak{g} $ on $ M $ defined for any $ \xi\in\mathfrak{g} $ by
\begin{align}\label{LieDerivative}
\mathcal{L}_{-\phi(\xi)}=\xi\rhd\colon\Cinfty(M)[[\hbar]]\rightarrow\Cinfty(M)[[\hbar]],
\end{align}
where $ \mathcal{L} $ denotes the Lie derivative,
\item the Poisson bivector $ \pi $ corresponding to $ \left\lbrace\cdot,\cdot\right\rbrace $ reads
\begin{align}
\pi_{p}=\frac{1}{2}\sum\limits_{i,j=1}^{n}r^{ij}\phi(e_{i})_{p}\wedge\phi(e_{j})_{p}
\end{align}
for any $ p\in M $, where $ e_{1},\ldots,e_{n} $ is a basis of $ \mathfrak{g} $ and
$ r=\frac{1}{2}\sum\limits_{i,j=1}^{n}r^{ij}e_{i}\wedge e_{j} $.
\end{compactenum}
\end{proposition}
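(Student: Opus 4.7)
The plan is to treat the four assertions in sequence, exploiting that we have already done most of the work in Proposition \ref{Proprmatrixfromtwist} and that everything after (i) is a direct first-order expansion of the twist star product.

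For (i), I would invoke Proposition \ref{Proprmatrixfromtwist} directly. That proposition produces an $r$-matrix from $\mathcal{F}^{-1} = \sum_{i\geq 0}\hbar^i J_i$ via $r = J_1 - \sigma(J_1)$. The only thing to check is that the recursion $J_i = -\sum_{j=1}^{i} F_j J_{i-j}$ of Proposition \ref{Proprmatrixfromtwist} gives $J_1 = -F_1$, whence $r = -F_1 - \sigma(-F_1) = \sigma(F_1) - F_1$, as claimed. One must also note that the $r$-matrix automatically lies in $\mathfrak{g}\wedge\mathfrak{g}$ (not merely in $\mathcal{U}(\mathfrak{g})^{\otimes 2}$) by the same proposition.

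For (ii), I would compute the first-order term of the $\star$-commutator. Using Definition \ref{DefTwistUEA}~iii.) and expanding $\mathcal{F}^{-1} = \mathbb{1}\otimes\mathbb{1} - \hbar F_1 + O(\hbar^2)$ in Eq.~(\ref{TwistInducedStar}) gives
\begin{align*}
f\star g = fg - \hbar\, m(F_1\rhd(f\otimes g)) + O(\hbar^2),
\end{align*}
so the skew-symmetric part of the first order is $m((\sigma(F_1)-F_1)\rhd(f\otimes g)) = m(r\rhd(f\otimes g))$. The correspondence principle of Definition \ref{def3thomas}~iii.) then identifies this with $\{f,g\}$.

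For (iii), the essential observation is that every primitive element $\xi \in \mathfrak{g} \subseteq \mathcal{U}(\mathfrak{g})[[\hbar]]$ acts on $\Cinfty(M)[[\hbar]]$ as a derivation by Definition \ref{DefModAlg}~ii.) (since $\Delta(\xi) = \xi\otimes 1 + 1\otimes\xi$). Hence there is a unique vector field, which I define to be $-\phi(\xi)$, such that $\xi\rhd = \mathcal{L}_{-\phi(\xi)}$, and linearity of $\rhd$ makes $\phi$ linear. The Lie-algebra anti-homomorphism property follows by applying both sides of $[\xi,\eta]\rhd = \xi\rhd\eta\rhd - \eta\rhd\xi\rhd$ and using $[\mathcal{L}_X,\mathcal{L}_Y] = \mathcal{L}_{[X,Y]}$, producing $\phi([\xi,\eta]) = -[\phi(\xi),\phi(\eta)]$, which is exactly the convention for left Lie algebra actions recorded just before Eq.~(\ref{LieAlgebraAction}).

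For (iv), I would substitute the coordinate expression for $r$ into the formula from (ii). Writing $r = \frac{1}{2}\sum r^{ij} e_i\wedge e_j$ and using (iii),
\begin{align*}
m((e_i\wedge e_j)\rhd(f\otimes g)) = \mathcal{L}_{\phi(e_i)}f\cdot \mathcal{L}_{\phi(e_j)}g - \mathcal{L}_{\phi(e_j)}f\cdot \mathcal{L}_{\phi(e_i)}g = (\phi(e_i)\wedge\phi(e_j))(\mathrm{d}f,\mathrm{d}g),
\end{align*}
and comparing with $\{f,g\} = \pi(\mathrm{d}f,\mathrm{d}g)$ evaluated at an arbitrary $p\in M$ yields the asserted formula for $\pi_p$. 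The main pitfall, and the one place I expect to have to be careful, is keeping track of signs: the sign in (i) relative to Proposition \ref{Proprmatrixfromtwist}, and the minus sign built into the definition $\mathcal{L}_{-\phi(\xi)} = \xi\rhd$ in (iii), which both feed into (iv). Once these are tracked consistently the four parts assemble with no further obstacles.
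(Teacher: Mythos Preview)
Your proposal is correct and follows essentially the same route as the paper's proof: expand $\mathcal{F}^{-1}$ to first order, identify $J_1=-F_1$ to match the sign in Eq.~(\ref{RMATRIX}) with Proposition~\ref{Proprmatrixfromtwist}, read off the Poisson bracket from the antisymmetrized first-order term, and then pass to the bivector via the basis expansion of $r$. Your treatment of (iii) is in fact slightly more careful than the paper's, since you explicitly invoke primitivity of $\xi\in\mathfrak{g}$ and the module-algebra axiom to justify that $\xi\rhd$ is a derivation (hence a Lie derivative along a vector field); the paper simply asserts Eq.~(\ref{LieDerivative}) and checks the anti-homomorphism property.
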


\begin{proof}
For any $ f,g\in\Cinfty(M)[[\hbar]] $ one has
\begin{align*}
\sum\limits_{i=0}^{\infty}\hbar^{i}C_{i}(f,g)&=f\star g\\
&=\mathrm{m}(\mathcal{F}^{-1}\rhd(f\tensor g))\\
&=\mathrm{m}(\sum\limits_{i=0}^{\infty}\hbar^{i}F_{i}^{-1}\rhd(f\tensor g))\\
&=\sum\limits_{i=0}^{\infty}\hbar^{i}\mathrm{m}(((F_{i}^{-1})_{1}\rhd f)\tensor((F_{i}^{-1})_{2}\rhd g))\\
&=\sum\limits_{i=0}^{\infty}\hbar^{i}((F_{i}^{-1})_{1}\rhd f)\cdot((F_{i}^{-1})_{2}\rhd g).
\end{align*}
In particular, $ C_{1}(f,g)=((F_{1}^{-1})_{1}\rhd f)\cdot((F_{1}^{-1})_{2}\rhd g)
=\mathrm{m}(F_{1}^{-1}\rhd(f\tensor g)) $.
Then the
corresponding Poisson bracket satisfies for all $ f,g\in\Cinfty(M)[[\hbar]] $
\begin{align*}
\left\lbrace f,g\right\rbrace&=C_{1}(f,g)-C_{1}(g,f)\\
&=\mathrm{m}(F_{1}^{-1}\rhd(f\tensor g))-m(F_{1}^{-1}\rhd(g\tensor f))\\
&=\mathrm{m}((F_{1}^{-1}-\sigma(F_{1}^{-1}))\rhd(f\tensor g))\\
&=\mathrm{m}(r\rhd(f\tensor g)),
\end{align*}
since pointwise multiplication is
commutative, where $ r $ is the $ r $-matrix on $ \mathfrak{g} $ defined in Eq. (\ref{rmatrixfromtwist}).
Because $ F_{1}^{-1}=-F_{1} $ we proved the first two statements. Now define a map
\begin{align}\label{LieAlgAction}
\phi\colon\mathfrak{g}\rightarrow\Gamma^{\infty}(TM)
\end{align}
via Eq. (\ref{LieDerivative}).
Then $ \phi $ is a Lie algebra action of $ \mathfrak{g} $ on $ M $. To see this, choose $ \xi,\zeta\in\mathfrak{g} $,
$ f\in\Cinfty(M)[[\hbar]] $ and calculate
\begin{align*}
\mathcal{L}_{\left[\phi(\xi),\phi(\zeta)\right])}f=\left[\mathcal{L}_{\phi(\xi)},\mathcal{L}_{\phi(\zeta)}\right](f)
=\left[-\xi\rhd,-\zeta\rhd\right](f)
=\left[\xi,\zeta\right]\rhd f
=\mathcal{L}_{-\phi(\left[\xi,\zeta\right])}f,
\end{align*}
which implies $ \left[\phi(\xi),\phi(\zeta)\right]=-\phi(\left[\xi,\zeta\right]) $, i.e. $ \phi $ is a Lie algebra
anti-homomorphism. This proves the third part.
If we assume
$ \mathfrak{g} $ to be finite-dimensional with basis $ e_{1},\ldots,e_{n}\in\mathfrak{g} $ one has
\begin{align}
r=\frac{1}{2}\sum\limits_{i,j=1}^{n}r^{ij}e_{i}\wedge e_{j}=\sum\limits_{i,j=1}^{n}r^{ij}e_{i}\tensor e_{j},
\end{align}
and for this
\begin{align}\label{PoissonBracketRmatrix}
\left\lbrace f,g\right\rbrace=\frac{1}{2}\sum\limits_{i,j=1}^{n}r^{ij}(e_{i}\rhd f)(e_{j}\rhd g),
\end{align}
for all $ f,g\in\Cinfty(M)[[\hbar]] $.
As a consequence the Poisson bivector reads
\begin{align}
\pi_{p}=\frac{1}{2}\sum\limits_{i,j=1}^{n}r^{ij}\phi(e_{i})_{p}\wedge\phi(e_{j})_{p},
\end{align}
for any $ p\in M $. This concludes the proof.
\end{proof}

We can use Theorem \ref{mastertheorem} to state a necessary condition under which a star
product $ \star $ on a symplectic manifold $ (M,\omega) $ can be induced by a twist: if $ \star $ deforms the
Poisson bivector $ \pi $ corresponding to $ \omega $, it can only be induced by a twist $ \mathcal{F} $ if the
diagram
\begin{equation}
\begin{tikzpicture}
  \matrix (m) [matrix of math nodes,row sep=3em,column sep=3em,minimum width=2em]
  {
      \mathcal{F}^{-1} & \star \\
      r & \pi \\};
  \path[-stealth]
    (m-1-1) edge node [above] {$\rhd$} (m-1-2)
    (m-1-1) edge node [left] {$ 1 $-st} (m-2-1)
    (m-1-2) edge node [right] {$ 1 $-st} (m-2-2)
    (m-2-1) edge node [below] {$\rhd$} (m-2-2);
\end{tikzpicture},
\end{equation}
commutes. This results in the following

\begin{theorem}\label{ThmIntegrabelStar}
Let $ (M,\omega) $ be a connected symplectic manifold endowed with a twist star product such that the Lie algebra action
defined in Eq. (\ref{LieDerivative}) integrates to a Lie group action. Then $ M $ can be structured as a
homogeneous space and the Etingof-Schiffmann subgroup corresponding to the $ r $-matrix defined in Eq.
(\ref{RMATRIX}) acts transitively on $ M $.
\end{theorem}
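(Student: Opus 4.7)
The plan is essentially to assemble the machinery already developed in Section \ref{SecSpecialBiVector} and Proposition \ref{PropStructuresForTwStarProd}, since at this point all the hard work has been done. First I would unpack what the data of a twist star product on the symplectic manifold $(M,\omega)$ gives us via Proposition \ref{PropStructuresForTwStarProd}: a twist $\mathcal{F}=\sum_{i\geq 0}\hbar^{i}F_{i}$ on some $\mathcal{U}(\mathfrak{g})[[\hbar]]$ acting on $\Cinfty(M)[[\hbar]]$ produces the $r$-matrix $r=\sigma(F_{1})-F_{1}\in\mathfrak{g}\wedge\mathfrak{g}$, the Lie algebra action $\phi\colon\mathfrak{g}\to\Gamma^{\infty}(TM)$ characterised by $\mathcal{L}_{-\phi(\xi)}=\xi\rhd$, and the representation of the Poisson bracket induced by $\star$ as $\{f,g\}=\mathrm{m}(r\rhd(f\tensor g))$.

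The key observation I would then make is that the Poisson bivector $\pi$ corresponding to $\omega$ coincides with the Poisson bivector induced by $\star$, since $\star$ deforms the given symplectic structure by hypothesis. Hence, choosing a basis $e_{1},\ldots,e_{n}\in\mathfrak{g}$ and writing $r=\tfrac{1}{2}\sum r^{ij}e_{i}\wedge e_{j}$, one obtains pointwise the identity
\begin{align}
\pi_{p}=\tfrac{1}{2}\sum_{i,j=1}^{n}r^{ij}\phi(e_{i})_{p}\wedge\phi(e_{j})_{p}
\end{align}
as in Eq. (\ref{PoissonBiVectorRMatrix}). This is precisely the hypothesis needed to feed into Theorem \ref{mastertheorem}.

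Next I would invoke the standing assumption that $\phi$ integrates to a Lie group action $\Phi\colon G\times M\to M$. Together with the connectedness of $M$, the displayed formula above lets me apply Theorem \ref{mastertheorem} verbatim: it gives a transitive action of $G$ on $M$, hence realises $M$ as a homogeneous space $G/G_{p}$ via $\beta_{p}$, and moreover guarantees that the Etingof-Schiffmann subgroup $H_{r}\subseteq G$, in whose Lie algebra $\mathfrak{h}_{r}$ the $r$-matrix $r$ sits non-degenerately, already acts transitively on $M$. This yields both conclusions of the theorem simultaneously.

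There is no genuine obstacle here; the only subtle point I would double-check is the implicit identification between the Poisson bivector associated to $\omega$ via Eq. (\ref{SymplPoisBiVect}) and the one read off from the first order of the star commutator, which is forced by the phrase ``star product deforming the symplectic structure.'' Once that identification is made explicit, the statement is a direct corollary of Proposition \ref{PropStructuresForTwStarProd} combined with Theorem \ref{mastertheorem}, and the proof reduces to citing these two results.
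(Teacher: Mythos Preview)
Your proposal is correct and matches the paper's approach exactly: the paper's proof is the single sentence ``This follows directly from Proposition \ref{PropStructuresForTwStarProd} and Theorem \ref{mastertheorem},'' and your write-up is simply an expanded version of that citation chain. There is nothing to add.
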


\begin{proof}
This follows directly from Proposition \ref{PropStructuresForTwStarProd} and Theorem \ref{mastertheorem}.
\end{proof}
In particular, the requirements of Theorem \ref{ThmIntegrabelStar} are fulfilled if the symplectic manifold is compact.

\begin{corollary}\label{CompactTwistHomSp}
Let $ (M,\omega) $ be a connected compact symplectic manifold endowed with a twist star product. Then $ M $ can be
structured as a homogeneous space. Moreover, the Etingof-Schiffmann subgroup corresponding to the $ r $-matrix
defined in Eq. (\ref{RMATRIX}) acts transitively on $ M $.
\end{corollary}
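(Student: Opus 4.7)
The corollary will follow from Theorem \ref{ThmIntegrabelStar} once the hypothesis on integrability of the Lie algebra action is verified in the compact case. My plan is therefore to produce, from the data of the twist star product, a finite-dimensional Lie algebra action whose fundamental vector fields are automatically complete, and then to invoke Palais' theorem to conclude integrability.

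First I would unpack the data supplied by Proposition \ref{PropStructuresForTwStarProd}: the twist $\mathcal{F}=\sum_{i\geq 0}\hbar^{i}F_{i}$ yields the $r$-matrix $r=\sigma(F_1)-F_1\in\mathfrak{g}\wedge\mathfrak{g}$, the Lie algebra action $\phi\colon\mathfrak{g}\rightarrow\Gamma^{\infty}(TM)$ defined by $\mathcal{L}_{-\phi(\xi)}=\xi\rhd$, and the representation of the Poisson bivector in the form of Eq. (\ref{PoissonBiVectorRMatrix}). Since $\mathfrak{g}$ need not be finite-dimensional, the next step is to descend to the Etingof-Schiffmann subalgebra $\mathfrak{h}_{r}\subseteq\mathfrak{g}$, which is finite-dimensional and satisfies $r\in\mathfrak{h}_{r}\wedge\mathfrak{h}_{r}$ by Proposition \ref{PropEtingofSub}. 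The restricted action $\phi|_{\mathfrak{h}_{r}}\colon\mathfrak{h}_{r}\rightarrow\Gamma^{\infty}(TM)$ still recovers $\pi$ through Eq. (\ref{PoissonBiVectorRMatrix}), because only basis vectors of $\mathfrak{h}_{r}$ appear in the expansion of $r$.

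The key observation is then a classical compactness argument: on a compact manifold every smooth vector field is complete, since maximal integral curves cannot leave a compact neighborhood in finite time. Consequently every vector field $\phi(\xi)$ with $\xi\in\mathfrak{h}_{r}$ has a globally defined flow, and the theorem of R. Palais recalled at the end of Section \ref{SecGroupAction} ensures that the finite-dimensional Lie algebra action $\phi|_{\mathfrak{h}_{r}}$ integrates to a smooth action of the connected Etingof-Schiffmann subgroup $H_{r}$ on $M$.

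With this integration at hand the hypotheses of Theorem \ref{mastertheorem} are satisfied for the data $(M,\omega,r,\phi|_{\mathfrak{h}_{r}})$, and I would conclude directly that $M$ carries the structure of a homogeneous space on which $H_{r}$ acts transitively. The only place where care is needed is the reduction to $\mathfrak{h}_{r}$: without this step Palais' theorem is not available, and without compactness one would have to establish integrability by other means. Both issues are resolved automatically in the present setting, so the argument reduces to a short chain of citations.
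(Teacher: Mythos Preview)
Your argument is correct and follows essentially the same route as the paper: compactness forces every fundamental vector field to be complete, Palais' theorem then integrates the Lie algebra action, and the conclusion drops out of Theorem~\ref{mastertheorem} (or equivalently Theorem~\ref{ThmIntegrabelStar}). The only difference is that you explicitly pass to the finite-dimensional Etingof--Schiffmann subalgebra $\mathfrak{h}_r$ \emph{before} invoking Palais, whereas the paper applies Palais to the full $\mathfrak{g}$-action and cites Theorem~\ref{ThmIntegrabelStar}; since the paper implicitly treats $\mathfrak{g}$ as finite-dimensional throughout, this is a cosmetic reordering rather than a genuinely different strategy.
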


\begin{proof}
Since any flow of a vector field on a compact manifold is complete, a theorem of Palais (c.f.
\cite[Theorem~6.5]{michortopics})
implies that any Lie algebra action on a connected compact manifold integrates to a Lie group action.
\end{proof}

%

\chapter{Obstructions and Examples of Twist Star Products}\label{ChapObstruction}

In this chapter we gather all previous notions and results to prove or disprove the existence of twist star
products deforming the symplectic structure on connected compact manifolds of dimension $ 2 $. One can
generalize these results easily to arbitrary Poisson manifolds that inherit symplectic leaves of these types.
It turns out that the examples that allow existence of twist star products deforming the symplectic structure
are quite rare. In fact there is only one example of such a twist star product on the $ 2 $-torus and obstructions
in all other cases.

Consider a symplectic manifold $ (M,\omega) $ of dimension $ \dim M=2n\in\mathbb{N} $. Since
$ \omega\in\Gamma^{\infty}(\Anti^{2}T^{*}M) $ is a non-degenerate $ 2 $-form on $ M $ the corresponding
\textit{Liouville form}
\begin{align}
\Omega=\underbrace{\omega\wedge...\wedge\omega}_{n-\text{times}}\in\Gamma^{\infty}(\Anti^{2n}T^{*}M)
\end{align}
is a non-degenerate $ 2n $-form on $ M $, i.e. $ M $ is orientable. For $ n=1 $ the converse is also true:
if $ M $ is a orientable $ 2 $-dimensional manifold, there is a non-degenerate $ 2 $-form
$ \Omega=\omega\in\Gamma^{\infty}(\Anti^{2}T^{*}M) $ on $ M $. But since there are no $ 3 $-forms on a
$ 2 $-dimensional manifold $ \mathrm{d}\omega=0 $, i.e. $ (M,\omega) $ is a symplectic manifold. Thus a
$ 2 $-dimensional manifold is symplectic if and only if it is orientable. There is a well-known classification of
connected
orientable compact $ 2 $-dimensional manifold: two such manifolds are diffeomorphic if and only if they have the same
genus $ g\in\mathbb{N}_{0} $ (c.f. \cite[Section~12.1.5]{lamotke2006riemannsche}). We denote by $ \mathrm{T}(g) $
an arbitrary representative of a connected orientable compact $ 2 $-dimensional manifold of genus $ g $
and call it the \textit{pretzel surface} of genus $ g $. The topological invariant
$ g $ encodes the number of holes. Clearly $ \mathrm{T}(0)\cong\mathbb{S}^{2} $ is
diffeomorphic to the $ 2 $-sphere and $ \mathrm{T}(1)\cong\mathbb{T}^{2} $ is diffeomorphic to the $ 2 $-torus.
According to our discussion, the manifolds $ \mathrm{T}(g) $ are symplectic and on the other hand any $ 2 $-dimensional
connected compact symplectic manifold is diffeomorphic to $ \mathrm{T}(g) $ for a unique $ g\in\mathbb{N}_{0} $. We are
also able to construct a generic symplectic form on $ \mathrm{T}(g) $. Since $ \mathrm{T}(g) $ is a manifold it is
Hausdorff and second countable. Together with the property of being connected this implies the existence of a
\textit{Riemannian metric} $ m\in\Gamma^{\infty}(S^{2}T^{*}\mathrm{T}(g)) $ on $ \mathrm{T}(g) $, i.e. for any
$ p\in\mathrm{T}(g) $
\begin{align}
m_{p}\colon T_{p}\mathrm{T}(g)\times T_{p}\mathrm{T}(g)\rightarrow\mathbb{R}
\end{align}
is a positive definite symmetric bilinear form on $ T_{p}\mathrm{T}(g) $
(c.f. \cite[Proposition~5.5.11]{abraham1993manifolds}).
In a positively oriented local chart $ (U,x=(x_{1},x_{2})) $ of $ \mathrm{T}(g) $ one has
\begin{align}
m|_{U}=\sum\limits_{i,j=1}^{2}\frac{1}{2}m_{ij}dx^{i}\vee dx^{j}.
\end{align}
Then define
\begin{align}\label{SymplecticFromLocal}
\omega|_{U}=\sum\limits_{i,j=1}^{2}\frac{1}{2}\sqrt{\det(m_{k\ell})_{k,\ell=1,2}}dx^{i}\wedge dx^{j}
=\sqrt{m_{11}m_{22}-m_{12}m_{21}}dx^{1}\wedge dx^{2}
\end{align}
which gives a $ 2 $-form $ \omega\in\Gamma^{\infty}(\Anti^{2}T^{*}\mathrm{T}(g)) $ on $ \mathrm{T}(g) $
by gluing together (\ref{SymplecticFromLocal}) on an atlas of $ \mathrm{T}(g) $.

\section{The higher Pretzel Surfaces}

According to \cite[Section~12.1.4]{lamotke2006riemannsche} the Euler characteristic of $ \mathrm{T}(g) $ is
\begin{align}
\chi(\mathrm{T}(g))=2-2g
\end{align}
for any $ g\in\mathbb{N}_{0} $. In particular, one has $ \chi(\mathrm{T}(g))<0 $ for $ g>1 $. Then by Theorem
\ref{EulerCompact} one can not structure $ \mathrm{T}(g) $ as a homogeneous space if $ g>1 $.
Corollary \ref{CompactTwistHomSp} immediately implies the following

\begin{theorem}\label{ThmPretzel}
Let $ g>1 $. There is no twist star product on $ \mathrm{T}(g) $ deforming a symplectic structure of
$ \mathrm{T}(g) $.
\end{theorem}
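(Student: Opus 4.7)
The plan is to proceed by contradiction, invoking the two pillars already established: the rigidity result for twist star products on compact symplectic manifolds (Corollary \ref{CompactTwistHomSp}) and the topological obstruction coming from the Euler characteristic of compact homogeneous spaces (Theorem \ref{EulerCompact}). Since every orientable compact $2$-manifold admits a symplectic structure (its volume form), the setup is available without any additional construction: $\mathrm{T}(g)$ is a connected compact symplectic manifold whose Euler characteristic is the well-known integer $\chi(\mathrm{T}(g))=2-2g$.

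First I would suppose, for contradiction, that a star product $\star$ on $\mathrm{T}(g)$ exists which deforms some symplectic structure on $\mathrm{T}(g)$, and which is induced by a Drinfel\textquoteright d twist $\mathcal{F}\in(\mathcal{U}(\mathfrak{g})\tensor\mathcal{U}(\mathfrak{g}))[[\hbar]]$. Because $\mathrm{T}(g)$ is compact and connected, Corollary \ref{CompactTwistHomSp} applies directly: the Lie algebra action arising from the action of $\mathfrak{g}$ on $\Cinfty(\mathrm{T}(g))[[\hbar]]$ via $\rhd$ integrates (by Palais' theorem, used in the proof of that corollary) to a Lie group action, and the Etingof-Schiffmann subgroup $H_{r}$ of the $r$-matrix $r=\sigma(F_{1})-F_{1}$ acts transitively on $\mathrm{T}(g)$. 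In particular, $\mathrm{T}(g)$ acquires the structure of a connected compact homogeneous space $H_{r}/(H_{r})_{p}$ for any chosen base point $p$.

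At this point Theorem \ref{EulerCompact} immediately forces $\chi(\mathrm{T}(g))\geq 0$. Combined with the hypothesis $g>1$, which yields $\chi(\mathrm{T}(g))=2-2g\leq -2<0$, this is the desired contradiction, and the theorem follows. Thus $\mathrm{T}(g)$ cannot even be presented as a homogeneous space, let alone as one arising from a twist star product deforming a symplectic structure.

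There is essentially no hard step left: all the geometric and algebraic work has been done in Chapter \ref{rmatrixhomogeneous} and in Section \ref{SecEulerPos}. The only point to be careful about is the logical structure of the hypothesis ``deforming a symplectic structure'', which is exactly what is needed to invoke Proposition \ref{PropStructuresForTwStarProd} and thereby guarantee that the Poisson bivector has the form $\pi_{p}=\tfrac{1}{2}\sum r^{ij}\phi(e_{i})_{p}\wedge\phi(e_{j})_{p}$ required by Theorem \ref{mastertheorem}; this is precisely what feeds into Corollary \ref{CompactTwistHomSp}. Once this is made explicit, the proof reduces to a single line contrasting the two numerical constraints on $\chi(\mathrm{T}(g))$.
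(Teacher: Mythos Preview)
Your proof is correct and follows exactly the same approach as the paper: combine Corollary \ref{CompactTwistHomSp} (a twist star product on a connected compact symplectic manifold forces it to be a homogeneous space) with Theorem \ref{EulerCompact} (compact homogeneous spaces have $\chi\geq 0$) and the computation $\chi(\mathrm{T}(g))=2-2g<0$ for $g>1$. The paper's version is simply terser, presenting these facts in the paragraph preceding the theorem and stating that the result follows immediately.
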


On the other hand, there is a star product on any $ \mathrm{T}(g) $ obtained by the Fedosov construction since
$ \mathrm{T}(g) $ is symplectic (c.f. \cite{fedosov1994}). Thus there are star products deforming the symplectic
structure on $ \mathrm{T}(g) $ which can not be induced by a twist and we have infinitely many examples.
This gives a partial proof of the main theorem stated in the introduction.

\section{The $ 2 $-Torus}

Surprisingly, there is a twist star product on the $ 2 $-torus $ \mathbb{T}^{2} $. Consider $ \mathbb{R}^{2} $ with
coordinates $ (x_{1},x_{2}) $. If we choose the zero Lie algebra $ \mathfrak{g} $ over $ \mathbb{R}^{2} $ then
$ \mathcal{U}(\mathfrak{g})=\Sym^{\bullet}\mathbb{R}^{2} $ can be endowed with a Hopf algebra structure
$ \Delta(x)=x\tensor 1+1\tensor x $, $ \epsilon(x)=0 $ and $ S(x)=-x $ for all $ x\in\mathbb{R}^{2} $. As described
in Section \ref{secStarProdTwist} this can be extended to a Hopf algebra $ \Sym^{\bullet}\mathbb{R}^{2}[[\hbar]] $. Now define
\begin{align}
\mathcal{F}=\exp(-\mathrm{i}\hbar\frac{\partial}{\partial x_{1}}\wedge\frac{\partial}{\partial x_{2}})\in
(\Sym^{\bullet}\mathbb{R}^{2}\tensor\Sym^{\bullet}\mathbb{R}^{2})[[\hbar]].
\end{align}
This satisfies the twist conditions of Definition \ref{DefTwistUEA}. Denote by $ \rhd $ the standard action of
$ \mathbb{R}^{2} $ on $ \mathbb{T}^{2} $. Then
\begin{align}
f\star g=\mathrm{m}\circ\mathcal{F}^{-1}\rhd(f\tensor g)=\mathrm{m}
(\exp(\mathrm{i}\hbar\frac{\partial}{\partial x_{1}}\wedge\frac{\partial}{\partial x_{2}})\rhd(f\tensor g))
\end{align}
is the well-known Weyl-Moyal star product on $ \mathbb{T}^{2} $, where $ f,g\in\Cinfty(\mathbb{T}^{2})[[\hbar]] $.
It deforms the symplectic structure on $ \mathbb{T}^{2} $.
More information can be found in \cite{Moskaliuk:2012zz} and \cite{FluxSzabo}.

\section{The $ 2 $-Sphere}

Finally, we consider $ \mathrm{T}(0) $, i.e. the sphere $ \mathbb{S}^{2} $. It is also a connected compact
symplectic
manifold: take for example the symplectic form $ \omega\in\Gamma^{\infty}(\Anti^{2}T^{*}\mathbb{S}^{2}) $ defined
for any $ x\in\mathbb{S}^{2} $ and $ v,w\in T_{x}\mathbb{S}^{2} $ by
\begin{align}
\omega_{x}(v,w)=x\cdot(v\times w),
\end{align}
where $ \cdot $ denotes the standard scalar product on $ \mathbb{R}^{3} $ and $ \times $ the vector product. While
the closedness of $ \omega $ is trivially fulfilled, also the non-degeneracy can be concluded quite easily since
$ T_{x}\mathbb{S}^{2} $ is the plain of vectors that are perpendicular to $ x $ in $ \mathbb{R}^{3} $.
Unfortunately, the trick used in Theorem \ref{ThmPretzel} does not apply here since $ \chi(\mathbb{S}^{2})=2>0 $. But
we have seen in Section \ref{sectiononish} that there are only very few connected (non-equivalent) Lie groups that act
transitively
and effectively on $ \mathbb{S}^{2} $. The question is now the following: are the additional assumptions of being
connected and effectiveness to specific to produce general obstructions? Anyway, the connected Lie groups that
act transitively and effectively give only a few possibilities to structure $ \mathbb{S}^{2} $ as a homogeneous space.
But luckily for any transitive action on $ \mathbb{S}^{2} $ one can construct a connected
one that acts effective in addition, such that the $ r $-matrix survives the transformation. As an immediate consequence
of Corollary \ref{CompactTwistHomSp} we can assume that the Lie group that acts transitively is connected: if
there is a twist star product on a compact symplectic manifold the Etingof-Schiffmann subgroup (a connected Lie group)
corresponding to the $ r $-matrix induced by the twist acts transitively on the manifold and the $ r $-matrix
is non-degenerate in the corresponding Lie algebra. We are interested in obstructions, thus this is exactly the
implication we want since the existence of this connected Lie group acting transitively on $ \mathbb{S}^{2} $
produces a contradiction in the end.

We now prove that $ r $-matrices fit well in our categorial frame.


\begin{lemma}\label{LemRMatrixHom}
Let
$ \phi\colon(\mathfrak{g},\left[\cdot,\cdot\right]_{\mathfrak{g}})\rightarrow
(\mathfrak{h},\left[\cdot,\cdot\right]_{\mathfrak{h}}) $ be a Lie algebra homomorphism. If
$ r\in\mathfrak{g}\tensor\mathfrak{g} $ is a $ r $-matrix then $ \tilde{r}=(\phi\tensor\phi)r $ is a $ r $-matrix
on $ \mathfrak{h} $.
\end{lemma}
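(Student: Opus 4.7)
The plan is to verify the two defining conditions of an $r$-matrix (see Definition \ref{def4thomas}) for $\tilde{r} = (\phi\tensor\phi)r$, namely skew-symmetry and the classical Yang-Baxter equation. Skew-symmetry is immediate: the map $\phi\tensor\phi$ commutes with the braiding isomorphism $\sigma$, so
\begin{align*}
\sigma(\tilde{r}) = \sigma((\phi\tensor\phi)r) = (\phi\tensor\phi)\sigma(r) = -(\phi\tensor\phi)r = -\tilde{r},
\end{align*}
so $\tilde{r}\in\mathfrak{h}\wedge\mathfrak{h}$.

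For the Yang-Baxter identity, the key observation is that because $\phi$ is a Lie algebra homomorphism, the induced map $\phi^{\tensor 3}\colon\mathfrak{g}^{\tensor 3}\rightarrow\mathfrak{h}^{\tensor 3}$ intertwines the three pairings of the tensor slots appearing in the definition of $\operatorname{CYB}$. Concretely, I would write $r = r_{1}\tensor r_{2}$ and $r = r_{1}'\tensor r_{2}'$ (two independent summations, as in Section \ref{sec2thomas}) and use the homomorphism property $\phi(\left[x,y\right]_{\mathfrak{g}}) = \left[\phi(x),\phi(y)\right]_{\mathfrak{h}}$ to compute term by term, e.g.
\begin{align*}
\left[\tilde{r}_{12},\tilde{r}_{13}\right]
= \left[\phi(r_{1}),\phi(r_{1}')\right]_{\mathfrak{h}}\tensor\phi(r_{2})\tensor\phi(r_{2}')
= (\phi\tensor\phi\tensor\phi)\left[r_{12},r_{13}\right],
\end{align*}
and analogously for the other two summands. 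Summing yields $\operatorname{CYB}(\tilde{r}) = (\phi\tensor\phi\tensor\phi)\operatorname{CYB}(r)$.

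Since $r$ is a $r$-matrix we have $\operatorname{CYB}(r) = 0$ by Definition \ref{def4thomas} ii.), and hence $\operatorname{CYB}(\tilde{r}) = 0$. Together with skew-symmetry this shows that $\tilde{r}$ is a $r$-matrix on $\mathfrak{h}$. There is no real obstacle here; the only thing to keep straight is the bookkeeping between the two summation indices for $r$, which is why I prefer the primed notation used in Section \ref{sec2thomas}.
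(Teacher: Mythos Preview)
Your proof is correct and follows essentially the same approach as the paper: both verify skew-symmetry directly from linearity of $\phi$, and both establish $\operatorname{CYB}(\tilde{r}) = (\phi\tensor\phi\tensor\phi)\operatorname{CYB}(r)$ term by term using the homomorphism property, then conclude from $\operatorname{CYB}(r)=0$.
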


\begin{proof}
This is a simple calculation:
\begin{align*}
\text{CYB}(\tilde{r})&=\left[\phi(r_{1}),\phi(r_{1}')\right]_{\mathfrak{h}}\tensor\phi(r_{2})\tensor\phi(r_{2}')
+\phi(r_{1})\tensor\left[\phi(r_{2}),\phi(r_{1}')\right]_{\mathfrak{h}}\tensor\phi(r_{2}')\\
&+\phi(r_{1})\tensor\phi(r_{1}')\tensor\left[\phi(r_{2}),\phi(r_{2}')\right]_{\mathfrak{h}}\\
&=\phi(\left[r_{1},r_{1}'\right]_{\mathfrak{g}})\tensor\phi(r_{2})\tensor\phi(r_{2}')
+\phi(r_{1})\tensor\phi(\left[r_{2},r_{1}'\right]_{\mathfrak{g}})\tensor\phi(r_{2}')\\
&+\phi(r_{1})\tensor\phi(r_{1}')\tensor\phi(\left[r_{2},r_{2}'\right]_{\mathfrak{g}})\\
&=(\phi\tensor\phi\tensor\phi)\text{CYB}(r)
\end{align*}
by the Lie algebra homomorphism property of $ \phi $. Then, if $ r $ is a $ r $-matrix one has
$ \text{CYB}(\tilde{r})=(\phi\tensor\phi\tensor\phi)0=0 $. Also the skew-symmetry of $ \tilde{r} $ is clear since
$ \phi $ is linear.
\end{proof}

We pass from transitive to transitive and effective actions. With the help of the last lemma we do not loose
the $ r $-matrix.

\begin{proposition}\label{PropTransToEffect}
Let $ G $ be a Lie group acting on a manifold $ M $ via $ \Phi $ and define
\begin{align}
N=\left\lbrace g\in G~|~\Phi_{g}=\operatorname{id}_{M}\right\rbrace.
\end{align}
If $ r $ is a $ r $-matrix on the corresponding Lie algebra $ \mathfrak{g} $ then there is a $ r $-matrix
on the Lie algebra $ \mathfrak{g}/\mathfrak{n} $ corresponding to $ G/N $.
If the action $ \Phi\colon G\times M\rightarrow M $ is transitive, so is the effective action
\begin{align}\label{QuotientAction}
\Psi\colon G/N\times M\ni(g\cdot N,x)\mapsto\Phi(g,x)\in M.
\end{align}
\end{proposition}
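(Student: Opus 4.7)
The plan is to first show that $N$ is a closed normal Lie subgroup of $G$, so that $G/N$ becomes a Lie group and the canonical projection $\pi\colon G\to G/N$ is a smooth surjective submersion. I would identify $N$ either as the kernel of the group homomorphism $\tilde{\Phi}\colon G\to \operatorname{Diff}(M)$, or more concretely as the intersection $N=\bigcap_{x\in M}G_{x}$ of the stabilizers, which are closed in $G$ by Theorem \ref{thmsubmfd} ii.). Thus $N$ is closed, hence a Lie subgroup by the closed subgroup theorem, and normal since $\Phi_{ghg^{-1}}=\Phi_{g}\circ\Phi_{h}\circ\Phi_{g^{-1}}=\operatorname{id}_{M}$ for any $g\in G$ and $h\in N$. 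Consequently $G/N$ carries a Lie group structure whose Lie algebra is $\mathfrak{g}/\mathfrak{n}$, where $\mathfrak{n}$ is the Lie algebra of $N$, and $T_{e}\pi\colon\mathfrak{g}\to\mathfrak{g}/\mathfrak{n}$ is a surjective Lie algebra homomorphism.

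With this homomorphism available, Lemma \ref{LemRMatrixHom} applied to $T_{e}\pi$ immediately produces the desired $r$-matrix
\begin{align}
\tilde{r}=(T_{e}\pi\otimes T_{e}\pi)(r)\in(\mathfrak{g}/\mathfrak{n})\wedge(\mathfrak{g}/\mathfrak{n})
\end{align}
on $\mathfrak{g}/\mathfrak{n}$. Next I would verify that the map $\Psi$ in Eq. (\ref{QuotientAction}) is well-defined: if $g\cdot N=g'\cdot N$, then $g^{-1}g'\in N$, so $\Phi(g',x)=\Phi(g,\Phi(g^{-1}g',x))=\Phi(g,x)$ for all $x\in M$. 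Smoothness of $\Psi$ follows in the same style as in Proposition \ref{prophomspace}: the map $\pi\times\operatorname{id}_{M}\colon G\times M\to G/N\times M$ is a smooth surjective submersion and $\Psi\circ(\pi\times\operatorname{id}_{M})=\Phi$ is smooth, so Theorem \ref{ThmSurjSubm} yields the smoothness of $\Psi$. The two action axioms then follow from those of $\Phi$ by pushing cosets through the multiplication in $G$.

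Effectiveness is essentially built into the construction: if $\Psi_{g\cdot N}=\operatorname{id}_{M}$, then $\Phi_{g}=\operatorname{id}_{M}$, hence $g\in N$ and $g\cdot N=e\cdot N$. Transitivity of $\Psi$ is immediate from transitivity of $\Phi$, since for any $x,y\in M$ an element $g\in G$ with $\Phi(g,x)=y$ also satisfies $\Psi(g\cdot N,x)=y$. The main obstacle is really the first step, namely establishing that $N$ is a closed Lie subgroup so that $G/N$ has a meaningful Lie group structure; once this is settled the remainder is essentially formal, since the identification $N=\bigcap_{x\in M}G_{x}$ together with Theorem \ref{thmsubmfd} ii.) and Cartan's closed subgroup theorem carries everything through.
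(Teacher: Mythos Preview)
Your proposal is correct and follows essentially the same approach as the paper: establish that $N$ is a closed normal Lie subgroup so that $G/N$ is a Lie group, apply Lemma \ref{LemRMatrixHom} to the quotient homomorphism $T_{e}\pi\colon\mathfrak{g}\to\mathfrak{g}/\mathfrak{n}$ to obtain the $r$-matrix, and then verify that $\Psi$ is a well-defined effective transitive action. Your version is in fact slightly more explicit than the paper's in justifying closedness of $N$ (via $N=\bigcap_{x\in M}G_{x}$) and smoothness of $\Psi$ (via Theorem \ref{ThmSurjSubm}), whereas the paper refers back to the earlier observation that $\ker\tilde{\Phi}$ is a Lie subgroup and simply states that smoothness and the action axioms are ``induced by $\Phi$''.
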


\begin{proof}
We already know that $ N $ is a Lie subgroup of $ G $. To see that it is a normal subgroup choose $ g\in G $ and
$ n\in N $ and calculate
\begin{align*}
\Phi_{gng^{-1}}=\Phi_{g}\circ\Phi_{n}\circ\Phi_{g}^{-1}=\Phi_{g}\circ\Phi_{g}^{-1}=\text{id}_{M},
\end{align*}
which shows that $ gng^{-1}\in N $. Thus $ gNg^{-1}\subseteq N $ for all $ g\in G $ and $ N $ is normal.
It is well-known (c.f. \cite[Proposition~5.7.8]{rudolph2012differential}) that the quotient $ G/H $ of a Lie group
$ G $ and a normal subgroup $ H\subseteq G $ is again a Lie group and there is a Lie group homomorphism
$ \pi\colon G\rightarrow G/H $. Consequently, we find a Lie group homomorphism $ \pi\colon G\rightarrow G/N $.
Then its tangent map $ \phi\colon\mathfrak{g}\rightarrow\mathfrak{g}/\mathfrak{n} $ at $ e\in G $ is a Lie
algebra homomorphism. Thus the first claim of the proposition follows from Lemma \ref{LemRMatrixHom}.
For the second one, we first check that $ \Psi $ defined in Eq. (\ref{QuotientAction}) is well-defined. Choose
elements $ g,g'\in G $ such that $ g\cdot N=g'\cdot N $, i.e. there is a $ h\in N $ such that $ g'=gh $. Then
\begin{align*}
\Psi(g'\cdot N,x)=\Phi(g',x)=\Phi(g,\Phi(h,x))=\Phi(g,x)=\Psi(g\cdot N,x)
\end{align*}
for all $ x\in M $. Also $ \Psi(h\cdot N,x)=\Phi(h,x)=x $ for all $ h\in N $. The smoothness as well as
the action properties of $ \Psi $ are induced by $ \Phi $. It is effective by construction since the kernel $ N $
is eliminated.
\end{proof}
Thus for our observations we can require without loss of generality that the Lie group acting transitively on the
homogeneous space is connected and acts effectively in addition: if there is a
$ r $-matrix on $ \mathfrak{g} $ there also has to be one on $ \mathfrak{g}/\mathfrak{n} $. For the sphere
$ \mathbb{S}^{2} $ we have classified all such actions in Section \ref{sectiononish} up the equivalence. Also this
equivalence can be fixed.

\begin{proposition}\label{LastProposition}
Let $ G $ be a Lie group acting on $ M $ via $ \Phi $. If there is a $ r $-matrix on $ \mathfrak{g} $
one can construct a $ r $-matrix on the Lie algebra of any Lie group that acts in a to $ \Phi $ equivalent way on
$ M $.
\end{proposition}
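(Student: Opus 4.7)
The plan is to reduce the statement directly to Lemma \ref{LemRMatrixHom}. Suppose $G'$ is a second Lie group acting on $M$ via $\Phi'$ in a way equivalent to $\Phi$, so that by definition there exists a Lie group morphism $\phi\colon G\rightarrow G'$ with $\Phi(g,x)=\Phi'(\phi(g),x)$ for all $g\in G$ and $x\in M$. The first step is to pass from this global datum to an infinitesimal one by differentiating $\phi$ at the identity: the tangent map
\begin{align*}
T_{e}\phi\colon\mathfrak{g}\rightarrow\mathfrak{g}'
\end{align*}
is automatically a Lie algebra homomorphism, since differentiating a Lie group homomorphism always yields a Lie algebra homomorphism between the corresponding Lie algebras.

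Once the Lie algebra homomorphism is at hand, the second step is a one-line invocation of Lemma \ref{LemRMatrixHom}: given the $ r $-matrix $r\in\mathfrak{g}\wedge\mathfrak{g}$, the element
\begin{align*}
\tilde{r}=(T_{e}\phi\tensor T_{e}\phi)(r)\in\mathfrak{g}'\wedge\mathfrak{g}'
\end{align*}
is a $ r $-matrix on $\mathfrak{g}'$, because Lemma \ref{LemRMatrixHom} states precisely that $ r $-matrices are pushed forward to $ r $-matrices under Lie algebra homomorphisms. Skew-symmetry of $\tilde{r}$ is inherited from the linearity of $T_{e}\phi$, and the classical Yang-Baxter equation on $\tilde{r}$ follows by applying $T_{e}\phi^{\tensor 3}$ to $\operatorname{CYB}(r)=0$.

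Essentially nothing hard remains: the proposition merely packages the previous combinatorial lemma together with the standard functoriality $G\mapsto\operatorname{Lie}(G)$. The only point that deserves a brief remark is that in this setting one has no control over whether $T_{e}\phi$ is injective or surjective, so $\tilde{r}$ can be degenerate even when $r$ is not, and its Etingof-Schiffmann subalgebra in $\mathfrak{g}'$ can be strictly smaller than the image of $\mathfrak{h}_{r}$. For the intended application to $\mathbb{S}^{2}$, however, only the bare existence of $\tilde{r}$ is needed, so this loss is harmless: combined with Proposition \ref{PropTransToEffect} and Corollary \ref{CompactTwistHomSp}, it allows us to replace any transitive Lie group action carrying an $ r $-matrix by one of the connected effective models classified in Theorem \ref{ThmAllSphereActions}, still equipped with an $ r $-matrix on its Lie algebra.
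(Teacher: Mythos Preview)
Your proof is correct and follows essentially the same approach as the paper: both take the Lie group homomorphism $\phi\colon G\rightarrow G'$ coming from the definition of equivalent actions, pass to the induced Lie algebra homomorphism $T_{e}\phi\colon\mathfrak{g}\rightarrow\mathfrak{g}'$, and invoke Lemma \ref{LemRMatrixHom} to push the $r$-matrix forward. Your additional remarks on possible degeneracy of $\tilde{r}$ and on the role of this proposition in the $\mathbb{S}^{2}$ argument are accurate and go slightly beyond what the paper writes, but the core argument is the same.
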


\begin{proof}
Let $ \Phi\colon G\times M\rightarrow M $ and $ \Phi'\colon G'\times M\rightarrow M $ be two equivalent actions, i.e.
there is a Lie group homomorphism $ \phi\colon G\rightarrow G' $ such that $ \Phi(g,x)=\Phi'(\phi(g),x) $ for all
$ g\in G $, $ x\in M $. The Lie algebra action corresponding to $ \phi $ maps a $ r $-matrix on $ \mathfrak{g} $ to
a $ r $-matrix on $ \mathfrak{g}' $ according to Lemma \ref{LemRMatrixHom}.
\end{proof}

Finally, we can prove the main theorem stated in the introduction.

\begin{theorem}\textup{\textbf{\cite{NoTwistPaper}.}}\label{LastTheorem}
There is no twist star product on $ \mathbb{S}^{2} $ deforming any symplectic structure of $ \mathbb{S}^{2} $.
\end{theorem}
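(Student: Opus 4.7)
The plan is to reason by contradiction. Assuming a twist star product on $\mathbb{S}^{2}$ deforming a symplectic structure $\omega$ exists, I will construct a connected Lie subgroup of the Etingof-Schiffmann subgroup acting transitively and effectively on $\mathbb{S}^{2}$ whose Lie algebra carries a non-degenerate $r$-matrix. This contradicts Proposition \ref{PropRmatrixSimpleDeg} once combined with Lemma \ref{SphereSemiSimple} (equivalently, Theorem \ref{ThmAllSphereActions}), which together force any Lie algebra whose connected group acts transitively and locally effectively on $\mathbb{S}^{2}$ to be semisimple.

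First I extract the relevant data from the twist. Proposition \ref{PropStructuresForTwStarProd} yields an $r$-matrix $r$ on some Lie algebra $\mathfrak{g}$ and a Lie algebra action on $\mathbb{S}^{2}$ whose image realises the Poisson bivector of $\omega$. Compactness of $\mathbb{S}^{2}$ ensures integrability, so Corollary \ref{CompactTwistHomSp} applies: the Etingof-Schiffmann subgroup $H_{r}$, on whose Lie algebra $\mathfrak{h}_{r}$ the element $r$ is non-degenerate, acts transitively on $\mathbb{S}^{2}$. I then invoke Proposition \ref{PropTransToEffect} to pass to the effective quotient $\bar{H}_{r}=H_{r}/N$, which remains transitive, while Lemma \ref{LemRMatrixHom} transports $r$ to an $r$-matrix $\bar{r}$ on $\bar{\mathfrak{h}}_{r}=\mathfrak{h}_{r}/\mathfrak{n}$. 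The Poisson bivector formula descends as well, since the action factors through $\bar{\mathfrak{h}}_{r}$.

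Since $\bar{H}_{r}$ acts transitively and effectively, hence locally effectively, Lemma \ref{SphereSemiSimple} forces $\bar{\mathfrak{h}}_{r}$ to be semisimple. Its complexification is semisimple too, so Proposition \ref{PropRmatrixSimpleDeg} shows that $\bar{r}$ is degenerate in $\bar{\mathfrak{h}}_{r}$. However $\bar{r}\neq 0$, since otherwise the Poisson bivector of $\omega$ would vanish identically, contradicting the non-degeneracy of $\omega$. Consequently, by Proposition \ref{PropEtingofSub}, the Etingof-Schiffmann subalgebra $\mathfrak{h}_{\bar{r}}$ is a proper Lie subalgebra of $\bar{\mathfrak{h}}_{r}$ on which $\bar{r}$ is non-degenerate.

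To close the argument I apply Theorem \ref{mastertheorem} a second time, now with data $(\bar{\mathfrak{h}}_{r},\bar{r})$ and the transitive $\bar{H}_{r}$-action on the symplectic manifold $\mathbb{S}^{2}$: the Etingof-Schiffmann subgroup $H_{\bar{r}}\subseteq\bar{H}_{r}$ acts transitively on $\mathbb{S}^{2}$. Because $\bar{H}_{r}$ acts effectively and $H_{\bar{r}}$ is a connected Lie subgroup, the restricted action is automatically effective, hence locally effective, so Lemma \ref{SphereSemiSimple} forces $\mathfrak{h}_{\bar{r}}$ to be semisimple; this contradicts Proposition \ref{PropRmatrixSimpleDeg}, because $\bar{r}$ is a non-degenerate $r$-matrix in $\mathfrak{h}_{\bar{r}}$. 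The main delicate point is the bookkeeping in the middle paragraph: verifying that the Poisson bivector formula survives both the descent to $\bar{\mathfrak{h}}_{r}$ and the restriction to $\mathfrak{h}_{\bar{r}}$, so that Theorem \ref{mastertheorem} can legitimately be invoked twice and the second invocation can be combined with the effectiveness inherited from the first quotient.
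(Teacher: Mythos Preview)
Your argument is correct, and it reaches the same contradiction as the paper, but it takes a detour that the paper avoids by ordering the two reductions differently. The paper first passes from the original Lie group $G'$ (coming from the twist) to its effective quotient $G=G'/N$, obtaining an $r$-matrix $r$ on $\mathfrak{g}$, and \emph{only then} takes the Etingof--Schiffmann subgroup $H_{r}\subseteq G$. Since $H_{r}$ is a subgroup of an effectively acting group, its action is automatically effective; Lemma~\ref{SphereSemiSimple} then forces $\mathfrak{h}_{r}$ to be semisimple, and the contradiction with Proposition~\ref{PropRmatrixSimpleDeg} is immediate because $r$ is non-degenerate in $\mathfrak{h}_{r}$ by construction. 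In your approach you take the Etingof--Schiffmann subgroup first, then the effective quotient, and are then obliged to take the Etingof--Schiffmann subalgebra a second time to restore non-degeneracy, because the quotient map $\mathfrak{h}_{r}\to\bar{\mathfrak{h}}_{r}$ may well send a non-degenerate $r$ to a degenerate $\bar{r}$. Your intermediate observation that $\mathfrak{h}_{\bar{r}}\subsetneq\bar{\mathfrak{h}}_{r}$ is correct but not actually needed: even if $\mathfrak{h}_{\bar{r}}=\bar{\mathfrak{h}}_{r}$ the final contradiction would follow at once. The moral is that doing ``effective quotient first, then Etingof--Schiffmann'' collapses your three-step descent into two steps and removes the need for the second invocation of Theorem~\ref{mastertheorem}.
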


\begin{proof}
Assume there is a twist star product $ \star=\sum\limits_{i=0}^{\infty}\hbar^{i}C_{i}(\cdot,\cdot) $ on
$ \mathbb{S}^{2} $ such that
\begin{align}
\left\lbrace f,g\right\rbrace=C_{1}(f,g)-C_{1}(g,f)
\end{align}
for all $ f,g\in\Cinfty(\mathbb{S}^{2})[[\hbar]] $. According to Proposition \ref{PropStructuresForTwStarProd}
there is a $ r $-matrix $ r'\in\mathfrak{g}'\wedge\mathfrak{g}' $ such that $ \left\lbrace f,g\right\rbrace
=m(r'\rhd(f\tensor g)) $ for all $ f,g\in\Cinfty(\mathbb{S}^{2})[[\hbar]] $
and the corresponding Lie group $ G' $ acts transitively on $ \mathbb{S}^{2} $. We proved in
Proposition \ref{PropTransToEffect} that this implies the existence of a Lie group $ G $ which acts transitively
and effectively on $ \mathbb{S}^{2} $ and the existence of a $ r $-matrix $ r $ on the corresponding Lie algebra
$ \mathfrak{g} $.
Moreover, Corollary \ref{CompactTwistHomSp} states that the Etingof-Schiffmann subgroup $ H_{r} $ corresponding to $ r $
acts transitively on $ \mathbb{S}^{2} $. This action is the restriction of the action of $ G $ to $ H_{r}\subseteq G $,
which means that $ H_{r} $ acts effectively in addition. Since $ H_{r} $ is connected, it has to be a semisimple
Lie group according to Lemma \ref{SphereSemiSimple}. Consequently, the Etingof-Schiffmann subalgebra
$ \mathfrak{h}_{r} $ has to be a semisimple Lie algebra, in which
$ r\in\mathfrak{h}_{r}\wedge\mathfrak{h}_{r} $ is non-degenerate (consider Proposition \ref{PropEtingofSub}).
This is a contradiction to Proposition \ref{PropRmatrixSimpleDeg}, since there we
proved that there are no non-degenerate $ r $-matrices on semisimple Lie algebras.
\end{proof}

Remark that we did not use the concrete form of all non-equivalent connected Lie groups that act transitively and
effectively on $ \mathbb{S}^{2} $ (see Theorem \ref{ThmAllSphereActions}) in Theorem \ref{LastTheorem}.
We also did not use Proposition \ref{LastProposition} but only that the acting Lie groups
are semisimple. We showed the existence of a connected Lie group, which is not semisimple, that acts
transitively and effectively on $ \mathbb{S}^{2} $ if the $ 2 $-sphere inherits a twist star product.
Again, this result can be applied to any symplectic foliation of a Poisson manifold. In literature there are
deformations of the $ q $-deformed sphere (consider e.q.
\cite{WatamuraFuzzy,SteinackerFuzzyII,KurkcuogluFuzzy,SteinackerFuzzyI}). These result are not contradictory
to Theorem \ref{LastTheorem} since the corresponding Poisson structures are degenerate. But in this case
the $ q $-deformed sphere is not rotationally invariant, which is unfavorable if one wants to describe the
symmetries of a physical system.

%

\appendix
\chapter{From Groups to Hopf Algebras}\label{appendixHopf}

This appendix is meant to clarify notations, collect the most basic definitions of algebraic structures we use all
the time, sometimes without referring to them, and prove some results that are of fundamental use. By decoding
the corresponding axioms in commutative diagrams the process of dualizing objects is very nearby: it is
immediately clear how the dual structures are defined since one just has to reverse arrows. By demanding
compatibility to the dual notion we arrive at bialgebras. An interesting statement is that there are two equivalent
definitions that are dual to each other. Then there is just one ingredient missing to define Hopf algebras, i.e.
the antipode. It has some nice properties that we are going to prove. Of course we have a categorical
approach in mind, thus we always define objects with their corresponding morphisms. We basically
follow the first chapter of the book \cite{majid2000foundations}, but also refer to \cite{radford2011hopf}.

We start with arbitrary groups and how they can interact with other sets. A \textit{group} is a non-empty set together
with an associative multiplication and a unit element such that every element is invertible. The group is said
to be \textit{abelian} if the multiplication is commutative. A notion of groups that interact
with other sets is

\begin{definition}[Group Action]
Let $ (G,\cdot) $ be a group and $ M $ an arbitrary set. We say that $ (G,\cdot) $ acts from the left on $
M $ if for every element $ a\in G $ there is a map $ M\rightarrow M $ denoted by $ M\ni m\mapsto a\rhd m\in M $ such
that for all $ a,b\in G $ and $ m\in M $ one has
\begin{align}
(a\cdot b)\rhd m=a\rhd(b\rhd m)
\end{align}
and
\begin{align}
e\rhd m=m,
\end{align}
where $ e $ denotes the unit of $ G $. The axioms of a right action $ \lhd\colon
M\ni m\mapsto m\lhd a\in M $ read
\begin{align}
m\lhd(a\cdot b)=(m\lhd a)\lhd b\text{   and   }m\lhd e=m
\end{align}
for all $ a,b\in G $ and $ m\in M $.
\end{definition}

We often omit to write $ \rhd $ or $ \lhd $. As a first application we define vector spaces via group actions.
This point of view gives much structural background to some axioms.

\begin{definition}[Vector Space]
Let $ (\mathbb{k},\cdot,+) $ be a field and $ (V,+) $ an abelian group, where we denoted the addition on $ V $ also by
$ + $. The triple $ (V,+,\mathbb{k}) $ is said to be a vector space over the field $ \mathbb{k} $ if
$ (\mathbb{k},\cdot) $ acts on $ (V,+) $ via $ \rhd $, i.e.
\begin{align}
(\lambda\cdot\mu)\rhd v=\lambda\rhd(\mu\rhd v)\text{   and   }1\rhd v=v
\end{align}
for all $ \lambda,\mu\in\mathbb{k} $ and $ v\in V $, where $ 1 $ denotes the unit of $ (\mathbb{k},\cdot) $ and we set
$ 0\rhd v=0_{V} $ for all $ v\in V $, where $ 0 $ denotes the unit of $ (\mathbb{k},+) $ and $ 0_{V} $ the unit of
$ (V,+) $. Moreover, these structures have to be compatible, i.e.
\begin{align}
\lambda\rhd(v+w)=\lambda\rhd v+\lambda\rhd w\text{   and   }(\lambda+\mu)\rhd v=\lambda\rhd v+\mu\rhd v
\end{align}
for all $ \lambda,\mu\in\mathbb{k} $ and $ v,w\in V $. The map $ \rhd $, called scalar multiplication, is omitted in
most cases.
\end{definition}

We do not want to give an introduction to linear algebra, but recall the definitions of the tensor product and the
dual space of a vector space since they are used in this thesis. In the following $ \mathbb{k} $ always
denotes a field.

\begin{example}
We start with the free abelian group on $ V\times W $. This is the set of finite strings of tuples in $ V\times W $
modulo the relation
\begin{align}
(a,b)+(c,d)=(c,d)+(a,b),
\end{align}
where $ + $ denotes the concatenation of finite strings. We further
quotient out for $ \lambda\in\mathbb{k} $ and $ (a,b),(c,d)\in V\times W $ all relations of the form
\begin{align}\label{eq35thomas}
(\lambda a,b)&=(a,\lambda b)\\
(a+c,b)&=(a,b)+(c,b)\\
(a,b+d)&=(a,b)+(a,d).
\end{align}
The set we obtain is called the \textbf{tensor product} of $ V $ and $ W $ and is denoted by $ V\tensor W $. We define
a $ \mathbb{k} $-action on the abelian group $ (V\tensor W,+) $ for $ \lambda\in\mathbb{k} $ and $ (a,b)\in V\times W $
by
\begin{align}\label{eq36thomas}
\lambda(a,b)=(\lambda a,b)
\end{align}
which equals to $ (a,\lambda b) $ in $ V\tensor W $ according to Eq. (\ref{eq35thomas}) and extend this linearly to
$ V\tensor W $. Usually, one denotes tuples $ (a,b)\in V\tensor W $ by $ a\tensor b $. Indeed Eq. (\ref{eq36thomas})
defines a left action on $ V\tensor W $, since for $ \lambda,\mu\in\mathbb{k} $ and $ a\tensor b\in V\tensor W $
one has
\begin{align*}
(\lambda\mu)(a\tensor b)&=(\lambda\mu a)\tensor b=\lambda(\mu a)\tensor b\\
&=\lambda(\mu(a\tensor b))
\end{align*}
and $ 1(a\tensor b)=(1a)\tensor b=a\tensor b $. While for another $ c\tensor d\in V\tensor W $ we have
\begin{align*}
\lambda(a\tensor b+c\tensor d)=\lambda(a\tensor b)+\lambda(c\tensor d)
\end{align*}
by the linear extension and
\begin{align*}
(\lambda+\mu)(a\tensor b)&=((\lambda+\mu)a)\tensor b=(\lambda a+\mu a)\tensor b\\
&=(\lambda a)\tensor b+(\mu a)\tensor b=\lambda(a\tensor b)+\mu(a\tensor b)
\end{align*}
and thus the triple $ (V\tensor W,+,\mathbb{k}) $ is a $ \mathbb{k} $-vector space.
\end{example}

Remark that there is also a definition of tensor products via a universal property, consider e.g.
\cite{surhone2010tensor}. As an example we want to discuss the $ \mathbb{k} $-vector space $ \mathbb{k}\tensor V $
for some $ \mathbb{k} $-vector space $ V $ and realize that it is isomorphic to $ V $ itself. This identification is
necessary for the definitions of algebras and coalgebras.

\begin{lemma}\label{lem1thomas}
Let $ V $ be a vector space over a field $ \mathbb{k} $. Then $ \mathbb{k}\tensor V\cong V\cong V\tensor\mathbb{k} $.
\end{lemma}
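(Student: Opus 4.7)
The plan is to exhibit an explicit linear isomorphism $\phi\colon\mathbb{k}\tensor V\to V$ (the analogous argument for $V\tensor\mathbb{k}$ being obtained by swapping tensor slots). Concretely, I would first define $\phi$ on the free abelian group on $\mathbb{k}\times V$ by $\phi(\lambda,v)=\lambda v\in V$, extending it by declaring $\phi$ additive on strings. The first step is then to check that $\phi$ descends to the quotient $\mathbb{k}\tensor V$, i.e. that $\phi$ respects the three defining relations of the tensor product listed in Eq. (\ref{eq35thomas}): this is immediate, since $\phi(\lambda\mu,v)=\lambda\mu v=\phi(\lambda,\mu v)$, $\phi(\lambda+\mu,v)=(\lambda+\mu)v=\lambda v+\mu v=\phi(\lambda,v)+\phi(\mu,v)$, and $\phi(\lambda,v+w)=\lambda(v+w)=\lambda v+\lambda w=\phi(\lambda,v)+\phi(\lambda,w)$, using only the vector space axioms on $V$.

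Next I would define the candidate inverse $\psi\colon V\to\mathbb{k}\tensor V$ by $\psi(v)=1\tensor v$, which is obviously $\mathbb{k}$-linear. The composition $\phi\circ\psi$ sends $v$ to $1\cdot v=v$, so $\phi\circ\psi=\mathrm{id}_{V}$. For the other composition I would argue on factorizing tensors: $(\psi\circ\phi)(\lambda\tensor v)=\psi(\lambda v)=1\tensor(\lambda v)=\lambda\tensor v$, using relation (\ref{eq35thomas}) to pull the scalar $\lambda$ across the tensor slot. Since every element of $\mathbb{k}\tensor V$ is a finite sum of such factorizing tensors and both $\psi$ and $\phi$ are additive, this extends to $\psi\circ\phi=\mathrm{id}_{\mathbb{k}\tensor V}$. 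Finally, I would check that $\phi$ is $\mathbb{k}$-linear with respect to the scalar action on $\mathbb{k}\tensor V$ defined in Eq. (\ref{eq36thomas}), which is a one-line calculation.

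The same argument applied with the roles of the tensor slots exchanged yields $V\tensor\mathbb{k}\cong V$ via $v\tensor\lambda\mapsto\lambda v$. There is no real obstacle here; the only subtle point is the verification that $\phi$ is well-defined on the quotient, and this is precisely the content of the bilinearity built into the tensor product construction. Thus the isomorphisms $\mathbb{k}\tensor V\cong V\cong V\tensor\mathbb{k}$ are canonical and natural in $V$.
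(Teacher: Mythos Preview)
Your proof is correct and uses the same canonical map $\lambda\tensor v\mapsto\lambda v$ as the paper. The only minor difference is that the paper verifies bijectivity by checking surjectivity and injectivity directly (the latter via a case distinction on whether $\lambda=0$), whereas you exhibit the explicit inverse $v\mapsto 1\tensor v$ and check both compositions; this is a cosmetic variation rather than a different approach.
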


\begin{proof}
The isomorphisms we are searching for are
\begin{align}\label{eq37thomas}
\mathbb{k}\tensor V\ni\lambda\tensor v\mapsto\lambda v\in V
\end{align}
and
\begin{align}\label{eq38thomas}
V\tensor\mathbb{k}\ni v\tensor\lambda\mapsto\lambda v\in V,
\end{align}
where the right hand side denotes the $ \mathbb{k} $-action on the $ \mathbb{k} $-vector space $ V $.
We prove that the map defined in Eq.
(\ref{eq37thomas}) is an isomorphism. The proof for the second isomorphism works exactly the same. The map defined
in Eq. (\ref{eq37thomas}) is surjective, since any $ v\in V $ can be produced via $ 1\tensor v\mapsto 1v=v $.
For injectivity assume there are $ \lambda\tensor v,\mu\tensor w\in\mathbb{k}\tensor V $ such that
$ \lambda v=\mu w $. If $ \lambda=0 $ then $ \mu w=0 $, i.e. $ \mu=0 $ or $ w=0 $ and for this $ \mu\tensor w=0=
\lambda\tensor v $ because
\begin{align*}
0\tensor w=(0\cdot 0)\tensor w=0\tensor(0\cdot w)=0\tensor 0=0.
\end{align*}
Thus assume that $ \lambda\neq 0 $. Then $ v=\lambda^{-1}\mu w $ and
\begin{align*}
\lambda\tensor v&=\lambda\tensor(\lambda^{-1}\mu w)=(\lambda\lambda^{-1}\mu)\tensor w\\
&=\mu\tensor w.
\end{align*}
This completes the proof.
\end{proof}
Now we discuss the last vector space example.

\begin{example}
The set $ V^{*} $ of all $ \mathbb{k} $-linear maps $ V\rightarrow\mathbb{k} $ is called the \textbf{dual vector space}
of the $ \mathbb{k} $-vector space $ (V,+,\mathbb{k}) $. Thus an element $ \phi\in V^{*} $ is a map $ \phi\colon V
\rightarrow\mathbb{k} $ such that for all $ \lambda,\mu\in\mathbb{k} $ and $ v,w\in V $
\begin{align}
\phi(\lambda v+\mu w)=\lambda\phi(v)+\mu\phi(w).
\end{align}
Then $ V^{*} $ becomes a $ \mathbb{k} $-vector space structure if we define an addition on $ V^{*} $ pointwise for two
elements $ \phi,\psi\in V^{*} $ by
\begin{align}
(\phi+\psi)(v)=\phi(v)+\psi(v),
\end{align}
for all $ v\in V $, where the right hand side denotes the addition in $ \mathbb{k} $ and if we define an action of
$ \mathbb{k} $ on $ V^{*} $ pointwise for $ \lambda\in\mathbb{k} $ and $ \phi\in V^{*} $ by
\begin{align}
(\lambda\phi)(v)=\phi(\lambda v),
\end{align}
where the argument of $ \phi $ in the right hand side is the $ \mathbb{k} $-action of $ \lambda $ on $ v $. The vector
space axioms are easy to check. Note that there is a way to identify an element $ v\in V $ with an element in
the dual space $ V^{**} $ of $ V^{*} $ by defining $ v(\phi):=\phi(v)\in\mathbb{k} $ for any $ \phi\in V^{*} $.
We proved
\begin{align}\label{eq39thomas}
V\subseteq V^{**}.
\end{align}
Moreover, if $ \phi\in V^{*} $ and $ \psi\in W^{*} $ we define a map $ \phi\tensor\psi\in V^{*}\tensor W^{*} $ by
$ (\phi\tensor\psi)(v\tensor w)=\phi(v)\tensor\psi(w) $ for $ v\in V, w\in W $. Remark that by Lemma \ref{lem1thomas}
one has $ \phi(v)\tensor\psi(w)\in\mathbb{k}\tensor\mathbb{k}\cong\mathbb{k} $ thus $ \phi\tensor\psi\colon
V\tensor W\rightarrow\mathbb{k} $. We calculate for $ \lambda,\mu\in
\mathbb{k} $ and $ v\tensor w,x\tensor y\in V\tensor W $
\begin{align*}
(\phi\tensor\psi)(\lambda(v\tensor w)+\mu(x\tensor y))&=(\phi\tensor\psi)((\lambda v)\tensor w)+(\phi\tensor\psi)
((\mu x)\tensor y)\\
&=\phi(\lambda v)\tensor\psi(w)+\phi(\mu x)\tensor\psi(y)\\
&=(\lambda\phi(v))\tensor\psi(w)+(\mu\phi(x))\tensor\psi(y)\\
&=\lambda(\phi(v)\tensor\psi(w))+\mu(\phi(x)\tensor\psi(y))\\
&=\lambda((\phi\tensor\psi)(v\tensor w))+\mu((\phi\tensor\psi)(x\tensor y))
\end{align*}
what shows $ \phi\tensor\psi\in(V\tensor W)^{*} $. Thus we proved
\begin{align}\label{eq40thomas}
V^{*}\tensor W^{*}\subseteq(V\tensor W)^{*}.
\end{align}
In finite dimensions one can check that Eq. (\ref{eq39thomas}) and Eq. (\ref{eq40thomas}) are equalities.
\end{example}

We are now able to define a very central object in Hopf algebra theory, first
in terms of rings, fields and actions: an algebra.

\begin{definition}[Algebra]\label{defalgebra}
We call $ (\algebra{A},\cdot,+,\mathbb{k}) $ an algebra over a field $ \mathbb{k} $ if the following three
conditions are satisfied,
\begin{compactenum}
\item the triple $ (\algebra{A},\cdot,\mathbb{k}) $ is a unital ring,
\item the triple $ (\algebra{A},+,\mathbb{k}) $ is a vector space,
\item the scalar multiplication (that exists according to ii.)) is compatible with both, the multiplication and the
addition, i.e. for all $ \lambda\in\mathbb{k} $ and $ a,b\in\algebra{A} $ one has
\begin{align}\label{eq41thomas}
\lambda(a\cdot b)=(\lambda a)\cdot b=a\cdot(\lambda b)
\end{align}
and
\begin{align}\label{eq42thomas}
\lambda(a+b)=\lambda a+\lambda b
\end{align}
while the second condition always holds since $ (\algebra{A},+,\mathbb{k}) $ is a vector space according to ii.).
\end{compactenum}
\end{definition}

We also want to give an alternative definition of an algebra. Remark that conditions (\ref{eq41thomas}) and 
(\ref{eq42thomas}) can be formulated as follows: there is a $ \mathbb{k} $-linear map
\begin{align}\label{eq43thomas}
\cdot\colon\algebra{A}\tensor\algebra{A}\rightarrow\algebra{A}
\end{align}
replacing the multiplication on $ \algebra{A} $. If we assume in addition that $ (\algebra{A},+,\mathbb{k}) $ is a
vector space the only conditions for $ (\algebra{A},\cdot,+,\mathbb{k}) $ to be an algebra are conditions on the new
multiplication defined in Eq. (\ref{eq43thomas}). More precisely, the map defined in Eq. (\ref{eq43thomas})
has to obey for all $ a,b,c\in\algebra{A} $
\begin{align}\label{eq45thomas}
(\cdot\circ(\cdot\tensor 1))(a\tensor b\tensor c)=(\cdot\circ(1\tensor\cdot))((a\tensor b\tensor c)),
\end{align}
where $ 1\colon\algebra{A}\rightarrow\algebra{A} $ denotes the identity map on $ \algebra{A} $, i.e. $ \cdot $ has to
be associative and there has to be an element $ 1_{\algebra{A}}\in\algebra{A} $ such that
\begin{align}\label{eq44thomas}
\cdot(a\tensor 1_{\algebra{A}})=a=\cdot(1_{\algebra{A}}\tensor a),
\end{align}
i.e. there has to be a unit $ 1_{\algebra{A}} $ on $ \algebra{A} $. Another way to write condition (\ref{eq44thomas})
is obtained in the following way. Define for any $ a\in\algebra{A} $ a map
\begin{align}
\eta_{a}\colon\mathbb{k}\rightarrow\algebra{A}
\end{align}
by $ \eta_{a}(\lambda)=\lambda a $, where the right hand side denotes the scalar multiplication. Finally, set
$ \eta=\eta_{1_{\algebra{A}}} $ what means
\begin{align}\label{eq47thomas}
\eta(\lambda)=\lambda 1_{\algebra{A}}.
\end{align}
Thus a condition recovering (\ref{eq44thomas}) is
\begin{align}\label{eq44'thomas}
(\cdot\circ(\eta\tensor 1))(\lambda\tensor a)=\lambda a=(\cdot\circ(1\tensor\eta))(a\tensor\lambda),
\end{align}
for $ \lambda\in\mathbb{k} $ and $ a\in\algebra{A} $.
Conditions (\ref{eq45thomas}) and (\ref{eq44'thomas}) are equivalent to the commutativity of the following
diagrams, respectively.

\begin{equation}\label{eq46thomas}
\begin{tikzpicture}
  \matrix (m) [matrix of math nodes,row sep=3em,column sep=4em,minimum width=2em]
  {
     \algebra{A}\tensor\algebra{A}\tensor\algebra{A} & \algebra{A}\tensor\algebra{A} \\
     \algebra{A}\tensor\algebra{A} & \algebra{A} \\};
  \path[-stealth]
    (m-1-1) edge node [left] {$\cdot\tensor 1$} (m-2-1)
            edge node [above] {$1\tensor\cdot$} (m-1-2)
    (m-2-1) edge node [above] {$\cdot$} (m-2-2)
    (m-1-2) edge node [right] {$\cdot$} (m-2-2);
\end{tikzpicture}
\begin{tikzpicture}
  \matrix (m) [matrix of math nodes,row sep=3em,column sep=0.5em,minimum width=2em]
  {
     \algebra{A}\tensor\algebra{A} &  \\
     \mathbb{k}\tensor\algebra{A} & \algebra{A} \\};
  \path[-stealth]
    (m-2-1) edge node [left] {$\eta\tensor 1$} (m-1-1)
    (m-1-1) edge node [above] {$\cdot$} (m-2-2);
    \draw[double equal sign distance] (m-2-1) -- (m-2-2);
\end{tikzpicture}
\begin{tikzpicture}
  \matrix (m) [matrix of math nodes,row sep=3em,column sep=0.5em,minimum width=2em]
  {
     \algebra{A}\tensor\algebra{A} &  \\
     \algebra{A}\tensor\mathbb{k} & \algebra{A} \\};
  \path[-stealth]
    (m-2-1) edge node [left] {$1 \tensor\eta$} (m-1-1)
    (m-1-1) edge node [above] {$\cdot$} (m-2-2);
    \draw[double equal sign distance] (m-2-1) -- (m-2-2);
\end{tikzpicture}
\end{equation}

Remark that in the second and third diagram of (\ref{eq46thomas}) the equality sign denotes the isomorphisms
defined in Eq.
(\ref{eq37thomas}) and Eq. (\ref{eq38thomas}) of Lemma \ref{lem1thomas}, respectively. In the above lines we proved the
following

\begin{proposition}
$ (\algebra{A},\cdot,+,\mathbb{k}) $ is an algebra over a field $ \mathbb{k} $ if and only if $ (\algebra{A},
+,\mathbb{k}) $ is a $ \mathbb{k} $-vector space and the following two conditions are satisfied.
\begin{compactenum}
\item There is a $ \mathbb{k} $-linear map $ \cdot\colon\algebra{A}\tensor\algebra{A}\rightarrow\algebra{A} $, called
\textbf{multiplication} on $ \algebra{A} $ that is \textbf{associative}, i.e. the first diagram of (\ref{eq46thomas})
commutes.
\item There is the $ \mathbb{k} $-linear map $ \eta $ defined in Eq. (\ref{eq47thomas}) such that the second and third
diagram of (\ref{eq46thomas}) commute.
\end{compactenum}
\end{proposition}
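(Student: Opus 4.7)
The plan is to prove both directions separately, using the universal property of the tensor product to go back and forth between bilinear maps $\algebra{A}\times\algebra{A}\to\algebra{A}$ and linear maps $\algebra{A}\tensor\algebra{A}\to\algebra{A}$.

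For the forward direction, I would start from Definition \ref{defalgebra} and note that compatibility conditions (\ref{eq41thomas}) and (\ref{eq42thomas}), together with the ring distributivity of $ \cdot $, guarantee that the ring multiplication is $ \mathbb{k} $-bilinear. By the universal property of the tensor product, this bilinear map factors uniquely through a $ \mathbb{k} $-linear map $ \cdot\colon\algebra{A}\tensor\algebra{A}\to\algebra{A} $, which I denote by the same symbol. Associativity of the ring then translates directly into commutativity of the first diagram of (\ref{eq46thomas}) on factorizing tensors, and by $ \mathbb{k} $-linearity on all of $ \algebra{A}\tensor\algebra{A}\tensor\algebra{A} $. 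Defining $ \eta $ as in Eq. (\ref{eq47thomas}), I would check its $ \mathbb{k} $-linearity from the vector space axioms and verify that the unit property (\ref{eq44thomas}) of $ 1_{\algebra{A}} $ is exactly the statement that the two triangle diagrams commute after identifying $ \mathbb{k}\tensor\algebra{A}\cong\algebra{A}\cong\algebra{A}\tensor\mathbb{k} $ via Lemma \ref{lem1thomas}.

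For the reverse direction, I would assume the $ \mathbb{k} $-linear maps $ \cdot $ and $ \eta $ are given with the commuting diagrams and reconstruct a unital ring structure. The multiplication on elements is defined by $ a\cdot b:=\cdot(a\tensor b) $; it is associative by the first diagram of (\ref{eq46thomas}) applied to $ a\tensor b\tensor c $, and the unit is $ 1_{\algebra{A}}:=\eta(1_{\mathbb{k}}) $, for which the two triangle diagrams yield $ 1_{\algebra{A}}\cdot a=a=a\cdot 1_{\algebra{A}} $ after using Lemma \ref{lem1thomas}. Left and right distributivity of $ \cdot $ follow from its $ \mathbb{k} $-bilinearity, which is encoded in the definition on the tensor product. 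Hence $ (\algebra{A},\cdot,\mathbb{k}) $ is a unital ring. The compatibility conditions (\ref{eq41thomas}) and (\ref{eq42thomas}) follow again from $ \mathbb{k} $-linearity of $ \cdot $ together with the relations (\ref{eq35thomas}) defining the tensor product.

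The bookkeeping is essentially routine, so there is no serious obstacle; the only point that requires a little care is to make the identifications $ \mathbb{k}\tensor\algebra{A}\cong\algebra{A}\cong\algebra{A}\tensor\mathbb{k} $ explicit when passing between the unit axiom (\ref{eq44thomas}) and the triangle diagrams, since the equality signs in (\ref{eq46thomas}) denote the isomorphisms of Lemma \ref{lem1thomas} rather than literal equalities. Because both directions are obtained by reinterpreting the same data, the equivalence is really a translation between the elementwise language of rings and the diagrammatic language of tensor products — which is exactly the point that prepares the ground for dualizing the axioms to obtain coalgebras.
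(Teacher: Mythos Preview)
Your proposal is correct and matches the paper's approach: the proposition is stated after an informal discussion that does exactly what you describe, namely observing that the compatibility conditions (\ref{eq41thomas}) and (\ref{eq42thomas}) amount to $\mathbb{k}$-bilinearity (hence a linear map on the tensor product), and that associativity and the unit axiom translate into the commuting diagrams (\ref{eq46thomas}), with the identifications of Lemma \ref{lem1thomas} handling the unit triangles. If anything, your write-up is more careful than the paper's, which simply remarks ``In the above lines we proved the following'' and states the proposition.
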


We reformulated the definition of an algebra via the commuting diagrams (\ref{eq46thomas}) because
we want to \textit{dualize} our objects and
notions. The proceeding
is now as follows: simply reverse arrows and demand objects to fill the gaps of the maps that do not fit any more
in the diagrams.

\begin{definition}[Coalgebra]\label{DefCoAlgebra}
We call $ (\algebra{C},+,\Delta,\epsilon,\mathbb{k}) $ a coalgebra over a field $ \mathbb{k} $ if there is a
linear coproduct
\begin{align}
\Delta\colon\algebra{C}\rightarrow\algebra{C}\tensor\algebra{C}
\end{align}
which is coassociative and a linear counit
\begin{align}
\epsilon\colon\algebra{C}\rightarrow\mathbb{k}.
\end{align}
The coassociativity axiom and the axioms a counit has to satisfy are denoted in the commutativity of the three diagrams
of (\ref{eq48thomas}), respectively.
\end{definition}

\begin{equation}\label{eq48thomas}
\begin{tikzpicture}
  \matrix (m) [matrix of math nodes,row sep=3em,column sep=4em,minimum width=2em]
  {
     \algebra{C}\tensor\algebra{C}\tensor\algebra{C} & \algebra{C}\tensor\algebra{C} \\
     \algebra{C}\tensor\algebra{C} & \algebra{C} \\};
  \path[-stealth]
    (m-2-2) edge node [right] {$\Delta$} (m-1-2)
            edge node [above] {$\Delta$} (m-2-1)
    (m-2-1) edge node [left] {$\Delta\tensor 1$} (m-1-1)
    (m-1-2) edge node [above] {$1\tensor\Delta$} (m-1-1);
\end{tikzpicture}
\begin{tikzpicture}
  \matrix (m) [matrix of math nodes,row sep=3em,column sep=0.5em,minimum width=2em]
  {
     \algebra{C}\tensor\algebra{C} &  \\
     \mathbb{k}\tensor\algebra{C} & \algebra{C} \\};
  \path[-stealth]
    (m-1-1) edge node [left] {$\epsilon\tensor 1$} (m-2-1)
    (m-2-2) edge node [above] {$\Delta$} (m-1-1);
    \draw[double equal sign distance] (m-2-1) -- (m-2-2);
\end{tikzpicture}
\begin{tikzpicture}
  \matrix (m) [matrix of math nodes,row sep=3em,column sep=0.5em,minimum width=2em]
  {
     \algebra{C}\tensor\algebra{C} &  \\
     \algebra{C}\tensor\mathbb{k} & \algebra{C} \\};
  \path[-stealth]
    (m-1-1) edge node [left] {$1\tensor\epsilon$} (m-2-1)
    (m-2-2) edge node [above] {$\Delta$} (m-1-1);
    \draw[double equal sign distance] (m-2-1) -- (m-2-2);
\end{tikzpicture}
\end{equation}

The axioms of a coalgebra depicted in the commutativity of the diagrams (\ref{eq48thomas}) read in formulas
\begin{align}\label{eq49thomas}
(\Delta\tensor 1)\circ\Delta=(1\tensor\Delta)\circ\Delta
\end{align}
and
\begin{align}\label{eq50thomas}
(\epsilon\tensor 1)\circ\Delta=1=(1\tensor\epsilon)\circ\Delta,
\end{align}
where $ 1\colon\algebra{C}\rightarrow\algebra{C} $ denotes the identity on $ \algebra{C} $.

\begin{remark}\label{RemarkSweedler}
One could wonder what part of $ \Delta(c)\in\algebra{C}\tensor\algebra{C} $ for a $ c\in\algebra{C} $ is applied to
$ \Delta,\epsilon $ or $ 1 $ in Eq. (\ref{eq49thomas}) and Eq. (\ref{eq50thomas}), respectively. We want to discuss
this problem. $ \Delta(c)\in\algebra{C}\tensor\algebra{C} $ is a linear combination of elements $ c_{1}\tensor
c_{2}\in\algebra{C}\tensor\algebra{C} $ for $ c_{1},c_{2}\in\algebra{C} $, i.e. $ \Delta(c)=\sum
\limits_{i=1}^{n}c_{1_{i}}\tensor c_{2_{i}} $ for a finite number $ n\in\mathbb{N} $. Having this in mind we use the
short notation (called \textbf{Sweedler\textquoteright s notation})
\begin{align}\label{eq51thomas}
\Delta(c)=\sum c_{(1)}\tensor c_{(2)}=c_{(1)}\tensor c_{(2)},
\end{align}
thus sometimes we even omit the sum. Then equation (\ref{eq49thomas}) reads
\begin{align}\label{eq52thomas}
c_{(1)(1)}\tensor c_{(1)(2)}\tensor c_{(2)}=c_{(1)}\tensor c_{(2)(1)}\tensor c_{(2)(2)},
\end{align}
where we used Eq. (\ref{eq51thomas}) again for the elements $ \Delta(c_{(1)})\in\algebra{C}\tensor\algebra{C} $ and
$ \Delta(c_{(2)})\in\algebra{C}\tensor\algebra{C} $, respectively, while equation (\ref{eq50thomas}) reads
\begin{align}
\epsilon(c_{(1)})c_{(2)}=c=\epsilon(c_{(2)})c_{(1)},
\end{align}
where we again used the isomorphisms stated in the proof of Lemma \ref{lem1thomas}. Motivated by equation
(\ref{eq52thomas}) we moreover write
\begin{align}\label{eq53thomas}
c_{(1)}\tensor c_{(2)}\tensor c_{(3)}=c_{(1)(1)}\tensor c_{(1)(2)}\tensor c_{(2)}=c_{(1)}\tensor c_{(2)(1)}\tensor
c_{(2)(2)}.
\end{align}
More general we define elements of the form
$ c_{(1)}\tensor c_{(2)}\tensor\cdots\tensor c_{(r)}\in\algebra{C}\tensor\cdots\tensor\algebra{C} $
for any $ r\in\mathbb{N} $ and $ c\in\algebra{C} $ by
\begin{align*}
c_{(1)}\tensor c_{(2)}\tensor\cdots\tensor c_{(r)}&=(\Delta\tensor\underbrace{1\tensor\cdots\tensor 1}
_{(r-1)\text{-times}})\\
&\circ(\Delta\tensor\underbrace{1\tensor\cdots\tensor 1}_{(r-2)\text{-times}})\\
&\vdots\\
&\circ\Delta(c).
\end{align*}
For $ r=1 $ and $ r=2 $ we get back the definitions in Eq. (\ref{eq51thomas}) and Eq. (\ref{eq53thomas}),
respectively. As in this
two cases also permutations of $ \Delta $ and $ 1 $ in the defining equations lead to $ c_{(1)}\tensor c_{(2)}\tensor
\cdots\tensor c_{(r)} $ as a consequence of the coassociativity of
$ \Delta $. Let us prove this for $ r=3 $.
\begin{align*}
(\Delta\tensor 1\tensor 1)\circ(\Delta\tensor 1)\circ\Delta(c)&=(\Delta\tensor 1\tensor 1)\circ(\Delta\tensor
1)(c_{(1)}\tensor c_{(2)})\\
&=(\Delta\tensor 1\tensor 1)(c_{(1)(1)}\tensor c_{(1)(2)}\tensor c_{(2)})\\
&=c_{(1)(1)(1)}\tensor c_{(1)(1)(2)}\tensor c_{(1)(2)}\tensor c_{(2)}\\
&=c_{(1)(1)}\tensor c_{(1)(2)(1)}\tensor c_{(1)(2)(2)}\tensor c_{(2)}\\
&=(1\tensor\Delta\tensor 1)\circ(\Delta\tensor 1)\circ\Delta(c),
\end{align*}
where we used the coassociativity in the fourth equation for the element $ c_{(1)} $. Similarly for the other
permutations. By induction this result is true for any $ r\in\mathbb{N} $. As another short notation that we sometimes
use, define
\begin{align}
\Delta^{r-1}(c)=c_{(1)}\tensor c_{(2)}\tensor\cdots\tensor c_{(r)},
\end{align}
where we set $ \Delta^{0}=\Delta $. While $ \Delta $ adds another $ \algebra{C} $, the counit $ \epsilon $ reduces its
argument about one $ \algebra{C} $. More precise,
\begin{align*}
(\epsilon\tensor\underbrace{1\tensor\cdots\tensor 1}_{r\text{-times}})\circ\Delta^{r}(c)&=(1\tensor\epsilon\tensor 1
\tensor\cdots\tensor 1)\circ\Delta^{r}(c)\\
&=\vdots\\
&=(\underbrace{1\tensor\cdots\tensor 1}_{r\text{-times}}\tensor\epsilon)\circ\Delta^{r}(c)\\
&=c_{(1)}\tensor c_{(2)}\tensor\cdots\tensor c_{(r)}\\
&=\Delta^{r-1}(c).
\end{align*}
We prove this again only for $ r=3 $.
\begin{align*}
(\epsilon\tensor 1\tensor 1\tensor 1)\circ\Delta^{3}(c)&=(\epsilon\tensor 1\tensor 1\tensor 1)(c_{(1)(1)}\tensor
c_{(1)(2)}\tensor c_{(2)}\tensor c_{(3)})\\
&=\epsilon(c_{(1)(1)})c_{(1)(2)}\tensor c_{(2)}\tensor c_{(3)}\\
&=c_{(1)}\tensor c_{(2)}\tensor c_{(3)},
\end{align*}
where we used the counit axiom for the element $ c_{(1)} $ in the last equation. The other permutations work
analogous.
\end{remark}

Now after defining algebras and coalgebras we come to their morphisms.

\begin{definition}[Algebra and Coalgebra Map]
A $ \mathbb{k} $-linear map $ f\colon\algebra{A}\rightarrow\algebra{B} $ between two algebras $ \algebra{A} $ and
$ \algebra{B} $ over $ \mathbb{k} $ is said to be an algebra map if it satisfies for all $ a,b\in\algebra{A} $
\begin{align}
f(ab)=f(a)f(b)\text{   and   }f(1_{\algebra{A}})=1_{\algebra{B}},
\end{align}
where $ 1_{\algebra{A}} $ and $ 1_{\algebra{B}} $ denote the units of $ \algebra{A} $ and $ \algebra{B} $,
respectively.
A $ \mathbb{k} $-linear map $ f\colon\algebra{C}\rightarrow\algebra{D} $ between two coalgebras $ \algebra{C} $ and
$ \algebra{D} $ over $ \mathbb{k} $ is said to be a coalgebra map if it satisfies
\begin{align}
(f\tensor f)\circ\Delta_{\algebra{C}}=\Delta_{\algebra{D}}\circ f\text{   and   }\epsilon_{\algebra{D}}\circ f=
\epsilon_{\algebra{C}}.
\end{align}
\end{definition}

\begin{remark}
One can structure the tensor product of two (co)algebras as a (co)algebra again. Explicitly, we define for
two algebras $ \algebra{A} $ and $ \algebra{B} $ a product on the vector space $ \algebra{A}\tensor\algebra{B} $ by
\begin{align}\label{eq57thomas}
(a\tensor c)\cdot(b\tensor d):=(ab\tensor cd),
\end{align}
for all $ a,b\in\algebra{A} $ and $ c,d\in\algebra{B} $ and extend this linearly to $ \algebra{A}\tensor\algebra{B} $.
Because of the associativity of the product on $ \algebra{A} $ and $ \algebra{B} $, respectively, we get for
$ a,b,x\in\algebra{A} $, $ c,d,y\in\algebra{B} $
\begin{align*}
((a\tensor c)\cdot(b\tensor d))\cdot(x\tensor y)&=(ab\tensor cd)\cdot(x\tensor y)\\
&=(ab)x\tensor(cd)y\\
&=a(bx)\tensor c(dy)\\
&=(a\tensor c)\cdot((b\tensor d)\cdot(x\tensor y)),
\end{align*}
which is the associativity of $ \cdot $. The unit of $ \algebra{A}\tensor\algebra{B} $ is the tensor product of
the units on $ \algebra{A} $ and $ \algebra{B} $, i.e. $ 1_{\algebra{A}\tensor\algebra{B}}=1_{\algebra{A}}\tensor
1_{\algebra{B}} $ which can be seen via
\begin{align*}
1_{\algebra{A}\tensor\algebra{B}}\cdot(a\tensor b)=1_{\algebra{A}}a\tensor 1_{\algebra{B}}b=a\tensor b=
(a\tensor b)\cdot 1_{\algebra{A}\tensor\algebra{B}},
\end{align*}
for all $ a\in\algebra{A} $, $ b\in\algebra{B} $. For two coalgebras $ \algebra{C} $ and $ \algebra{D} $ we define the
coproduct on $ \algebra{C}\tensor\algebra{D} $ for all $ c\in\algebra{C} $, $ d\in\algebra{D} $ by
\begin{align}
\Delta(c\tensor d)=(c_{(1)}\tensor d_{(1)})\tensor(c_{(2)}\tensor d_{(2)})
\end{align}
and extend this linearly to $ \algebra{C}\tensor\algebra{D} $. We used again the notation introduced in Eq.
(\ref{eq51thomas}). The coassociativity is satisfied because for $ c\in\algebra{C} $ and $ d\in\algebra{D} $ one has
\begin{align*}
(\Delta\tensor 1)\circ\Delta(c\tensor d)&=(\Delta\tensor 1)(c_{(1)}\tensor d_{(1)})\tensor(c_{(2)}\tensor d_{(2)})\\
&=(c_{(1)(1)}\tensor d_{(1)(1)})\tensor(c_{(1)(2)}\tensor d_{(1)(2)})\tensor(c_{(2)}\tensor d_{(2)})\\
&=(c_{(1)}\tensor d_{(1)})\tensor(c_{(2)(1)}\tensor d_{(2)(1)})\tensor(c_{(2)(2)}\tensor d_{(2)(2)})\\
&=(1\tensor\Delta)\circ\Delta(c\tensor d)
\end{align*}
by the coassocitivity in $ \algebra{C} $ and $ \algebra{D} $, respectively. The $ 1 $ in the equation above
denotes the identity map on $ \algebra{C}\tensor\algebra{D} $. The counit of $ \algebra{C}\tensor\algebra{D} $ is the
tensor product of the counits of $ \algebra{C} $ and $ \algebra{D} $, i.e. $ \epsilon=\epsilon_{\algebra{C}}\tensor
\epsilon_{\algebra{D}} $, since for $ c\in\algebra{C} $, $ d\in\algebra{D} $ one has
\begin{align*}
(\epsilon\tensor 1)\circ\Delta(c\tensor d)&=(\epsilon\tensor 1)(c_{(1)}\tensor d_{(1)})\tensor
(c_{(2)}\tensor d_{(2)})\\
&=(\epsilon_{\algebra{C}}(c_{(1)})\tensor\epsilon_{\algebra{D}}(d_{(1)}))\tensor(c_{(2)}\tensor d_{(2)})\\
&=c\tensor d,
\end{align*}
where the last equation follows from the counit property on $ \algebra{C} $ and $ \algebra{D} $, respectively.
The other counit axiom follows similarly.
\end{remark}

The following should give an argument why on calls the process of passing from an
algebra to a coalgebra (and back) \textit{dualizing}.

\begin{proposition}
Consider a coalgebra $ \algebra{C} $ and the adjoint maps
\begin{align}
\cdot\colon\algebra{C}^{*}\tensor\algebra{C}^{*}\rightarrow\algebra{C}^{*}
\end{align}
and
\begin{align}
\eta\colon\mathbb{k}\rightarrow\algebra{C}^{*}
\end{align}
of the coproduct and counit, respectively. More explicit, one has for $ \phi,\psi\in\algebra{C}^{*} $, $ c\in
\algebra{C} $ and $ \lambda\in\mathbb{k} $
\begin{align}\label{eq54thomas}
(\phi\cdot\psi)(c)=(\phi\tensor\psi)\circ\Delta(c)
\end{align}
and
\begin{align}\label{eq55thomas}
\eta(\lambda)(c)=\lambda\epsilon(c).
\end{align}
The product $ \cdot $ and the unit $ \eta $ make the dual space $ \algebra{C}^{*} $ into an algebra.
The unit element
is $ \epsilon $. Conversely, if $ \algebra{A} $ is a finite-dimensional algebra, $ \algebra{A}^{*} $ gets a
coalgebra structure in the same fashion.
\end{proposition}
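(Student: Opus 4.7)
The plan is to verify the algebra axioms for $(\algebra{C}^{*},\cdot,\eta,\mathbb{k})$ by directly dualizing the commutative diagrams of Definition \ref{DefCoAlgebra}, and then to reverse the argument in the finite-dimensional case, where the extra care is to justify the identification $\algebra{A}^{*}\tensor\algebra{A}^{*}\cong(\algebra{A}\tensor\algebra{A})^{*}$ that was noted as equality only in finite dimensions.

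First I would check that the formulas (\ref{eq54thomas}) and (\ref{eq55thomas}) indeed produce $\mathbb{k}$-linear maps and thus well-defined elements of $\algebra{C}^{*}$, then linearity of $\cdot$ and $\eta$ themselves. For associativity I would take $\phi,\psi,\chi\in\algebra{C}^{*}$ and compute, in Sweedler's notation,
\begin{align*}
((\phi\cdot\psi)\cdot\chi)(c)
&=(\phi\cdot\psi)(c_{(1)})\,\chi(c_{(2)})
=\phi(c_{(1)(1)})\,\psi(c_{(1)(2)})\,\chi(c_{(2)})\\
&=\phi(c_{(1)})\,\psi(c_{(2)(1)})\,\chi(c_{(2)(2)})
=(\phi\cdot(\psi\cdot\chi))(c),
\end{align*}
where the middle equality is precisely the coassociativity identity (\ref{eq52thomas}). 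Then I would check that $\eta(1_{\mathbb{k}})=\epsilon$ is a two-sided unit: for any $\phi\in\algebra{C}^{*}$ and $c\in\algebra{C}$,
\begin{align*}
(\epsilon\cdot\phi)(c)=\epsilon(c_{(1)})\phi(c_{(2)})=\phi(\epsilon(c_{(1)})c_{(2)})=\phi(c),
\end{align*}
using the counit axiom (\ref{eq50thomas}), and symmetrically on the right. Together with the obvious $\mathbb{k}$-bilinearity of $\cdot$ and $\mathbb{k}$-linearity of $\eta$, which correspond to the vector space axioms of Definition \ref{defalgebra} ii.), this proves the first half.

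For the converse, given a finite-dimensional algebra $(\algebra{A},\cdot,\eta,\mathbb{k})$, I would define $\Delta\colon\algebra{A}^{*}\to\algebra{A}^{*}\tensor\algebra{A}^{*}$ as the transpose of the multiplication, interpreted via the isomorphism $\algebra{A}^{*}\tensor\algebra{A}^{*}\cong(\algebra{A}\tensor\algebra{A})^{*}$, so that $\Delta(\phi)(a\tensor b)=\phi(ab)$, and $\epsilon\colon\algebra{A}^{*}\to\mathbb{k}$ by $\epsilon(\phi)=\phi(1_{\algebra{A}})$. Coassociativity then reads
\begin{align*}
((\Delta\tensor 1)\Delta\phi)(a\tensor b\tensor c)=\phi((ab)c)=\phi(a(bc))=((1\tensor\Delta)\Delta\phi)(a\tensor b\tensor c)
\end{align*}
directly from the associativity diagram (\ref{eq46thomas}), and the counit axioms
\begin{align*}
(\epsilon\tensor 1)\Delta(\phi)(a)=\phi(1_{\algebra{A}}a)=\phi(a)=\phi(a\,1_{\algebra{A}})=(1\tensor\epsilon)\Delta(\phi)(a)
\end{align*}
follow from the unit axioms of $\algebra{A}$.

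The only nontrivial technical point, and what I expect to be the main obstacle, is making rigorous the use of the inclusion $\algebra{A}^{*}\tensor\algebra{A}^{*}\subseteq(\algebra{A}\tensor\algebra{A})^{*}$ of Eq.\ (\ref{eq40thomas}): in the first half this inclusion is fine, because the product $\phi\cdot\psi=(\phi\tensor\psi)\circ\Delta$ makes sense without any identification. But in the converse direction one must know that the dual of the multiplication $\cdot\colon\algebra{A}\tensor\algebra{A}\to\algebra{A}$ lands in the subspace $\algebra{A}^{*}\tensor\algebra{A}^{*}$ of $(\algebra{A}\tensor\algebra{A})^{*}$, and this is exactly where finite-dimensionality is needed; I would invoke the finite-dimensional version of (\ref{eq40thomas}) to conclude equality there. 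Having settled this, the rest of the converse is a mechanical reversal of the arrows in the diagrams (\ref{eq46thomas}), giving precisely the diagrams (\ref{eq48thomas}).
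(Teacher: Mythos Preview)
Your proof is correct and follows essentially the same route as the paper: you verify associativity via the Sweedler-notation computation using coassociativity, check that $\epsilon$ is the unit via the counit axiom, and then reverse the argument in finite dimensions by transposing the multiplication. The only cosmetic difference is that the paper phrases the finite-dimensional issue in terms of the identification $\algebra{A}\cong\algebra{A}^{**}$ (reading (\ref{eq54thomas})--(\ref{eq55thomas}) from right to left), whereas you emphasize the isomorphism $\algebra{A}^{*}\tensor\algebra{A}^{*}\cong(\algebra{A}\tensor\algebra{A})^{*}$; both are equivalent formulations of the same obstruction.
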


\begin{proof}
If $ \algebra{C} $ is a coalgebra Eq. (\ref{eq54thomas}) defines an associative product on $ \algebra{C}^{*} $
because for $ \phi,\psi,\chi\in\algebra{C}^{*} $ and $ c\in\algebra{C} $ one has
\begin{align*}
((\phi\cdot\psi)\cdot\chi)(c)&=(\phi\cdot\psi)(c_{(1)})\chi(c_{(2)})\\
&=\phi(c_{(1)(1)})\psi(c_{(1)(2)})\chi(c_{(2)})\\
&=(\phi\tensor\psi\tensor\chi)\circ(\Delta\tensor 1)\circ\Delta(c)\\
&=(\phi\tensor\psi\tensor\chi)\circ(1\tensor\Delta)\circ\Delta(c)\\
&=(\phi\cdot(\psi\cdot\chi))(c),
\end{align*}
where we used the coassociativity of $ \Delta $ on $ \algebra{C} $ in the fourth equation. The unit axiom of
the map defined in Eq.
(\ref{eq55thomas}) also follows for $ \lambda\in\mathbb{k} $, $ \phi\in\algebra{C}^{*} $:
\begin{align*}
\cdot((\eta\tensor 1)(\lambda\tensor\phi))(c)&=\cdot(\eta(\lambda)\tensor\phi)(c)\\
&=\lambda\epsilon(c_{(1)})\phi(c_{(2)})\\
&=(\lambda\epsilon\tensor\phi)\circ\Delta(c)\\
&=(\lambda 1_{\mathbb{k}}\tensor\phi)\circ(\epsilon\tensor 1)\circ\Delta(c)\\
&=\lambda\phi(c),
\end{align*}
where the last equation follows from the counit property of $ \epsilon $ on $ \algebra{C} $. The other unit axiom
follows similarly. Conversely, if $ \algebra{A} $ is an algebra we can define a coproduct and a counit via Eq.
(\ref{eq54thomas}) and Eq. (\ref{eq55thomas}) where one has to read the equations from right to left. This is possible
since in the finite-dimensional situation we can identify $ \algebra{A} $ with the dual space of $ \algebra{A}^{*} $.
Thus $ (a\tensor b)\circ\Delta(\phi)=(a\cdot b)(\phi) $ and $ \lambda\epsilon(\phi)=\eta(\lambda)(\phi) $ for
$ a,b\in\algebra{A}=\algebra{A}^{**} $, $ \lambda\in\mathbb{k} $ and $ \phi\in\algebra{A}^{*} $. Then the same
equations above reordered give the coassociativity of $ \Delta $ on $ \algebra{A}^{*} $ by the associativity of
$ \cdot $ on $ \algebra{A} $ and the counit axiom of $ \epsilon $ on $ \algebra{A}^{*} $ from the unit axiom of
$ \eta $ on $ \algebra{A} $. This concludes the proof.
\end{proof}

The situation gets interesting if we combine both, the features of an algebra and a coalgebra, in one space.
Moreover, the algebra and coalgebra structure should be compatible. This leads to the next

\begin{definition}[Bialgebra]\label{DefBiAlgebra}
We call $ (H,+,\cdot,\eta,\Delta,\epsilon,\mathbb{k}) $ a bialgebra over a field $ \mathbb{k} $ if
$ (H,+,\mathbb{k}) $ is an algebra and a coalgebra over $ \mathbb{k} $ such that the respective structures are
compatible, i.e. $ \Delta\colon H\rightarrow H\tensor H $ and $ \epsilon\colon H\rightarrow\mathbb{k} $ are algebra
maps. Explicitly, this compatibility reads for $ g,h\in H $
\begin{align}
\Delta(gh)&=\Delta(g)\Delta(h),\\
\Delta(1_{H})&=1_{H}\tensor 1_{H},\\
\epsilon(gh)&=\epsilon(g)\epsilon(h),\\
\epsilon(1_{H})&=1_{\mathbb{k}}.
\end{align}
\end{definition}
Bialgebras are self dual objects. To illustrate this we prove a characterization of bialgebras via their
algebra structure. Remark that bialgebras are defined in therms of their coalgebra structure.

\begin{proposition}
An algebra and coalgebra
$ (H,+,\cdot,\eta,\Delta,\epsilon,\mathbb{k}) $ is a bialgebra if and only if $ \cdot\colon H\tensor H\rightarrow H $
and $ \eta\colon\mathbb{k}\rightarrow H $ are coalgebra maps, i.e.
\begin{align}\label{eq56thomas}
(\cdot\tensor\cdot)\circ\Delta_{H\tensor H}&=\Delta_{H}\circ\cdot,\\
\epsilon_{H}\circ\cdot&=\epsilon_{H\tensor H},\\
(\eta\tensor\eta)\circ\Delta_{\mathbb{k}}&=\Delta_{H}\circ\eta,\\
\epsilon_{H}\circ\eta&=\epsilon_{\mathbb{k}}.
\end{align}
\end{proposition}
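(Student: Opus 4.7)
The plan is to show that the four compatibility conditions defining a bialgebra translate one-to-one into the four conditions stating that the multiplication $\cdot$ and the unit $\eta$ are coalgebra maps. Since both halves of the claim refer to exactly the same data on $H$, the equivalence is really a matter of unwinding the diagrams and observing that they coincide pairwise; no further axiom is needed.

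First I would fix the coalgebra structures on the ambient spaces. On $H\otimes H$ the coproduct acts by $\Delta_{H\otimes H}(g\otimes h) = (g_{(1)}\otimes h_{(1)})\otimes(g_{(2)}\otimes h_{(2)})$ and the counit is $\epsilon_{H\otimes H} = \epsilon\otimes\epsilon$, read through the canonical isomorphism $\mathbb{k}\otimes\mathbb{k}\cong\mathbb{k}$ from Lemma \ref{lem1thomas}. On $\mathbb{k}$ the coproduct is the identification $\Delta_{\mathbb{k}}\colon\mathbb{k}\to\mathbb{k}\otimes\mathbb{k}$, $\lambda\mapsto\lambda\otimes 1 = 1\otimes\lambda$, and the counit is $\epsilon_{\mathbb{k}} = \mathrm{id}_{\mathbb{k}}$. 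These are the structures with respect to which the coalgebra-map conditions on $\cdot$ and $\eta$ are to be read.

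Next I would compare the four pairs of conditions directly. Applying $\Delta_H\circ\cdot$ to $g\otimes h$ gives $\Delta(gh)$, while $(\cdot\otimes\cdot)\circ\Delta_{H\otimes H}(g\otimes h) = g_{(1)}h_{(1)}\otimes g_{(2)}h_{(2)} = \Delta(g)\Delta(h)$ in Sweedler notation, so the multiplicativity $\Delta(gh) = \Delta(g)\Delta(h)$ is exactly the equation $(\cdot\otimes\cdot)\circ\Delta_{H\otimes H} = \Delta_H\circ\cdot$ of Eq.~(\ref{eq56thomas}). In the same fashion $\Delta(1_H) = 1_H\otimes 1_H$ becomes $\Delta_H\circ\eta = (\eta\otimes\eta)\circ\Delta_{\mathbb{k}}$, the counit condition $\epsilon(gh) = \epsilon(g)\epsilon(h)$ becomes $\epsilon_H\circ\cdot = \epsilon_{H\otimes H}$, and $\epsilon(1_H) = 1_{\mathbb{k}}$ becomes $\epsilon_H\circ\eta = \epsilon_{\mathbb{k}}$. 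Since each translation is an equality of linear maps that holds for all factorizing arguments, the extensions to $H\otimes H$ and $\mathbb{k}$ by linearity are automatic.

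The only potential pitfall is a notational one: one must consistently identify $\mathbb{k}\otimes\mathbb{k}$ with $\mathbb{k}$ via the canonical isomorphism so that $\epsilon_{H\otimes H}$ and $\Delta_{\mathbb{k}}$ are unambiguous, and one must remember that $\Delta_{H\otimes H}$ involves the implicit braiding that swaps the middle two tensor factors. Once these identifications are in place, the proposition is established by the four direct translations above, proving both implications simultaneously.
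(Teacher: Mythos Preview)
Your proposal is correct and follows essentially the same approach as the paper: both proofs fix the tensor coalgebra structure on $H\otimes H$ and on $\mathbb{k}$, then verify that each of the four bialgebra compatibility conditions (Definition~\ref{DefBiAlgebra}) is literally the same equation as the corresponding coalgebra-map condition in Eq.~(\ref{eq56thomas}). The paper writes out both implications separately with explicit Sweedler computations, whereas you note once that each translation is an identity of linear maps and hence proves both directions at once; this is a harmless streamlining of the same argument.
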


\begin{proof}
Assume that $ (H,+,\cdot,\eta,\Delta,\epsilon,\mathbb{k}) $ is a bialgebra. It is just a matter of calculation to
verify Eq. (\ref{eq56thomas}) and the equations following. Thus let $ g,h\in H $ and $ \lambda\in\mathbb{k} $.
\begin{align*}
(\cdot\tensor\cdot)\circ\Delta_{H\tensor H}(g\tensor h)&=(\cdot\tensor\cdot)(g_{(1)}\tensor h_{(1)}\tensor
g_{(2)}\tensor h_{(2)})\\
&=(g_{(1)}\cdot h_{(1)})\tensor(g_{(2)}\cdot h_{(2)})\\
&=(g_{(1)}\tensor g_{(2)})\cdot(h_{(1)}\tensor h_{(2)})\\
&=\Delta_{H}(g)\cdot\Delta_{H}(h)\\
&=\Delta_{H}\circ\cdot(g\tensor h),
\end{align*}
where $ \cdot $ in the third line denotes the product on the tensor product $ H\tensor H $ defined in Eq.
(\ref{eq57thomas}).
\begin{align*}
\epsilon_{H}\circ\cdot(g\tensor h)&=\epsilon_{H}(g\cdot h)\\
&=\epsilon_{H}(g)\cdot\epsilon_{H}(h)\\
&=\epsilon_{H}(g)\tensor\epsilon_{H}(h)\\
&=\epsilon_{H\tensor H}(g\tensor h),
\end{align*}
where the second equation holds since $ \epsilon_{H} $ is an algebra map, the third equation is due to one isomorphism
of Lemma \ref{lem1thomas} and the last equation is the definition of $ \epsilon_{H\tensor H} $.
\begin{align*}
(\eta\tensor\eta)\circ\Delta_{\mathbb{k}}(\lambda)&=(\eta\tensor\eta)(\lambda_{(1)}\tensor\lambda_{(2)})\\
&=\eta(\lambda_{(1)})\tensor\eta(\lambda_{(2)})\\
&=(\lambda_{(1)}1_{H})\tensor(\lambda_{(2)}1_{H})\\
&=\lambda_{(1)}\lambda_{(2)}1_{H}\tensor 1_{H}\\
&=\lambda_{(1)}\lambda_{(2)}\Delta_{H}(1_{H})\\
&=\Delta_{H}((\lambda_{(1)}\tensor\lambda_{(2)})1_{H})\\
&=\Delta_{H}(\eta(\lambda)),
\end{align*}
where the equation next to the last follows from the fact that $ \Delta_{H} $ is $ \mathbb{k} $-linear and again by an
isomorphism of Lemma \ref{lem1thomas}. Finally,
\begin{align*}
\epsilon_{H}\circ\eta(\lambda)=\epsilon_{H}(\lambda 1_{H})=\lambda\epsilon_{H}(1_{H})=\lambda 1_{\mathbb{k}}=
\epsilon_{\mathbb{k}}(\lambda),
\end{align*}
where the second equation follows because $ \epsilon_{H} $ is $ \mathbb{k} $-linear, the third equation holds because
$ \epsilon_{H} $ is an algebra map and the last equation is the definition of $ \epsilon_{\mathbb{k}} $. Conversely,
assume that equation (\ref{eq56thomas}) and the three equations following hold.
The same equations of the first part of this proof reordered prove that $ \Delta $ and $ \epsilon $ are algebra
maps. Explicitly, we calculate for $ g,h\in H $
\begin{align*}
\Delta_{H}(g\cdot h)&=\Delta_{H}\circ\cdot(g\tensor h)\\
&=(\cdot\tensor\cdot)\circ\Delta_{H\tensor H}(g\tensor h)\\
&=(\cdot\tensor\cdot)(g_{(1)}\tensor h_{(1)}\tensor g_{(2)}\tensor h_{(2)})\\
&=(g_{(1)}\cdot h_{(1)})\tensor(g_{(2)}\cdot h_{(2)})\\
&=(g_{(1)}\tensor g_{(2)})\cdot(h_{(1)}\tensor h_{(2)})\\
&=\Delta_{H}(g)\cdot\Delta_{H}(h),
\end{align*}
where the second equation holds since $ \cdot $ is a coalgebra map and we used again the multiplication
on the tensor product $ H\tensor H $.
\begin{align*}
1_{H}\tensor 1_{H}=\eta(1_{\mathbb{k}})\tensor\eta(1_{\mathbb{k}})
=(\eta\tensor\eta)\circ\Delta_{\mathbb{k}}(1_{\mathbb{k}})
=\Delta_{H}(\eta(1_{\mathbb{k}}))
=\Delta_{H}(1_{H}),
\end{align*}
where we used that $ \Delta_{\mathbb{k}}(1_{\mathbb{k}})=1_{\mathbb{k}}\tensor 1_{\mathbb{k}} $
and that $ \eta $ is a coalgebra map.
\begin{align*}
\epsilon_{H}(g\cdot h)=\epsilon_{H}\circ\cdot(g\tensor h)=\epsilon_{H\tensor H}(g\tensor h)=\epsilon_{H}(g)\tensor
\epsilon_{H}(h)=\epsilon_{H}(g)\cdot\epsilon_{H}(h),
\end{align*}
where we used that $ \cdot $ is a coalgebra map, the equation next to the last equation is the definition of
$ \epsilon_{H\tensor H} $ and the last equation follows from Lemma \ref{lem1thomas}. Remark that condition $
\epsilon(1_{H})=1_{\mathbb{k}} $ is always satisfied since $ \mathbb{k} $ is a field. This concludes the proof.
\end{proof}

Now we can introduce the definition of a \textit{Hopf algebra}. Essentially, a Hopf algebra is a bialgebra over
a field (or sometimes even over a commutative ring) with one extra structure.

\begin{definition}[Hopf Algebra]\label{hopfalgdef}
We call $ (H,+,\cdot,\eta,\Delta,\epsilon,S,\mathbb{k}) $ a Hopf algebra over a field $ \mathbb{k} $ if
it is a bialgebra over $ \mathbb{k} $ and there is a $ \mathbb{k} $-linear
map
\begin{align}
S\colon H\rightarrow H
\end{align}
called an antipode of $ H $ satisfying
\begin{align}
\cdot(S\tensor 1)\circ\Delta=\eta\circ\epsilon=\cdot(1\tensor S)\circ\Delta.
\end{align}
In other words, an algebra and coalgebra $ H $ is a Hopf algebra if and only if the diagrams of (\ref{eq58thomas}) and
(\ref{eq59thomas}) commute, where (\ref{eq58thomas}) encodes the compatibility axioms of a bialgebra and
(\ref{eq59thomas}) the antipode axiom.
\end{definition}

\begin{equation}\label{eq58thomas}
\begin{tikzpicture}
  \matrix (m) [matrix of math nodes,row sep=3em,column sep=4em,minimum width=2em]
  {
     H\tensor H & H & H\tensor H \\
     H\tensor H\tensor H\tensor H &  & H\tensor H\tensor H\tensor H \\};
  \path[-stealth]
    (m-1-1) edge node [above] {$\cdot$} (m-1-2)
            edge node [left] {$\Delta\tensor\Delta$} (m-2-1)
    (m-1-2) edge node [above] {$\Delta$} (m-1-3)
    (m-2-3) edge node [right] {$\cdot\tensor\cdot$} (m-1-3)
    (m-2-1) edge node [above] {$1\tensor\sigma\tensor 1$} (m-2-3);
\end{tikzpicture}
\begin{tikzpicture}
  \matrix (m) [matrix of math nodes,row sep=3em,column sep=0.5em,minimum width=2em]
  {
     H & \mathbb{k} \\
     H\tensor H & \\};
  \path[-stealth]
    (m-2-1) edge node [left] {$\cdot$} (m-1-1)
            edge node [right] {$\epsilon\tensor\epsilon$} (m-1-2)
    (m-1-1) edge node [above] {$\epsilon$} (m-1-2);
\end{tikzpicture}
\begin{tikzpicture}
  \matrix (m) [matrix of math nodes,row sep=3em,column sep=0.5em,minimum width=2em]
  {
     H & \mathbb{k} \\
     H\tensor H & \\};
  \path[-stealth]
    (m-1-1) edge node [left] {$\Delta$} (m-2-1)
    (m-1-2) edge node [right] {$\eta\tensor\eta$} (m-2-1)
            edge node [above] {$\eta$} (m-1-1);
\end{tikzpicture}
\end{equation}

Here $ \sigma\colon H\tensor H\ni g\tensor h\mapsto h\tensor g\in H\tensor H $ denotes the tensor product twist.

\begin{equation}\label{eq59thomas}
\begin{tikzpicture}
  \matrix (m) [matrix of math nodes,row sep=3em,column sep=4em,minimum width=2em]
  {
     H & \mathbb{k} & H \\
     H\tensor H &  & H\tensor H \\};
  \path[-stealth]
    (m-1-1) edge node [above] {$\epsilon$} (m-1-2)
            edge node [left] {$\Delta$} (m-2-1)
    (m-1-2) edge node [above] {$\eta$} (m-1-3)
    (m-2-3) edge node [right] {$\cdot$} (m-1-3)
    (m-2-1) edge node [above] {$1\tensor S, S\tensor 1$} (m-2-3);
\end{tikzpicture}
\end{equation}

The next proposition gives some properties of the antipode of a Hopf algebra. On the one hand they are useful if
one wants to calculate examples on the other hand they are useful to clarify if a map is the antipode of a Hopf
algebra.

\begin{proposition}\label{pro9thomas}
The antipode $ S $ of a Hopf algebra $ H $ is unique. Moreover, $ S $ is an antialgebra map, i.e. for all $ g,h\in H $
we have
\begin{align}\label{eq67thomas}
S(g\cdot h)&=S(h)\cdot S(g)\\
S(1)&=1
\end{align}
and $ S $ is an anticoalgebra map, i.e. for all $ g\in H $ one has
\begin{align}
(S\tensor S)\circ\Delta(g)&=\sigma\circ\Delta\circ S(g)\\
\epsilon\circ S(g)&=\epsilon(g).
\end{align}
\end{proposition}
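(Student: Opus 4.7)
The proof plan rests on a single organizing principle, namely the \emph{convolution product}. Given any coalgebra $(C,\Delta_{C},\epsilon_{C})$ and any algebra $(A,\cdot_{A},\eta_{A})$, the space $\operatorname{Hom}_{\mathbb{k}}(C,A)$ acquires an associative algebra structure under
\begin{align*}
(f*g)(c)=f(c_{(1)})\cdot_{A}g(c_{(2)}),
\end{align*}
with two-sided unit $\eta_{A}\circ\epsilon_{C}$. The antipode axiom in diagram (\ref{eq59thomas}) is exactly the statement that $S$ is a two-sided inverse to the identity map $1\colon H\to H$ in the convolution algebra $\operatorname{End}_{\mathbb{k}}(H)$. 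Since two-sided inverses in any associative unital algebra are unique, this immediately gives uniqueness of $S$. Every remaining claim will be proved by exhibiting two candidates in some convolution algebra and showing they are both convolution inverses of a common element, so that they coincide.

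The two simplest identities come directly from Sweedler computations. For $S(1)=1$ I would apply the antipode axiom to $1\in H$ together with $\Delta(1)=1\otimes 1$ to get $S(1)\cdot 1=\eta(\epsilon(1))=1$, hence $S(1)=1$. For $\epsilon\circ S=\epsilon$ I would apply $\epsilon$ to both sides of $S(g_{(1)})\cdot g_{(2)}=\eta(\epsilon(g))$; using that $\epsilon$ is an algebra map and then the counit axiom $\epsilon(g_{(2)})g_{(1)}=g$ produces $\epsilon(S(g))=\epsilon(g)$.

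For the antialgebra property, the plan is to consider the convolution algebra $\operatorname{Hom}_{\mathbb{k}}(H\otimes H,H)$, where $H\otimes H$ is the tensor product coalgebra and $H$ is the underlying algebra. The multiplication $m\colon H\otimes H\to H$ lies in this space. I would verify that both
\begin{align*}
\nu_{1}(g\otimes h)=S(g\cdot h)\qquad\text{and}\qquad\nu_{2}(g\otimes h)=S(h)\cdot S(g)
\end{align*}
are two-sided convolution inverses of $m$. The first is immediate from $(S*1)=\eta\circ\epsilon$ applied to the product $g\cdot h$, combined with the fact that $\Delta$ is an algebra map. The second requires a short Sweedler computation:
\begin{align*}
(m*\nu_{2})(g\otimes h)
=g_{(1)}h_{(1)}S(h_{(2)})S(g_{(2)})
=g_{(1)}\,\epsilon(h)\,S(g_{(2)})
=\epsilon(g)\epsilon(h)1
=\eta(\epsilon_{H\otimes H}(g\otimes h)),
\end{align*}
and analogously on the other side. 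Uniqueness of the convolution inverse of $m$ then forces $\nu_{1}=\nu_{2}$, which is the desired $S(g\cdot h)=S(h)\cdot S(g)$.

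The anticoalgebra property is dual and follows the same scheme in $\operatorname{Hom}_{\mathbb{k}}(H,H\otimes H)$, where now the target $H\otimes H$ is the tensor product algebra. The element whose convolution inverse we analyse is $\Delta\colon H\to H\otimes H$, and the two candidates are $\Delta\circ S$ and $\sigma\circ(S\otimes S)\circ\Delta$. Using that $\Delta$ is an algebra map and the antipode axiom for each tensor slot, a Sweedler calculation will show both are inverse to $\Delta$ under convolution, forcing equality. The main obstacle I expect is purely notational: the antialgebra and anticoalgebra steps require careful bookkeeping of several Sweedler indices together with the algebra-map property of $\Delta$ (respectively $\epsilon$); once the correct convolution algebra and the correct pair of candidates are identified, each verification reduces to a mechanical unwinding of the axioms already established in diagrams (\ref{eq48thomas}), (\ref{eq58thomas}), and (\ref{eq59thomas}).
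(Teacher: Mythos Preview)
Your proof is correct and is the standard convolution-algebra approach found in many textbooks. It differs from the paper's proof, which works entirely by direct Sweedler manipulations without naming the convolution product. For uniqueness the two proofs are essentially the same computation (the paper's chain $S'(g)=S'(g_{(1)})\epsilon(g_{(2)})=\dots=S(g)$ is exactly $S'=S'\!*\!(1\!*\!S)=(S'\!*\!1)\!*\!S=S$ written elementwise), and the paper even alludes to this interpretation. The real divergence is in the antialgebra and anticoalgebra parts: the paper proves $S(gh)=S(h)S(g)$ by a sequence of substitutions and tensorings (replace $h$ by $h_{(1)}$, tensor with $h_{(2)}$, apply $S$, multiply, repeat with $g$), and similarly unwinds the anticoalgebra identity by hand; you instead identify each side as a convolution inverse of $m$ (respectively $\Delta$) in $\operatorname{Hom}(H\otimes H,H)$ (respectively $\operatorname{Hom}(H,H\otimes H)$) and invoke uniqueness of inverses. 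Your route is more conceptual and uniform, and the individual verifications are shorter; the paper's route avoids introducing the auxiliary tensor (co)algebra structures and keeps everything at the level of explicit element computations.
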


\begin{proof}
We follow \cite[Proposition~1.3.1]{majid2000foundations}.
Assume there are two antipodes $ S $ and $ S' $ of a bialgebra $ H $. We first need a useful identity, namely
we calculate for $ g\in H $
\begin{align*}
g_{(1)}S(g_{(2)})=\cdot(1\tensor S)\circ\Delta(g)=\eta\circ\epsilon(g)=\epsilon(g)1_{H},
\end{align*}
where we used the antipode axiom. Thus
\begin{align}\label{eq60thomas}
g_{(1)}S(g_{(2)})=\epsilon(g)1_{H}
\end{align}
and in the same fashion
\begin{align}\label{eq61thomas}
S(g_{(1)})g_{(2)}=\epsilon(g)1_{H}
\end{align}
if we use the second antipode with $ 1 $ and $ S $ changed. This equations are remarkable because they show that
$ S(g_{(1)}) $ is somehow an inverse to $ g_{(2)} $ (or $ g_{(1)} $ and $ g_{(2)} $ exchanged) and equations
(\ref{eq60thomas}) and (\ref{eq61thomas}) remind in a way of cancelling $ gg^{-1} $ and $ g^{-1}g $ to the unit element
of a group. Consider the proof of Proposition 1.3.1 in \cite{majid2000foundations} for more details to this idea.
Now it is just a matter of calculation. Let $ g\in H $, then
\begin{align*}
S'(g)&=S'(g_{(1)})\epsilon(g_{(2)})\\ 
&=S'(g_{(1)})g_{(2)(1)}S(g_{(2)(2)})\\ 
&=S'(g_{(1)(1)})g_{(1)(2)}S(g_{(2)})\\ 
&=(\cdot(S'\tensor 1)\circ\Delta(g_{(1)}))S(g_{(2)})\\
&=(\eta\circ\epsilon(g_{(1)}))S(g_{(2)})\\ 
&=\epsilon(g_{(1)})S(g_{(2)})\\
&=S(g) 
\end{align*}
This proves $ S'=S $ what implies that the antipode of a Hopf algebra is unique. The next step is to prove that
$ S $ is an antialgebra map. It is easy to get
\begin{align*}
S(1_{H})=1_{H}S(1_{H})=(1_{H})_{(1)}S((1_{H})_{(2)})=\epsilon(1_{H})1_{H}=1_{H}.
\end{align*}
This is again just Eq. (\ref{eq60thomas}) with $ \Delta(1_{H})=(1_{H})_{(1)}\tensor(1_{H})_{(1)}=1_{H}\tensor 1_{H} $
what is true since $ \Delta $ is an algebra map and in the last equation we used $ \epsilon(1_{H})=1_{H} $ what is
true since $ \epsilon $ is an algebra map. The other property is proved step by step. First of all, the antipode axiom
for an element $ gh $ with $ g,h\in H $ reads
\begin{align}\label{eq62thomas}
S(g_{(1)}h_{(1)})g_{(2)}h_{(2)}=\epsilon(gh)1_{H}=\epsilon(g)\epsilon(h)1_{H},
\end{align}
since $ \epsilon $ is an algebra map. If we replace $ h $ with $ h_{(1)} $, where of course $ \Delta(h)=h_{(1)}
\tensor h_{(2)} $ and add $ h_{(2)} $ as a tensor factor in equation (\ref{eq62thomas}) we obtain
\begin{align}\label{eq63thomas}
S(g_{(1)}h_{(1)(1)})g_{(2)}h_{(1)(2)}\tensor h_{(2)}=\epsilon(g)\epsilon(h_{(1)})1_{H}\tensor h_{(2)}=\epsilon(g)
1_{H}\tensor h,
\end{align}
where we used $ (\epsilon\tensor 1)\circ\Delta(h)=h $ which holds by the counit axiom. If we apply $ S $ to the second
tensor factor of (\ref{eq63thomas}) multiply both tensor factors via $ \cdot $, Eq. (\ref{eq63thomas}) reads
\begin{align}\label{eq64thomas}
S(g_{(1)}h_{(1)})g_{(2)}h_{(2)}S(h_{(3)})=\epsilon(g)S(h)
\end{align}
by the associativity of $ \cdot $. Together with Eq. (\ref{eq60thomas}) and the counit axiom we see that Eq.
(\ref{eq64thomas}) transforms to
\begin{align*}
\epsilon(g)S(h)&=S(g_{(1)}h_{(1)})g_{(2)}h_{(2)}S(h_{(3)})\\
&=S(g_{(1)}h_{(1)})g_{(2)}\epsilon(h_{(2)})\\
&=S(g_{(1)}h)g_{(2)}
\end{align*}
If we replace $ g $ by $ g_{(1)} $ and add $ g_{(2)} $ on both sides as a tensor factor the last equation is
equivalent to
\begin{align}\label{eq65thomas}
S(g_{(1)(2)}h)g_{(1)(2)}\tensor g_{(2)}=\epsilon(g_{(1)})S(h)\tensor g_{(2)}=S(h)\tensor g,
\end{align}
where we used the counit axiom. Finally, we apply $ S $ to the second tensor factor of Eq.
(\ref{eq65thomas}) and multiply on both sides of Eq. (\ref{eq65thomas}) via $ \cdot $ and obtain by the associativity
of $ \cdot $, Eq. (\ref{eq60thomas}) and the counit axiom
\begin{align*}
S(h)S(g)=S(g_{(1)}h)h_{(2)(1)}S(h_{(2)(2)})=S(g_{(1)}h)\epsilon(g_{(2)})=S(gh).
\end{align*}
Thus $ S $ is an antialgebra map. The anticoalgebra map properties are a bit easier to check. First of all, for
$ g\in H $ one has
\begin{align*}
\epsilon(S(g))&=\epsilon(S(g_{(1)}))\epsilon(g_{(2)})=\epsilon(S(g_{(1)})g_{(2)})\\
&=\epsilon(\epsilon(g)1_{H})=\epsilon(g)\epsilon(1_{H})\\
&=\epsilon(h),
\end{align*}
where we used the linearity of $ \epsilon $ together with Eq. (\ref{eq61thomas}) and that $ \epsilon $ is an algebra
map. Finally, we calculate for $ g\in H $
\begin{align*}
\sigma\circ(S\tensor S)\circ\Delta(g)&=S(g_{(2)})\tensor S(g_{(1)})\\
&=(S(g_{(1)}))_{(1)}g_{(2)(1)}S(g_{(4)})\tensor(S(g_{(1)}))_{(2)}g_{(2)(2)}S(g_{(3)})\\
&=(S(g_{(1)}))_{(1)}g_{(2)}S(g_{(3)})\tensor(S(g_{(1)}))_{(2)}\\
&=S(g)_{(1)}\tensor S(g)_{(1)}\\
&=\Delta\circ S(g),
\end{align*}
using the same techniques as above. This implies $ (S\tensor S)\circ\Delta=\sigma\circ\Delta\circ S $.
\end{proof}

Remark that if $ H $ and $ G $ are Hopf algebras with antipodes $ S $ and $ S' $ respectively, also the tensor product
$ H\tensor G $ is a Hopf algebra. The bialgebra structure is given by the usual tensor bialgebra structure, while the
unique antipode is $ S\tensor S' $. We did not yet define the corresponding morphisms:

\begin{definition}[Hopf Algebra Map]
A map $ f\colon H\rightarrow G $ between two Hopf algebras $ H $ and $ G $ which is an algebra map and a coalgebra map
is said to be a Hopf algebra map if it satisfies
\begin{align}
S'\circ f=f\circ S,
\end{align}
where $ S $ and $ S' $ are the antipodes of $ H $ and $ G $, respectively.
\end{definition}

\chapter{Semisimple Lie Algebras and Iwasawa Decomposition}\label{appSemiIwasawa}

This appendix treats several topics in Lie theory. We develop the notion of reductive Lie algebras and stress
that the difficulty in their classification shifts to the classification of simple Lie algebras. We do not
prove the classification, but provide some notions and techniques that are interesting for their own. The radical of
the Killing form controls the abelian ideals of the Lie algebra and is an indicator for semisimple Lie algebras.
Then the eigenspaces of a Cartan involution give a first useful decomposition of a semisimple Lie algebra:
the Cartan decomposition.
By considering root systems we can even refine this description and obtain a Iwasawa decomposition. All these
decompositions can also be done on Lie group level.

\section{The Classification of Reductive Lie Algebras}

Here we summarize some very basic facts about Lie algebras. It is not about giving a full description of the
theory, but to come to the definition of a semisimple Lie algebra as fast as possible. For an introduction to
Lie algebras consider e.g. \cite{hall2003lie} or \cite{humphreys1972introduction}. We follow
\cite[Chapter~2]{KacMoodyLecture2009}. Let $ \mathbb{k} $ be a field.

\begin{definition}[Lie Algebra]
A $ \mathbb{k} $-vector space $ \mathfrak{g} $ endowed with a $ \mathbb{k} $-bilinear map
\begin{align}\label{LieBracket}
\left[\cdot,\cdot\right]\colon\mathfrak{g}\times\mathfrak{g}\rightarrow\mathfrak{g}
\end{align}
which satisfies $ \left[x,x\right]=0 $ for all $ x\in\mathfrak{g} $ and the \textbf{Jacobi identity}
\begin{align}
\left[x,\left[y,z\right]\right]+\left[z,\left[x,y\right]\right]+\left[y,\left[z,x\right]\right]=0,
\end{align}
for all $ x,y,z\in\mathfrak{g} $, is said to be a Lie algebra over $ \mathbb{k} $. The map defined in Eq. 
(\ref{LieBracket}) is said to be the \textbf{Lie bracket} of $ \mathfrak{g} $ and one often writes $ (\mathfrak{g},
\left[\cdot,\cdot\right]) $ to denote the Lie algebra together with its bracket.
\end{definition}
By using the short notation
\begin{align}
\left[\mathfrak{h},\mathfrak{l}\right]=\text{span}_{\mathbb{k}}\left\lbrace\left[ h,\ell\right]~|~h\in\mathfrak{h},
\ell\in\mathfrak{l}\right\rbrace,
\end{align}
where $ \mathfrak{h} $ and $ \mathfrak{l} $ are arbitrary subsets of a Lie algebra $ \mathfrak{g} $ over
$ \mathbb{k} $, one can define some substructures and important types of Lie algebras:

\begin{definition}
Consider a Lie algebra $ (\mathfrak{g},\left[\cdot,\cdot\right]) $ over $ \mathbb{k} $.
\begin{compactenum}
\item A subset $ \mathfrak{h} $ of $ \mathfrak{g} $ is said to be
\begin{compactenum}[a.)]
\item a \textbf{Lie subalgebra} of $ \mathfrak{g} $ if
$ \left[\mathfrak{h},\mathfrak{h}\right]\subseteq\mathfrak{h} $.
\item an \textbf{ideal} of $ \mathfrak{g} $ if $ \left[\mathfrak{g},\mathfrak{h}\right]\subseteq\mathfrak{h} $.
\item a \textbf{proper ideal} of $ \mathfrak{g} $ if it is an ideal of $ \mathfrak{g} $ but not equal to
$ \left\lbrace 0\right\rbrace $ or $ \mathfrak{g} $.
\end{compactenum}
\item $ \mathfrak{g} $ is said to be
\begin{compactenum}[a.)]
\item \textbf{abelian} if $ \left[\mathfrak{g},\mathfrak{g}\right]=\left\lbrace 0\right\rbrace $.
\item \textbf{simple} if $ \mathfrak{g} $ is not abelian and there is no proper ideal contained in $ \mathfrak{g} $.
\item \textbf{semisimple} if $ \mathfrak{g} $ can be written as a direct sum of simple Lie algebras.
\item \textbf{reductive} if $ \mathfrak{g} $ can be written as a direct sum of simple and abelian Lie algebras.
\end{compactenum}
\end{compactenum}
\end{definition}
It is clear from the definition that every proper ideal is an ideal and every ideal is a Lie subalgebra.
$ \left\lbrace 0\right\rbrace $ and $ \mathfrak{g} $ are generic ideals of $ \mathfrak{g} $. Moreover,
every abelian or simple or semisimple Lie algebra is reductive and every simple Lie algebra is semisimple.
The Lie groups $ \operatorname{SO}(3) $, $ \operatorname{SL}(3,\mathbb{R}) $ and $ \mathcal{L}_{3}^{\uparrow,+} $
we consider in Section \ref{sectiononish} are semisimple (c.f. \cite[Section~0.8.3]{rosenfeld2013geometry}).

\begin{remark}
Let $ n\in\mathbb{N} $. One can prove that any $ n $-dimensional abelian Lie algebra is isomorphic to
\begin{align}
\bigoplus_{k=1}^{n}\mathfrak{u}(1),
\end{align}
where $ \mathfrak{u}(1) $ is a $ 1 $-dimensional vector space endowed with the Lie bracket that is identically
zero. A more difficult approach is needed to classify reductive and for this semisimple and simple
Lie algebras. This is called the \textbf{Cartan-Killing Classification}. For a proof that uses Dynkin diagrams
consider \cite{CartanKillingBosshardt}.
\end{remark}

\section{Semisimple Lie Algebras}


We start by defining a very central element in the theory of semisimple Lie algebras and follow \cite{procesi2007lie},
but also refer to \cite[Chapter~2]{CartanIwasawaJana}. The eigenspaces of a Cartan involution of a Lie algebra provide
a useful decomposition. For the definition we need the Killing form of a Lie algebra. Later we see that
the non-degeneracy of this object characterizes complex semisimple Lie algebras. For a Lie algebra
$ (\mathfrak{g},\left[\cdot,\cdot\right]) $ and an element $ x\in\mathfrak{g} $ we denote the \textit{adjoint
endomorphism} on $ \mathfrak{g} $ corresponding to $ x $ by
\begin{align}
\text{ad}(x)\colon\mathfrak{g}\ni y\mapsto\left[x,y\right]\in\mathfrak{g}.
\end{align}

\begin{definition}[Killing Form]
Let $ (\mathfrak{g},\left[\cdot,\cdot\right]) $ be a Lie algebra over a field $ \mathbb{k} $. The map
\begin{align}\label{KillingForm}
\kappa\colon\mathfrak{g}\times\mathfrak{g}\ni(x,y)\mapsto\operatorname{tr}(\operatorname{ad}(x)
\operatorname{ad}(y))\in\mathbb{k}
\end{align}
is said to be the Killing form on $ \mathfrak{g} $.
\end{definition}
Already for an arbitrary Lie algebra $ \kappa $ has interesting properties:

\begin{proposition}\label{PropKilling}
Consider a Lie algebra $ (\mathfrak{g},\left[\cdot,\cdot\right]) $ over a field $ \mathbb{k} $. Then the following
statements hold.
\begin{compactenum}
\item The map $ \kappa $ defined in Eq. (\ref{KillingForm}) is a well-defined symmetric bilinear form.
\item $ \kappa $ is \textbf{associative}, i.e. for all $ x,y,z\in\mathfrak{g} $ the equation
\begin{align}
\kappa(\left[x,y\right],z)=\kappa(x,\left[y,z\right])
\end{align}
holds.
\item $ \kappa $ is invariant under any automorphism $ \rho $ of $ \mathfrak{g} $, i.e. for all $ x,y\in\mathfrak{g} $
one has
\begin{align}
\kappa(\rho(x),\rho(y))=\kappa(x,y).
\end{align}
\item The \textbf{radical}
\begin{align}
S=\left\lbrace x\in\mathfrak{g}~|~\kappa(x,y)=0\text{ for all }y\in\mathfrak{g}\right\rbrace
\end{align}
is an ideal of $ \mathfrak{g} $.
\end{compactenum}
\end{proposition}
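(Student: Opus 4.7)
The plan is to check the four statements one after another, since each rests on a short and standard manipulation; for (i) and (iii) only general facts about traces of endomorphisms are needed, while (ii) and (iv) use the Jacobi identity, the latter in the form of the key identity
\begin{align*}
\operatorname{ad}(\left[x,y\right])=\operatorname{ad}(x)\operatorname{ad}(y)-\operatorname{ad}(y)\operatorname{ad}(x),
\end{align*}
which is essentially a rewriting of Jacobi and will do most of the real work.

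For (i) I would first note that $\operatorname{ad}(x),\operatorname{ad}(y)\in\mathfrak{gl}(\mathfrak{g})$ are linear endomorphisms, so in finite dimension their composition has a well-defined trace; bilinearity of $\kappa$ is then inherited from bilinearity of the bracket (hence linearity of the map $x\mapsto\operatorname{ad}(x)$) and linearity of the trace, while symmetry follows from the cyclic property $\operatorname{tr}(AB)=\operatorname{tr}(BA)$ applied to $A=\operatorname{ad}(x)$, $B=\operatorname{ad}(y)$. For (iii) I would observe that if $\rho\colon\mathfrak{g}\to\mathfrak{g}$ is a Lie algebra automorphism, then $\operatorname{ad}(\rho(x))=\rho\circ\operatorname{ad}(x)\circ\rho^{-1}$, so that $\operatorname{ad}(\rho(x))\operatorname{ad}(\rho(y))=\rho\circ\operatorname{ad}(x)\operatorname{ad}(y)\circ\rho^{-1}$ and the trace is invariant under conjugation.

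For (ii) I would combine the Jacobi identity in its $\operatorname{ad}$-formulation with the cyclicity of the trace. Writing
\begin{align*}
\kappa(\left[x,y\right],z)&=\operatorname{tr}\bigl((\operatorname{ad}(x)\operatorname{ad}(y)-\operatorname{ad}(y)\operatorname{ad}(x))\operatorname{ad}(z)\bigr),\\
\kappa(x,\left[y,z\right])&=\operatorname{tr}\bigl(\operatorname{ad}(x)(\operatorname{ad}(y)\operatorname{ad}(z)-\operatorname{ad}(z)\operatorname{ad}(y))\bigr),
\end{align*}
the equality reduces to $\operatorname{tr}(\operatorname{ad}(y)\operatorname{ad}(x)\operatorname{ad}(z))=\operatorname{tr}(\operatorname{ad}(x)\operatorname{ad}(z)\operatorname{ad}(y))$, which is again pure cyclicity of the trace.

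Finally, for (iv) I would first remark that $S$ is a linear subspace, since $\kappa$ is bilinear. To see that $S$ is an ideal, I pick $x\in S$ and $y\in\mathfrak{g}$ and must show $\left[y,x\right]\in S$, i.e.\ $\kappa(\left[y,x\right],z)=0$ for every $z\in\mathfrak{g}$. Using the skew-symmetry of $\left[\cdot,\cdot\right]$ and the associativity of $\kappa$ already proved in (ii), one rewrites $\kappa(\left[y,x\right],z)=-\kappa(\left[x,y\right],z)=-\kappa(x,\left[y,z\right])$, which vanishes by the definition of $S$. The only genuine point to watch out for is that all of this requires $\mathfrak{g}$ to be finite-dimensional so that the trace is defined in the first place; assuming this standing hypothesis, no step is really hard, and (ii) is the one that carries the algebraic content on which the ideal property in (iv) then rests trivially.
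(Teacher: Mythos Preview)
Your proof is correct and follows essentially the same route as the paper: the same identity $\operatorname{ad}(\rho(x))=\rho\circ\operatorname{ad}(x)\circ\rho^{-1}$ for (iii), the same use of $\operatorname{ad}([x,y])=[\operatorname{ad}(x),\operatorname{ad}(y)]$ together with cyclicity of the trace for (ii), and the same application of associativity to deduce (iv). Your explicit remark that finite-dimensionality is needed for the trace to make sense is a welcome addition that the paper leaves implicit.
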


\begin{proof}
Since $ \text{ad}(x)\text{ad}(y) $ is an endomorphism which is mapped to $ \mathbb{k} $ by the trace
for all $ x,y\in\mathfrak{g} $ the map $ \kappa $ is well-defined. By the linearity of the trace and of the map
\begin{align}
\text{ad}\colon\mathfrak{g}\ni x\mapsto\text{ad}(x)\in\text{End}(\mathfrak{g})
\end{align}
the bilinearity of $ \kappa $ is clear. The symmetry of $ \kappa $ is also clear by the cyclic permutation property
of the trace. Now the second part is an easy calculation. For $ x,y,z\in\mathfrak{g} $ one has
\begin{align*}
\kappa(\left[x,y\right],z)&=\text{tr}(\text{ad}(\left[x,y\right])\text{ad}(z))
=\text{tr}(\left[\text{ad}(x),\text{ad}(y)\right]\text{ad}(z))\\
&=\text{tr}(\text{ad}(x)\text{ad}(y)\text{ad}(z)-\text{ad}(y)\text{ad}(x)\text{ad}(z))\\
&=\text{tr}(\text{ad}(x)\text{ad}(y)\text{ad}(z)-\text{ad}(x)\text{ad}(z)\text{ad}(y))\\
&=\text{tr}(\text{ad}(x)\left[\text{ad}(y),\text{ad}(z)\right])
=\text{tr}(\text{ad}(z)\text{ad}(\left[y,z\right]))\\
&=\kappa(x,\left[y,z\right]),
\end{align*}
where we used again cyclic permutations of $ \text{tr} $ and two times the formula
$ \text{ad}(\left[x,y\right])=\left[\text{ad}(x),\text{ad}(y)\right] $ which holds because $ \text{ad} $ is a Lie
algebra homomorphism. For the third statement take $ x,y,z\in\mathfrak{g} $ such that $ z=\rho(y) $. Since $ \rho $ is
invertible, the equation $ \rho(\left[x,y\right])=\left[\rho(x),\rho(y)\right] $ is
equivalent to
\begin{align}
\text{ad}(\rho(x))(z)=(\rho\circ\text{ad}(x)\circ\rho^{-1})(z).
\end{align}
This implies
\begin{align*}
\kappa(\rho(x),\rho(y))=\text{tr}(\text{ad}(\rho(x))\text{ad}(\rho(y)))
=\text{tr}(\rho\circ(\text{ad}(x)\text{ad}(y))\circ\rho^{-1})=\kappa(x,y).
\end{align*}
The last statement is a consequence of ii.). To prove this, take arbitrary elements $ x\in S $ and
$ y\in\mathfrak{g} $. Then $ \left[x,y\right]\in S $ because
\begin{align*}
\kappa(\left[x,y\right],z)=\kappa(x,\underbrace{\left[y,z\right]}_{\in\mathfrak{g}})=0
\end{align*}
for all $ z\in\mathfrak{g} $ which proves the claim.
\end{proof}

In particular, the last statement of this proposition is interesting for semisimple Lie algebras since there is
always an ideal $ S $ in $ \mathfrak{g} $, namely the radical of $ \kappa $. If $ S=\mathfrak{g} $ the Lie algebra
would be rather boring, thus one traceable claim to hope for a semisimple Lie algebra should be
$ S=\left\lbrace 0\right\rbrace $. Since $ \kappa $ is symmetric according to i.) this is equivalent to the
postulate that $ \kappa $ is non-degenerate in $ \mathfrak{g} $. Our aim is to prove that this is not
only a necessary but also a sufficient condition to characterize complex semisimple Lie algebras. We
first prove that $ S $ controls all non-proper abelian ideals.

\begin{lemma}\label{LemmaAbId}
Let $ (\mathfrak{g},\left[\cdot,\cdot\right]) $ be a Lie algebra over a field $ \mathbb{k} $ and $ J $ an abelian
ideal of $ \mathfrak{g} $. Then $ J\subseteq S $.
\end{lemma}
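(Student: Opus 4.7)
The plan is to show that every element $x \in J$ is annihilated by $\kappa(\cdot, y)$ for all $y \in \mathfrak{g}$, by proving that the endomorphism $\operatorname{ad}(x)\operatorname{ad}(y)$ is nilpotent and hence has zero trace.

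First, I would fix $x \in J$ and $y \in \mathfrak{g}$ arbitrary, and set $T = \operatorname{ad}(x)\operatorname{ad}(y) \in \operatorname{End}(\mathfrak{g})$. Explicitly, for any $z \in \mathfrak{g}$ one has $T(z) = [x, [y, z]]$. Since $J$ is an ideal, $[x, [y, z]] \in J$ (the outer bracket has one factor in $J$), so $T$ maps $\mathfrak{g}$ into $J$.

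Next, I would compute $T^2$ and show it vanishes. For $z \in \mathfrak{g}$, set $w = T(z) \in J$. Then $T^2(z) = [x, [y, w]]$. Since $J$ is an ideal and $w \in J$, we have $[y, w] \in J$; since $x \in J$ as well and $J$ is abelian, $[x, [y, w]] \in [J, J] = \{0\}$. Hence $T^2 = 0$, so $T$ is nilpotent and $\operatorname{tr}(T) = 0$, i.e.\ $\kappa(x, y) = 0$. As $y$ was arbitrary, $x \in S$, and therefore $J \subseteq S$.

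There is no real obstacle here beyond bookkeeping the ideal and abelian properties of $J$ in the right places; the argument is essentially one short computation exploiting $[J, J] = 0$ after pushing everything into $J$ using the ideal property. No appeal to properties of the base field $\mathbb{k}$ (such as characteristic zero or algebraic closedness) is required.
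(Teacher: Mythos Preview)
Your proof is correct and follows essentially the same approach as the paper: both show that $T=\operatorname{ad}(x)\operatorname{ad}(y)$ satisfies $T^{2}=0$ by first noting that $T(\mathfrak{g})\subseteq J$ (ideal property) and then that $T|_{J}=0$ (abelian property), concluding that $T$ is nilpotent and hence traceless. Your write-up is in fact slightly more explicit than the paper's, but the argument is identical.
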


\begin{proof}
Let $ x\in J $ and $ y\in\mathfrak{g} $ be arbitrary. Our first step is to show that the endomorphism
\begin{align}\label{LemmaAbIdEndo}
\text{ad}(x)\text{ad}(y)
\end{align}
is nilpotent. More explicit, we prove $ (\text{ad}(x)\text{ad}(y))^{2}=0 $. First of all
$ (\text{ad}(x)\text{ad}(y))(z)\in J $ for all $ z\in\mathfrak{g} $ since
\begin{align}\label{LemmaAbId1}
(\text{ad}(x)\text{ad}(y))(z)=\big[x,\underbrace{\left[y,z\right]}_{\in\mathfrak{g}}\big].
\end{align}
Similarly we get for $ z\in J $
\begin{align}\label{LemmaAbId2}
(\text{ad}(x)\text{ad}(y))(z)\subseteq\text{ad}(x)(J)=\left\lbrace 0\right\rbrace,
\end{align}
where the last equation follows since $ J $ is abelian and $ x\in J $. Equation (\ref{LemmaAbId2}) combined with
equation (\ref{LemmaAbId1}) gives $ (\text{ad}(x)\text{ad}(y))^{2}=0 $. Thus (\ref{LemmaAbIdEndo}) is indeed
nilpotent and for this traceless. We conclude
\begin{align*}
\kappa(x,y)=\text{tr}(\text{ad}(x)\text{ad}(y))=0
\end{align*}
which shows $ J\subseteq S $.
\end{proof}

For the proof of the next theorem we need another characterization of semisimple Lie algebras taken from
\cite[Theorem~6.5]{berndt2007representations}: a Lie algebra is semisimple if and only if it has no non-zero abelian
ideal.

\begin{theorem}\label{TheoremSemiKillingNonDeg}
Consider a finite-dimensional complex Lie algebra $ (\mathfrak{g},\left[\cdot,\cdot\right]) $. It is semisimple
if and only if the Killing from $ \kappa $ is non-degenerate on $ \mathfrak{g} $.
\end{theorem}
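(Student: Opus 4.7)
The plan is to establish the two directions separately, treating the implication ``$\kappa$ non-degenerate $\Rightarrow\mathfrak{g}$ semisimple'' as the easy half and saving the converse, which requires Cartan's solvability criterion, for the main work.

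For the easy direction I would argue as follows. By the characterization cited just before the theorem, $\mathfrak{g}$ is semisimple if and only if it contains no non-zero abelian ideal. So assume $\kappa$ is non-degenerate, i.e.\ the radical $S$ defined in Proposition \ref{PropKilling} iv.) is $\{0\}$, and let $J\subseteq\mathfrak{g}$ be an abelian ideal. Lemma \ref{LemmaAbId} gives $J\subseteq S=\{0\}$, so $J=\{0\}$, and the claim follows.

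For the converse, assume $\mathfrak{g}$ is semisimple and consider the radical $S$ of the Killing form. By Proposition \ref{PropKilling} iv.), $S$ is an ideal. I would first observe the standard fact that for any ideal $I\trianglelefteq\mathfrak{g}$, the Killing form $\kappa_I$ of $I$ (viewed as a Lie algebra in its own right) coincides with $\kappa|_{I\times I}$. This is because for $x,y\in I$ the endomorphism $\operatorname{ad}_\mathfrak{g}(x)\operatorname{ad}_\mathfrak{g}(y)$ maps $\mathfrak{g}$ into $I$, so in a basis of $\mathfrak{g}$ extending a basis of $I$ it is block upper triangular with $\operatorname{ad}_I(x)\operatorname{ad}_I(y)$ in the upper-left block, and its trace equals $\operatorname{tr}(\operatorname{ad}_I(x)\operatorname{ad}_I(y))$. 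Applied to $I=S$, this shows $\kappa_S\equiv 0$; in particular $\kappa_S([S,S],S)=0$.

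Now I would invoke Cartan's criterion for solvability (a complex Lie algebra $\mathfrak{h}$ is solvable if and only if $\kappa_{\mathfrak{h}}([\mathfrak{h},\mathfrak{h}],\mathfrak{h})=0$); this is the genuinely non-trivial external ingredient and the main obstacle in the argument, but it is a standard result. It yields that $S$ is a solvable ideal of $\mathfrak{g}$. Finally, the semisimple assumption forces $S=\{0\}$: writing $\mathfrak{g}=\mathfrak{s}_1\oplus\cdots\oplus\mathfrak{s}_k$ as a direct sum of simple Lie algebras, every ideal of $\mathfrak{g}$ is a direct sum of a subset of the $\mathfrak{s}_i$, and each simple $\mathfrak{s}_i$ satisfies $[\mathfrak{s}_i,\mathfrak{s}_i]=\mathfrak{s}_i$ (because $[\mathfrak{s}_i,\mathfrak{s}_i]$ is a non-zero ideal in the non-abelian simple $\mathfrak{s}_i$), so no non-zero direct sum of $\mathfrak{s}_i$'s is solvable. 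Hence $S=\{0\}$ and $\kappa$ is non-degenerate, completing the proof.
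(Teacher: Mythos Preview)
Your proof is correct and follows essentially the same route as the paper: both directions hinge on the characterization ``semisimple $\Leftrightarrow$ no non-zero abelian ideal,'' with Lemma~\ref{LemmaAbId} handling one implication and Cartan's solvability criterion the other. Your write-up is in fact more complete than the paper's rather terse version, since you make explicit the auxiliary facts that $\kappa_I=\kappa|_{I\times I}$ for an ideal $I$ and that a semisimple Lie algebra admits no non-zero solvable ideal, whereas the paper absorbs these into a single citation of Cartan's Criterion.
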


\begin{proof}
As argued before we prove that $ \mathfrak{g} $ has a non-zero abelian ideals if and only if
$ S\neq\left\lbrace 0\right\rbrace $.
If $ \mathfrak{g} $ has a non-zero abelian ideal $ J $ than $ \left\lbrace 0\right\rbrace\neq J\subseteq S $
according to Lemma \ref{LemmaAbId}. Assume conversely $ S\neq\left\lbrace 0\right\rbrace $. Then
\textbf{Cartan's Criterion} (c.f. \cite[Corollary~3.3.14]{abbaspour2007basic}) implies that $ S $ is a proper ideal of
$ \mathfrak{g} $ since $ \kappa(x,y)=0 $ for all $ x,y\in S $. Thus $ \mathfrak{g} $ is not semisimple.
\end{proof}

In the following we often view a semisimple Lie algebra together with its Killing form. An object which is
quite related to the Killing form is a Cartan involution.

\begin{definition}[Cartan Involution]\label{CartanInv}
Let $ (\mathfrak{g},\left[\cdot,\cdot\right]) $ be a semisimple Lie algebra over a field $ \mathbb{k} $. If there is
an involution $ \Theta $ on $ \mathfrak{g} $, i.e. a automorphism of $ \mathfrak{g} $ that satisfies $ \Theta^{2}
=1 $, such that
\begin{align}
B_{\Theta}(x,y)=-\kappa(x,\Theta(y))
\end{align}
is a positive definite bilinear form, where $ x,y\in\mathfrak{g} $, then $ \Theta $ is said to be a
Cartan involution on $ \mathfrak{g} $.
\end{definition}

One could ask in which situation such a Cartan involution exists and when it is unique. There is an answer for
real semisimple Lie algebras:

\begin{theorem}
There is a Cartan involution on a real semisimple Lie algebra which is unique up to inner automorphisms.
\end{theorem}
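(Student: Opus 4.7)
The plan is to handle existence and uniqueness separately, using the complexification of $\mathfrak{g}_0$ as the central bridge to compact real forms.

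For existence, I would first pass to the complexification $\mathfrak{g} := \mathfrak{g}_0 \tensor_{\mathbb{R}} \mathbb{C}$, which is complex semisimple by Theorem \ref{TheoremSemiKillingNonDeg}, since the Killing form extends $\mathbb{C}$-bilinearly and remains non-degenerate. Let $\sigma$ denote the $\mathbb{R}$-conjugation of $\mathfrak{g}$ fixing $\mathfrak{g}_0$. Invoking Weyl's theorem on the existence of a compact real form, there is a real subalgebra $\mathfrak{u} \subseteq \mathfrak{g}$ whose restricted Killing form is negative definite; let $\tau$ be the $\mathbb{R}$-conjugation of $\mathfrak{g}$ fixing $\mathfrak{u}$. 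A priori $\sigma$ and $\tau$ need not commute, but I would modify $\tau$ by an inner automorphism to achieve commutation. Concretely, the automorphism $\omega := (\sigma\tau)^{2} \in \operatorname{Aut}(\mathfrak{g})$ is self-adjoint and positive definite with respect to the positive definite Hermitian form $B_\tau$. By spectral calculus it admits unique real powers $\omega^{t}$, and these stay in $\operatorname{Aut}(\mathfrak{g})$ because the bracket-preserving condition is polynomial. Replacing $\tau$ by $\tau' := \omega^{1/4}\tau\omega^{-1/4}$ yields a compact-form conjugation commuting with $\sigma$, and $\theta := \sigma\tau'|_{\mathfrak{g}_0}$ is the sought Cartan involution: its $\pm 1$-eigenspaces lie respectively in $\mathfrak{u}' \cap \mathfrak{g}_0$ and $i\mathfrak{u}' \cap \mathfrak{g}_0$, where $\kappa$ is negative, resp.\ positive definite, so $B_\theta = -\kappa(\,\cdot\,,\theta\,\cdot\,)$ is positive definite.

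For uniqueness, I would take two Cartan involutions $\theta_1,\theta_2$ of $\mathfrak{g}_0$ and analyse $\rho := \theta_1\theta_2 \in \operatorname{Aut}(\mathfrak{g}_0)$. Using the invariance of $\kappa$ under automorphisms (Proposition \ref{PropKilling} iii.)) one verifies that $\rho$ is self-adjoint with respect to $B_{\theta_1}$, while the identity $\theta_1 \rho \theta_1 = \rho^{-1}$ forces a symmetric spectrum about $1$. The crucial point is that $\rho^{2}$ is positive definite with respect to $B_{\theta_1}$, so $\rho^{2t}$ is a one-parameter family of automorphisms and $\rho^{1/2}$ is a well-defined automorphism satisfying $\rho^{1/2}\theta_2\rho^{-1/2} = \theta_1$. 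The innerness is automatic, since $\rho^{1/2}$ lies in the identity component of $\operatorname{Aut}(\mathfrak{g}_0)$, which for semisimple $\mathfrak{g}_0$ coincides with $\operatorname{Inn}(\mathfrak{g}_0)$.

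The main obstacle is justifying two technical inputs that underpin both parts of the argument: that the real powers $\omega^{t}$ and $\rho^{t}$ genuinely remain Lie algebra automorphisms, and that these powers in fact lie in the inner automorphism group. The first point relies on realising $\operatorname{Aut}(\mathfrak{g})$ as a real algebraic subgroup of $\operatorname{GL}(\mathfrak{g})$ so that the spectral functional calculus of a positive self-adjoint element preserves it. The second is the classical identification $\operatorname{Aut}(\mathfrak{g}_0)^{\circ} = \operatorname{Inn}(\mathfrak{g}_0)$ for semisimple $\mathfrak{g}_0$, which follows from $\operatorname{Der}(\mathfrak{g}_0) = \operatorname{ad}(\mathfrak{g}_0)$ and ultimately from the non-degeneracy of $\kappa$ established in Theorem \ref{TheoremSemiKillingNonDeg}. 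Once these two ingredients are in place, the commuting adjustment and the square-root conjugation arguments become purely formal.
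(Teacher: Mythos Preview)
Your proposal is correct and is precisely the standard argument the paper has in mind: the paper does not actually prove this theorem but only cites \cite[Theorem~3]{CartanIwasawaJana}, noting that the proof ``uses Theorem \ref{TheoremSemiKillingNonDeg} and the existence of a compact real form of a finite-dimensional semisimple Lie algebra,'' which are exactly the two inputs around which your existence argument is built. Your outline of the spectral adjustment via real powers of $(\sigma\tau)^2$ for existence and of $(\theta_1\theta_2)^2$ for uniqueness is the classical Cartan--Mostow argument found in the cited reference, so there is nothing to add.
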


A proof can be found in \cite[Theorem~3]{CartanIwasawaJana}. It uses Theorem \ref{TheoremSemiKillingNonDeg}
and the existence of a compact real form of a finite-dimensional semisimple Lie algebra. We want to say some words
to the last point. A real Lie algebra $ \mathfrak{g}_{0} $ is said to be a \textit{real form} of a complex
Lie algebra $ \mathfrak{g} $ if one has
\begin{align}\label{RealForm}
\mathfrak{g}\cong\mathbb{C}\tensor_{\mathbb{R}}\mathfrak{g}_{0}.
\end{align}
Conversely, a complex Lie algebra $ \mathfrak{g} $ is said to be a \textit{complexification} of a real Lie algebra
$ \mathfrak{g}_{0} $ if Eq. (\ref{RealForm}) holds. A famous theorem of Cartan says that if $ \mathfrak{g} $ is a
complex semisimple Lie algebra there is always a real form $ \mathfrak{g}_{0} $ of $ \mathfrak{g} $ and
$ \mathfrak{g}_{0} $ is a compact Lie algebra (c.f. \cite{richardson1968}).

Now consider a real semisimple Lie algebra $ \mathfrak{g} $ with a Cartan involution $ \Theta $. Since it is an
involution $ \Theta $ has exactly two eigenvalues $ \pm 1 $. The corresponding eigenspaces are commonly
denoted by $ \mathfrak{k} $ and $ \mathfrak{p} $.

\begin{definition}[Cartan Decomposition]\label{CartanDec}
The decomposition
\begin{align}\label{CartanDecomp}
\mathfrak{g}=\mathfrak{k}\oplus\mathfrak{p}
\end{align}
of a real semisimple Lie algebra $ \mathfrak{g} $ is called the Cartan decomposition of $ \mathfrak{g} $
corresponding to $ \Theta $.
\end{definition}

\begin{proposition}
The eigenspaces $ \mathfrak{k} $ and $ \mathfrak{p} $ of the Cartan decomposition (\ref{CartanDecomp}) satisfy
$ \left[\mathfrak{k},\mathfrak{k}\right]\subseteq\mathfrak{k} $,
$ \left[\mathfrak{p},\mathfrak{k}\right]\subseteq\mathfrak{p} $ and
$ \left[\mathfrak{p},\mathfrak{p}\right]\subseteq\mathfrak{k} $.
\end{proposition}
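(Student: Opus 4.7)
The plan is to exploit the fact that a Cartan involution $\Theta$ is by definition an automorphism of the Lie algebra $\mathfrak{g}$, so it respects the Lie bracket, and that $\mathfrak{k}$ and $\mathfrak{p}$ are precisely the $(+1)$- and $(-1)$-eigenspaces of $\Theta$. This alone will be enough; the associativity or positive-definiteness of $B_{\Theta}$ plays no role here.

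First I would fix $x \in \mathfrak{k}$ and $y \in \mathfrak{k}$ and compute $\Theta([x,y]) = [\Theta(x),\Theta(y)] = [x,y]$, which shows that $[x,y]$ lies in the $(+1)$-eigenspace, i.e. $[\mathfrak{k},\mathfrak{k}]\subseteq\mathfrak{k}$. Exactly the same calculation, now with $x\in\mathfrak{p}$ and $y\in\mathfrak{k}$, gives $\Theta([x,y])=[-x,y]=-[x,y]$, so $[x,y]$ lies in the $(-1)$-eigenspace and $[\mathfrak{p},\mathfrak{k}]\subseteq\mathfrak{p}$. Finally, for $x,y\in\mathfrak{p}$, one has $\Theta([x,y])=[-x,-y]=[x,y]$, which puts $[x,y]$ back into $\mathfrak{k}$ and yields $[\mathfrak{p},\mathfrak{p}]\subseteq\mathfrak{k}$.

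There is essentially no obstacle: the statement is a direct eigenspace-decomposition argument for an order-two Lie algebra automorphism, and the only point one has to be explicit about is that Definition \ref{CartanInv} declares $\Theta$ to be an automorphism (hence bracket-preserving), which is what makes the three computations above go through.
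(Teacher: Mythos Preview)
Your proof is correct and is essentially the same as the paper's: both arguments use only that $\Theta$ is a Lie algebra automorphism and compute $\Theta$ on brackets of eigenvectors. The paper in fact writes out only the case $[\mathfrak{k},\mathfrak{p}]\subseteq\mathfrak{p}$ as an example, whereas you spell out all three.
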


\begin{proof}
This is clear since $ \Theta $ is an automorphism. For example for $ k\in\mathfrak{k} $ and $ p\in\mathfrak{p} $
one has
\begin{align*}
\Theta(\left[k,p\right])=\left[\Theta(k),\Theta(p)\right]=\left[k,-p\right]=-\left[k,p\right].
\end{align*}
\end{proof}

\section{Root Space Decomposition and Iwasawa Decomposition}

To get in touch with Iwasawa decompositions we use roots and Cartan subalgebras and follow
\cite[Chapter~3]{KacMoodyLecture2009}. There are many equivalent definitions of Cartan subalgebras, but we
characterize them
via semisimple elements. Recall that for an element $ x $ of a Lie algebra $ (\mathfrak{g},\left[\cdot,\cdot\right]) $
the endomorphism $ \text{ad}(x) $ is diagonizable if there is a basis $ \left\lbrace x_{1},x_{2},\ldots\right\rbrace $
of $ \mathfrak{g} $ such that $ \left[x,x_{k}\right] $ is proportional to $ x_{k} $ for every $ k=1,2,\ldots $.
Moreover, an abelian subalgebra $ \mathfrak{h} $ of $ \mathfrak{g} $ is said to be \textit{maximal abelian} if there
is no abelian subalgebra $ \mathfrak{h}' $ of $ \mathfrak{g} $ such that $ \mathfrak{h}\subsetneq\mathfrak{h}'
\subsetneq\mathfrak{g} $.

\begin{definition}[Cartan Subalgebra]
Let $ (\mathfrak{g},\left[\cdot,\cdot\right]) $ be a Lie algebra.
\begin{compactenum}
\item An element $ x\in\mathfrak{g} $ is said to be semisimple if $ \operatorname{ad}(x) $ is diagonizable.
\item If $ \mathfrak{g} $ is semisimple, a maximal abelian subalgebra $ \mathfrak{h} $ of $ \mathfrak{g} $ is said
to be a Cartan subalgebra of $ \mathfrak{g} $ if it consists of semisimple elements.
\end{compactenum}
\end{definition}

Can one guarantee the existence of a Cartan subalgebra? This is discussed in the next

\begin{remark}\label{RemCartanBasis}
There are semisimple elements if the field $ \mathbb{k} $ corresponding to the semisimple Lie algebra
$ (\mathfrak{g},\left[\cdot,\cdot\right]) $ is algebraically closed. This is clear because the condition of
$ \text{ad}(x) $ being diagonizable for an $ x\in\mathfrak{g} $ reduces to solutions $ \lambda\in\mathbb{k} $
of the equation
\begin{align}
\det(M-\lambda\mathbb{1})=0,
\end{align}
where the entries of $ M $ are the structure constants $ M_{ij} $ which satisfy
$ \left[x,x_{i}\right]=\sum_{j}M_{ij}x_{j} $. Thus it is possible to choose a maximal set of
linearly independent semisimple elements $ H_{i}\in\mathfrak{g} $ that satisfy
\begin{align}\label{CartanBasisCommutes}
\left[H_{i},H_{j}\right]=0.
\end{align}
The subset spanned by these elements is a Cartan subalgebra $ \mathfrak{h} $ of $ \mathfrak{g} $.
In general there are many Cartan subalgebras of $ \mathfrak{g} $, but one can show that they have all the
same dimension, which is said to be the \textbf{rank} of $ \mathfrak{g} $.
\end{remark}

Consider again a semisimple Lie algebra $ (\mathfrak{g},\left[\cdot,\cdot\right]) $ over an algebraically closed field
$ \mathbb{k} $ and choose a Cartan subalgebra $ \mathfrak{h} $ with basis $ \left\lbrace H_{i}\right\rbrace $ as
it was done in Remark \ref{RemCartanBasis}. All elements of $ \mathfrak{h} $ are semisimple, thus $ \text{ad}(x) $
is diagonizable for all $ x\in\mathfrak{h} $ and since Eq. (\ref{CartanBasisCommutes}) holds, all endomorphisms
$ \text{ad}(x) $, for $ x\in\mathfrak{h} $, are simultaneously diagonizable. That means there is a basis
$ \left\lbrace x_{1},x_{2},\ldots\right\rbrace $ of $ \mathfrak{g} $ such that
\begin{align}\label{RootsCondition}
(\text{ad}(x))(x_{i})=\left[x,x_{i}\right]=\alpha_{x_{i}}(x)x_{i}
\end{align}
for $ i=1,2,\ldots $, where $ \alpha_{x_{i}}\in\mathfrak{h}^{*} $ is the linear function $ \mathfrak{h}\rightarrow
\mathbb{k} $ that assigns an element $ x\in\mathfrak{h} $ the eigenvalue $ \alpha_{x_{i}}(x) $ of $ \text{ad}(x) $
corresponding to the eigenvector $ x_{i} $.

\begin{definition}[Root System]
Let $ (\mathfrak{g},\left[\cdot,\cdot\right]) $ be a semisimple Lie algebra over an algebraically closed field
$ \mathbb{k} $ and $ \left\lbrace x_{1},x_{2},\ldots\right\rbrace $ a basis of $ \mathfrak{g} $ such that Eq.
(\ref{RootsCondition}) holds. Then $ \alpha_{x_{1}},\alpha_{x_{2}},\ldots $ is said to be a weight system
and $ \alpha_{x_{i}} $ a weight of the adjoint representation of the Cartan subalgebra $ \mathfrak{h} $.
The elements $ \alpha_{x_{i}}\neq 0 $ are said to be the \textit{roots} of $ \mathfrak{g} $ and the set of roots
$ \Delta $ is said to be a root system of $ \mathfrak{g} $. We often omit the index $ x_{i} $.
\end{definition}

\begin{remark}
One can prove that the elements $ x\in\mathfrak{g} $ which are weights but no roots of $ \mathfrak{g} $ are
elements of $ \mathfrak{h} $. Conversely it is clear that no root $ x\in\mathfrak{h} $ can be a weight since
$ \left[h,x\right]=0 $ for all $ h\in\mathfrak{h} $. Thus if we define for every root $ \alpha\in\Delta $ the space
\begin{align}
\mathfrak{g}_{\alpha}=\left\lbrace x\in\mathfrak{g}~|~\left[h,x\right]=\alpha(h)x\text{ for all }
h\in\mathfrak{h}\right\rbrace
\end{align}
there is a decomposition
\begin{align}\label{RootDecomposition}
\mathfrak{g}=\mathfrak{h}\oplus\bigoplus_{\alpha\in\Delta}\mathfrak{g}_{\alpha}
\end{align}
of $ \mathfrak{g} $. It is called the \textbf{root space decomposition} of $ \mathfrak{g} $ corresponding to
the Cartan subalgebra $ \mathfrak{h} $. The number $ \dim(\mathfrak{g}_{\alpha}) $ is called the
\textbf{multiplicity} of the root $ \alpha $.
\end{remark}

The joint eigenspaces $ \mathfrak{g}_{\alpha} $ have some properties that are easy to prove by calculation.

\begin{proposition}
The following statements hold for a root decomposition (\ref{RootDecomposition}). Let $ \alpha,\beta\in
\mathfrak{h}^{*} $. Then
\begin{compactenum}
\item $ \left[\mathfrak{g}_{\alpha},\mathfrak{g}_{\beta}\right]\subseteq\mathfrak{g}_{\alpha+\beta} $,
\item if $ \alpha\neq 0 $ and $ x\in\mathfrak{g}_{\alpha} $ then $ \operatorname{ad}(x) $ is a nilpotent endomorphism,
\item if $ \alpha+\beta\neq 0 $ then $ \kappa(\mathfrak{g}_{\alpha},\mathfrak{g}_{\beta})=
\left\lbrace 0\right\rbrace $.
\end{compactenum}
\end{proposition}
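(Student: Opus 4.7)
The plan is to use the invariance properties of the Killing form $\kappa$ together with the defining eigenvalue equations for elements of the root spaces. The whole argument is essentially a one-line derivation once the right infinitesimal invariance is established.

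First I would fix arbitrary $x\in\mathfrak{g}_{\alpha}$ and $y\in\mathfrak{g}_{\beta}$ and pick any $h\in\mathfrak{h}$. The key preliminary step is to derive from the associativity identity $\kappa([u,v],w)=\kappa(u,[v,w])$ of Proposition \ref{PropKilling} ii.) the \emph{infinitesimal invariance} identity
\begin{align*}
\kappa([h,x],y)+\kappa(x,[h,y])=0,
\end{align*}
which follows by rewriting $\kappa([h,x],y)=-\kappa([x,h],y)=-\kappa(x,[h,y])$ using skew-symmetry of the bracket and associativity. (One could alternatively obtain it directly by differentiating the automorphism invariance $\kappa(\rho(x),\rho(y))=\kappa(x,y)$ of Proposition \ref{PropKilling} iii.) along the one-parameter group generated by $\operatorname{ad}(h)$.)

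Next I would substitute the defining eigenvalue equations for the root spaces, namely $[h,x]=\alpha(h)x$ and $[h,y]=\beta(h)y$, into the invariance identity. This immediately yields
\begin{align*}
\bigl(\alpha(h)+\beta(h)\bigr)\kappa(x,y)=0
\end{align*}
for every $h\in\mathfrak{h}$.

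Finally, since $\alpha+\beta\neq 0$ as an element of $\mathfrak{h}^{*}$, there exists some $h_{0}\in\mathfrak{h}$ with $(\alpha+\beta)(h_{0})\neq 0$; dividing by this nonzero scalar forces $\kappa(x,y)=0$. As $x$ and $y$ were arbitrary, this gives $\kappa(\mathfrak{g}_{\alpha},\mathfrak{g}_{\beta})=\{0\}$. I do not expect any substantive obstacle: the only mild subtlety is the bookkeeping of signs in deriving the infinitesimal invariance from associativity, and the only structural input beyond what the proposition already provides is the non-vanishing of $\alpha+\beta$ as a linear functional, which is exactly the hypothesis.
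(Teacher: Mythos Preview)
Your argument for part iii.) is correct and is the standard one; the paper itself does not actually prove this proposition, remarking only that these properties ``are easy to prove by calculation,'' so there is nothing substantive to compare against on that part.

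However, your proposal addresses only part iii.) and says nothing about parts i.) and ii.), which the statement also asks for. For i.), the Jacobi identity gives, for $x\in\mathfrak{g}_\alpha$, $y\in\mathfrak{g}_\beta$, and $h\in\mathfrak{h}$,
\[
[h,[x,y]]=[[h,x],y]+[x,[h,y]]=(\alpha(h)+\beta(h))[x,y],
\]
so $[x,y]\in\mathfrak{g}_{\alpha+\beta}$. For ii.), once i.) is in hand, $\operatorname{ad}(x)$ with $x\in\mathfrak{g}_\alpha$ and $\alpha\neq 0$ maps each $\mathfrak{g}_\beta$ into $\mathfrak{g}_{\beta+\alpha}$; since $\mathfrak{g}$ is finite-dimensional the set of weights appearing in the decomposition (\ref{RootDecomposition}) is finite, and iterating the shift by the nonzero $\alpha$ eventually leaves that finite set, forcing $\operatorname{ad}(x)^{N}=0$ for $N$ large enough. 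These are the missing pieces; your treatment of iii.) by itself is fine.
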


Let us come back to the Cartan decomposition (\ref{CartanDecomp}) of a semisimple Lie algebra. We consider a root space
decomposition of the eigenspace $ \mathfrak{p} $ corresponding to the
$ -1 $ eigenvalue of a
Cartan involution $ \Theta $. The Cartan subalgebra of $ \mathfrak{p} $ shall now be denoted by $ \mathfrak{a} $.
We are not interested in all roots $ \Delta $ but only in the positive roots $ \Delta^{+} $. Choose a basis
$ \left\lbrace\alpha_{1},\alpha_{2},\ldots\right\rbrace $ of $ \Delta $. A root $ \alpha\in\Delta $ is said to be
\textit{positive} if the first non-zero coefficient in the basis representation is positive. Then we define
\begin{align}
\mathfrak{n}=\bigoplus_{\alpha\in\Delta^{+}}\mathfrak{g}_{\alpha}.
\end{align}

\begin{definition}[Iwasawa Decomposition]\label{IwasawaDec}
Let $ (\mathfrak{g},\left[\cdot,\cdot\right]) $ be a real semisimple Lie algebra. The decomposition
\begin{align}\label{IwasawaDecomposition}
\mathfrak{g}=\mathfrak{k}\oplus\mathfrak{a}\oplus\mathfrak{n}
\end{align}
is said to be a Iwasawa decomposition of $ \mathfrak{g} $.
The Iwasawa decomposition at Lie group level is
\begin{align}
G=KAN,
\end{align}
where $ G,K,A $ and $ N $ denote the connected Lie groups corresponding to $ \mathfrak{g},\mathfrak{k},\mathfrak{a} $
and $ \mathfrak{n} $ respectively.
\end{definition}



\bibliographystyle{plain}
\bibliography{Thomas}


\chapter*{}

\begin{center} 

  \null\vfill
  {\large Erklärung\\}
  \vspace{2cm}
  {\large Hiermit erkläre ich, dass ich die vorliegende Arbeit selbständig verfasst und keine anderen als die angegebenen Quellen und Hilfsmittel verwendet und die Arbeit keiner anderen Prüfungsbehörde unter Erlangung eines akademischen Grades vorgelegt habe.\\}
  
\end{center}
\vspace{2cm}
Würzburg, den 06.06.2016\\
\flushright Thomas Weber

  \vspace{2cm}
  
  \vfill

\end{document}